\documentclass[11pt,a4paper]{article}

\usepackage[disable]{todonotes}

\usepackage{style}
\usepackage{xcolor}
\usepackage{shortcuts}

\title{Deterministic Distance Approximation in MPC\\ via Improved Hitting Sets\footnote{This research was funded in whole or in part by the Austrian Science Fund (FWF) \url{https://doi.org/10.55776/P36280} and \url{https://doi.org/10.55776/I6915}. For open access purposes, the author has applied a CC BY public copyright license to any author-accepted manuscript version arising from this submission. This research was supported by the Israel Science Foundation (grant No. 2829/25).}}
\author{ Kyungjin Cho, Michal Dory, Yannic Maus, Tijn de Vos}
\date{}

\begin{document}

\begin{titlepage}
    \maketitle
    \thispagestyle{empty}
   
    \begin{abstract}
        In this paper, we provide the first deterministic algorithms with sublogarithmic round complexity for spanners and approximate shortest paths in various MPC models. Moreover, we significantly improve upon the state of the art in the deterministic Congested Clique.
In particular, we obtain the following four results on undirected graphs:
\begin{enumerate}
    \item In both linear MPC and Congested Clique, we obtain an $O(k)$ stretch-spanner of a weighted graph of size $O(n^{1+1/k})$ in $O(1)$ rounds, for some parameter $k\ge 0$. For $k=O(\log{n})$, this leads to an $O(\log n)$ approximation of APSP in constant rounds in both models. 
    \item In sublinear MPC, we obtain an $O(k^{1+\varepsilon})$-stretch spanner of a weighted graph of size $O(n^{1+1/k})$ in $O(\log k)$ rounds, for any fixed constant $\eps>0$. 
     \item In Congested Clique, we obtain $O(1)$-approximate APSP for weighted graphs in $O(\log \log \log n)$ rounds.
    \item In near-linear MPC, we obtain $(1+\varepsilon)$-approximate single-source shortest paths and $O(1)$-approximate all-pairs shortest paths for unweighted graphs in $\textsf{poly}\log \log n$  rounds. Our algorithm only requires a single near-linear memory machine, where the rest can have sublinear memory. 
\end{enumerate}
 Our deterministic algorithms obtain similar guarantees to the state of the art randomized algorithms without incurring additional factors in the round complexity.
To obtain these results, we inspect the randomized algorithms and isolate a randomized sampling routine.
Then we derandomize these sampling routines by using a deterministic hitting set. 
Hereto, we develop a versatile deterministic hitting set algorithm, which we hope will have further derandomization applications.

    \end{abstract}

    \newpage
    \thispagestyle{empty}  
     \tableofcontents
    \thispagestyle{empty}    
     \newpage
      \listoftodos
  \end{titlepage}

\tijni{add authors}
\section{Introduction}



In the last decades, processing massive amounts of data has become omnipresent. Processing this data in parallel is one of the main approaches to achieve efficient algorithms. 
%
The massively parallel computation (MPC) model~\cite{karloff2010model} is a modern parallel model developed to model large-scale parallel processing settings such as MapReduce \cite{dean2008mapreduce}
, Hadoop \cite{white2012hadoop}, Spark \cite{zaharia2010spark}, and Dryad \cite{isard2007dryad}, 
that deal with massive data. In this model, the input is distributed between a set of $N$ machines each with  \emph{limited memory} $L$, that communicate with each other via all-to-all communication in synchronous rounds. 
In each round, each machine can send and receive a total of $L$ words of $O(\log(NL))$ bits and perform internal computation. 
The goal is to minimize the number of communication rounds.
A central line of research focuses on obtaining fast algorithms for {undirected graph problems in MPC}. The ultimate goal is to develop algorithms that significantly outperform those in traditional parallel computing settings. 
Note that all graphs considered in this paper are undirected.
\kyungjin{spedified that it is for undirected graph problem}

There are several variants of the MPC model that mostly differ in the  bound on the memory available per machine. In the \emph{super-linear} MPC model each machine has $L=n^{1+\gamma}$ memory for a constant $\gamma$, where $n$ is the number of vertices in the input graph, in the \emph{near-linear} MPC model each machine has $L=\tilde{O}(n)$ memory, 
in the \emph{linear} MPC model each machine has $L=O(n)$ memory\footnote{We write $\tilde O(f):=O(f\poly \log f)$ for any function $f$.},
and in the \emph{sublinear} MPC model each machine has $L=n^{\delta}$ memory for a constant $\delta < 1$. In each of these models, we ideally have $O(m)$ \emph{total} space, where $m$ is the input size. Sometimes, algorithms incur a small overhead, e.g., they have $\tilde O(m)$ or $O(m^{1+\rho})$ total space for some small constant $\rho$. 
For more details on the MPC model, see \Cref{sec:preliminaries}.

\paragraph{Algorithms for the MPC Model}\tijni{in the SPAA format, there's automatically a period after a paragraph heading, so we took them out of the intro-part. But all other parts have a period. Should make this consistent}
The MPC model has received a lot of attention in recent years. A rich line of work led to fast algorithms for various graph problems 
such as connectivity \cite{andoni2018parallel,behnezhad2019near,CoyC23,FischerGG22}, 
minimum spanning tree  \cite{lattanzi2011filtering,nowicki2021deterministic}, various coloring problems \cite{chang2019complexity,czumaj2021simple}, 
maximal matching and maximal independent set \cite{czumaj2018round,ghaffari2018improved, behnezhad2019massively,behnezhad2019exponentially},
minimum cut \cite{lattanzi2011filtering,ghaffari2020massively}, shortest paths and spanners \cite{biswas2021massively,podc2021spanner,fischer2022massively, dory2024massively}, and more. 
The main goal is to obtain very fast algorithms that ideally take sublogarithmic  or even constant number of rounds. While in the super-linear and linear memory regimes of MPC many problems indeed have constant or $\poly(\log{\log{n}})$ round algorithms, in the sublinear memory regime many important problems such as computing minimum spanning tree or shortest paths are conjectured to require $\Omega(\log{n})$ rounds.


\paragraph{Deterministic MPC algorithms}
While many of the algorithms discussed above are randomized, substantial effort has also been devoted to developing deterministic MPC algorithms. In several cases, these algorithms achieve runtimes matching the randomized state of the art—and sometimes even match known conditional lower bounds—for fundamental problems such as connectivity, minimum spanning tree, and vertex coloring~\cite{czumaj2021simple,nowicki2021deterministic,CoyC23,FischerGG22}.
However, gaps between randomized and deterministic runtimes persist for other problems, including maximal matching, maximal independent set, and ruling sets~\cite{CzumajDP21,PaiP22,FischerGG23,GilibertiP24,ji2025fast}, as well as for many distance computation problems that are the primary focus of this work.


\paragraph{Distance computation} 
Distance problems are fundamental topics in graph algorithms, including the study of \emph{Single-Source Shortest Paths (SSSP)},  \emph{All-Pairs Shortest Paths (APSP)}, and \emph{spanners}.
The approximate SSSP problem asks to compute, for a given source vertex, approximate distances to all other vertices, while the APSP problem asks to compute approximate distances between every pair of vertices. 
For a weighted graph, a \emph{$k$-spanner} is a sparse subgraph that approximately preserves pairwise distances up to a multiplicative $k$ factor. Any graph with $n$ vertices admits a $(2k-1)$-spanner of size at most $O(n^{1+1/k})$~\cite{althofer1993sparse}.
Spanners have been used for various distance problems, including 
transhipment-based distance approximation~\cite{li2020faster,becker2021near} and fast distance sketches~\cite{dinitz2020massively}.
Due to their theoretical significance and various real-world applications, these problems have been extensively studied from multiple perspectives.
In particular, there has been substantial recent progress in developing MPC algorithms.

\paragraph{Polylogarithmic-round MPC algorithms for distance computation}
In \emph{sublinear} MPC there are polylogarithmic algorithms for APSP and SSSP, including deterministic algorithms. 
Note that in the sublinear MPC model there is a conditional $\Omega(\log{n})$ lower bound for these problems based on the widely believed 1-vs-2 cycles conjecture, see, e.g.,~\cite{ghaffari2019conditional,nanongkai2022equivalence}. 

Hajiaghayi, Lattanzi, Seddighin, and Stein~\cite{hajiaghayi2019mapreduce} give a deterministic $O(\log n)$-round algorithm for APSP in MPC with $n^\delta$ memory per machine and $O(n^{3-\delta/2})$ total space.
There is also a randomized $\poly (\log{n})$-round sublinear MPC algorithm for distance sketches  by Dinitz and Nazari~\cite{dinitz2020massively}.  

These results are explicit MPC algorithms. 
In addition, many \emph{implicit} MPC results arise from simulations of classical PRAM algorithms. In particular, PRAM algorithms that use a polynomial number of processors can be simulated in the MPC model with only constant overhead in round complexity~\cite{goodrich2011sorting}.
In sublinear MPC, this leads to $\poly (\log{n})$-round algorithms for $(1+\varepsilon)$-approximate SSSP~\cite{li2020faster, andoni2020parallel,ElkinM21,rozhovn2022undirected} and $\poly\log n$-spanners~\cite{bezdrighinEG+2022}. Many of these algorithms are deterministic \cite{ElkinM21,rozhovn2022undirected,bezdrighinEG+2022}.

\begin{table*}
    \renewcommand{\arraystretch}{1.3}
    \centering\resizebox{\textwidth}{!}{
    \begin{tabular}{|c|c|c|c|c|c|c|}
    \hline
        Regime &Stretch & Size & Rounds & Total space & Randomness & \\
    \hline 
    \hline
    Congested Clique &$O(k)$ & $O(n^{1+1/k}k)$ & $O(\log k)$ & - & Deterministic &\cite{ParterY18} \\
    \hline

     Congested Clique &$(1+\varepsilon)(2k-1)$ & $O(n^{1+1/k})$ & $O(1)$ & - & Randomized &\cite{chechik2022constant} \\
    \hline

    Congested Clique & $O(k)$ & $O(n^{1+1/k})$ & $O(1)$  & - & Deterministic & This paper (\Cref{thm:linear_MPC_spanner_weighted})\\
    \hline

    Near-linear MPC &$O(k)$ & $O(n^{1+1/k})$ & $O(1)$ & $\tilde O(n+m)$ & Randomized &\cite{podc2021spanner} \\
    \hline

    Linear MPC & $O(k)$ & $O(n^{1+1/k})$ & $O(1)$  & $ O(n+m)$  & Deterministic & This paper (\Cref{thm:linear_MPC_spanner_weighted})\\
    \hline
    
    Sublinear MPC & $O(k^{1+\varepsilon})$ & $O(n^{1+1/k}\log k)$ & $O(\log k)$  & $\tilde O(m+n)$ & Randomized & \cite{biswas2021massively} \\
    \hline


    Sublinear MPC & $O(k^{1+\varepsilon})$ & $O(n^{1+1/k})$ & $O(\log k)$  & $O(m+n)$ & Deterministic & This paper (\Cref{thm:sublinear_MPC_spannerSimplified}) \\
    \hline
    \end{tabular}

    }
        
    \caption{     
    Summary of our spanner algorithms and the previous results.}
    \label{tab:summary_spanner}
\end{table*}

\paragraph{Sublogarithmic-round MPC algorithms for distance computation} 
As discussed above, in sublinear MPC there is a conditional $\Omega(\log{n})$ lower bound for approximate shortest paths, however this lower bound does not apply to spanners and there are indeed faster algorithms for constructing graph spanners. In addition, in the near-linear memory MPC there are sublogarithmic algorithms for approximate shortest paths as we discuss next.

\begin{table*}
    \renewcommand{\arraystretch}{1.3}
    \centering\resizebox{\textwidth}{!}{
    \begin{tabular}{|c|c|c|c|c|c|}
    \hline
        Regime & Approximate &  Rounds & Randomness & Weighted &\\
    \hline 
    \hline
    Congested Clique &$O(1)$-approximate & $O(\log\log\log n)$  & Randomized & Weighted & \cite{BuiCCDL24} \\
    \hline

     Congested Clique &$(3+\varepsilon)$-approximate & $O(\log^2 n/\varepsilon^2)$ &  Deterministic & Weighted &\cite{DBLP:conf/podc/Censor-HillelDK19} \\
    \hline

    Congested Clique &$O(1)$-approximate & $O(\log\log\log n)$  & Deterministic & Weighted & This paper (\Cref{thm:apsp_in_cc}) \\
    \hline

    Near-linear MPC &$O(\log n)$-approximate & $O(1)$  & Randomized & Weighted & \cite{podc2021spanner,fischer2022massively} \\
    \hline

    Linear MPC &$O(\log n)$-approximate & $O(1)$  & Deterministic & Weighted & This paper (\Cref{cor:APSPlogn}) \\
    \hline

    Near-linear MPC &$O(1)$-approximate & $\textsf{poly}(\log\log n)$  & Randomized & Unweighted &\cite{dory2024massively} \\
    \hline

    Near-linear MPC &$O(1)$-approximate & $\textsf{poly}(\log\log n)$  & Deterministic & Unweighted &This paper (\Cref{thm:APSP_sublinearMPCSimplified}) \\
    \hline
    \end{tabular}
    }
        
    \caption{     
    Summary of our APSP algorithms and the previous sublogarithmic results.}
    \label{tab:summary_distances}
\end{table*}
For spanners, 
\cite{biswas2021massively} provide a randomized algorithm that computes an $O(k^{1+\eps})$-spanner of size $O(n^{1+1/k})$ in $O(\log k)$ rounds. 
In the near-linear MPC model, there are faster randomized $O(1)$-round algorithms to compute an $O(k)$-spanner with $O(n^{1+1/k})$ edges~\cite{podc2021spanner, fischer2022massively}. 
Moreover, there are $O(1)$-round $O(\log{n})$-approximation algorithms for weighted APSP in the near-linear MPC model, based on the construction of spanners with stretch $O(\log n)$~\cite{biswas2021massively,podc2021spanner,fischer2022massively}. 

In \emph{unweighted} graphs there are randomized $\poly (\log{\log{n}})$-round\footnote{In this introduction, we consider $\eps$ to be a constant.}\tijn{new (reviewer C)} algorithms for
$(1+\eps)$-approximate SSSP and $O(1)$-approximate APSP using one near-linear memory machine, where the rest of machines can have sublinear memory
\cite{dory2024massively}.



All the above mentioned algorithms are randomized, which leads to the following open question:

\begin{question}\label{question:1}
    Can we obtain deterministic algorithms with sublogarithmic round complexity for distance computation in MPC?
\end{question}

\subsection{Contributions on Distance Problems}\label{sec:our resutls}
In this paper, we answer \Cref{question:1} in the affirmative: we give the first sublogarithmic-round deterministic MPC algorithms for spanners and approximate shortest paths. Moreover, our algorithms have optimal or close-to-optimal total space.
Since linear MPC algorithms transfer to the so-called Congested Clique, we also obtain novel results there -- sometimes improving significantly on the state of the art.  

In the Congested Clique model~\cite{lotker2005minimum}, we have an $n$-node communication network and an input graph $G=(V,E)$ on the same nodes, where each node initially knows its incident edges. Computation proceeds in synchronous rounds: in each round, every node can send an $O(\log n)$-bit message to every other node, and perform unlimited local computation between rounds. The objective is to solve a problem on $G$ using as few rounds as possible.

See \Cref{sec:related_work} for existing work in the Congested Clique model. We remark that while linear MPC algorithms can be simulated in Congested Clique \cite{BehnezhadDH18}, most existing Congested Clique algorithms do not directly imply MPC algorithms as the memory requirements do not fit the MPC model.

Additionally, we also obtain the first deterministic sublogarithmic round algorithms for computing spanners in the sublinear MPC model.

\subsubsection{Spanners}\label{sec:our_results_linearMPC}

\paragraph{Linear MPC and Congested Clique (\Cref{sec:unweighted_spanner_linearMPC,sec:weighted_spanner_linearMPC})} 
Our first result is an $O(k)$-stretch spanner of size $O(n^{1+1/k})$ in linear MPC and Congested Clique.

\begin{restatable}{theorem}{CorLinearMPCSpannerWeighted}\label{thm:linear_MPC_spanner_weighted}
    There exists a deterministic algorithm that, given a positive integer $k$ and a weighted graph $G=(V,E,w)$ with $w\colon E \to [\poly n]$ on $n$ vertices, computes an $O(k)$-spanner with $O(n^{1+1/k})$ edges in constant rounds in the linear MPC model using $O(m+n)$ total space. 
    Furthermore, it can be simulated in the Congested Clique.
\end{restatable}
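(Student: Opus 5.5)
We plan to derandomize the known constant-round randomized construction of \cite{podc2021spanner} (a Baswana--Sen style clustering compressed into $O(1)$ phases). The randomness there is used only to sample cluster centers. We will replace that sampling with a deterministic hitting set. We first handle unweighted graphs and then reduce the weighted case to them by bucketing weights.

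\textbf{Unweighted case.} We use $O(1)$ phases of cluster growth, with sampling rates $p_i$ chosen so that the cluster radius after each phase grows geometrically. After $O(1)$ phases the clusters then have radius $O(k)$. The rates follow the schedule of \cite{podc2021spanner}: in phase $i$ a cluster survives with probability about $n^{-2^{i}/k}$ (or a similar exponential schedule), so that only $O(\log k)$ iterations would be needed naively. In linear MPC and the Congested Clique, however, we can collect the $n^{2^i/k}$-neighborhoods on machines, and the scheme collapses to a constant number of ``doubling'' steps.

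\textbf{Derandomizing the sampling step.} Each vertex $v$ whose cluster-neighborhood $N(v)$ contains at least $\Delta$ clusters must be adjacent to a sampled cluster. All other vertices keep one edge to each adjacent cluster, which costs at most $\Delta$ edges each. This is exactly a hitting set instance. The ground set is the current clusters, the sets are the neighborhoods $N(v)$ of size at least $\Delta$, and we need a hitting set of size $O(\#\text{clusters}/\Delta)$. To obtain the optimal size $O(n^{1+1/k})$ with no $\log$ loss, we only need to hit the sets up to a constant fraction and charge the remainder. Concretely, we truncate each set to exactly $\Delta$ elements, so each machine can hold a set, and we find a hitting set of size $O(N\log n/\Delta)$. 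Then we adjust parameters: we absorb the $\log n$ by taking $\Delta' = \Delta\log n$, or we run a two-level scheme in which the first level uses a pairwise-independent hash family. The family's seed is fixed by the method of conditional expectations in $O(1)$ rounds, since with $O(n)$ memory per machine a seed of $O(\log n)$ bits can be chosen by distributing over $\poly(n)$ candidate seed chunks with aggregation.

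\textbf{Weighted case.} We split the edges into $O(\log n)$ weight classes $[2^j,2^{j+1})$. Running the unweighted routine on each class independently loses a $\log n$ factor in size. Instead we follow the standard technique of grouping classes that are far apart (e.g.\ \cite{podc2021spanner}'s reduction) into $O(1)$ interleaved groups and contracting clusters from lower classes. Alternatively we use a size budget $n^{1+1/k}/\log n$ per class with parameter $k'=k$ adjusted, since $n^{1/k}\ge \log n$ whenever $k\le \log n/\log\log n$. Larger $k$ is handled by building an $O(\log n)$-stretch spanner of size $O(n)$ and using the fact that $O(n^{1+1/k}) = O(n)$. Stretch then multiplies by only a constant (the factor-2 rounding). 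Total space stays $O(m+n)$ because each step touches each edge $O(1)$ times. The Congested Clique simulation follows from \cite{BehnezhadDH18}.

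\textbf{Main obstacle.} The main obstacle is the deterministic hitting set step. It must run in $O(1)$ rounds with only $O(n)$ memory per machine and $O(m)$ total space, and it must not lose the $\log n$ factor in size. My first attempt is conditional expectations on a pairwise-independent family evaluated in parallel over seed batches. If the resulting size overshoot persists, I would fall back on iterating the hitting set with geometrically decreasing thresholds so the loss telescopes.
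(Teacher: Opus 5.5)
\textbf{Unweighted case.} You describe \cite{podc2021spanner} incorrectly, and as written your plan gives no constant-round algorithm. That construction is not a Baswana--Sen schedule with clusters of radius $O(k)$. It buckets vertices into degree classes $V_i=\{v : d_i\le\deg(v)<2d_i\}$ with $d_i=2^{i-1}$ and treats all classes in parallel. It clusters class $i$ once, with radius one, around a $d_i$-dominating set $D_i$. The $O(k)$ stretch comes from running the greedy $(2k-1)$-spanner locally on pieces of the cluster graph $C_i$.

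Randomness enters twice there, not once. You never address the second use: the random partition of $C_i$ into pieces that fit on one machine. The paper replaces it by a deterministic partition into $O(p^2)$ induced subgraphs on $O(N/p)$ vertices each (\Cref{lem:sparsification_tool}, generalizing \cite{Leitersdorf22}).

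Your substitute, ``collect the $n^{2^i/k}$-neighborhoods so the scheme collapses to $O(1)$ doubling steps'', is unjustified. Consecutive sampling/contraction epochs depend on each other's outcomes, so running several at once needs multi-hop neighborhoods. You give no bound showing these fit in $O(m)$ total space.

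The hitting-set step also fails as stated:
\begin{itemize}
\item With pairwise independence, Chebyshev only bounds the miss probability by $O(1/\mu)$. At rate $\log n/\Delta$ you can therefore only certify that about $N/\log n$ sets remain unhit.
\item An $O(\log n)$-bit seed, which is what $O(1)$ rounds of conditional expectations can fix, gives only $O(1)$-wise independence. That yields size $O(|U|/d^{0.99})$, not $O(|U|\log n/d)$; the paper's near-optimal hitting sets need a PRG and $O(\log\log d)$ rounds.
\item Replacing $\Delta$ by $\Delta\log n$ just moves the $\log n$ into the number of edges each vertex keeps.
\item ``Charging'' unhit vertices is unbounded without a degree cap.
\end{itemize}

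The missing idea is that the $d_i^{0.01}$ overhead is harmless once you have the degree classes. The graph $C_i$ has $O(n/d_i^{0.99})$ vertices and $O(nd_i)$ edges. So \Cref{lem:sparsification_tool} with $s=3.98$ gives a spanner of $C_i$ with $O(n^{1+1/k}/d_i^{0.48})$ edges in $O(n)$ local space, and these sizes sum geometrically to $O(n^{1+1/k})$.

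\textbf{Weighted case.} Your first option is only a slogan, and \cite{podc2021spanner} in fact uses the lossy reduction with $O(\log n)$ buckets. The paper uses the Chechik--Zhang reduction \cite{chechik2022constant} (\Cref{lm:weighted_spanner_reduction}):
\begin{itemize}
\item It forms $\lceil\log_{1+\varepsilon}(1/\varepsilon)\rceil$ groups of weight scales spaced by factors of $1/\varepsilon$.
\item It computes a deterministic $O(1)$-round MST \cite{nowicki2021deterministic} and sends it to every machine.
\item Each machine contracts the tree locally by star decompositions. This shrinks it geometrically from one scale to the next, so $\sum_i |V_i|=O(n)$ and hence $\sum_i |V_i|^{1+1/k}=O(n^{1+1/k})$.
\item Each edge lies in at most two scale sets, which keeps total space at $O(m+n)$.
\end{itemize}
Without this geometric shrinking, nothing in your plan bounds the sum over scales.

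Your fallback does not cover all $k$. A per-class budget of $n^{1+1/k}/\log n$ forces $1/k'=1/k-\log\log n/\log n$, so $k'=O(k)$ only for $k\le c\log n/\log\log n$ with a constant $c<1$. For larger $k$ below $\Theta(\log n)$, an $O(\log n)$-stretch spanner is not $O(k)$-stretch. For $k\ge\log n$, building an $O(\log n)$-stretch spanner with $O(n)$ edges of a \emph{weighted} graph is exactly the case you are trying to prove; bucketing gives up to $\Theta(n\log n)$ edges. So that branch is circular, and it is precisely the case $k=\log n$ needed for the $O(\log n)$-approximate APSP corollary.
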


This is the first deterministic sublogarithmic-round spanner construction in the MPC model. 
It matches the randomized state-of-the-art runtime by Dory, Fischer, Khoury, and Leitersdorf~\cite{podc2021spanner} while slightly improving both the size of the spanner and the total memory size by a $\log n$ factor. 


In Congested Clique, this improves upon the deterministic state of the art by Parter and Yogev~\cite{ParterY18}, who compute an $O(k)$-stretch spanner of size $O(k\cdot n^{1+1/k})$ in $O(\log k)$ rounds. { 
Chechik and Zhang~\cite{chechik2022constant} provide faster randomized spanners with stretch $(1+\eps)(2k-1)$ and size $O(n^{1+1/k})$ in constant rounds. We match their results up to constant factors. } Our spanner construction works in linear MPC, where \cite{chechik2022constant} does not translate to MPC because of its total memory requirements. 

By setting $k=\log n$, \Cref{thm:linear_MPC_spanner_weighted} gives an $O(\log n)$-spanner of size $O(n)$, that fits in one machine not only in the Congested Clique but also in the linear MPC model with $O(n)$ local space. 
Furthermore, if we have a machine storing such a spanner, we can locally compute an $O(\log n)$-approximate APSP.
Therefore, we can obtain a constant round algorithm for $O(\log n)$-approximate APSP in either model. See \Cref{cor:APSPlogn}  in \Cref{sec:apsp_corollary} for details.

\paragraph{Sublinear MPC (\Cref{sec:weighted_spanner_sublinear})}
Also in the sublinear MPC model, we give the first sublogarithmic-round deterministic spanner algorithm.

\begin{restatable}{theorem}{ThmSublinearMPCSpannerSimplified}[Simplified version of \Cref{thm:sublinear_MPC_spanner}]\label{thm:sublinear_MPC_spannerSimplified}
    For constants $\delta,\eps<1$, given a weighted graph $G$ on $n$ vertices and positive parameter $k\geq 1$, we can deterministically construct an $O(k^{1+\varepsilon})$-spanner of $G$ with $O(n^{1+1/k})$ edges in $O(\log k)$ rounds in the sublinear MPC model with $O(n^{\delta})$ local space and $O(m+n)$ total space.
\end{restatable}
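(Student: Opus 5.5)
The plan is to derandomize the algorithm of Biswas, Dory, Ghaffari, Mitrović and Nazari~\cite{biswas2021massively}. Its structure is a Baswana--Sen style clustering run in only $O(\log k)$ phases. Each phase grows the clusters geometrically, in the spirit of cluster merging / ``exponentially growing'' sampling probabilities. The only randomness in that algorithm is the step in which each current cluster is \emph{sampled} to survive into the next phase. We replace this step with a deterministic hitting set, computed by the hitting set routine developed in this paper. The target invariant is as follows. Suppose a node $v$ (or cluster) has at least $d$ neighboring clusters, where $d$ is the current degree threshold. Then at least one of these clusters is selected, while the total number of selected clusters is at most roughly $|\mathcal C|\cdot \log(\cdot)/d$, or at most $|\mathcal C|/d$ up to constants if we use the improved hitting set bound. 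Unclustered nodes with fewer than $d$ neighboring selected clusters add one edge per neighboring cluster. Nodes adjacent to a selected cluster join it via one edge.

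The key steps in order are as follows.

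First, handle weights. Bucket the edges into $O(\log n)$ weight classes by powers of two, or equivalently process edges in sorted order as in the weighted Baswana--Sen algorithm. Only a constant factor in stretch is lost, and the space stays $O(m+n)$ by the standard trick of letting each vertex keep only its lightest edge to each cluster.

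Second, fix the phase schedule from \cite{biswas2021massively}. In phase $i$ the sampling probability is $n^{-2^i/k}$, or the slightly slower schedule that yields stretch $k^{1+\eps}$. Then $O(\log k)$ phases suffice, and cluster radii grow by a constant factor per phase. Choosing the growth rate $(1+\eps')$-style gives total stretch $O(k^{1+\eps})$.

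Third, in each phase, contract the current clusters. For every vertex or cluster with high cluster-degree, form the set of its neighboring clusters. Then invoke the deterministic hitting set theorem on this set system with threshold $d_i=n^{2^i/k}$, obtaining selected clusters $\mathcal C_{i+1}$ with $|\mathcal C_{i+1}|=O(|\mathcal C_i|/d_i)$. The routine must run in $O(1)$ sublinear-MPC rounds with $O(m+n)$ total space. I will assume the earlier hitting set result provides exactly this guarantee for set systems whose total size is $O(m)$, which is the point of the ``improved hitting sets'' in the title.

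Fourth, bound the size. In phase $i$, each vertex not hit adds fewer than $d_i$ edges. Vertices with too many neighboring clusters are all covered by the hitting set. Charging edges to clusters, and summing over phases with the geometric decrease in $|\mathcal C_i|$, gives $O(n^{1+1/k})$. Since there is no $\log$ loss from the hitting set, this removes the $\log k$ factor of the randomized bound.

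The main obstacle is the third step. We must guarantee that the hitting set has size $O(|\mathcal C|/d)$ \emph{without} a $\log n$ factor, which a greedy set cover would incur. We must also ensure that it is computed in $O(1)$ rounds with $n^\delta$ local memory, even though a single vertex's list of neighboring clusters may not fit on one machine. My first approach is to have each high-degree vertex truncate its list to exactly $d$ neighboring clusters. This keeps total size $O(m)$. Sets larger than $n^\delta$ are then spread across machines via standard sorting and aggregation primitives~\cite{goodrich2011sorting}. If the hitting set theorem incurs an extra $\poly\log$ size factor, I would instead absorb it by slightly adjusting $d_i$. This costs only a $(1+o(1))$ factor in the exponent schedule, and hence only affects the $k^{\eps}$ term of the stretch. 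Finally, the space bound $O(m+n)$ follows because each phase works on a contracted graph of size at most $m$.
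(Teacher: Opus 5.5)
Your overall route, derandomizing the epoch/iteration algorithm of \cite{biswas2021massively} by replacing its cluster sampling with a deterministic hitting set, is the paper's route. The gap is in your third step, which rests on a hitting-set guarantee that is not available. You assume a constant-round, $O(n^\delta)$-memory hitting set of size $O(|\mathcal C|/d)$. The paper's constant-round routines only give size $O(|U|/d^{0.99})$, up to $\log_d N$-type factors. Its routines of size $O(|U|\log d/d)$ take $O(\log\log d)$ rounds, which would push you to $O(\log k\cdot\log\log n)$ rounds. In fact $O(|U|/d)$ is impossible in general: for $|U|$ independent uniformly random $d$-subsets of $U$ (with $d$ large and $d\ll|U|$), every hitting set has $\Omega(|U|\log d/d)$ elements with high probability. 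Randomized Baswana--Sen escapes this only because unhit vertices pay in edges, in expectation.

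Your fallback points in the right direction, but it is calibrated for a polylogarithmic loss and a $(1+o(1))$ adjustment. The actual loss is a polynomial factor $d^{0.01}$ per iteration, compounding over $t$ iterations in each of $\ell$ epochs. The missing step is the paper's explicit rebalancing:
\begin{itemize}
\item set $q(i)=p(i)^{0.98}$, which also absorbs the $\log_{1/p(i)}n$ factor;
\item take $p(i)=n^{-(0.7t+1)^{i-1}/k}$ and $\ell=\lceil\log_{0.5t+1}k\rceil$, in place of $n^{-(t+1)^{i-1}/k}$ and $\lceil\log_{t+1}k\rceil$;
\item verify both conditions of \Cref{cor:parameters_sublinear_spanner}, namely $\sum_{i=1}^{\ell}\prod_{x=1}^{i-1}q(x)^t/p(i)\le n^{1/k}$ and $\prod_{x=1}^{\ell}q(x)^t\le n^{-1/2+1/(2k)}$.
\end{itemize}
This is a constant-factor change in the exponent growth. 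The stretch $(2t+1)^\ell$ stays $O(tk^{1+\varepsilon})$ only because $t=\lceil 2(4^{1/\varepsilon}-1)\rceil$ is chosen large. With a single Baswana--Sen step per epoch, as in your $n^{-2^i/k}$ schedule, the stretch would instead compound to $k^{\log_2 3}$.

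Two further steps fail as stated.

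\textbf{Weights.} Bucketing into $O(\log n)$ weight classes and spanning each separately costs a $\log n$ factor in size, which is incompatible with $O(n^{1+1/k})$ edges. It is not equivalent to the weighted Baswana--Sen rule. If you use the weighted rule, the set handed to the hitting set must be the $1/p(i)$ \emph{closest} adjacent clusters of $v$ (the paper's $S_v$), not an arbitrary truncation to $d$ neighbors. The reason is that $v$ adds an edge to every adjacent cluster lighter than the selected cluster it joins. A hit outside its $d$ lightest clusters can therefore cost far more than $d$ edges.

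\textbf{Sets versus universe within an epoch.} Super-nodes are not contracted between the $t$ iterations of an epoch. So the number of sets can remain as large as $n_{i-1}$, while the universe $C_{j-1}$ shrinks to roughly $n_{i-1}q(i)^{j-1}$. The basic constant-round routine carries an additive $N/d$ term, which would stall the geometric decrease of $|C_j|$. The paper therefore uses the large-$N$ variants: \Cref{thm:linear MPC hitting set large N} when $1/p(i)\le n^\delta$, and \Cref{thm:sublinear_MPC_hitting set} otherwise. These cost $O(\log_{1/p(i)}(N/|C_{j-1}|))=O(j)$ rounds in the $j$th iteration, giving $O(t^2\ell)=O(\log k)$ rounds in total for constant $\varepsilon$.
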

This improves exponentially on the current state-of-the art deterministic algorithm~\cite{bezdrighinEG+2022} and it nearly matches the randomized state of the art by Biswas, Dory, Ghaffari, Mitrovi{\'c}, and Nazari~\cite{biswas2021massively} {up to a constant factor in the stretch, size, and the round complexity}. \kyungjini{Since the hitting set rounds have been increased for large $N$ (number of sets) compared to $U$ (set of elements), the round complexity of \Cref{thm:sublinear_MPC_spanner} also increased by $t$. Fortunately, the simplified complexity here(\Cref{thm:sublinear_MPC_spannerSimplified}) is still hold. @we need to modify the explanation here.}
Moreover, we improve the total space from $\tilde O(m+n)$  to $O(m+n)$.


The round complexity of \Cref{thm:sublinear_MPC_spannerSimplified} is nearly optimal, assuming the aforementioned 1-vs-2 cycle conjecture. This follows from an $\Omega(k)$ lower bound in the distributed LOCAL model \cite{derbel2008locality} for the closely related problem of constructing spanners with optimal parameters, namely $(2k-1)$-spanners with $O(n^{1+1/k})$ edges, which implies a conditional lower bound of $\Omega(\log k)$ in the sublinear MPC model \cite{ghaffari2019conditional}. 



\subsubsection{Shortest Paths}

As already mentioned above, we can obtain $O(\log{n})$-approximate APSP algorithms through the computation of spanners. Next, we present our results that compute $O(1)$-approximation for APSP \emph{directly}. 

\paragraph{Congested Clique (\Cref{sec:applications_constant_APSP_CC})}
In Congested Clique, we provide an algorithm for constant approximate APSP. 

\begin{restatable}{theorem}{ThmAPSPinCC}\label{thm:apsp_in_cc}
    There exists a deterministic algorithm that, given a weighted graph $G=(V,E,w)$ with $w\colon E \to [\poly n]$ on $n$ vertices, computes $O(1)$-approximate APSP in $O(\log\log \log n)$ rounds in Congested Clique.  
\end{restatable}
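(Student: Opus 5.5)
We derandomize the randomized $O(\log\log\log n)$-round algorithm of Bui, Chandra, Chang, Dory and Leitersdorf~\cite{BuiCCDL24}, keeping its high-level structure. That algorithm reduces $O(1)$-approximate weighted APSP to two problems. The first is computing $O(1)$-approximate distances from every vertex to its $\tilde O(\sqrt n)$ (or $n^{1-\gamma}$) nearest vertices. The second is computing approximate distances on a sparse emulator or skeleton graph built over a sampled set of centers. It then bootstraps: the stretch is improved step by step, where each step squares a density-type parameter. This squaring is what yields $O(\log\log\log n)$ iterations. The only randomized ingredients are the sampling steps, where one samples a set $A\subseteq V$ of size roughly $n/s$ such that every vertex's ball of its $s$ nearest vertices contains a sampled vertex. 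The plan is to isolate each such sampling routine and replace it with the deterministic hitting set algorithm developed in this paper. Since $|A|$ only has to be within a constant factor of the randomized size, this costs only constant factors in the stretch and size bounds.

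The steps, in order, are as follows. (i) Compute, for each vertex $v$, a set $N_s(v)$ of its $s$ nearest vertices under the current approximate distances. In Congested Clique this fits when $s\le\sqrt n$, since each vertex then holds $O(n)$ words. The nearest-$s$ computations with $O(1)$ stretch use the known deterministic filtered matrix multiplication / sparse-product techniques. (ii) Apply the deterministic hitting set theorem to the family $\{N_s(v)\}_{v\in V}$ over universe $V$. This yields, in $O(1)$ rounds, a set $A$ of size $O(n\log n/s)$, or $O(n/s)$ with the improved version, that hits every set. (iii) Let every $v$ pick its closest $p(v)\in A$. Build the emulator on $A$ consisting of edges $(v,p(v))$ and approximate inter-center distances, and learn all of it onto machines/nodes. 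This step is routed with Lenzen's routing, which is deterministic. (iv) Combine the estimates $d(u,p(u))+\tilde d(p(u),p(v))+d(v,p(v))$ with the local ball distances. This gives the usual constant-stretch bound: if $v\notin N_s(u)$ then $d(u,p(u))\le d(u,v)$.

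For the round count, we repeat the ball/hitting-set step with $s$ growing doubly exponentially ($s_{i+1}\approx s_i^2$, or via the $\log\log$ bootstrapping of~\cite{BuiCCDL24}). Each level costs $O(1)$ rounds because the deterministic hitting set runs in $O(1)$ Congested Clique rounds. The stretch grows by at most a constant factor per level, but the framework of~\cite{BuiCCDL24} already caps it at $O(1)$ overall through its final reduction to a $\log$-spanner on $n/\poly\log n$ centers. For that final reduction we use \Cref{thm:linear_MPC_spanner_weighted}, which is deterministic and takes constant rounds, in place of the randomized spanner.

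The main obstacle is step (ii) under the bandwidth constraints. The hitting set must be computed in $O(1)$ rounds when there are $n$ sets of size $s$, possibly with $ns$ much larger than what a single node can see. It must also produce $O(n/s)$ rather than $O(n\log n/s)$ centers at the levels where the extra $\log n$ would break the memory budget of the emulator. My first attempt would be to apply the paper's hitting set via the method of conditional expectations on a pairwise/$k$-wise independent family, distributed so that each node evaluates the cost of a chunk of sets. Where the $\log$ loss matters, I would instead sparsify each set to a random-looking subset of size $\Theta(\log n)$ deterministically before hitting.
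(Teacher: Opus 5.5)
Your high-level route is the paper's: keep the algorithm of \cite{BuiCCDL24} as a black box, replace its randomized constant-stretch spanner (cited from Chechik--Zhang) by \Cref{thm:linear_MPC_spanner_weighted}, and replace its sampling steps by the linear-MPC hitting sets, simulated in Congested Clique via \Cref{lem:MPC_to_CC}. That simulation also settles your bandwidth worry in step (ii), since $L=n\ge d$ and the instance has $O(n^2)$ total size.

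\textbf{The gap is in step (ii).} You assume a guarantee the paper does not provide. The $O(1)$-round deterministic hitting set (\Cref{thm:linear MPC hitting set}) outputs $O(n/s^{0.99})$ elements, not $O(n\log n/s)$. An $O(n/s)$ bound is not available at all: for a random family of $n$ sets of size $s$, with high probability every hitting set has $\Omega(\frac{n}{s}\log s)$ elements. The $O(n\log s/s)$ version (\Cref{thm:optimal_linear MPC hitting set}) takes $O(\log\log s)$ rounds. For $s$ polynomial in $n$ that is $\log\log n$ rounds, so even one such call exceeds the $O(\log\log\log n)$ budget. Hence your claim that $|A|$ is within a constant factor of the randomized size is false. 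The loss is a polynomial factor $s^{0.01}$, and it is not automatic that \cite{BuiCCDL24} tolerates it.

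\textbf{What the paper's proof does.} It checks each use of the hitting set individually.
\begin{itemize}
\item In \cite[Lemma 2.1]{BuiCCDL24}, $k=n^{2a^{-1/4}}$ and one needs $|V_S|\le n^{1-a^{-1/2}}$. This holds because $n/k^{0.99}=n^{1-1.98a^{-1/4}}$ and $1.98a^{-1/4}\ge a^{-1/2}$ for $a\ge 1$.
\item In \cite[Lemma 7.2]{BuiCCDL24}, $k=\sqrt n$ gives $|V_S|=O(n^{0.505})$, so $|V_S|^{3/2}=O(n^{0.7575})=O(n)$ as required.
\item In \cite[Lemma 7.3]{BuiCCDL24} the slack is \emph{not} tolerated as is. The original gathers $|V_S|^2$ edges at one node, which is now $n^{1.01}$. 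The paper instead builds an $O(1)$-stretch spanner of size $O(n)$ on that instance via \Cref{thm:linear_MPC_spanner_weighted} and computes APSP on it locally, losing only a constant in the stretch.
\item The one place that genuinely needs near-optimal size is $k=\log^4 n$ in \cite[Theorem 7.1]{BuiCCDL24}, with $|V_S|=O(n\log k/k)$. There the paper uses \Cref{thm:optimal_linear MPC hitting set}. Since $d=\poly\log n$, this costs $O(\log\log\log n)$ rounds, and it is invoked once.
\end{itemize}
None of this case analysis appears in your proposal. Your fallback of deterministically sparsifying each set to $\Theta(\log n)$ elements before hitting is not a concrete argument and would not give $O(n/s)$ anyway.

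\textbf{A separate point on stretch.} You should not rely on the stretch growing by ``a constant factor per level'' and then being capped by the framework. The substituted hitting sets satisfy the exact hitting property, so they change only set sizes, never stretch. A constant loss at each of $\Theta(\log\log\log n)$ levels would not give $O(1)$.
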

We obtain this result by derandomizing the algorithm of Bui, Chandra, Chang, Dory, and Leitersdorf~\cite{BuiCCDL24} in the same number of rounds and only incurring another constant factor in the stretch. 

This is exponentially faster than the previous fastest deterministic constant approximation for APSP: in unweighted graphs, Dory and Parter~\cite{DBLP:conf/podc/DoryP20} compute $(2+\eps)$-APSP in $\poly(\log \log n)$ rounds. In weighted graphs, we improve double-exponentially over the state of the art: Censor-Hillel, Dory, Korhonen, and Leitersdorf~\cite{DBLP:conf/podc/Censor-HillelDK19} compute $(3+\eps)$-APSP in $O(\log^2 n/\eps^2)$ rounds.

\paragraph{Near-Linear MPC (\Cref{sec:applications_heterogeneous})}
We also study approximate shortest paths in the near-linear MPC model. In fact, our results here are even stronger and use only a few near-linear space machines, while additional machines can have sublinear local space.
This setting is also known as the \emph{heterogeneous} MPC model, introduced by ~\cite{FischerHO25}.
{
In particular, our approximate SSSP and APSP algorithms require only one machine with near-linear space, while the approximate Multi-Source Shortest Paths (MSSP) algorithm requires a near-linear space machine for each source.}


We give the first deterministic algorithms for approximate SSSP, MSSP, and APSP {for unweighted graphs}. It matches the randomized state of the art by Dory and Matar~\cite{dory2024massively}.
In particular, we derive an MSSP algorithm: for a given set of sources $S\subseteq V$, it outputs approximate distances between every vertices $s\in S$ and $v\in V$ by using $|S|$ machines with near-linear local space and additional sublinear memory machines.

\begin{restatable}{theorem}{MSSPinSublinearMPCSimplified}[Simplified version of \Cref{thm:MSSP_sublinearMPC}]\label{thm:MSSP_sublinearMPC_simplified}
Let $G=(V,E)$ be an unweighted graph on $n$ vertices and $m$ edges, and let $\delta>0$ and $\varepsilon,\rho<1/2$ be constants.
For a fixed source set $S\subseteq V$ of size $O(n^{\rho})$,
there is a deterministic algorithm that computes all $(1+\varepsilon)$ approximate shortest paths for every pairs $(s,v)\in S\times V$
in $ \poly \log\log n$ rounds
in the MPC model using $|S|$ machines with $\tilde O(n)$ space and additional machines with
$O(n^{\delta})$ local space and $\tilde O((m+n^{1+\rho})n^{\rho})$ total space.
\end{restatable}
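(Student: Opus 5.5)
The plan is to derandomize the randomized MSSP algorithm of Dory and Matar~\cite{dory2024massively} step by step. That algorithm builds a sparse hopset-like structure (a near-additive emulator) with $\poly\log\log n$ rounds of construction, and then runs bounded-hop Bellman--Ford explorations from each source on a near-linear machine. The only randomness in that construction is a sampling routine. In each of $O(\log\log n)$ levels, vertices are sampled into the next level with probability roughly $1/\deg_i$, so that (i) every vertex whose bounded-radius ball is ``dense'' (at least $\deg_i$ level-$i$ vertices) has a sampled vertex nearby, and (ii) the total number of sampled vertices drops geometrically. I will replace each sampling step by a call to our deterministic hitting set algorithm. The ground set is the current level's vertices, and the sets are the truncated neighborhoods $N_{\deg_i}(v)$ consisting of the $\deg_i$ closest level-$i$ vertices within the current radius. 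The hitting set guarantees a hitting set of size $O(|U|\log(\cdot)/\deg_i)$, or with the improved bounds $O(|U|/\deg_i)$, which reproduces (ii) up to constants. Property (i) then holds deterministically for all dense vertices.

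The key steps, in order, are as follows.
\begin{enumerate}
\item Fix the level schedule $\deg_i$ and radii $\beta_i$ from the emulator construction, with $\varepsilon,\rho$ constants.
\item For each level, compute the truncated neighborhoods in sublinear MPC. This uses bounded-depth BFS / graph exponentiation restricted to $\deg_i$ nearest level-$i$ vertices, in $O(\log \beta_i)=O(\log\log n)$ rounds per level. Its memory is $O(n\cdot \deg_i)$, which I keep within $\tilde O(m+n^{1+\rho})$ by capping $\deg_i\le n^{\rho}$.
\item Apply the deterministic hitting set, which runs in $\poly\log\log n$ rounds in sublinear MPC.
\item Add emulator edges: clusters to centers for non-sampled dense vertices, and all edges to the truncated neighborhood for sparse vertices.
\item Ship the resulting $\tilde O(n^{1+\rho})$-size emulator (or its $n$-size part relevant to the source) to each of the $|S|$ near-linear machines.
\item Compute $(1+\varepsilon)$-approximate distances from each source locally.
\end{enumerate}
Replicating the input for the $|S|=O(n^\rho)$ sources explains the total-space factor $n^\rho$.

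For correctness I reuse the stretch analysis of \cite{dory2024massively} essentially verbatim. That analysis only uses the two properties above, deterministically satisfied now, plus the bound on the number of levels. The stretch therefore remains $(1+\varepsilon)$ for distances above a threshold, with an additive $\beta$ term. Small distances are handled exactly by the bounded BFS done in step 2 from each source on its machine. Rescaling $\varepsilon$ finishes the argument.

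The main obstacle is step 3 combined with the space budget. The hitting set instance has $N$ sets over universe $U$ with $N\ge |U|$, and sets of size $\deg_i$ that are themselves only implicitly distributed across sublinear machines. I must ensure that the hitting set routine runs in $\poly\log\log n$ rounds (not $O(\log n)$) and uses total space linear in $\sum |N_{\deg_i}(v)|$. I would first try to use the hitting set theorem in its regime where the set size is at least $\poly\log n$, reducing the number of sets by grouping, and fall back to adding all edges for vertices with small neighborhoods, which costs only an extra $\poly\log$ factor in emulator size, absorbed into $\tilde O$.
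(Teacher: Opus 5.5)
Your high-level plan matches the paper's. You keep the framework of \cite{dory2024massively}, and at the start of each (now sequential) iteration you replace the sampling of $A_i\subseteq A_{i-1}$ by a deterministic hitting set over explicitly computed truncated neighborhoods, then reuse their analysis. However, two steps fail as you wrote them.

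\textbf{The set size.} With sets of size $\deg_i\approx 1/p_i$, the best available deterministic bound is $O(|U|\log N/d)$, which gives $|A_i|=O(p_i|A_{i-1}|\log n)$. The bound $O(|U|/\deg_i)$ you invoke does not hold in general. The other variant, $O(|U|\log d/d+N/d)$, does not help either: $N=|V_i|$ can be $n\gg|A_{i-1}|$. So property (ii) is not reproduced ``up to constants''. The extra factor compounds over the $\Theta(\log\log n)$ levels to $(\log n)^{\Theta(\log\log n)}=n^{o(1)}$, which $\tilde O$ does not absorb. The fix, which is what the paper does, is to use the density threshold that the w.h.p.\ guarantee of \cite{dory2024massively} actually provides, namely $d=\lfloor\log n/p_i\rfloor$. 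Sparse vertices then keep $O(\log n/p_i)$ nearby non-sampled vertices, exactly as in the randomized analysis. Then apply \Cref{thm:optimal_linear MPC hitting set large N} if $d\le n^\delta$, or \Cref{thm:optimal_sublinear_MPC_hitting set} if $d>n^\delta$. These give $|A_i|=O(p_i|A_{i-1}|\log N/\log n)=O(p_i|A_{i-1}|)$ in $O(\log\log n)$ rounds. They also handle $N\ge|U|$ and sets spread over several machines directly, so no grouping is needed.

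\textbf{Computing the sets.} Graph exponentiation does not work on source-truncated lists. The midpoint of a shortest $s$--$v$ path is in general not a source, so $v$'s truncated list at doubled radius is not determined by the truncated lists $v$ can collect; learning all midpoints means learning whole balls. Your $O(n\deg_i)$ memory estimate also undercounts, since propagation ships $d$ entries across every edge. Moreover, the framework measures closeness by $\textsf{dist}^{(h_i)}_{G\cup E_i}$, with auxiliary weighted edges $E_i$, precisely so that only $h_i$ hops are needed. Your sets must be defined with this distance rather than by radius in $G$, or the analysis of \cite{dory2024massively} does not apply verbatim.

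The paper computes the sets by $h_i$ steps of a $d$-truncated multi-source Bellman--Ford (\Cref{lem:restricted_BF}). Each step relaxes every edge and keeps the $d$ best sources per vertex. This is correct because the $d$ closest $(h+1)$-hop sources of $u$ lie in $S_u\cup\bigcup_{\{u,v\}\in E_H}S_v$: if $s$ reaches $u$ through last neighbor $v$ and $s\notin S_v$, then $d$ sources closer to $v$ also reach $u$ more cheaply through $v$. The procedure costs $O(h_i)$ rounds and $\tilde O(|E(G\cup E_i)|\cdot d)$ total space, which fits the $\tilde O((m+n^{1+\rho})n^{\rho})$ budget. The added $O(h_i+\log\log n)$ rounds per iteration are dominated by the $O(h_i)$ rounds the iteration already costs.
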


Note that \Cref{thm:MSSP_sublinearMPC_simplified} implies an approximate SSSP algorithm using a single near-linear space machine.
We also give a deterministic algorithm to construct a data structure supporting constant-round queries for approximate APSP by using a single machine with near-linear space.

\begin{restatable}{theorem}{APSPinSublinearMPCSimplified}[Simplified version of \Cref{thm:APSP_sublinearMPC}]\label{thm:APSP_sublinearMPCSimplified}
Let $G=(V,E)$ be an unweighted graph on $n$ vertices and $m$ edges, and let $\delta>0$ and $\varepsilon,\rho<1/2$, and $2\leq k\leq 1/\rho$ be constants.
There is a deterministic algorithm that computes a distance oracle of size $\tilde O(kn^{1+1/k})$ 
in $ \poly \log\log n$ rounds
in the MPC model using a single machine with $\tilde O(n)$ space and additional machines with
$O(n^{\delta})$ local space and $\tilde O((m+n^{1+\rho})n^{1/k})$ total space.
Our oracle supports $O(1)$-round queries for $(1+\varepsilon)(2k-1)$-approximate distances between any two vertices in $G$. 
\end{restatable}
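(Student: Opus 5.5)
The plan is to derandomize the Thorup--Zwick style distance oracle used by Dory and Matar~\cite{dory2024massively}. The randomized ingredient there is the sampling hierarchy $V=A_0\supseteq A_1\supseteq\dots\supseteq A_k=\emptyset$, where each $A_{i+1}$ keeps every vertex of $A_i$ with probability $n^{-1/k}$. I would replace this sampling by the deterministic hitting set routine developed in the paper. Concretely, for each $i$ and each vertex $v$, let $N_i(v)$ be the $\tilde O(n^{1/k})$ vertices of $A_i$ closest to $v$, with ties broken by ID. Then $A_{i+1}$ is chosen as a deterministic hitting set of the family $\{N_i(v)\}_{v\in V}$ of size $\tilde O(|A_i|\,n^{-1/k})$.

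The clusters and bunches are then bounded deterministically. The bunch $B_i(v)=\{w\in A_i\setminus A_{i+1} : d(v,w)<d(v,A_{i+1})\}$ is contained in $N_i(v)$, because $N_i(v)$ contains a vertex of $A_{i+1}$. So $|B_i(v)|=\tilde O(n^{1/k})$ and the total oracle size is $\tilde O(kn^{1+1/k})$. The query algorithm and the stretch analysis are the standard Thorup--Zwick ones. The only change is that the pivots $p_i(v)$ and the distances are $(1+\varepsilon)$-approximate, which inflates the stretch $2k-1$ by a factor $(1+\varepsilon)$ after rescaling $\varepsilon$. A query for a pair $(u,v)$ takes $O(k)=O(1)$ lookups, each of which is a constant number of MPC rounds once the bunches are stored in a distributed dictionary.

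The computation has three parts.
\begin{enumerate}
\item For each level $i$, compute $N_i(v)$ for all $v$. This is a bounded-size nearest-$q$-sources exploration; I would use the sublinear-memory ball-growing and graph-exponentiation part of~\cite{dory2024massively}, which is deterministic and runs in $\poly\log\log n$ rounds within the stated $\tilde O((m+n^{1+\rho})n^{1/k})$ total space.
\item Apply the hitting set theorem to the family $\{N_i(v)\}_{v\in V}$.
\item Compute $(1+\varepsilon)$-approximate distances from $A_{i+1}$ to all vertices, via a virtual super-source attached to $A_{i+1}$. This is a single SSSP instance, handled by \Cref{thm:MSSP_sublinearMPC}, and it uses the single near-linear machine.
\end{enumerate}
Since $k\le 1/\rho$ is constant, there are only $O(1)$ levels, so the round complexity remains $\poly\log\log n$.

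The main obstacle is the interaction between the approximate distances and the bunch definition. With approximate $d(v,A_{i+1})$, the set $B_i(v)$ must be defined through approximate distances. It must still satisfy two things: it must lie inside the set we control in size, and it must obey the closure property that Thorup--Zwick needs for correctness ($w\in B_i(v)$ for the relevant witness). I would handle this by defining bunches with respect to the same approximate distance estimates that the single machine outputs, checking that the telescoping argument $d(v,p_{i+1})\le d(v,p_i)+d(u,v)\cdot(1+\varepsilon)$ still goes through, and enlarging $N_i(v)$ by a $(1+\varepsilon)$ slack so that containment still holds.
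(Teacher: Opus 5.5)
Your plan has a genuine gap in step~1, which is where all of the difficulty lives. You need, for every level $i$ and every $v$, the $\tilde O(n^{1/k})$ vertices of $A_i$ closest to $v$. You attribute this to a ``deterministic ball-growing and graph-exponentiation part'' of \cite{dory2024massively}, but no such deterministic all-scale subroutine exists there.

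\textbf{Why step~1 fails.} For $i\ge 1$ these nearest vertices may be $\Theta(n)$ hops from $v$. The only deterministic exploration primitive available is truncated multi-source Bellman--Ford (\Cref{lem:restricted_BF}, cf.\ \cite[Lemma~3.7]{dory2024massively}), and it advances one hop per $O(1/\delta)$ rounds. Dory and Matar reach large distance scales in few hops only through shortcut edges (near-additive emulators, limited-scale hopsets). Those shortcuts are themselves outputs of their Thorup--Zwick sampling framework, which is precisely the randomness you want to remove. So step~1 is not a deterministic black box, and the argument is circular.

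\textbf{A symptom of this.} If step~1 really returned the exact nearest sets, step~3 would be superfluous: the closest vertex of $A_{i+1}$ already lies in $N_i(v)$. You would then get stretch exactly $2k-1$. This shows step~1 is carrying a claim far stronger than anything you have justified.

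\textbf{What the paper does instead.} It derandomizes \emph{inside} that framework, and this forces a different structure: the hierarchy cannot be fixed up front.
\begin{itemize}
\item There are $\bar\ell=O(\log\log n)$ iterations, run sequentially. Iteration $i$ starts only once the edge set $E_i$ produced by earlier iterations is known.
\item It defines $S_v$ as the $\lfloor\log n/p_i\rfloor$ closest vertices of $A_{i-1}$ with respect to $\textsf{dist}^{(h_i)}_{G\cup E_i}$, and computes all $S_v$ with $O(h_i)$ applications of \Cref{lem:restricted_BF}.
\item It takes $A_i$ to be a hitting set from \Cref{thm:optimal_linear MPC hitting set} or \Cref{thm:optimal_sublinear_MPC_hitting set}. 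This is an $h_i$-hop dominating set of size $O(p_i|A_{i-1}|)$, found in $O(\log\log n)$ extra rounds.
\end{itemize}
This deterministically guarantees the two properties that \cite{dory2024massively} had only w.h.p. The extra rounds are absorbed because $h_i=\Omega(\log^{1+\rho}\log n)$. The oracle then follows from the parameter choices of \cite{dory2024massively} unchanged.

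\textbf{Your consistency concern.} In the paper's approach it never arises, because one hop-bounded distance function defines both the sets $S_v$ and the edges $Q_i$. Your proposed fix would not work in any case. Enlarging $N_i(v)$ by a $(1+\varepsilon)$ radius slack can add arbitrarily many vertices of $A_i$, which destroys the $\tilde O(n^{1/k})$ bound on bunches. The workable alternative is to define $B_i(v)$ as a subset of $N_i(v)$ by fiat.

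\textbf{A minor issue with step~3.} \Cref{thm:MSSP_sublinearMPC} is stated for unweighted graphs, so a zero-weight super-source needs a workaround. It also returns only distances, not the pivots $p_{i+1}(v)$ that your query algorithm needs.
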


We can further improve the total space in the above theorems to $O(m+n^{1+c})$ for a small constant $c>0$ at the cost of increasing the stretch by a constant factor, as discussed in \cite{dory2024massively}. In particular, if $m = \Omega(n^{1+c})$, the total space reduces to $O(m)$. To get this, we can first use our spanner algorithms and then run the approximate shortest paths algorithms on the spanner. Such an approach will increase the approximation factor to $\textsf{poly}(1/c)$ while improving the total memory to $O(m+n^{1+c})$. 
See \Cref{sec:applications_heterogeneous} for details. 
\subsection{A Glimpse at our Techniques}\label{sec:intro:technique}
For our distance computation results, we build on and adapt the algorithmic techniques underlying the randomized state-of-the-art approaches~\cite{podc2021spanner,biswas2021massively,BuiCCDL24,dory2024massively}. A key step in our approach is to isolate the role of randomness in these algorithms and identify a single structural component responsible for it,  which in all cases can be modeled as an instance of the \emph{hitting set} problem. For some problems, making this structure explicit (and efficiently solvable with deterministic algorithms) requires modifications to the respective original algorithm.

{
}

In the hitting set problem, we are given a finite universe of elements $U$, and $N$ subsets $S_1, \dots, S_{\HS} \subseteq U$. A \emph{hitting set} $D\subseteq U$ is a set such that $S_i\cap D \neq \emptyset$ for all $i\in [\HS]$. Trivially, $U$ is a hitting set itself. The aim is to compute a hitting set of small size.
When we are additionally guaranteed that each set $S_i$ contains at least $d$ elements, then we know that a simple greedy algorithm~\cite{LOVASZ1975383} gives a hitting set of size $O\left(\tfrac{|U|\log N}{d}\right)$. However, such an algorithm is highly sequential and not suitable for distributed models. A simple random sampling approach also gives a hitting set of size $O\left(\tfrac{|U|\log N}{d}\right)$: add each element $u\in U$ independently to $D$ with probability $p=\tfrac{\log N}{d}$.
We provide efficient deterministic algorithms to solve the hitting set problem, for a wide range of parameter regimes. In particular, we compute a hitting set of the aforementioned size in $O(\log \log d)$ rounds. 
We also give a \emph{constant round} algorithm that computes a slightly larger hitting set of size $O(\tfrac{|U|}{d^{0.99}})$.  
{This suboptimal size is still a sparse enough deterministic primitive for many applications. In particular, we use it to obtain many of the results in \Cref{sec:our resutls}. }


\subsection{Contributions on Computing Hitting Sets}\label{sec:overview_subsubsection_HS}
When considering the hitting set problem in the MPC model, it is not immediately clear what the `linear' and `sublinear' memory regimes are. 
In this paper, we consider it linear MPC when the local space $L$ is at least $d$, such that each set $S_i$ fits in a machine. To the best of our knowledge, previous work always set $L=|U|$ (which implies $L\ge d$). We choose the more flexible $L\ge d$ such that the algorithm is more versatile, as is necessary for some of our applications. 
In this regime, we obtain the following theorem. 


\begin{restatable}{theorem}{ThmLinearHittingSet}\label{thm:MPC hitting set intro: linear}
       Let $L$ be the parameter denoting the local space. There is a deterministic MPC algorithm that, given a universe $U$, a number $d\leq \min\{|U|,L\}$ and a collection of $N$ subsets $S_1,\ldots, S_N\subseteq U$ with $|S_i|\geq d$ for each $i\in[N]$, that computes a hitting set:
    \begin{enumerate}
        \item of size $O\left(\frac{|U|}{d^{0.99}}\right)$ in $O(1)$ rounds; or
        \item of size $O\left(\frac{|U|\log d}{d}\right)$ in $O(\log\log d)$ rounds.
    \end{enumerate}
Both algorithms assume $N\le|U|=\poly L$ and use linear total space: $O\left(L+|U|+\sum_{i=1}^N|S_i|\right)$.
\end{restatable}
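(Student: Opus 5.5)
We derandomize the sampling $p\approx \log N/d$ with limited independence and the method of conditional expectations, run in parallel over seed chunks. This is the standard route in linear MPC and Congested Clique.

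\textbf{Setup and reduction.} First we reduce every set to exactly $d'=\min\{d, L^{c}\}$ elements by truncation; a hitting set for the truncated sets is a hitting set for the originals. We take $c$ small when needed, and otherwise use $d'=d\le L$, so each set fits on a machine. We then group sets and elements by machine using sorting and aggregation, which take $O(1)$ rounds. These give, for any candidate sample $D$ from a small family, the counts $\sum_i \mathbf 1[S_i\cap D=\emptyset]$ and $|D|$.

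\textbf{Part 1: constant rounds, size $O(|U|/d^{0.99})$.} We sample each element with probability $p=d^{-0.99}$ using a $k$-wise independent hash family, where $k$ is a constant. For a fixed set $S_i$, the number of sampled elements $X_i$ has mean $\mu=d^{0.01}$. A $k$-th moment tail bound gives $\Pr[X_i=0]\le \Pr[|X_i-\mu|\ge\mu]=O(\mu^{-k/2})=O(d^{-0.005k})$. We do not need every set hit; we only need the expected cost
\[
\mathbb E\Bigl[|D|+\sum_i |S_i|\,\mathbf 1[X_i=0]\Bigr]
\]
to be small. The idea is that a set missed by the sample is fixed by adding one of its elements to $D$, and this costs only $1$ per unhit set, not $|S_i|$. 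So the cost to control is $\mathbb E|D| + N\cdot d^{-0.005k}$. Since $N\le|U|=\poly L$, a constant $k$ makes this $O(|U|/d^{0.99})$ whenever $d\ge L^{\Omega(1)}$. For smaller $d$ we iterate: a constant number of rounds of sampling with fix-up handles the case, or we can choose $k=O(\log_d |U|)$.

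The seed has $O(k\log |U|)=O(\log L)$ bits. We fix it in $O(1)$ chunks of $\epsilon\log L$ bits each. For each chunk we try all $L^{\epsilon}$ extensions in parallel, and each machine computes the conditional expectation of its sets' contributions under every extension. This requires that conditional probabilities of $X_i=0$ be computable, so we use the pessimistic estimator given by the $k$-th moment, which is a sum over $k$-subsets and is computable locally for constant $k$. We aggregate over the $L^\epsilon$ candidates and keep the best. The space used is $O(L^\epsilon\cdot\text{input})$, so we first shrink $N$ and $|U|$ by grouping, or take $\epsilon$ small and process candidates in batches.

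\textbf{Part 2: $O(\log\log d)$ rounds, size $O(|U|\log d/d)$.} We iterate a halving-style sparsification. In each phase we subsample the current universe with probability $q$ while keeping every set of size about $d_j$ with at least $q d_j/2$ survivors. We use concentration with $O(\log N)$-wise independence, derandomized as above with a pessimistic estimator. Taking $q=d_j^{-1/2}$ squares down the set sizes, $d_{j+1}\approx\sqrt{d_j}$. After $O(\log\log d)$ phases the sets have polylogarithmic size, and the universe has shrunk by a factor of about $d/\mathrm{polylog}(d)$. We finish with the Part 1 routine, or with a greedy local step, at $d=\mathrm{polylog}$.

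The main obstacle is the concentration step. Keeping all sets sizable needs failure probability $1/\poly N$, which requires $\Theta(\log N)$-wise independence and hence $O(\log N\log |U|)$ seed bits. That many bits cannot be fixed in $O(1)$ rounds per phase. My first attempt would be the standard fix of tolerating a few bad sets: any set that loses too much is frozen and one of its elements is added to the output. Under $d_j^{\Theta(1)}$-moment bounds the number of bad sets is small, costing at most $N/\poly(d)\cdot$ charged to $|U|$, so constant-wise independence per phase suffices. Accounting for the $\log d$ factor in the final size is then routine.
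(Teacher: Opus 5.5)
Your Part 1 is essentially the paper's argument: constant-wise independent sampling at rate $d^{-0.99}$, one fix-up element per missed set, and conditional expectations over an $O(\log|U|)$-bit seed. Two of your hedges are misplaced, though. Since $N\le|U|$, any constant $k\ge 198$ already gives $N\cdot d^{-0.005k}\le |U|\, d^{-0.99}$ for \emph{every} $d$ (the paper uses $k=200$ with a per-set miss probability $O(1/d)$). So small $d$ needs no special treatment, and your fallback $k=O(\log_d|U|)$ would actually lose the constant round bound when $d$ is small. No pessimistic estimator is needed either. Local computation is free, so a machine can compute the exact conditional expectation by enumerating, one at a time in $O(L)$ space, all $\mathrm{poly}(|U|)$ seeds that extend the current prefix. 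Storing one value per candidate chunk costs $O(L)$ per machine, not a blow-up of the input. Finally, truncate each set to exactly $d$ elements, never below $d$, or the size guarantee degrades.

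Part 2 has a genuine gap. Throughout your sparsification the number of sets stays $N$ (possibly $\approx|U|$). Meanwhile the universe shrinks to $|U_f|\approx |U|d_f/d$ and the set sizes shrink to $d_f=\mathrm{polylog}(d)$. Every frozen or unhit set costs one output element, so staying within $O(|U|\log d/d)$ forces a per-set failure probability of roughly $\log d/d$ in \emph{every} phase. With constant-wise independence the moment bound only gives failure probability $d_j^{-\Theta(1)}$, which in the late phases is $1/\mathrm{polylog}(d)$. The expected number of frozen sets there is then $N/\mathrm{polylog}(d)\gg |U|\log d/d$; your estimate ``$N/\mathrm{poly}(d)$'' silently replaces $d_j$ by $d$. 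To push the failure probability in phase $j$ down to $1/d$, the moment method needs $\Theta(\log d/\log d_j)=\Theta(2^j)$-wise independence. That means $\Theta(2^j\log|U|)$ seed bits and $\Theta(2^j)$ rounds per phase, summing to $\Theta(\log d/\log\log d)$ rounds rather than $O(\log\log d)$.

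The finishing step fails for the same reason. The Part 1 routine on the residual instance gives $O(|U_f|/d_f^{0.99}+N/d_f)$, and since $N\le|U_f|$ no longer holds, the term $N/\mathrm{polylog}(d)$ is too large. A ``greedy local step'' is not available either: the residual instance has size $Nd_f$, need not fit on one machine, and greedy is sequential.

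The missing ingredient is a generator that fools the hitting events $\{h(u)\neq 1 \text{ for all } u\in S_i\}$ to error $d^{-c}$ with a far shorter seed than limited independence allows. The paper samples \emph{once} at rate $p=2\log d/d$ using the Gopalan--Yehudayoff PRG (\Cref{lem:hash_functions_for_optimal}). Its seed has $O((\log\log|U|+\log d)\log\log d)$ bits, i.e., $O(\log\log d)$ chunks of $\lfloor\log L\rfloor\ge\lfloor\log d\rfloor$ bits. Each chunk is fixed by conditional expectations in $O(1)$ rounds when $N,|U|=\mathrm{poly}(L)$. Each set is then missed with probability at most $(1-p/2)^d+d^{-c}=O(1/d)$, giving size $O(|U|\log d/d+N/d)=O(|U|\log d/d)$, with no iterated sparsification needed in the linear regime.
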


For most applications, $N\le |U|=\poly L$ is a reasonable assumption. 
In \Cref{sec:hitting sets} we provide additional algorithms without this assumption. 

\medskip

We also study the case that $L<d$, which we call sublinear as single sets do not even fit into a machine. As is customary, we ask that $L=\Theta(d^\delta)$ for some constant $\delta$. 

\begin{restatable}{theorem}{ThmSublinearHittingSet}\label{thm:MPC hitting set intro: sublinear}
    Let $\delta\leq 1$ be a positive constant. There is a deterministic MPC algorithm that, given a universe $U$, a number $d\leq |U|$ and a collection of $N$ subsets $S_1,\ldots, S_N\subseteq U$ with $|S_i|\geq d$ for each $i\in[N]$, that computes a hitting set:
    \begin{enumerate}
        \item of size $O\left(\frac{|U|\log d}{d^{0.99}}\right)$ in $O(1)$ rounds; or
        \item of size $O\left(\frac{|U|\log d}{d}\right)$ in $O(\log\log d)$ rounds with $O(d^{\delta})$ local space.     
    \end{enumerate}
Both algorithms assume $|U|,N=\poly d$, use $O(d^{\delta})$ local space, and use linear total space: $O(|U|+\sum_{i=1}^N|S_i|)$.
\end{restatable}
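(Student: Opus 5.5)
We reduce the sublinear case to many small instances of the linear-regime problem by splitting large sets into pieces that fit on one machine. Fix a constant $\delta'<\delta$ and set $d'=d^{\delta'}$, so that sets of size $d'$, together with the seed-evaluation data we need, fit into local space $O(d^{\delta})$.

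\textbf{Step 1 (shrinking sets).} Replace each $S_i$ by an arbitrary subset $S_i'\subseteq S_i$ of size exactly $d$. This is done deterministically in $O(1)$ rounds by sorting the pairs $(i,u)$ and keeping the first $d$ per $i$, using \cite{goodrich2011sorting}-style sorting with $O(d^\delta)$ local space. Total space stays $O(\sum_i|S_i|)$.

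\textbf{Step 2 (block partition into small sets).} Partition $U$ arbitrarily into $d/d'$ buckets. Alternatively, split each $S_i'$ into $d/d'$ chunks of size $d'$, where the chunk sizes are only required to be $\Omega(d')$ on average. Then hitting every chunk is overkill. Instead, view the problem as a hitting set of $N$ sets of size $\ge d'$ (any single chunk of each $S_i'$), which fits on one machine. Applying \Cref{thm:MPC hitting set intro: linear}(1) with $d'$ in place of $d$ gives a set of size $O(|U|/d'^{0.99})$. This is too weak by itself, so we iterate, which is the crux. Each level subsamples $U$ to $U_1$ with $|U_1|\approx |U|/d'^{0.99}$ while guaranteeing that every set keeps $\gtrsim |S_i|/d'^{0.99}$ of its elements. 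This is a sampling version of the hitting set guarantee, obtained by the same derandomization, i.e.\ conditional expectations on a pairwise/$O(1)$-wise independent hash family with seed length $O(\log d)$ evaluated with $O(d^\delta)$ space.

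\textbf{Step 3 (iteration).} After $O(1/\delta')=O(1)$ levels, the set sizes drop from $d$ to about $d^{\delta'}$, so the remaining instance is in the linear regime. We then finish with \Cref{thm:MPC hitting set intro: linear}: part (1) yields total size $O(|U|\log d/d^{0.99})$ in $O(1)$ rounds, where the $\log d$ comes from concentration losses across levels, and part (2) yields $O(|U|\log d/d)$ in $O(\log\log d)$ rounds. The assumption $|U|,N=\poly d$ ensures that $\log N=O(\log d)$ and that seeds have $O(\log d)$ bits.

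\textbf{Main obstacle.} The main obstacle is the deterministic sampling step with \emph{sets not fitting in a machine}. To choose the seed we must aggregate, for each candidate seed, a pessimistic estimator summed over all sets, where each set's count of surviving elements is itself distributed across machines. We plan to fix the seed bit-by-bit (or in chunks of $\delta'\log d$ bits). For each chunk, we compute per-element contributions locally, aggregate them per set and then globally via $O(1)$-depth aggregation trees of fan-in $d^{\delta}$, and fix the chunk minimizing the estimator. Since there are $O(1)$ chunks, this costs $O(1)$ rounds. Keeping the total space linear requires evaluating only $d^{\delta'}$ candidate chunk values per element, which fits in a machine and keeps the overall space at $O(|U|+\sum|S_i|)$ up to constants.
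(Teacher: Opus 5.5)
Your outer structure matches the paper's. You prune each set to size $d$, then subsample $U$ over $O(1/\delta)$ levels at rate $d^{-\Theta(\delta)}$ while keeping every $|S_i\cap U^{(j)}|$ within constant factors of its mean, and finally solve a linear-regime instance on the sparsified universe. The gap is in the step you yourself call the main obstacle.

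\textbf{The sampling step cannot be derandomized as you describe.} At each level you must control, for every $i$, the event that $|S_i\cap D_h|$ deviates from its mean. Under a $k$-wise independent family conditioned on a seed prefix, this is not a sum of per-element terms:
\begin{itemize}
\item its conditional probability needs the full count of $S_i$ for every completion of the seed;
\item a moment estimator such as $\mathbb{E}[(X_i-\mu_i)^k\mid \textsf{r}]$ expands over $k$-tuples of elements lying on different machines, whose conditional joint distribution does not factor.
\end{itemize}
So ``compute per-element contributions locally, aggregate per set'' only evaluates linear quantities, such as the expected number of survivors, and those say nothing about each individual set. Aggregating per set separately for each of the $d^{\Theta(1)}$ seed completions would multiply the space by the number of completions.

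\textbf{The missing idea is to enforce concentration per machine instead of per set.}
\begin{itemize}
\item Sort so that each machine holds $d^{\delta}$ elements of a single set, and re-sort after every level.
\item Choose the rate so that every such block has $d^{\Omega(\delta)}$ expected survivors. The paper uses $p\approx d^{-\delta/2}$; your rate $d^{-0.99\delta'}$ with $\delta'<\delta$ also works, giving $d^{\delta-0.99\delta'}$.
\item Use $O(\log_d N/\delta)$-wise independence, so a block deviates by more than half its mean with probability $O(1/(dN^2))$.
\end{itemize}
Then the penalty $g_M(h)=|U|$ if block $M$ deviates (and $0$ otherwise) is evaluable locally for every seed. \Cref{lem:MPC_cond_exp} then returns a seed with no deviating block. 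Summing over blocks gives both the upper bound (final sets fit on one machine) and the lower bound (final sets keep at least $d^{\delta}$ elements), which is exactly \Cref{lem:sparsified_instance_HS}.

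\textbf{A second, fixable issue is in the finishing step.} You cannot invoke \Cref{thm:MPC hitting set intro: linear} as stated. The sparsified universe has $|U'|=O(|U|/d^{1-\delta'})$, which is typically much smaller than $N$, so the hypothesis $N\le|U|$ fails. The general versions carry an additive $N/d'$ term, and $N/d^{\delta'}$ can far exceed both $|U|\log d/d^{0.99}$ and $|U|\log d/d$. Instead, use the large-$N$ variants (\Cref{thm:linear MPC hitting set large N} and \Cref{thm:optimal_linear MPC hitting set large N}). Alternatively, do as the paper does and sample on $U'$ with the rate boosted by $\log_d N$ (respectively $\log N$), so that each set is missed with probability $O(1/N)$. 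Since $N=\poly d$, this yields the stated sizes in $O(1)$ and $O(\log\log d)$ rounds. Separately, the $\log d$ factor in Part 1 does not come from concentration losses: over $O(1)$ levels those cost only constant factors.
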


An important point of this  ``linear'' and ``sublinear''  hitting set distinction, is that does not necessarily correspond to the applications in linear and sublinear MPC graph problems. To be precise, we use the ``linear hitting set'' in some ``sublinear MPC''  graph algorithms.
In such applications, we have $\Theta(n^\gamma)$ local space. If $d< n^\gamma$ in our hitting set instance, we can use the ``linear hitting set''. If $d\ge n^\gamma$, we use the ``sublinear hitting set''. Note that since \Cref{thm:MPC hitting set intro: sublinear} required $|U|,N =\poly d$, it cannot be applied to (nontrivial) instances with small, e.g., constant, $d$. However, exactly in such a regime, we can use the ``linear hitting set''.

We derandomize the computation of general hitting sets for the full range of variables, including $d\ge n^\delta$, and show that this has many applications: it is sufficient to derandomize spanners and approximate shortest paths in linear and sublinear MPC. Since derandomizing random sampling is a fundamental problem, we hope that our general deterministic hitting set will find applications in the derandomization of other problems.

\paragraph{Comparison to prior work}
\Cref{thm:MPC hitting set intro: linear,thm:MPC hitting set intro: sublinear} improve and generalize the state of the art for hitting sets in MPC: \cite{CoyC23,FischerGG22} compute hitting sets of size $O(|U|/d^{0.2})$ for the special case that $|U|=N$ and $d\le n^\delta$ in constant rounds in sublinear MPC. This result is not sufficient for us for two reasons: 1) The upper bound of $O(|U|/d^{0.2})$ is not small enough for our applications. In our distance computations, it is crucial that the hitting set is almost the same size as the randomized ideal $O(|U|\log n/d)$. 2) For \Cref{thm:sublinear_MPC_spanner,thm:MSSP_sublinearMPC_simplified,thm:APSP_sublinearMPCSimplified}, we also need hitting sets in the truly sublinear regime where $d\ge n^\delta$.

In a broad sense, random sampling for problem specific `hitting events' has been derandomized for several linear and sublinear MPC problems~\cite{czumaj2021simple,PaiP22,FischerGG23,GilibertiP24,ji2025fast}. 
In both hitting set algorithms and in most of these other works, it is assumed that $d\le n^\delta$.

We give an overview of our hitting set results, how we obtain them, and how they can be used to derandomize distance problems in \Cref{sec:overview}.



\newpage

\section{Technical Overview}\label{sec:overview}
In  \Cref{sec:overview:HS} we present our deterministic hitting set algorithms in a nutshell. Then, in \Cref{sec:overview:applications} we give an overview of the randomized components in the state-of-the-art algorithms for the distance problems presented in \Cref{sec:our resutls} and how we can use our deterministic hitting set algorithms to  systematically replace these randomized procedures.

\subsection{Deterministic Hitting Set Algorithms in a Nutshell}\label{sec:overview:HS}
To showcase our approach without many of the technical notation and details, we first consider a special case of the hitting set problem: the \emph{dominating set} problem.  For many of the theorems in this paper, the full generality of the hitting set problem is not necessary, and the dominating set problem suffices. For this overview, we further restrict to $d$-regular graphs, i.e., $|N(v)|=d$ for every $v\in V$. Here $N(v):= \{v\} \cup \{ u\in V: \{u,v\}\in E\}$ denotes the \emph{neighborhood} of $v$. 

\subsubsection{Dominating Sets}
 We define a \emph{$d$-dominating set} as follows.
\begin{definition}
    Let $G=(V,E)$ be a graph. A \emph{$d$-dominating set} is a subset of vertices $D\subseteq V$ such that for every vertex $v\in V\setminus D$ with $\deg(v)\ge d$, we have $N(v)\cap D\neq \emptyset$.
\end{definition}



First, we sketch how to obtain a $d$-dominating set of size $O\left(\tfrac{n\log n}{d}\right)$ using a randomized algorithm. This algorithm has two phases. First, we sample each vertex $v\in V$ with probability $p=\log n/d$ and denote the set of sampled vertices by $D_s$. Some vertices $v\in V$ may still be uncovered, i.e., $N(v)\cap D_s=\emptyset$ holds. Let $D_s
'$ be the set of uncovered vertices. By construction, $D:=D_s\cup D_s'$ is a $d$-dominating set of $G$.

Next, we consider the size of $D$. By linearity of expectation, we obtain $E[|D_s|]=O(np)=O(n\log n/d)$. For the size of $D_s'$, consider a vertex $v$. The probability that $v$ is not covered by $D_s$ is at most 
\begin{align}
    (1-p)^{|N(v)|}=\left(1-\tfrac{\log n}{d}\right)^d \le e^{-\log n} = n^{-1}.\label{eq:probs}
\end{align}
Now, again by linearity of expectation, we obtain that $|D_s'|\le1$ in expectation. Concluding, the expected size of the dominating set is 
\begin{align*}
    \E[|D|]=\E[|D_s|]+\E[|D_s'|] = O(n\log n/d)+1=O(n\log n/d).
\end{align*} 



\paragraph{Derandomization procedure}
To derandomize the process described above, we use the \emph{method of conditional expectation}~\cite{Luby93,motwani1989probabilistic}. This has become a standard technique in designing deterministic distributed algorithms in MPC, in particular see, e.g.,~\cite{ghaffari2018derandomizing,GhaffariK18,
ParterY18,
Censor-HillelPS20,
czumaj2021improved,
czumaj2021simple,
bezdrighinEG+2022,
DBLP:conf/podc/DoryP20,
PaiP22,
CoyC23,
FischerGG23,
GhaffariGHIR23,
GilibertiP24,
ji2025fast}. In the MPC model this derandomization method is particularly efficient if the randomness for the algorithm is very restricted in the sense that it can be extracted from an (extremely) short random seed. Next, we will discuss the method yielding a fully deterministic algorithm in more detail. 

For now, assume we are given a randomized dominating set algorithm whose only use of randomness is restricted to a random seed of $b$ bits -- think of a logarithmic seed length. If the expected dominating set size of the algorithm is at most some value $x$, then the goal of the method is to deterministically compute a random seed $s^*$ that attains this expectation; once the seed $s^*$ is known we can run the given algorithm using $s^*$ as the (deterministic) source of randomness  and can deterministically compute  a dominating set with size at most $x$.

To see how a seed $s$ leads to a dominating set, we let each random seed $s$ denote a set of sampled vertices $D_s$. We let $D_s'\subseteq V$ denote the set of vertices that are not covered, and hence added in phase two. 
This means we obtain the $d$-dominating set $D_s\cup D_s'$. As said above, the goal is now to find $s^*$ such that $|D_{s^*}\cup D_{s^*}'| \le x$. 
This is done iteratively, by considering $\Theta(\log n)$ bits of $s^*$ at a time, and compute what the best choice is for these bits: given a prefix $r$, we need to compute the next $\Theta(\log n)$ bits. The following observation summarizes the crucial step in our derandomization method (\Cref{lem:MPC_cond_exp}).

\begin{observation}\label{obs:parallel_no_comm}
    Given a prefix $r$, to compute the next $\Theta(\log n)$ bits, we need to determine if $v\in D_s$ or $v\in D_s'$ for all seeds $s$ with prefix $r$ \emph{in parallel}. In particular, this means it is sufficient if we can determine if $v\in D_s$ or $v\in D_s'$  for any seed $s$ \emph{without communication}. 
\end{observation}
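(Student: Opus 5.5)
The plan is to reduce the computation of the next $\Theta(\log n)$ bits to a single aggregation over all $n^{\Theta(1)}$ candidate extensions of the prefix $r$.

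\textbf{Setup.} Write the objective as a sum of per-vertex indicator costs,
\[
|D_s\cup D_s'| \;\le\; \sum_{v\in V}\bigl(\mathbf 1[v\in D_s]+\mathbf 1[v\in D_s']\bigr).
\]
By linearity of expectation, for any prefix $r'$ we have $\E[|D_s\cup D_s'|\mid s \text{ extends } r'] \le \sum_v \Pr[v\in D_s\mid r']+\Pr[v\in D_s'\mid r']$. The procedure has three steps.

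\textbf{Step 1.} Enumerate all $2^{\Theta(\log n)}=\poly n$ extensions $r'=r\circ x$. Averaging shows that some $x$ satisfies $\E[\cdot\mid r\circ x]\le \E[\cdot\mid r]$. Iterating from the empty prefix, whose expectation is at most the value $x$ from the randomized analysis, yields a full seed $s^*$ with cost at most that value.

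\textbf{Step 2.} Compute each term $\Pr[v\in D_s\mid r']$ and $\Pr[v\in D_s'\mid r']$ locally. When the seed length $b$ is $O(\log n)$, as with a $k$-wise independent family of constant $k$, the remaining suffix has only $\poly n$ completions. For each completion $s$, the machine holding $v$ together with $N(v)$ evaluates the hash function on $v$ and on every $u\in N(v)$. This decides whether $v\in D_s$ and whether $N(v)\cap D_s=\emptyset$, i.e.\ whether $v\in D_s'$, using only the seed and the locally stored neighborhood and no communication. This is exactly the content of the observation: membership is a deterministic function of $(s, N(v))$. The conditional expectation is then the average of these indicators over completions.

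\textbf{Step 3.} Aggregate. Each machine produces, for every candidate $x$, a partial sum over the vertices it holds. A standard constant-round sum over a broadcast tree of fan-in $L^{\Omega(1)}$ then gives the global total for every candidate. The minimizer is broadcast, which costs $O(1)$ rounds per block of $\Theta(\log n)$ bits, and $O(b/\log n)$ blocks are needed in total.

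\textbf{Main obstacle.} The delicate point is space. There are $\poly n$ candidates, each requiring the evaluation of $|N(v)|$ hash values for every completion. To keep total work and space linear, the block size should be chosen as $c\log n$ for a small constant $c$, so that the number of candidates is $n^{c}\le L^{\Omega(1)}$. Simultaneously, the seed must be made short enough to have only $O(1)$ blocks. For the second requirement I would first try a family of limited independence: pairwise independence suffices to bound $\Pr[v\in D_s']$ by a Chebyshev-type estimate, yielding the weaker size $|U|/d^{0.99}$-type guarantee. Getting the $\log d/d$ bound would require iterating, which is what gives the $O(\log\log d)$ rounds.
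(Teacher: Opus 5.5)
Your Steps 1--3 are the mechanism the paper uses. \Cref{lem:MPC_cond_exp} fixes the seed in blocks of $\lfloor\log L\rfloor$ bits. \Cref{claim:internal_computation} shows that each per-machine objective (sampled elements stored there, or stored sets not yet hit) has a conditional expectation computable by enumerating all seeds that extend the prefix and checking membership locally. The $L$ candidate totals are then aggregated by sorting and prefix sums in $O(\log k/\log L)$ rounds. As an argument for the observation itself, this is correct.

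The side claims in your last paragraph are wrong, though.

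\emph{Pairwise independence.} It does not give the $n/d^{0.99}$-type bound. With $p=d^{-0.99}$, the expected number of sampled vertices in $N(v)$ is only $\mu=d^{0.01}$. Chebyshev then gives $\Pr[v\in D_s']\le \mathrm{Var}/\mu^2\le 1/\mu=d^{-0.01}$. So the expected size of $D_s'$ is bounded only by $n/d^{0.01}$, which swamps $n/d^{0.99}$. The paper instead uses $200$-wise independence with the Bellare--Rompel bound (\Cref{lem:tail_bound_independent}), which gives $8(400/\mu)^{100}=O(1/d)$. Since $k$ is a constant, the seed is still $O(\log n)$ bits and your round count is unaffected.

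\emph{Source of the $O(\log\log d)$ rounds.} For the $O(n\log n/d)$-size set (the $O(|U|\log d/d)$ bound in the hitting set version), this round count does not come from iterating. It comes from switching to the Gopalan--Yehudayoff generator (\Cref{lem:hash_functions_for_optimal}), whose seed has $O(\log n\cdot\log\log d)$ bits, i.e.\ $O(\log\log d)$ blocks of $\Theta(\log n)$ bits. In that regime the premise of your Step~2, that only $n^{O(1)}$ completions remain, fails. This is harmless: the paper simply enumerates all completions one at a time within $O(L)$ space. In MPC only memory and communication are constrained, not local work, so the ``total work'' concern is also unnecessary.
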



We show that in the linear MPC model, we can store the information such that $v\in D_s$ or and $v\in D_s'$ can both be checked internally. We repeat this for $O(\lceil \tfrac{b}{\log n}\rceil)$ iterations until all bits are fixed. 


\paragraph{Short random seeds.}
 In the randomized sampling-based algorithm at the start of this section, we sampled each vertex \emph{independently} with probability $p$. This needs at least one random bit per vertex, so at least $n$ random bits in total. It turns out that we can cope with less randomness. We show that we in total need either
 \begin{enumerate}
     \item $O(\log n \cdot \log \log d)$ bits for a dominating set of size $O(n\log n/d)$; or
     \item  $O(\log n)$ random bits for a dominating set of size $O(n/d^{0.99})$.\label{item:fastHS}
 \end{enumerate}

As explained above, this leads to $O(1)$ and $O(\log \log d)$-round algorithms respectively. Many applications (sometimes after modifications) do not need dominating sets of the optimal size, and the slightly increased size of (2) suffices. Thus, for this warm-up we focus on (2). 

Here, we sample with  probability $p=1/d^{0.99}$ rather than the optimal $p=\log n/d$. 
This means that the expected size of $D_s$ becomes  $O(n/d^{0.99})$. By this increased size, we obtain some slack in the randomness: we no longer need complete independence between vertices to bound the size of the uncovered vertices $D_s'$. Instead, $O(1)$-\emph{wise independence} suffices\footnote{Given $n$ random variables $X_1, \dots X_n$, we say that they are \emph{$k$-wise independent} if any subset of $k$ random variables behaves as if they are independent. For (1), we use a different way to limit the seed size via pseudorandom generators~\cite{GopalanY20}.}. 
Vadhan~\cite{Vadhan12} gives a procedure to extract $n$ $O(1)$-wise independent random variables sufficient for our sampling process from a seed with $O(\log n)$ random bits.
Tail bounds for $k$-wise independent random variable then show that a vertex is uncovered with probability at most $1/d$ (compare to \Cref{eq:probs}). Hence the expected number of vertices $D_s'$ that join in the second phase is $O(n/d)$. In total this gives a $d$-dominating set of size $O(n/d^{0.99})+O(n/d)=O(n/d^{0.99})$.

Combining (2) with the explained derandomization method, we obtain the following simplified version of \Cref{thm:MPC hitting set intro: linear}. 
\begin{theorem}\label{thm:warmup_linear}
     There exists a deterministic MPC algorithm that, given a $d$-regular graph $G=(V,E)$ on $n$ vertices and $m$ edges, computes a $d$-dominating set of size $O(\tfrac{n}{d^{0.99}})$ in constant rounds. The algorithm uses linear $\Theta(n)$ local space and $O(m)$ total space.  
\end{theorem}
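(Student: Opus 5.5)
We follow the two-phase sampling scheme sketched above and derandomize it with the method of conditional expectations, using an $O(1)$-wise independent family with an $O(\log n)$-bit seed. Fix a constant $c$ (say $c=200$) and use Vadhan's construction~\cite{Vadhan12} to obtain from a seed $s\in\{0,1\}^{b}$, $b=O(\log n)$, a family of $c$-wise independent indicators $X_v(s)$, $v\in V$, each with $\Pr[X_v=1]=p:=d^{-0.99}$. We may round $p$ to a power of two up to a factor of $2$. Set $D_s=\{v: X_v(s)=1\}$ and $D'_s=\{v: N(v)\cap D_s=\emptyset\}$. Then $D_s\cup D'_s$ is a $d$-dominating set for every seed $s$, so we only need to control the cost function $f(s)=|D_s|+|D'_s|$.

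\textbf{Bounding the expectation.} We have $\E[|D_s|]=np$. For a fixed $v$, let $Y=\sum_{u\in N(v)}X_u$. Then $\E[Y]=dp=d^{0.01}$, and $v\in D'_s$ exactly when $Y=0$. A $c$-th moment tail bound for $c$-wise independent variables (e.g.\ the Bellare--Rompel bound) gives
\[
\Pr[Y=0]\le\Pr[|Y-\E Y|\ge \E Y]\le O\!\left(\left(\tfrac{c\,\E Y+c^2}{(\E Y)^2}\right)^{c/2}\right)=O\!\left(d^{-0.01c/2}\right)\le O(1/d)
\]
for $c\ge 200$ and $d$ larger than a constant. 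Small constant $d$ is handled trivially by taking $D=V$, which has size $O(n)=O(n/d^{0.99})$. Hence $\E[f]=O(n/d^{0.99}+n/d)=O(n/d^{0.99})$, so some seed $s^*$ satisfies $f(s^*)\le \E[f]$.

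\textbf{Derandomization in $O(1)$ rounds.} The seed has $b=O(\log n)$ bits, so we fix it in $O(1)$ chunks of $\varepsilon\log n$ bits each, choosing $\varepsilon$ so that $n^{\varepsilon}$ candidate extensions per chunk fit the memory budget. Given a fixed prefix $r$, for each candidate next chunk $q$ we want the conditional expectation $\E[f\mid rq]$. This is a sum over vertices of $\Pr[X_v=1\mid rq]+\Pr[N(v)\cap D=\emptyset\mid rq]$. Following \Cref{obs:parallel_no_comm}, each vertex's term must be computable locally. We therefore place each neighborhood $N(v)$ (of size $d\le n$) on a machine, $O(m)$ total space. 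Instead of exact conditional expectations we use pessimistic estimators: the term $\Pr[X_v=1\mid rq]$ and the $c$-th moment $\E[(Y_v-\E Y_v)^c\mid rq]/(\E Y_v)^c$. Both expand into sums over $O(1)$-tuples of neighbors, and under Vadhan's polynomial-based family each tuple's conditional probability is computable from the seed prefix. The main obstacle is that this expansion costs $d^{c}$ per vertex per candidate, which is too much both in time-as-space and in communication. To get around it, we aggregate by summing over candidates $q$ with machines responsible for chunks of neighborhoods, then apply an MPC aggregation tree of constant depth (fan-in $n^{\Omega(1)}$) to sum the per-vertex estimator values for all $n^{\varepsilon}$ candidates simultaneously. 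The total message volume is $n\cdot n^{\varepsilon}$ per round, which requires $\varepsilon$ small relative to the slack in total space. If this is too large, we process candidates in batches, still in $O(1)$ rounds.

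A global machine then picks the $q$ minimizing the summed estimator and broadcasts it. Since the estimator is an upper bound on $f$ that is an average over extensions, it never increases. After $O(1)$ chunks the whole seed $s^*$ is fixed and $f(s^*)=O(n/d^{0.99})$. Finally, each vertex evaluates $X_v(s^*)$ locally, checks its neighborhood to decide membership in $D'_{s^*}$, and outputs $D_{s^*}\cup D'_{s^*}$, all in $O(1)$ further rounds.
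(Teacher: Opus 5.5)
Your plan matches the paper's proof. The shared ingredients are two-phase sampling $D_s\cup D'_s$ with $p=d^{-0.99}$, a $200$-wise independent family with an $O(\log n)$-bit seed, the Bellare--Rompel bound giving $\Pr[N(v)\cap D_s=\emptyset]=O(1/d)$, and conditional expectations fixing the seed in $O(1)$ chunks of $\Theta(\log n)$ bits. The paper obtains the theorem as the case $U=V$, $S_v=N(v)$, $L=\Theta(n)$ of \Cref{thm:linear MPC hitting set}. Handling constant $d$ by taking $D=V$ is a clean way to secure the hypothesis $\mu\ge c$ of the tail bound.

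\textbf{Gap: space and round accounting.} You aggregate \emph{per-vertex} estimator values, which is $n\cdot n^{\varepsilon}$ numbers per chunk. The theorem allows only $O(m)=O(nd)$ total space, so for $d\ll n^{\varepsilon}$ (say $d=\Theta(\log n)$) this does not fit for any constant $\varepsilon>0$. Your fallback of batching candidates then needs roughly $n^{\varepsilon}/d$ sequential batches, which is not $O(1)$ rounds.

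The repair is to sum locally before communicating. With $\Theta(n)$ local space, the neighborhoods occupy only $O(m/n)=O(d)$ machines. Each machine outputs, for every one of its $n^{\varepsilon}\le L$ candidates, the sum of the terms of all vertices it stores. That is $O(d\,n^{\varepsilon})=O(m)$ numbers, which can be aggregated in $O(1)$ rounds. This is exactly why the paper's objective functions $f_i,g_i$ in \Cref{lem:MPC_cond_exp} are indexed by \emph{machines}: there are $k=O(|U|/L+Nd/L)$ of them, each producing $L$ values, for $O(kL)=O(m)$ total space.

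\textbf{The $d^{c}$ obstacle is not real, and the pessimistic estimator is unnecessary.} MPC does not charge for local computation. Since the seed has only $O(\log n)$ bits, a machine can compute the \emph{exact} conditional probability $\Pr[N(v)\cap D_s=\emptyset\mid rq]$ by enumerating the $\mathrm{poly}(n)$ completions of the prefix one at a time in $O(d)$ working space. This is what the paper does in \Cref{claim:internal_computation}. Your moment estimator is still valid: it dominates the indicator pointwise because $c$ is even, and its mean is $O(1/d)$ by the same moment bound. However, the aggregation tree you introduce to get around the $d^{c}$ cost addresses communication, not local time, so that paragraph conflates two separate issues.
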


To obtain an analogous result in the sublinear regime, another challenge appears. We describe this challenge and how to overcome it next.

\paragraph{Sublinear MPC}
In the sublinear MPC model, we only have $O(n^\delta)$ local space, for some constant $\delta$.
In this regime, we can still internally find out if $v\in D_s$ for a single random seed $s$ as required in \Cref{obs:parallel_no_comm}. However, to see if $v\in D_s'$, i.e., to see if $N(v)\cap D_s=\emptyset$ for $\poly n$ seeds in parallel, we need access to all of $N(v)$. Since $N(v)$ does not necessarily fit on a single machine, we cannot check \emph{without communication} if $N(v)\cap D_s=\emptyset$. This means that without adaptations, the explained derandomization cannot be applied. Most previous sublinear MPC algorithms~\cite{FischerGG22,CoyC23,FischerGG23} get around this by assuming that $d\le n^\delta$ holds for their dominating set (hitting set) instance. 
However, not all applications satisfy $d\le n^\delta$. 
We overcome this by introducing an additional \emph{sparsification step}. A similar approach was applied by Giliberti and Parsaeian~\cite{GilibertiP24} for the special case of ruling sets.

\paragraph{Sparsification in Sublinear MPC}
The goal of our sparsification step is to obtain some subset $V' \subseteq V$, such that $N(v)\cap V' = n^\delta$ for every node $v\in V$. 
Let us first consider a simple randomized routine that achieves this (in expectation): we sample every node with probability $n^\delta/d$. This gives the following guarantees (in expectation):
\begin{itemize}
    \item $d'(v):=|N(v)\cap V' | = n^\delta$, and
    \item $|V'|=n\cdot \tfrac{n^\delta}{d}$.
\end{itemize}

Given such a set $V'$, we can fall back to the algorithm for the linear case since $d'\le n^\delta$. This would give a dominating set of size $O\left(\tfrac{|V'|}{n^{0.99\delta}}\right)= O\left(n\cdot \tfrac{n^\delta}{d\cdot n^{0.99\delta}}\right)= O\left(\tfrac{n}{d\cdot (n^\delta)^{-0.01}}\right) = O\left(\tfrac{n}{d^{0.99}}\right)$, where the last equality holds since $d\ge n^\delta$. 

So we reduced our problem to finding $V'$ deterministically; we cannot directly sample with probability $p=\tfrac{n^\delta}{d}$ and use the same method for derandomization suffers from the exact same problem as before. 
Instead we sample iteratively with probabilities $p_1,p_2, \dots, p_k$ such that $p=p_1\cdot p_2   \cdots   p_k$, and each $p_i\ge 1/n^{\delta/2}$ -- for this overview, we assume $p_i= 1/n^{\delta/2}$ for each $i$. In the randomized algorithm, this leads to exactly the same results. However, we now have a big advantage: if we sample with probability $p_i$, we can satisfy \Cref{obs:parallel_no_comm} as follows. Let $v\in V$ be some node with $d\gg n^{\delta}$ neighbors stored on $d/n^\delta$ consecutive machines. With one sampling step, the expected number of sampled nodes on each machine is 
\begin{align*}
    n^{\delta}\cdot p_i= n^{\delta} \cdot 1/n^{\delta/2} =n^{\delta/2}.
\end{align*}
And here is the crux: the number of sampled nodes on each machine can be checked locally, without any communication, satisfying \Cref{obs:parallel_no_comm}. Moreover, since this expectation is still quite large, we can use tail bounds to show that even with limited dependence, we obtain this expectation. 
In the above, we changed the perspective from \emph{nodes} to \emph{machines}: when sampling with a too low probability, there is only a guarantee on the sampled nodes in the \emph{neighborhood} as a whole, not on each machine. Instead, we measure the quality of a seed on each \emph{machine}, for which no communication is necessary. To ensure feasibility of our subsampling, i.e., to ensure that the set of sampled nodes is not too small, we apply our derandomization method with an adjusted potential function. For full details we refer to \Cref{sec:sublinear_HS}.

So combining our sparsification step with the linear MPC algorithm on the sparsified instance, we obtain the following simplified version of \Cref{thm:MPC hitting set intro: sublinear}. 
\begin{theorem}\label{thm:warmup_sublinear}
    Let $\delta>0$ be a positive constant. 
     There exists a deterministic MPC algorithm that, given a $d$-regular graph $G=(V,E)$ on $n$ vertices and $m$ edges, for some $d= n^{\Omega(1)}$, computes a $d$-dominating set of size $O(\tfrac{n}{d^{0.99}})$ in constant rounds. The algorithm uses linear $\Theta(n^\delta)$ local space and $O(m)$ total space.  
\end{theorem}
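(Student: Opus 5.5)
The proof has two phases. First, a deterministic sparsification step shrinks the instance to a vertex set $V'$ whose induced neighbourhoods have size about $n^{\delta/2}$, so that each neighbourhood fits on a machine. Second, \Cref{thm:warmup_linear} (more precisely its hitting-set form, \Cref{thm:MPC hitting set intro: linear} with $d'\le L$) is applied to the sets $N(v)\cap V'$ over the universe $V'$. Write $L=\Theta(n^\delta)$.

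If $d\le n^{\delta/2}$, every $N(v)$ fits on one machine, and we apply the linear algorithm directly. Otherwise, set the target degree to $d'=n^{\delta/2}$ and the overall sampling rate to $p=d'/d$. Write $p=p_1\cdots p_t$ with each $p_i\ge n^{-\delta/4}$. Since $d=n^{\Omega(1)}$ and $d\le n$, we have $t=O(1)$.

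\textbf{Sparsification.} Store each adjacency list $N(v)$ in chunks of size $\Theta(n^{\delta})$ on consecutive machines. In round $i$, sample each surviving vertex with probability $p_i$, using $c$-wise independent bits for a large constant $c$. By Vadhan's construction~\cite{Vadhan12}, these come from a seed of $O(\log n)$ bits.

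For each chunk, call it \emph{bad} if its number of surviving vertices deviates from its expectation by more than a factor $(1\pm n^{-\Omega(1)})$. A $c$-wise independent tail bound makes this probability $n^{-\Omega(c)}$, because the expectation per chunk stays at least $n^{\Omega(\delta)}$ throughout. This requires choosing the $p_i$ so that the expectation never drops below $n^{\delta/2}$. The potential is the number of bad chunks plus a term penalising $|V_i|>2p_1\cdots p_i\, n$. Its expectation is below one, so we fix the seed $\Theta(\log n)$ bits at a time by the method of conditional expectations (\Cref{lem:MPC_cond_exp}). This is valid because, as \Cref{obs:parallel_no_comm} requires, whether a chunk is bad under a seed is decided locally on that chunk's machine.

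After $t$ rounds, every $N(v)$ retains $(1\pm o(1))\,d'$ vertices of $V'$, and $|V'|=O(np)$. The total space is $O(m)$. Re-chunking after each round costs only $O(1)$ rounds of sorting.

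\textbf{Finishing.} Apply \Cref{thm:MPC hitting set intro: linear}(1) to the universe $V'$, with sets $N(v)\cap V'$ of size at least $d'/2\le L$. This gives a hitting set $D$ of size
\[
O\!\left(\frac{|V'|}{(d')^{0.99}}\right)=O\!\left(\frac{n\,(d')^{0.01}}{d}\right).
\]
Because $d'\le d$, this is $O(n/d^{0.99})$. Every $v$ has a neighbour in $D$, so $D$ is a $d$-dominating set.

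The condition $N\le|U|=\poly L$ from that theorem needs care, since here $N=n$ while $|U|=|V'|$ may be smaller. One fix is to pad the universe with dummy elements. The cleaner route is to use the linear argument directly with $\poly n$ seeds, which is allowed because $L=n^\delta$.

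\textbf{Main obstacle.} The main obstacle is the sparsification analysis. It has two requirements: a per-chunk concentration guarantee under only $O(1)$-wise independence, which forces every intermediate expectation to stay polynomially large, and a guarantee that deviations of $1\pm n^{-\Omega(1)}$ do not compound over the $t=O(1)$ rounds. Both hold because $t$ is constant. Chunks that become small after sampling must still have expectations of order $n^{\Omega(\delta)}$, and this is handled by re-balancing the chunks after each round.
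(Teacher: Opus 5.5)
Your plan follows the paper's route. You subsample iteratively with every per-step rate polynomially large relative to machine size, certify concentration machine by machine, and derandomize with \Cref{lem:MPC_cond_exp}. You then run a linear-regime hitting set on the sparsified universe, which is \Cref{lem:sparsified_instance_HS} followed by the last step of \Cref{thm:sublinear_MPC_hitting set}. You measure everything against $L=n^\delta$ (target $d'=n^{\delta/2}$) rather than $d^\delta$, which is a mild improvement. Then $t=O(1/\delta)$ follows from $d\le n$ alone, and small $d$ is covered by \Cref{thm:linear MPC hitting set}, so the hypothesis $d=n^{\Omega(1)}$ is not needed in your argument.

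One step needs correcting. After sparsification there are $N=n$ sets but only $|V'|=\Theta(nd'/d)$ elements. The $200$-wise independent analysis behind \Cref{thm:MPC hitting set intro: linear}(1) only guarantees $O(N/d')=O(n^{1-\delta/2})$ unhit sets in expectation. That exceeds $n/d^{0.99}$ once $d\gg n^{\delta/2}$.

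Padding the universe with dummies does not fix this. The black-box bound becomes $O(n/(d')^{0.99})$, and the second-phase term $N/d'$ is unchanged.

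What works, and what the paper does, is to raise the independence to $c=\Theta(\log_{d'} n)=\Theta(1/\delta)$. Equivalently, invoke \Cref{thm:linear MPC hitting set large N}, where $\log_{d'}(N/|V'|)=O(1/\delta)$.
\begin{itemize}
\item Each set has $\mu=n^{\Omega(\delta)}$, so \Cref{lem:tail_bound_independent} bounds its miss probability by $n^{-\Omega(c\delta)}\le 1/n$.
\item The second phase therefore adds only $O(1)$ elements in expectation.
\item The seed is still $O(\log n/\delta)$ bits, so \Cref{lem:MPC_cond_exp} still takes $O(1/\delta^3)$ rounds.
\end{itemize}
Your ``cleaner route'' should say this explicitly. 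Merely allowing $\poly n$ seeds does not remove the $N/d'$ term.

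Two minor points.
\begin{itemize}
\item The size penalty in your sparsification potential must be a sum of per-machine terms, e.g.\ per-machine survivor counts like the paper's $f_i$. An indicator of a global threshold does not fit \Cref{lem:MPC_cond_exp}. In the regular case you can drop the penalty altogether: every vertex lies in the same number of closed neighbourhoods, so $\sum_v |N(v)\cap V_i|$ is a fixed multiple of $|V_i|$. Good chunks then already force $|V_i|=(1\pm\epsilon)p_i|V_{i-1}|$.
\item Accuracy $(1\pm n^{-\Omega(1)})$ per chunk is more than you need. The paper's $(1\pm\tfrac12)$ suffices, since only $t=O(1)$ factors compound.
\end{itemize}
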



\subsection{Deterministic Algorithms for Distance Approximation}\label{sec:overview:applications}

In this paper, we design algorithms for distance problems using \emph{hitting set} algorithms as a central tool. Distance problems aim to (approximately) compute the distances between pairs of vertices in a graph. A fundamental strategy for coping with the complexity of shortest paths is to compress the global distance structure into a small set of representative vertices.
A basic example is the \emph{clustering} technique used to construct a spanner. 
It aims to decompose the vertex set into several \emph{clusters} so that the vertices belonging to the same clusters are close to each other.
Then the distance between clusters approximates the distance between two vertices in the original graph.
For a simple case, we consider an unweighted $d$-regular graph.
The idea is to select a small set of vertices such that each vertex $v$ has a selected vertex among $v$ or its neighbors. Then we can set a \emph{cluster} as the set of vertices adjacent to the same selected vertex. If a vertex $v$ is adjacent to multiple selected vertices, then it chooses an arbitrary one and joins the cluster.
A clustering with $x$ clusters yields an $O(1)$-spanner with $O(n + x^2)$ edges consisting of \textsf{(i)} one edge between every two adjacent clusters and \textsf{(ii)} one edge between each non-selected vertex and its adjacent selected vertex. 
We can further sparsify the graph by (possibly recursively) computing a spanner of the graph
obtained from the clustering while increasing the stretch.

In summary, partitioning the vertex set into a small number of clusters yields a sparse spanner. The key step in computing such a clustering is to select a small set of vertices such that every non-selected vertex is adjacent to at least one selected vertex. This selection task can be formulated as a hitting set problem on the graph; in particular, since we have a $d$-regular graph, it corresponds exactly to the $d$-dominating set problem introduced in \Cref{sec:overview:HS}.

\medskip 

Many randomized algorithms for approximate distance problems rely on sampling a small set of representative vertices, with the goal of ensuring suitable coverage of the graph, typically by guaranteeing that a sampled vertex lies close to most vertices. In this paper, we replace these random sampling steps with deterministic hitting set constructions to obtain our algorithms for distance problems. In the following, we briefly outline this approach for each of the distance problems considered in this paper.

\paragraph*{Spanners in linear MPC and Congested Clique}
As summarized by \Cref{thm:linear_MPC_spanner_weighted}, we study spanners of weighted graphs in linear MPC and Congested Clique.
In Congested Clique, there is a reduction that extends a spanner algorithm for unweighted graphs to weighted graphs~\cite{chechik2022constant}. We extend this reduction to work in linear MPC, see \Cref{sec:weighted_spanner_linearMPC}.  
Therefore, we focus on derandomizing the algorithm of~\cite{podc2021spanner} for unweighted graphs, which is a constant-round randomized algorithm in Congested Clique and near-linear MPC.

The algorithm partitions the graph into $O(\log n)$ hierarchical levels according to vertex degrees. 
At each level, it utilizes the presented clustering technique by selecting a small vertex set such that every vertex with a certain degree is adjacent to at least one selected vertex, and contracting the vertices to the adjacent selected vertex in some deterministic canonical way. 
We obtain a separate contracted graph for each of the  $O(\log n)$ levels.

It then computes a spanner of each one of the contracted graphs using a randomized sparsification technique. 
By executing the processes for all hierarchy levels in parallel, they achieve a constant-round algorithm. However, to enable this parallelism, they require $\Theta(n\log n)$ local space even for the MPC model.
In summary, the algorithm uses randomization in two places, first for constructing $O(\log{n})$ clustering graphs. Second, for a sparsification that allows to construct sparse spanners on each one of the clustering graphs.

In~\Cref{sec:unweighted_spanner_linearMPC}, we show how to derandomize the algorithm. We also improve the local space to $O(n)$. Recall that we need to derandomize two steps. First, building the clustering graphs. Second, the randomized sparsification technique.
For the latter step, we extend a deterministic sparsification designed for Congested Clique by Leitersdorf~\cite{Leitersdorf22}.
Our main contribution lies in the first step to build the clustering.

As discussed above, the clustering problem can be phrased as a hitting set problem and we use the hitting set algorithm of~\Cref{thm:MPC hitting set intro: linear} Part 1 to achieve a deterministic algorithm. This algorithm runs in constant rounds, but this comes at a cost that 
the resulting hitting set is significantly larger than the expected size of the their randomized algorithm. We prove that this is still good enough for the construction. 
Moreover, we can construct the $O(\log{n})$ hitting sets simultaneously in $O(1)$ rounds. 
Finally, we use a tighter analysis and obtain a linear-space MPC algorithm, rather than a near-linear-space one.


\paragraph*{Spanners in sublinear MPC}
To obtain the deterministic spanner algorithm in the sublinear MPC model from \Cref{thm:sublinear_MPC_spannerSimplified} we deviate from the above procedure but instead build upon the randomized algorithm  from \cite{biswas2021massively}.
Their randomized spanner algorithm for the sublinear MPC model uses a  recursive hierarchical structure~\cite{biswas2021massively}.
{At the core, it combines certain steps of growing clusters and steps where it contracts  clusters 
improving the round complexity from $O(k)$ to $O(\log{k})$ while increasing the stretch compared to the seminal spanner algorithm by Baswana and Sen~\cite{baswana2007simple}.
}

The randomness of the original algorithm occurs in {the steps that grow clusters. 
We reduce it to a  hitting set problem:} 
\begin{itemize}
    \item Given a set $\mathcal C$ of clusters of a weighted graph $G$ and a parameter $0<p<1$, for each node $v$ in $G$ the set $S_v\subseteq \mathcal C$ is defined as the $1/p$ closest clusters of $\mathcal C$ to $v$ and adjacent to $v$ in $G$.
    \item The goal is to compute $\mathcal D\subseteq \mathcal C$ so that
    $\mathcal D\cap S_v\neq \emptyset $ for every node $v\in G$.
\end{itemize}

In \Cref{sec:weighted_spanner_sublinear}, we derandomize this algorithm using our constant-round hitting set algorithms in the sublinear MPC model (\Cref{thm:MPC hitting set intro: linear,thm:MPC hitting set intro: sublinear} Part 1). 
The obtained hitting set size is not optimal and may exceed the expected size achieved by the randomized algorithm.
To address the overhead, we rebalance the parameters of the algorithm, including the probabilities $p$ used in the subroutines and the hierarchical depths.

\paragraph*{Constant approximate APSP in the Congested Clique}
Note that our spanner algorithm deterministically constructs an $O(\log n)$-spanner with $O(n)$ edges for an edge-weighted graph with $n$ vertices. 
The spanner algorithm immediately gives an $O(\log n)$-approximate APSP algorithm running in constant rounds in the Congested Clique model.

In \Cref{sec:apsp_corollary}, we further obtain a constant-approximate APSP that runs in $O(\log\log\log n)$ rounds in the Congested Clique model by derandomizing the algorithm of~\cite{BuiCCDL24} as summarized by \Cref{thm:apsp_in_cc}.
The randomness in the original algorithm arises from two types of subroutines: constructing an $O(\log n)$-approximate spanner of small size and computing hitting sets for the universe $U:= V$ and $N:=O(|V|)$  explicitly formulated sets.
Here, we adapt our deterministic spanner algorithm of \Cref{thm:linear_MPC_spanner_weighted} and our deterministic hitting set algorithms of \Cref{thm:MPC hitting set intro: linear}. 


\paragraph*{Approximate shortest paths in the near-linear MPC model}
\Cref{thm:MSSP_sublinearMPC_simplified,thm:APSP_sublinearMPCSimplified} summarize our deterministic algorithms in the near-linear MPC model.
The single-source shortest paths (SSSP), multi-source shortest paths (MSSP), and all-pairs shortest paths (APSP) problems were studied in the randomized version of this model by Dory and Matar~\cite{dory2024massively}.
 Their algorithm used a subroutine inspired by the Thorup–Zwick approach~\cite{thorup2005approximate}.
{
Briefly, it begins with hierarchical sampling $V=A_0\supseteq A_1\supseteq\ldots\supseteq A_\ell$ for some suitable parameter $\ell$.
Then, the algorithm works in $\ell$ sequential iterations for each of $A_i$'s.}
At the $i$th iteration, for a vertex $v$, 
it adds some weighted edges between $v$ and other close vertices $u\in A_{i-1}$.
If there is a close sampled vertex $u\in A_i$ reachable from $v$ using a small number of edges, then it adds one edge between $v$ and $u$.
Otherwise, it adds many edges for all $u\in A_{i-1}$ close to $v$.
Thus, the goal is for any vertex $v$, to bound the number of non-sampled vertices near $v$ if there is no close sampled vertex from $A_i$ near $v$.
We formulate this goal as a hitting set problem:
\begin{itemize}
    \item Given an unweighted graph $G$, multiple sources $A_{i-1}$, a weighted edge set $H$, a hop bound $h$, and a threshold $d$,
    for each vertex $v$, $S_v\subseteq A_{i-1}$ is defined as the $O(d\log n)$ closest sources among $A_{i-1}$ to $v$, where distances are measured by the shortest paths using at most $h$ edges in $E(G)\cup H$. 
    \item The goal is to compute a small $A_i \subseteq A_{i-1}$ so that $A_i\cap S_v\neq \emptyset$ for every vertex $v$ in $G$.
\end{itemize} 
In the original randomized algorithm~\cite{dory2024massively}, during the hierarchical sampling, $A_i$ was sampled from $A_{i-1}$ with probability $1/d$ so that the expected size of $A_i$ is at most $O(|A_{i-1}|/d)$ and the above condition holds with high probability, without explicitly formulating a hitting set problem {and actually even without explicitly computing the sets $S_v$.}

In \Cref{sec:applications_heterogeneous}, 
we avoid the hierarchical sampling by deterministically obtaining $A_i$ from $A_{i-1}$ at the beginning of each $i$th sequential iteration.
Particularly,
we first explicitly formulate the above hitting set problem.
Note that even if the desired $S_v$'s are not a subset of the neighbors of $v$, they are reachable using at most $h$ edges.
This property allows us to compute the sets efficiently in $O(h)$ rounds.
Then we deterministically obtain $A_i$ from $A_{i-1}$ of size at most $O(|A_{i-1}|/d)$ in $O(\log\log n)$ rounds by using \Cref{thm:MPC hitting set intro: linear,thm:MPC hitting set intro: sublinear} Part 2.
This is the same as the expected size desired by the previous random process.
Although deterministically computing $A_i$ requires $O(h+\log\log n)$ additional rounds per sequential iteration, this is not a bottleneck for the overall algorithm.
Thus, it does not affect the final round complexity.

{Down the line \cite{dory2024massively} uses the subroutine for various choices of the parameters\footnote{
In their full algorithm \cite{dory2024massively} sometimes use the subroutine by simulating $\ell=O(\log\log n)$ iterations in parallel while our method inherently requires that the $\ell$  iterations are performed sequentially. However, also in their algorithms the runtime bottleneck of their distance computations stems from using the subroutine sequentially for $\ell$ iterations. Hence, we obtain the same overall runtime.}, yielding a randomized  $\textsf{poly}(\log\log n)$-round  approximate distance algorithms; due to the deterministic subroutine in our case the results become deterministic as summarized in \Cref{thm:MSSP_sublinearMPC_simplified} and~\Cref{thm:APSP_sublinearMPCSimplified}. 
}

\subsection{Further Related Work}
\label{sec:related_work}

\paragraph{Congested Clique}
A linear MPC algorithm can be simulated in Congested Clique by allowing a constant time overhead~\cite{BehnezhadDH18}.
However, the reverse does not hold usually. This is because in the MPC model the goal is to use a total space similar to size of the input, $\tilde{O}(m)$, where in Congested Clique there are no specific memory requirements, hence a direct simulation may use too much memory.

Next, we review the distance approximation algorithms in the Congested Clique model. 
There are deterministic algorithms to compute a $(2k-1)$-spanner of size $\tilde O(n^{1+1/k})$ in $\poly {(\log\log n)}$ rounds or to compute an $O(k)$-spanner of size $O(k\cdot n^{1+1/k})$ in $O(\log k)$ rounds~\cite{ParterY18}.
Additionally, there are deterministic algorithms 
for the approximate APSP problem~\cite{censor2019algebraic,le2016further,censor2019sparse, DBLP:conf/podc/Censor-HillelDK19,DBLP:conf/podc/DoryP20}, including 
a $\poly (\log n)$-round algorithm for weighted graphs~\cite{DBLP:conf/podc/Censor-HillelDK19} and 
a $\poly (\log\log n)$-round algorithm for unweighted graphs~\cite{DBLP:conf/podc/DoryP20}. 
These APSP algorithms are based on matrix multiplication and hence require large total memory, which makes them unsuitable for linear MPC.
Furthermore, the current fastest algorithms are randomized, such as the $O(1)$-round $O(k)$-spanners~\cite{podc2021spanner, chechik2022constant}, $O(1)$-round $O(\log{n})$-approximation APSP~\cite{podc2021spanner, chechik2022constant}, and the $O(\log{\log{\log{n}}})$-round $O(1)$ approximation APSP algorithms~\cite{BuiCCDL24}.

\paragraph{Deterministic hitting sets in other distributed models} 
{
Censor-Hillel, Parter, and Schwartzman~\cite{Censor-HillelPS20} use hitting set like techniques to derandomize the Baswan-Sen Spanner (and MIS) in the Congested Clique. Later, 
Parter and Yogev~\cite{ParterY18} gave an explicit deterministic hitting set algorithm in the Congested Clique. The authors apply it to compute spanners and follow up work by Censor-Hilel, Dory, Korhonen, and Leitersdord~\cite{DBLP:conf/podc/Censor-HillelDK19} use it to compute APSP.  The hitting set algorithm takes $O((\log \log n)^3)$ rounds, but with newer pseudo-random generators of~\cite{GopalanY20} this becomes $O(\log \log n)$ rounds. 
They also have a constant construction that leads to a hitting set of size $O(n^{17/16}/d^{0.5})$. 
\Cref{thm:MPC hitting set intro: linear} can be seen as a generalization of their work,
where 1) we need to take additional care to make sure all internal computations do not exceed the local memory, 2) our constant round algorithm achieves a sparser hitting set, and 3) our algorithms work for a wider range of parameters. }

{
Later, Dory and Parter~\cite{DBLP:conf/podc/DoryP20} introduces the notion of a `soft' hitting set. For their shortest path algorithm, they need a definition akin to hitting sets with slightly different properties. They dub it a soft hitting set, and provide an algorithm to compute it in the same round complexity as~\cite{ParterY18}. }

{
Deterministic hitting sets and their application to spanners have also been studied in the related \CONGEST model~\cite{GhaffariK18,bezdrighinEG+2022,GhaffariGHIR23}.
This model is similar to the Congested Clique except that the nodes can only communicate with their neighbors~\cite{peleg00}.
This line of work lead to a deterministic hitting set in $\poly \log n$ rounds by Ghaffari, Grunau, Haeupler, Ilchi, and Rozho{\v{n}}~\cite{GhaffariGHIR23}, which they apply to derandomize the Baswana-Sen algorithm~\cite{baswana2007simple} to give a deterministic spanner algorithm.
They perform a multi-step process to make it computationally efficient not only in the \CONGEST model, but also in the PRAM model.
However, such techniques do not directly translate to the MPC model; it would lead to $\poly \log n$-round algorithms, which is exponentially slower than what we aim for. }

\tijni{next two paragraphs are new}
Berger, Rompel, and Shor~\cite{BergerRS94} provide a PRAM algorithm that works also for our variant of hitting set (termed `balanced set cover' in their paper). However, their algorithm is a recursive process that contains $\log m$ iterations. This leads to (at least one) factor $\log m$ in the round complexity, also in a possible MPC implementation. 

Agarwal and Ramachandran~\cite{AgarwalR20} compute a `blocking set' in \CONGEST, an object for shortest path computation that `hits' many paths. The algorithm is based on an implementation of a result of the aforementioned paper~\cite{BergerRS94} and implementing it in MPC also would come with at least logarithmic overhead.



\paragraph{Deterministic MPC algorithms}
For connectivity, there are deterministic algorithms in sublinear MPC that take $O(\log D+\log \log_{m/n} n)$ rounds~\cite{CoyC23,FischerGG22}. This matches the randomized state of the art, which is conjectured to be optimal~\cite{andoni2018parallel,behnezhad2019near}. Coy and Czumaj also show how to obtain a minimum spanning tree (MST) deterministically in $O(\log n)$ rounds, matching a conditional lower bound of $\Omega(\log n)$ which follows from the widely-believed 1vs2-Cycle conjecture. 
In linear MPC, 
Nowicki~\cite{nowicki2021deterministic} gives a deterministic $O(1)$-round algorithm for MST, hence also solving connectivity in this regime. 

A substantial body of work has established deterministic MPC algorithms for fundamental symmetry-breaking problems, showing that randomness can often be eliminated without sacrificing efficiency. 
While problems such as maximal matching, graph coloring, and maximal independent set may appear different from distance problems, their algorithms rely on similar techniques. In particular, sparsification and the method of conditional expectation have been used extensively to obtain deterministic counterparts. 
Moreover, MIS and its relaxed variants, such as ruling sets, are sometimes used in distance algorithms, for example, to select cluster centers in clustering-based approaches.
Czumaj, Davies, and Parter~\cite{czumaj2021simple} give a deterministic $(\Delta+1)$-vertex coloring algorithm in linear MPC using $O(1)$ rounds. 
In sublinear MPC, the authors gave an $O(\log \log \log n)$-rounds algorithm in follow up work~\cite{czumaj2021improved}. This matches both the randomized state of the art and a conditional lower bound for the problem~\cite{ghaffari2019conditional}.

Next we consider maximal independent set and maximal matching. 
Fischer, Giliberti, and Grunau~\cite{FischerGG23} provide algorithms of $O(\log \alpha+\log \log n)$ rounds for both problems in sublinear MPC, where $\alpha$ denotes the arboricity of the graph. 
The randomized state of the art is $O(\sqrt{\log \alpha}\log \log \alpha +\log \log n)$~\cite{ghaffari2019sparsifying,ghaffari2020improved}.
This gap is one of the motivations to look at the ruling set problem. 

In the linear MPC model, Giliberti and Parsaeian~\cite{GilibertiP24} give a constant round algorithm for computing $2$-ruling sets, matching the randomized algorithm by Cambus, Kuhn, Pai,
and Uitto~\cite{cambus2023time}.
Giliberti and Parsaeian also provide a deterministic $\tilde O(\sqrt{\log n})$ algorithm in sublinear MPC, which at the cost of slightly superlinear $O(m+n^{1+\eps})$ global memory is improved to $O(\log^{1/3}\Delta\log \log \Delta+\log \log n)$ rounds by Ji, Kothapalli, Pemmaraju, and Singh~\cite{ji2025fast}. The fastest randomized algorithm for this regime takes $\tilde O(\log^{1/6}\Delta \cdot  \log \log n)$ rounds~\cite{kothapalli2020sample}.

\paragraph{Derandomization in MPC} 
The algorithm of Nowicki~\cite{nowicki2021deterministic} mentioned above provides an inherently deterministic algorithm. All other algorithms contain a derandomization  step. 
As discussed above, some of the algorithms~\cite{CoyC23,FischerGG22} solve exactly the hitting set problem.
Others~\cite{czumaj2021simple,PaiP22,FischerGG23,GilibertiP24,ji2025fast} derandomize a more problem specific sampling step. Although the methods here are similar to ours (limited independence and the method of conditional expectation), the results are specific to the algorithm at hand and do not provide general derandomization of sampling. 

\tijni{new}
In our algorithms, we use $k$-wise independence and pseudorandom generators for reducing the amount of randomness for sampling nodes. This is in the same spirit as Newman's method~\cite{Newman91}, a reduction from public to private randomness in communication complexity. Potentially, we can also use Newman's method here as an alternative pseudorandom generator. However, Newman's method would require exponential computation and certain space requirements. It is unclear whether such space requirements can be met -- in particular in sublinear MPC. 

\tijni{could also add: blackbox randomization reduction in MPC by \cite{CharikarMT21}, but this is not helpful for making entire algs determinstic and uses too many random bits for our sampling step}

\tijni{also new}
\paragraph{Relation to Set Cover}
For both the classic tasks of finding an approximation of the optimal set cover or optimal hitting set it is a major open problem of the field to get sublogarithmic-round complexity in MPC, even using randomized algorithms. The hitting set problem in this paper is different from these classic variants and hence we even find $O(1)$-round algorithms; the solution is still sufficient for our applications. Different variants of these problems have also been considered in multiple works in \CONGEST~\cite{AgarwalR20} and PRAM~\cite{BergerRS94} where the distinction between our variant and the classic optimization problem does not matter as much because it usually only affects logarithmic factors, but in our case it is crucial to solve a variant of the problem that allows $O(1)$-round algorithms.

\subsection*{Outline}\kyungjini{In the SPAA camera ready, there is no appendix, so this outline is useless. So I omitted in spaa version by removing from our section tex files, instead, I added them here in the main.tex}
In \Cref{sec:preliminaries}, we introduce our models formally, provide MPC primitives, and summarize derandomization techniques used in our work. 
In \Cref{sec:hitting sets}, we give the full algorithms and proofs for our hitting set results. 
In \Cref{sec:applications_spanners}, we present the detailed algorithms and their analysis for spanners. In \Cref{sec:applications_shortest_paths}, we provide the results of the deterministic algorithms for the approximate shortest paths problems, including the approximate APSP problem.
\newpage
\section{Graphs, MPC Primitives, and Derandomization}
\label{sec:preliminaries}

\subsection{Graphs}\label{sec:prelim_graphs}
\paragraph{Notation.}
In this paper, we consider undirected graphs $G=(V,E)$ with $n$ vertices and $m$ edges. An edge is represented by an unordered pair $\{u,v\}$ of vertices. We say that a vertex $u$ is \emph{adjacent} to $v$ if $\{u,v\}\in E(G)$, and that the edge $\{u,v\}$ is \emph{incident} to both $u$ and $v$. When $G$ is not clear from context, we denote the vertex set and edge set of $G$ by $V(G)$ and $E(G)$, respectively, and let $\deg_G(u)$ be the number of edges in $E(G)$ incident to $u\in V(G)$.

When we write $G=(V,E,w)$, we refer to an edge-weighted graph with vertex set $V$ and edge set $E$. The weight function $w:V^2\to\mathbb{R}$ is symmetric, where $w(u,v)=\infty$ if $\{u,v\}\notin E$ (equivalently, $w:E\to\mathbb{R}\setminus\{\infty\}$). Thus each edge $\{u,v\}\in E$ has weight $w(u,v)=w(v,u)$. If all edges have unit weight, i.e., $w(u,v)=1$ for every $\{u,v\}\in E$, we simply write $G=(V,E)$ and call $G$ an \emph{unweighted} graph.

A path $\pi$ between vertices $s,t\in V$ in a graph $G=(V,E,w)$ is a sequence of edges in $E$ such that consecutive edges share an endpoint while the first and the last edges are incident to $s$ and $t$, respectively. The \emph{length} of $\pi$ is the sum of the weights of edges in the path. In an unweighted graph, this equals the number of edges in the path. A path is a \emph{shortest path} if no shorter path exists between $s$ and $t$.
Furthermore, we define the \emph{distance} between two vertices in $G$ as the length of a shortest path between them.

\paragraph{Spanners.}
 For an edge weighted graph $G=(V,E)$, a subgraph $H=(V,E_H)$ is called an $\alpha$-spanner of $G$ if for any two vertices $u,v\in V$, 
the distance between them in $H$ is at most $\alpha$ times of the distance in $G$.
Equivalently, it can be defined with a condition on the \emph{stretch} of each edge. 
That means $H$ is an $\alpha$-spanner if the distance between $u$ and $v$ in $H$ is at most $\alpha\cdot w(u,v)$ for each edge $\{u,v\}\in E$, where $w(u,v)$ is the weight of the edge $\{u,v\}$ in $G$.
It is well-known that every graph with $n$ vertices has a $(2k-1)$-spanner with $O(n^{1+1/k})$ edges, and such a spanner can be constructed by a greedy centralized algorithm with {a linear space complexity}~\cite{althofer1993sparse}.

\paragraph{All-pairs shortest paths problem.}
For an edge-weighted graph $G=(V,E)$, 
the \emph{all-pairs shortest paths (APSP)} problem asks to build a data structure supporting a query algorithm to return the distance between \emph{any} two given vertices $v,u\in V$ in $G$.
Additionally, an $\alpha$-approximate APSP problem aims to compute an $\alpha$-approximate distance between two query vertices for the approximate factor $\alpha>0$.
For instance, an $\alpha$-spanner supports a query of $\alpha$-approximate distance between any two vertices.
Therefore, $\alpha$-spanner immediately gives a solution for an $\alpha$-approximate APSP problem.

There are variations of the APSP problem by restricting the queries.
For instance, by fixing one query vertex as $s\in V$, the (approximate) \emph{single-source shortest paths (SSSP)} problem claims to compute the (approximate) distance between $s$ and any other vertex $v\in V$.
Additionally, for a vertex set $S\subseteq V$, the (approximate) \emph{multi-source shortest paths (MSSP)} problem supports the (approximate) distance query between any two vertices $s\in S$ and $v\in V$.

\subsection{The MPC Model}\label{sec:prelim_MPC}

In the \emph{massively parallel computation (MPC) model}, the algorithm consists of a sequence of synchronous \emph{rounds} using multiple $N$ \emph{machines} with \emph{local space} $L$ and the \emph{total space} $M=N\cdot L$. 
Usually, the total space is forced to be linear (or near-linear) in the input size.
During each round, the machines are isolated from each other, and work individually.
The machines can communicate with each other only between rounds.
In each of the communication rounds, the messages received or sent per machine cannot exceed the local space.
Throughout, we assume word size $O(\log M)$, such that the identifier of a machine can fit in one word.  
The communication is much more expensive compared to the individual computation.
Therefore, the model aims to minimize the communication, that is the number of rounds, while preserving the small local space per machine.

\paragraph*{Various regimes in the MPC graph algorithms.}
For the graph algorithms of an input graph $G$ with $n$ vertices, the MPC graph algorithm considers the local space related to the size $n$.
For instance, the \emph{linear}, the \emph{near-linear}, and the \emph{sublinear} regimes consider the local space $L=O(n),O(n\cdot \textsf{poly}(\log n))$, and $O(n^{\delta})$, respectively, where $0<\delta<1$ is a fixed constant.
The sublinear regime is the most desirable while the most challenging, since it does not allow a single machine to store information about all the vertices of the graph.
For instance, the hardness has been conjectured to require at least $\Omega(\log n)$ rounds in any sublinear MPC model to distinguish between a large cycle of length $n$ and two cycles of lengths $\lfloor n/2\rfloor$ and $\lceil n/2\rceil$, called the \emph{1-vs-2 cycles conjecture}. 

In this aspect, the \emph{heterogeneous MPC model} has been introduced~\cite{FischerHO25} that varies the local spaces of the machines.
In this model, there is a single machine or a small number of machines with larger local space, and many machines with small local space.
Even for a single machine with a near-linear local space and additional machines with sublinear local space, the heterogeneous MPC model circumvents the conditional hardness results from the sublinear regime, including the 1-vs-2 cycles problem and the approximate SSSP problem.
This is why various (approximate) distance problems, including SSSP and APSP problems, have been studied in the heterogeneous MPC model~\cite{FischerHO25,dory2024massively} instead of the classic sublinear MPC model.

 

\subsection{MPC Primitives}

\paragraph*{Basic tools implemented in the MPC model.}
In the MPC model, several useful tools have been developed, including sorting and summation of elements.
In this paper, we also utilize the algorithms as illustrated by the following lemma.

\begin{lemma}[\cite{goodrich1996sorting,goodrich2011sorting,dinitz2019massively}]\label{lem:sorting_prefixsum_MPC}
    Sorting, filtering, and prefix sum on a sequence of $M$ messages
    can be performed deterministically in $O(\log_\LS M)$ rounds in the MPC model using $\LS$
    local space and $O(M)$ total space.
        
    Additionally, in $O(\log_\LS N)$ rounds, a machine $\mathcal M$ can broadcast a constant number of messages to $N$ consecutive machines $\mathcal M_1,\ldots,\mathcal M_N$, given that $\mathcal M$ has the ids of the first and last machines ($\mathcal M_1$ and $\mathcal M_N$).    
\end{lemma}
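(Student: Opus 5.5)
This lemma is cited from prior work, so the plan is to reconstruct the standard $\LS$-ary tree arguments and check that each step fits in $\LS$ local space and $O(M)$ total space. Throughout, assume the $M$ messages are spread over $O(M/\LS)$ consecutive machines, each holding at most $\LS$ of them.

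\textbf{Prefix sum.} Build an implicit aggregation tree of fan-in $\LS/2$ whose leaves are the machines. The tree has depth $O(\log_\LS M)$ and $O(M/\LS)$ internal nodes, each simulated by one machine.
\begin{enumerate}
\item In the upward pass, each machine computes the sum of its own block and sends it to its parent. Each parent receives at most $\LS/2$ words, so it stays within space, and it computes the subtree sums.
\item In the downward pass, each node sends every child the sum of everything to the left of that child's subtree. This value is the offset received from above plus the sums of the child's left siblings.
\item Each leaf finally adds its offset to its local prefix sums.
\end{enumerate}
Both passes take $O(\log_\LS M)$ rounds.

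\textbf{Filtering.} Give each kept message the value $1$ and every other message $0$, and compute prefix sums. The prefix sum of a kept message is its target position in the compacted sequence, so one further round routes each kept message to machine $\lceil \text{position}/\LS\rceil$. Each machine receives at most $\LS$ messages.

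\textbf{Broadcast.} Use the same tree, now over $\mathcal M_1,\ldots,\mathcal M_N$ in reverse. Machine $\mathcal M$ sends the constant-size message to the machines representing the first $\LS/c$ blocks of the index range $[\mathcal M_1,\mathcal M_N]$. Each recipient can compute this range from the ids of $\mathcal M_1$ and $\mathcal M_N$, which are sent along with the message. Every recipient then forwards to the first machines of $\LS/c$ sub-blocks of its own block. This takes $O(\log_\LS N)$ rounds, and every machine sends $O(\LS)$ words per round.

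\textbf{Sorting.} This is the main obstacle. Here I would invoke Goodrich's deterministic MapReduce sorting, namely sample sort with deterministic splitter selection, applied recursively. One level of the recursion works as follows.
\begin{enumerate}
\item Each machine sorts its block locally and picks $\Theta(\sqrt{\LS})$ evenly spaced samples.
\item The samples are sorted recursively, or on one machine if they fit, and $\sqrt{\LS}$ splitters are chosen at evenly spaced ranks.
\item The regular-sampling argument shows that every bucket defined by these splitters has size at most $O(M/\sqrt{\LS})$.
\item Prefix sums over the bucket counts assign each bucket a contiguous range of machines, and one round routes every element to its bucket's range.
\end{enumerate}
The problem size shrinks by a factor $\LS^{\Omega(1)}$ at each level, so the recursion depth is $O(\log_\LS M)$.

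The delicate part is to make each level cost $O(1)$ rounds rather than $O(\log_\LS M)$, which naive recursion would give and would produce $O(\log^2_\LS M)$ rounds in total. To get the claimed bound, I would follow Goodrich's pipelining of the buckets through the $\LS$-ary tree, or simply cite \cite{goodrich2011sorting} for this step.
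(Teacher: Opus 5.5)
Your proposal is correct and agrees with the paper, which gives no proof of this lemma and simply cites \cite{goodrich1996sorting,goodrich2011sorting,dinitz2019massively}. Your $\LS$-ary tree arguments for prefix sums, filtering and broadcast are the standard ones behind those references, and for sorting you defer to \cite{goodrich2011sorting} just as the paper does. One side remark is inaccurate but does not affect the argument: your step~2 sorts the $\Theta(M/\sqrt{\LS})$ samples recursively before routing, so the naive recursion is $T(M)\approx 2\,T(M/\sqrt{\LS})+O(\log_\LS M)$. That is $2^{\Theta(\log_\LS M)}$ rounds rather than $O(\log^2_\LS M)$, which only strengthens your point that the cited pipelined construction is genuinely needed.
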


\paragraph{Running Multiple MPC Algorithms Simultaneously.}
In some applications, we have multiple MPC algorithms, that run on part of the input, and need only part of the total space. Such algorithms can then we run simultaneously, where the total space becomes the sum of the space used by each algorithm. 


\begin{observation}\label{obs:parallelism}
    Let $A_1,\ldots,A_x$ be $x$ MPC algorithms, where each algorithm 
    $A_i$ runs in $T_i(n)$ rounds with $L_i(n)$ local space and $M_i(n)$ total space when $n$ input messages are given for $A_i$.
    
    When $n_i$ input messages are given along with the functions $L_i$ and $M_i$, that are labeled with index $i$, for each $i\in [x]$, we can run all algorithms simultaneously in $O(\max_i\{T_i(n_i)\}+\log_L M)$ rounds in the MPC model with $L$ local space and $M$ total space if:
    \begin{enumerate}
        \item $M\geq 6\sum_i M_i(n_i)+6x$ and
        \item $L\geq \max_i\{L_i(n_i)\}$. 
    \end{enumerate}
\end{observation}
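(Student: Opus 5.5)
The plan is to reduce the simultaneous execution to a single MPC algorithm that sorts all inputs by label, gives each $A_i$ its own block of consecutive machines, and then runs every $A_i$ on its block independently, round by round. First, each input message carries its label $i$, so we sort all $\sum_i n_i$ messages by label using \Cref{lem:sorting_prefixsum_MPC}. Since $\sum_i n_i\le \sum_i M_i(n_i)\le M$, this takes $O(\log_L M)$ rounds. (The first inequality holds because an algorithm must at least store its input.) Next, we compute the memory budgets. Each $i$ has a designated representative, for instance the first message with label $i$, which can evaluate $M_i(n_i)$ once it knows $n_i$; the counts $n_i$ come from a prefix sum over the sorted sequence, again $O(\log_L M)$ rounds.

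A further prefix sum over the values $\lceil M_i(n_i)/L\rceil$, with one term per representative, assigns $A_i$ a contiguous range of machine indices. Its virtual machine $j$ is mapped to physical machine $\mathrm{offset}_i+j$. The total number of machines needed is
\[
\sum_i \lceil M_i(n_i)/L\rceil \le \sum_i M_i(n_i)/L + x .
\]
We must compare this with the $M/L$ machines available. To do so, pack each $A_i$'s virtual machines of size $L_i(n_i)\le L$ into physical machines of size $L$. Each virtual machine fits in one physical machine, and a greedy packing of consecutive virtual machines into bins of size $L$ wastes at most a factor $2$. Accounting for per-algorithm rounding then costs at most one extra bin per $i$, so the number of bins is $O(\sum_i M_i(n_i)/L + x)$. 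This is exactly where the constant $6$ in the hypothesis $M\ge 6\sum_i M_i(n_i)+6x$ is meant to absorb the slack: a factor $2$ for packing, plus the extra storage for labels, offsets, and routing tables. Once the budgets are known, we route each input message to the physical machine hosting the virtual machine that $A_i$ would initially place it on, at a cost of $O(1)$ sort/route steps.

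After this setup, we simulate: in each round every physical machine runs the local computation of all virtual machines it hosts. A message from virtual machine $j$ of $A_i$ to virtual machine $j'$ is readdressed to the host of $(i,j')$, which is computable locally from the offsets stored with each virtual machine. Each physical machine hosts virtual machines of total size at most $L$, so its send and receive volume stays within $L$ per round. Algorithms that finish early simply idle. The total is therefore $\max_i T_i(n_i)$ simulated rounds on top of $O(\log_L M)$ setup rounds.

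The main obstacle I expect is the bookkeeping in the packing step. We must ensure that, with $L_i(n_i)$ possibly much smaller than $L$, packing several virtual machines per physical machine never overflows communication, and that the total machine count stays within $M/L$ under the stated constant. This includes the degenerate case where many $A_i$ have tiny $M_i(n_i)$, which is why the $+6x$ term is needed: each algorithm occupies at least one slot.
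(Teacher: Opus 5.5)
Your overall plan matches the paper's: sort by label, compute $n_i$ and $M_i(n_i)$ with prefix sums, pack $\lfloor L/L_i(n_i)\rfloor$ virtual machines per physical machine (losing a factor $2$), and readdress messages. The space accounting, however, has a genuine gap. You give every $A_i$ its own contiguous block of $\lceil M_i(n_i)/L\rceil\ge 1$ whole machines; after the packing loss this should really be $\lceil 2M_i(n_i)/L\rceil$. Either way the layout needs at least $x$ machines, i.e.\ $xL$ words.

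The hypothesis only provides $6\sum_i M_i(n_i)+6x$ \emph{words}. The $+6x$ term pays for $O(1)$ words per algorithm, not one machine per algorithm, so your remark that it absorbs ``one extra bin per $i$'' mixes units. Concretely, take $x=M/12$ algorithms with $M_i(n_i)=1$ each. The hypothesis holds, yet only $M/L\ll x$ machines exist, so your allocation does not fit. This is also not a corner case: the paper invokes the observation with one ``algorithm'' per set $S_i$ in \Cref{lm:preprocessing} and one per vertex in the proof of \Cref{lem:restricted_BF}, where $x$ is huge and each $M_i(n_i)$ is tiny.

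The missing idea is to allocate space at word granularity rather than machine granularity, as the paper does.
\begin{itemize}
\item It takes prefix sums over $2M_i(n_i)$ and materializes that many dummy words for $A_i$.
\item It sorts these together with the inputs, so that each $A_i$ occupies a contiguous run of words.
\item An $A_i$ whose run lies inside a single machine is simulated entirely locally, with no communication, and many such algorithms share one machine.
\item Only algorithms spanning several machines are run with $\lfloor L/L_i(n_i)\rfloor$ virtual machines per physical machine; the factor $2$ in the dummy space pays for the packing loss.
\end{itemize}
An equivalent fix in your own terms: bin-pack all $A_i$ with $M_i(n_i)<L$ onto shared machines and run each of them locally. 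Give whole blocks only to the $A_i$ with $M_i(n_i)\ge L$. For these, $\lceil 2M_i(n_i)/L\rceil\le 3M_i(n_i)/L$, so the per-algorithm rounding is absorbed into the $\sum_i M_i(n_i)$ term rather than requiring $xL$ extra space.
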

\begin{proof}
    During this algorithm, we suppose that we have $\lfloor M/L\rfloor$ machines with $L$ local space, and the input messages and the functions $M_i$'s and $L_i$'s are given arbitrarily distributed across the machines.  
    We have at most $\sum_i M_i(n_i)$ input messages and $2x$ functions, while we do not know the exact value of $n_i$.
    In the following, 
    we first calculate the needed space $M_i(n_i)$ to run $A_i$ for each $i\in [x]$.
    Then we allocate the space for each algorithm, and finally, we run the algorithms.

   \subparagraph*{Preprocessing phase.} In this phase, the goal is to count the number of input messages $n_i$, and compute the memory $M_{i}(n_i)$ needed to simulate the algorithm $A_i$ for each $i\in[x]$. 
    To count the input messages, we collect the distributed input messages and functions in the first half machines in $O(\log_L M)$ rounds using the sorting algorithm of~\Cref{lem:sorting_prefixsum_MPC}. 
    After this process, we can assume that the input messages belonging to the same algorithm $A_i$ are stored in consecutive machines. 

    We derive the implementation so that each machine having an input message of $A_i$ knows the id of the first machine $\mathcal M_i$ having a message of $A_i$.
    Furthermore, the first machine $\mathcal M_i$ knows the ids of all machines having an input message of $A_i$.
    First, each machine creates a list of the indices of the algorithms for which it holds input messages. In the next two rounds, every machine sends this list to both its previous and next machines.
    After that, each machine $M$ knows if it is the first machine having a message of algorithm $A_i$ for each $i\in [x]$.
    For each algorithm $A_i$, we let $\mathcal M_i$ be the first machine that has a message of $A_i$.
    Note that the last machine that has a message of $A_i$ is $\mathcal M_{i+1}$ or its previous machine, without loss of generality, we suppose that it is $\mathcal M_{i+1}$.
    Note that $\mathcal M_i$ and $\mathcal M_{i+1}$ are the same machine if and only if the messages of $A_i$ are on a single machine. This is a trivial case, so we can assume that $\mathcal M_i\neq \mathcal M_{i+1}$
    Next, we collect the ids of every $\mathcal M_i$ in $x/\LS$ consecutive machines in $O(\log_\LS x)$ rounds by the sorting algorithm, and broadcast the id of $\mathcal M_i$ to the machines having a message of $A_i$, i.e, the machines between $\mathcal M_i$ and $\mathcal M_{i+1}$.
    This takes at most $O(\log_\LS M)$ rounds by~\Cref{lem:sorting_prefixsum_MPC}.
    Precisely, we first send to $\mathcal M_i$ the ids of $\mathcal M_{i+1}$, only if $\mathcal M_i\neq \mathcal M_{i+1}$.
    Then each $\mathcal M_i$ broadcasts its ids to the machines between $\mathcal M_i$ and $\mathcal M_{i+1}$, simultaneously.
    It takes $O(\log_\LS (M/\LS))$ rounds by~\Cref{lem:sorting_prefixsum_MPC}.
    By using the prefix sum algorithm illustrated by \Cref{lem:sorting_prefixsum_MPC}, we can calculate the number $n_i$ of input messages stored in the machines and send it to the machine $\mathcal M_i$ for every $i\in[x]$ in $O(\log_L M)$ rounds. 
    After that, each machine $\mathcal M_i$ calculates the total space $M_i(n_i)$ needed for the algorithm $A_i$.
    In the following, we allocate $M_i(n_i)$ space for each algorithm $A_i$. 

    \subparagraph*{Allocating phase.}In this phase, our goal is to allocate sufficient memory to all algorithms $A_i$. 
    We do this by generating \emph{dummy messages}. Then, we run a similar sorting and communication routine as in the previous phase, but now with not only the input messages, but also these dummy messages.
    To describe this in more detail, recall that, after the \textbf{Preprocessing phase}, the data is stored in the first half of the machines, so the last half of the machines $\mathcal X_1,\ldots, \mathcal X_\ell$ are empty.
    During this phase, we generate dummy messages in $\mathcal X_1,\ldots, \mathcal X_\ell$.
    Then, using the same process to the \textbf{Preprocessing phase},
    we can redistribute them along with the input messages so that for each algorithm $A_i$, its $2M_i(n_i)$ dummy messages and $n_i$ input messages are stored in consecutive machines in $O(\log_\LS M)$ rounds.
    Note that even if $A_i$ is worked in some MPC model with $M_i(n_i)$ total space, but we give additional marginal space for convenience. 
    Furthermore, we can make sure that each machine that has a message of $A_i$ knows the id of the first machine $\mathcal M_i'$ that has a message of $A_i$.
    Additionally, the first machine $\mathcal M_i'$ knows the ids of all machines having a message of $A_i$.

    To generate the dummy space, the first machine $\mathcal X_1$ broadcasts its id to each of the first half of the machines, that includes $\mathcal M_i$'s, using \Cref{lem:sorting_prefixsum_MPC}.
    Recall that $\mathcal M_i$ is the first machine having some message or functions of $A_i$ for $i\in [x]$. This machine knows the value $M_i(n_i)$.
    Then we use the prefix sum algorithm of \Cref{lem:sorting_prefixsum_MPC} again to compute the prefix sum $\sum_{j=1}^{i-1} \cdot M_j(n_j)$. We store this value at the machine $\mathcal M_i$ for each $i\in [x]$.
    In the next round, all machines $\mathcal M_i$ simultaneously do the following:
    \begin{itemize}
        \item Compute $\mathbb M(i)=2\cdot \sum_{j=1}^{i-1} M_j(n_j)+1$ and $\mathbb M(i+1)=2\cdot \sum_{j=1}^{i} M_j(n_j)+1$, and
        \item Send the values $\mathbb M(i)$ and $\mathbb M(i+1)$ to the machine $\mathcal X_{\lfloor\mathbb M(i)/L\rfloor+1}$.
    \end{itemize}
    Note that  $\mathbb M(i+1)=2\cdot \sum_{j=1}^{i} M_j(n_j)+1=\mathbb M(i)+2M_i(n_i)$.
    Then each $\mathcal X_{\lfloor\mathbb M(i)/L\rfloor+1}$ machines for $i\in[x]$ broadcasts, simultaneously, the received values $\mathbb M(i)$ and $\mathbb M(i+1)$ to the consecutive following machines $\mathcal X_{\lfloor\mathbb M(i)/L\rfloor+2},\ldots, \mathcal X_{\lfloor\mathbb M(i+1)/L\rfloor+1}$ by \Cref{lem:sorting_prefixsum_MPC}.
    Finally, each machine $\mathcal X_{j'}$ generates $\max\{L,\mathbb M(i+1)-L\cdot (j'-1)\}$ dummy messages {labeled by $i$} for the algorithm $A_i$ if ${j'}={\lfloor\mathbb M(i)/L\rfloor+j}$ for some $j$.
    Note that it is possible that a single machine $\mathcal X_{j'}$ generates dummy elements for several algorithms without exceeding the local space by construction.
    At the termination of the process, the last half of the machines $\mathcal X_1,\ldots, \mathcal X_\ell$ store $2M_i(n_i)$ dummy messages for all algorithms $\{A_i\}_i$.

    Using the same process to the \textbf{Preprocessing phase},
    we can redistribute the dummy messages and the input messages in $O(\log_\LS M)$ rounds so that for each algorithm $A_i$, its $2M_i(n_i)$ dummy messages and $n_i$ input messages are stored in the consecutive machines.
    Furthermore, we can make each machine having a message of $A_i$ know the id of the first machine $\mathcal M_i'$ having a message of $A_i$.
    Additionally, the first machine $\mathcal M_i'$ knows the ids of all machines having an input message of $A_i$ along with the id of $\mathcal M_{i+1}'$. 
    Therefore, we are ready to run the algorithms $A_i$'s.

    \subparagraph*{Running algorithms.}
    Let $\mathcal M_i'$ be the first machine storing a message belonging to the algorithm $A_i$.
    Note that $\mathcal M_i'=\mathcal M_{i+1}'$ if and only if the machine $\mathcal M_i'$ stores all the messages of $A_i$.
    In such a case, we can run the algorithm $A_i$ by using the single machine $\mathcal M_i'$ by using the input messages and the memory of its dummy messages allocated to $A_i$.
    Therefore, we suppose that such algorithms $A_i$'s are already simulated in $O(1)$ rounds.
    In the following, we simulate the remaining algorithms $A_i$'s of which $\mathcal M_i'\neq \mathcal M_{i+1}'$.

    For a fixed index $i$, recall that the consecutive machines $\mathcal N_1,\ldots,\mathcal N_{N_i}$ between $\mathcal M_i'$ and $\mathcal M_{i+1}'$ (including $\mathcal M_i'$) has the messages only for the algorithm $A_i$, where $N_i=\lceil(2M_i(n_i)+n_i)/L\rceil$
    Additionally, since  $\mathcal M_i'$ knows the id of $\mathcal M_{i+1}'$, each $\mathcal M_i'$ can broadcast the ids of $\mathcal M_i'$ and $\mathcal M_{i+1}'$ to $\mathcal N_1,\ldots,\mathcal N_{N_i}$.
    That means that $\mathcal N_1,\ldots,\mathcal N_{N_i}$ know the ids of themselves that have sufficient total space to run $A_i$.
    In the following, our goal is to simulate $A_i$ by using the consecutive $N_i=\lceil(2M_i(n_i)+n_i)/L\rceil$ machines between $\mathcal M_i'$ and $\mathcal M_{i+1}'$ in $O(T_i(n_i))$ rounds. This ensures that we can simultaneously run the algorithms $A_i$'s in $O(\max_i(T_i))$ rounds.

    To simulate the algorithm $A_i$, we start by removing all of the dummy messages in  $\mathcal N_1,\ldots,\mathcal N_{N_i}$.
    Note that the algorithm $A_i$ is implemented for an MPC model with $L_i(n)$ local space and $M_i(n)$ total space using $\lfloor M_i(n)/L_i(n)\rfloor$ machines.
    Here, each of  $\mathcal N_1,\ldots,\mathcal N_{N_i}$ has $L\geq L_i(n_i)$ local space.
    Then we split the $L$ local space of each machine $\mathcal N_j$ for$j\in[N_i]$ into $\lceil L/L_i(n_i)\rceil$ parts $\mathcal N_j^1,\ldots, \mathcal N_j^{\lfloor L/L_i(n_i)\rfloor}$ each of which has $L_i(n_i)$ space by ignoring the remaining $L-L_i(n_i)\cdot\lfloor L/L_i(n_i)\rfloor$ space.
    Then we can give the different ids to each parts $\mathcal N_j^1,\ldots, \mathcal N_j^{\lfloor L/L_i(n_i)\rfloor}$.
    Note that there are at least $\lfloor M_i(n)/L_i(n)\rfloor$ number of $\mathcal N_j^x$ since we define $\mathcal N_j^x$ for all $x\in [N_i] $ and $j\in[\lfloor L/L_i(n_i)\rfloor]$.
    Therefore, we can run the algorithm $A_i$ within at most $T_i(n_i)$ rounds using the machines $\mathcal N_1,\ldots,\mathcal N_{N_i}$ by treating $\mathcal N_j^x$'s with $x\in[\lfloor L/L_i(n_i)\rfloor]$ and $j\in [N_i]$ like independent (at least) $\lfloor M_i(n)/L_i(n)\rfloor$ machines with $L_i(n_i)$ local space.
    This is because during the implementation of $A_i$, each $\mathcal N_j^x$ sends and receives at most $L_i(n_i)$ messages at the same time.
    That means each machine $\mathcal N_x$ sends and receives at most $L_i(n_i)\cdot\lfloor L/L_i(n_i)\rfloor\leq L$ messages at any point during the implementation.
    This completes the proof of the observation.
\end{proof}
This observation allows for flexibility in how we account for the
machine allocation of individual subroutines. 
Throughout the paper, we ensure
that every algorithm respects the prescribed upper bound on the local space at
all times, along with the total space. The number of machines used by a subroutine is allowed to be omitted as long as the resulting total space remains within the claimed bounds. 
In particular, we do not require every subroutine to be implemented
using exactly $O(m/n)$ machines; some subroutines may use more machines with
smaller local space without affecting the overall guarantees, by the above
observation. 

\subsection{Simulating MPC Algorithms in the Congested Clique}

Using Lenzen's routing~\cite{Lenzen13}, it is quite immediate to see that we can simulate linear MPC algorithms in the Congested Clique. For a formal proof, we refer to Behnezhad, Derakhshan, and Hajiaghayi~\cite{BehnezhadDH18}.

\begin{lemma}[Theorem 3.2 in \cite{BehnezhadDH18}]\label{lem:MPC_to_CC}
    Let $\A$ be an algorithm that for an input graph $G$, with $n$ vertices, runs in $T$ rounds of MPC with $O(n)$ local memory and $O(n^2)$ total memory. One can simulate $\A$ in $O(T )$ rounds of Congested Clique.
\end{lemma}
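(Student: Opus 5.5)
The plan is a direct simulation: each Congested Clique node plays the role of a constant number of MPC machines, and each MPC communication round becomes a constant number of Congested Clique rounds via Lenzen's routing~\cite{Lenzen13}.

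First I fix the machine-to-node assignment. Let the MPC algorithm $\A$ use local memory $L\le c_1 n$ and total memory $M\le c_2 n^2$. If $L=\Theta(n)$, there are $M/L=O(n)$ machines, and I assign them to the $n$ nodes in a fixed, globally known way. Each node then hosts $O(1)$ machines, so it stores $O(n)$ words. If the local memory is smaller, I instead group machines into blocks of total memory $\Theta(n)$, giving $O(M/n)=O(n)$ blocks, and assign one block per node. Either way, every node simulates $O(n)$ words of state. Words in MPC have $O(\log(NL))=O(\log n)$ bits, so each word fits in a constant number of Congested Clique messages.

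Second, I set up the input. In Congested Clique, node $v$ knows its $\deg(v)\le n$ incident edges. The MPC model only needs the input spread over machines with at most $L$ words per machine, so I let $v$ hold its own edges as the initial content of its machines. This takes $O(1)$ machines' worth of memory per node, needs no communication, and gives a valid initial MPC configuration (say each edge is stored by its smaller-ID endpoint). If $\A$ assumes a specific layout instead, one round of Lenzen routing rearranges the edges, since each node sends and receives $O(n)$ words.

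Third, I simulate one MPC round. Local computation is done internally by the simulating node. In the communication phase, each machine sends and receives at most $L=O(n)$ words. Hence each node, hosting $O(1)$ machines or $O(n)$ memory worth of machines, is the source and destination of $O(n)$ messages of $O(\log n)$ bits each. Every node knows the destination node of each message because the assignment is fixed. Lenzen's routing therefore delivers all messages deterministically in $O(1)$ rounds, and over the whole run this gives $O(T)$ rounds. At the end, the outputs reside at the nodes and can be distributed as required within the same bounds.

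The main point to check is the per-node load bound in the grouping step when $L\ll n$. A block of small machines may collectively send up to the sum of their local memories, which is $O(n)$ by construction. So the $O(n)$ send/receive bound per node still holds, and this is exactly what Lenzen's routing needs; the total-memory bound $M=O(n^2)$ is what makes $O(n)$ blocks suffice.
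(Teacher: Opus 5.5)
Your proposal is correct and follows the same approach the paper relies on: the paper gives no proof of its own, only remarking that the simulation is immediate via Lenzen's routing and deferring to \cite{BehnezhadDH18}, and your fixed machine-to-node assignment, with each node sending and receiving $O(n)$ words of $O(\log n)$ bits per MPC round, is exactly that argument. The one detail to tighten is input placement when fewer than $n$ nodes host machines (possible when the total memory is $o(n^2)$): your ``one round of Lenzen routing'' then requires each node to know where its edges go, which you can arrange by first having every node broadcast its degree in one round, so that all nodes compute the same canonical layout of the edges across machines before routing.
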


\subsection{Hash Functions for Derandomization} \label{sec:hash_functions}
\paragraph{$k$-wise independence.} 

\begin{definition}
    For $N, M, k \in N$ such that $d\le N$ , a family of functions $\H = \{h : [N ] \to [M ]\}$ is $k$-wise independent if for all distinct $x_1, x_2, \dots, x_k \in [N ]$, the random variables $H(x_1), \dots,  H(x_k)$ are independent and uniformly distributed in $[M]$ when $H$ is chosen randomly from $\H$.
\end{definition}

See, e.g., \cite{Vadhan12} for an explicit construction of $\H$. 

\begin{lemma}[\cite{Vadhan12} Corollary 3.34]\label{lem:encoding_independent}
    For positive integers $N, M, k$, there is a family of $k$-wise independent hash functions $\H = \{h : [N]\to [M]\}$ such that choosing a random function from $\H$ takes $k(\log N+\log M)$ random bits.
\end{lemma}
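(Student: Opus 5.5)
This statement is cited from Vadhan, so I would reproduce the standard polynomial-based construction of $k$-wise independent hash families over a finite field.

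First assume $N$ and $M$ are powers of two. Let $q=\max\{N,M\}$, also a power of two, and work in $\mathbb{F}_q$, whose elements have a $\log q$-bit encoding. Identify $[N]$ with a subset of $\mathbb{F}_q$. For coefficients $a_0,\dots,a_{k-1}\in\mathbb{F}_q$ define $h_a(x)=\sum_{j<k} a_j x^j$. Then truncate the output to $\log M$ bits, which gives an element of $[M]$.

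The key step is independence. Fix distinct $x_1,\dots,x_k$ and any targets $y_1,\dots,y_k\in\mathbb{F}_q$. The conditions $h_a(x_i)=y_i$ form a Vandermonde system in the coefficients. Because the $x_i$ are distinct, this system has exactly one solution $a$. So the map $a\mapsto(h_a(x_1),\dots,h_a(x_k))$ is a bijection from $\mathbb{F}_q^k$ to $\mathbb{F}_q^k$. Hence the outputs are independent and uniform over $\mathbb{F}_q$ when $a$ is uniform. Truncating each output to $\log M$ bits is a fixed map that sends the uniform distribution on $\mathbb{F}_q$ to the uniform distribution on $[M]$, since $M$ divides $q$. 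Since the map is applied coordinatewise, the truncated outputs remain independent.

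It remains to count the seed length. Choosing $a$ takes $k\log q=k\max\{\log N,\log M\}\le k(\log N+\log M)$ random bits, as claimed.

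The one step that needs care is when $N$ or $M$ is not a power of two. The domain side is easy: embed $[N]$ into a larger power of two and restrict the function, which keeps independence. The range side is harder, because reducing into $[M]$ is only nearly uniform when $M$ is not a power of two. I would handle it in the way the paper's applications need. For sampling with probability $p$, use $M$ a power of two and threshold the output at $pM$; the resulting rounding only changes $p$ by a factor of at most $2$. Alternatively, one can accept that exact uniformity is stated only for powers of two, as in Vadhan's corollary.
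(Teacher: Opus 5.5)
Your proposal is correct and is the standard argument behind the cited result, which the paper does not prove but simply invokes from Vadhan. Vadhan's argument is the degree-$(k-1)$ polynomial family over $\mathbb{F}_{2^{\max\{n,m\}}}$, $k$-wise independent by invertibility of the Vandermonde system, followed by truncation of the output. You also get the slightly better seed length $k\max\{\log N,\log M\}\le k(\log N+\log M)$, and your caveat is well placed: exact uniformity on $[M]$ needs $M$ to be a power of two, as in Vadhan's $\{0,1\}^n\to\{0,1\}^m$ formulation. Rounding $M$ to a power of two only changes the paper's sampling probabilities by a constant factor.
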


For $k$-wise independent function, we have the following tail bound. 
\begin{lemma}[\cite{BellareR94} Lemma 2.3]\label{lem:tail_bound_independent}
    Let $k\ge 4$ be an even integer. Let $X_1, \dots , X_n$ be $k$-wise independent random variables taking values in $[0, 1]$. Let $X = \sum_{i=1}^n X_i$ denote their sum and let $\mu := E[X]$. If $\mu \ge k$, then, for any $\eps>0$, we have
    \begin{equation*}
        \P[ |X-\mu|\geq \eps\mu] \le 8\left( \frac{2k}{\eps^2\mu}\right)^{k/2}.
    \end{equation*}
\end{lemma}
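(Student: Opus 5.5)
The plan is the $k$-th moment method. Center the variables: set $p_i:=\E[X_i]$ and $Y_i:=X_i-p_i$, so that $X-\mu=\sum_i Y_i$, $\E[Y_i]=0$ and $|Y_i|\le 1$. Since $k$ is even, $(X-\mu)^k\ge 0$, and Markov's inequality gives
\[
\P\bigl[|X-\mu|\ge\eps\mu\bigr]\le \frac{\E\bigl[(X-\mu)^k\bigr]}{(\eps\mu)^k}.
\]
It therefore suffices to show $\E[(X-\mu)^k]\le 8\,(2k\mu)^{k/2}$ whenever $\mu\ge k$. Dividing by $(\eps\mu)^k$ then yields exactly $8\bigl(2k/(\eps^2\mu)\bigr)^{k/2}$.

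\textbf{Expanding the moment.} Expand $\E[(\sum_i Y_i)^k]=\sum_{(i_1,\dots,i_k)}\E[Y_{i_1}\cdots Y_{i_k}]$. Each monomial involves at most $k$ distinct indices. By $k$-wise independence, each expectation factors into $\prod_j \E[Y_j^{m_j}]$, where $m_j$ is the multiplicity of index $j$. Consequently any tuple containing an index of multiplicity exactly one contributes $0$, and we may restrict to tuples whose induced partition of the positions $[k]$ has all blocks of size at least $2$.

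\textbf{Bounding each block.} For a block of size $m\ge 2$ we use
\[
|\E[Y_j^m]|\le \E[Y_j^2]\le \E[X_j^2]\le \E[X_j]=p_j,
\]
where the first step uses $|Y_j|\le1$ and the middle step uses $\mathrm{Var}(X_j)\le \E[X_j^2]$. Fix a set partition of $[k]$ into $t$ blocks. Summing over the assignments of distinct indices to its blocks, and dropping the distinctness constraint to get an upper bound, the contribution is at most $(\sum_j p_j)^t=\mu^t$. Hence
\[
\E[(X-\mu)^k]\le \sum_{t=1}^{k/2} N_t\,\mu^t,
\]
where $N_t$ is the number of partitions of $[k]$ into $t$ blocks, each of size at least $2$.

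\textbf{Counting (the main obstacle).} The remaining step, which I expect to be the hardest, is to bound $N_t$ tightly enough to recover the constant $8(2k)^{k/2}$. I would first use $\mu\ge k$ to write $\mu^t=\mu^{k/2}\mu^{t-k/2}\le \mu^{k/2}k^{t-k/2}$, which reduces the goal to
\[
\sum_{t\le k/2} N_t\, k^{t-k/2}\le 8\,(2k)^{k/2}.
\]
For the count I would use the crude bound $N_t\le t^k/t!$, since assigning positions to $t$ labelled blocks and then unlabelling overcounts. At $t=k/2$ this gives, by Stirling, roughly $(k/2)^k/(k/2)!\le (ek/2)^{k/2}\le (2k)^{k/2}$. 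For smaller $t$, the ratio of consecutive terms $N_t k^{t-k/2}$ should decay at least geometrically with ratio $\le 1/2$; I would verify this directly from $t^k/t!$ and $k\ge t$. If that bound turns out too lossy, I would fall back on a sharper count: choose the $t$ block-leaders, then assign the remaining $k-t$ positions, and use that at least $t$ of them are forced because every block has size at least $2$. Summing the resulting geometric series gives the factor $8$, which completes the bound.
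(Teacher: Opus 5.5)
The paper gives no proof of this lemma: it imports it from Bellare and Rompel, who prove the more general bound $\Pr[|X-\mu|\ge A]\le 8\bigl((k\mu+k^2)/A^2\bigr)^{k/2}$ for every $A>0$ by bounding the $k$-th central moment. The version stated here is the special case $A=\eps\mu$, where $\mu\ge k$ gives $k\mu+k^2\le 2k\mu$.

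Your argument is the same moment method, and it is correct. What you do differently is spend the hypothesis $\mu\ge k$ immediately, via $\mu^t\le\mu^{k/2}k^{t-k/2}$, so that every term can be compared with the $t=k/2$ term. This is simpler than handling arbitrary $\mu$, but it does not recover the $k\mu+k^2$ form.

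The step you flagged as the main obstacle already closes with your crude count, so no fallback is needed. Set $B_t:=\frac{t^k}{t!}k^{t-k/2}$, so that $N_tk^{t-k/2}\le B_t$.
\begin{itemize}
\item For $2\le t\le k/2$ you have $B_{t-1}/B_t=\frac tk\bigl(1-\frac1t\bigr)^k\le\frac12$. The geometric decay is established for these upper bounds, not necessarily for the terms $N_tk^{t-k/2}$ themselves, but that is all you need.
\item By $n!\ge(n/e)^n$, $B_{k/2}=(k/2)^k/(k/2)!\le(ek/2)^{k/2}$.
\end{itemize}
Hence $\E[(X-\mu)^k]\le 2\mu^{k/2}(ek/2)^{k/2}\le 2(2k\mu)^{k/2}$, so the lemma in fact holds with the constant $2$ in place of $8$.
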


\paragraph{Pseudorandom generator for hitting events.}
We also have the following pseudorandom generator of Gopalan and Yeudayoff~\cite{GopalanY20}, which is particularly designed for hitting events. 
\begin{lemma}[\cite{GopalanY20} Theorem 1.9]\label{lem:hash_functions_for_optimal}
For positive integers $N$ and $M$, and for error parameter $\epsilon >0$, there is a family of hash functions $\H$ from $[N]$ to $[M]$ with seed length $O((\log \log N + \log(M/\eps)) \log \log(M/\eps))$, such that for every subsets $\textbf{I}_1,\dots , \textbf{I}_N \subseteq [M]$, we have:
\begin{equation*}
    \left| \P_{h\sim \H}[\forall i\in [N], h(i)\in \textbf{I}_i] - \P_{g\sim \mathcal U}[\forall i\in [N], g(i)\in \textbf{I}_i]\right| \le \eps,
\end{equation*}
where $\mathcal U$ is the family of all functions from $[N ]$ to $[M ]$. We can sample from $\H$ in polynomial time.
\end{lemma}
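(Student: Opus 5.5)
Since this lemma is imported from \cite{GopalanY20}, within the paper it is used as a black box. If I had to reprove it, I would follow the known template for pseudorandom generators fooling \emph{combinatorial rectangles}. The test $f(h)=\prod_{i\in[N]}\mathbf 1[h(i)\in \textbf I_i]$ is exactly a combinatorial rectangle in $[M]^N$. Encoding each coordinate as a $\{0,1\}$-variable via $\mathbf 1[h(i)\in\textbf I_i]$, its expectation under uniform $g$ is $\prod_i p_i$ with $p_i=|\textbf I_i|/M$.

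\textbf{Step 1 (reduction to few relevant coordinates).} Coordinates with $p_i$ very close to $1$ contribute little. If $\sum_i (1-p_i)$ is large, say $\ge \log(1/\eps)$, then the uniform acceptance probability is already below $\eps$. In that case it suffices that the generator also rejects with probability $\ge 1-\eps$. This follows from $O(\log(1/\eps))$-wise independence together with a tail bound of the type of \Cref{lem:tail_bound_independent}, applied to the number of coordinates $i$ with $h(i)\notin\textbf I_i$. Otherwise $\sum_i(1-p_i)$ is small, and I would write $f=\prod_i (1-q_i(h))$ with $q_i=\mathbf 1[h(i)\notin\textbf I_i]$. Expanding this product gives the elementary symmetric polynomials $S_k(q)$. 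Truncating at degree $k=O(\log(1/\eps)/\log\log(1/\eps))$ costs at most $\eps$, and a $k$-wise independent $h$ (by \Cref{lem:encoding_independent}) fools the truncated polynomial exactly. This alone already gives seed $O(\log(1/\eps)\cdot\log(NM))$.

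\textbf{Step 2 (gradual restrictions to cut the seed).} To reach seed length $O((\log\log N+\log(M/\eps))\log\log(M/\eps))$, I would use the ``milder restrictions'' iteration. In each of $O(\log\log(M/\eps))$ rounds, a pseudorandom subset of coordinates, each included with constant probability, is assigned values from a $\mathrm{polyloglog}$-wise independent hash of seed $O(\log\log N+\log(M/\eps))$. The remaining coordinates are left free. The key lemma is that after one such round the restricted rectangle's acceptance probability is preserved in expectation up to error $\eps/\log$. The proof would bound the expectation of a product of bounded complex/real variables under limited independence by concentration of elementary symmetric polynomials: if $|S_1|,|S_2|$ are small, then all higher $|S_k|$ decay geometrically. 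This is the Gopalan--Yehudayoff inequality.

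\textbf{Step 3 (finishing).} After $O(\log\log(M/\eps))$ rounds, only $\mathrm{poly}\log(M/\eps)$ coordinates effectively matter, and a final small-bias or $k$-wise independent family handles them. Summing the seeds over all rounds gives the claimed bound. Polynomial-time sampling is immediate because each component (polynomial hashing, small-bias spaces) is explicit.

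The main obstacle is Step 2's error analysis, i.e.\ proving that limited independence fools products of many $[0,1]$-valued variables with error decaying fast enough per round without paying $\log N$ per round. I would attack it first via the symmetric-polynomial concentration bound $|S_k(x)|\le (O(S_1^2+S_2)/k)^{k/2}$, and only fall back on Fourier analysis of rectangles if that fails.
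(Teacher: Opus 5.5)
Your opening sentence matches the paper exactly. The lemma is quoted from Gopalan--Yehudayoff (their Theorem~1.9), and the paper gives no proof of its own, so citing it is the complete justification; nothing more is needed.

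If you intend the reconstruction as an actual proof, though, it does not work as written, for three reasons.

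(i) Step~2 needs, in each round, a $\mathrm{poly}(\log\log(M/\varepsilon))$-wise independent hash on $[N]$ with seed $O(\log\log N+\log(M/\varepsilon))$. Such a family cannot exist in general. For $M\ge 2$, every pairwise independent family $[N]\to[M]$ has at least $1+N(M-1)$ members, so its seed is at least $\log N$. This exceeds $O(\log\log N+\log(M/\varepsilon))$ whenever $N$ is superpolynomial in $M/\varepsilon$. Exact limited independence applied to all $N$ coordinates therefore never yields the $\log\log N$ term of the statement. That term needs machinery absent from your sketch, such as almost-$k$-wise independent (small-bias based) distributions, or a preliminary reduction of the number of relevant coordinates to $\mathrm{poly}(M/\varepsilon)$.

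(ii) In Step~1, the Bonferroni remainder is governed by $e_{k+1}(1-p_1,\dots,1-p_N)$. With $\mu=\sum_i(1-p_i)$, this can be as large as roughly $\mu^{k+1}/(k+1)!$. For $\mu$ close to your threshold $\log(1/\varepsilon)$, that is below $\varepsilon$ only when $k=\Omega(\log(1/\varepsilon))$. So truncation degree $O(\log(1/\varepsilon)/\log\log(1/\varepsilon))$ is wrong in that regime. This is only a slip, since your seed bound $O(\log(1/\varepsilon)\log(NM))$ already corresponds to the larger degree.

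(iii) The per-round error lemma, which carries the whole improvement, is only asserted, as you acknowledge. Step~3 also does not follow as stated. Fixing a constant fraction of coordinates in each of $O(\log\log(M/\varepsilon))$ rounds still leaves a $1/\mathrm{poly}\log(M/\varepsilon)$ fraction of them alive, not $\mathrm{poly}\log(M/\varepsilon)$ many. What shrinks is at best the alive weight $\sum_{i\ \mathrm{alive}}(1-p_i)$. You would still have to exhibit a family that fools the residual rectangle within the stated seed budget.
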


\subsection{Derandomization in MPC}
\paragraph{Conditional expectation method.}
In this paper, we utilize a hash function family $\mathcal H$ described in \Cref{lem:tail_bound_independent} or \Cref{lem:hash_functions_for_optimal}, and our goal is to find a good hash function $h$ in $\mathcal H$ efficiently. For this, we apply the \emph{conditional expectations method}, see, e.g., \cite[Chapter 6.3]{mitzenmacher2017probability}.
The method iteratively restricts the hash function family so that it finally returns a desired hash function deterministically.
The following lemma illustrates the method.
\begin{restatable}{lemma}{LemCondExp}\label{lem:MPC_cond_exp}
    We are given a family $\mathcal H$ of hash functions and $\FC$ objective functions $f_1,\ldots,f_\FC\colon \mathcal H\to \mathbb R$ such that:
    \begin{enumerate}
        \item The hash functions of $\mathcal H$ are encoded by $s$-bit random seeds,
        \item $\E_{h\sim \mathcal H}[\sum_{i=1}^{\FC}f_i(h)]\leq x$, and
        \item $\E_{h\sim\mathcal H}[f_i(h)\mid \textnormal{the prefix of } h\in \mathcal H\textnormal{ is } \textsf{r } ]$ can be computed in a single machine with $\LS$ local space for any $i\in [\FC]$ and any prefix string $\textsf{r}\in \{0,1\}^*$ of length at most $s$.
    \end{enumerate}
    There exists a deterministic MPC algorithm that computes a function $h^*\in\mathcal H$ with $\sum_{i=1}^{\FC} f_i(h^*)\leq x$ in $O(s\log \FC/\log^2 \LS)$ rounds with $O(\LS)$ local space and $O(\FC\LS)$ total space.
\end{restatable}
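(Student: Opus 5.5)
We fix the seed of $h^*$ in chunks of $\Theta(\log \LS)$ bits, so that $O(s/\log\LS)$ phases suffice, and each phase takes $O(\log_\LS \FC)=O(\log \FC/\log\LS)$ rounds. Multiplying gives the claimed $O(s\log\FC/\log^2\LS)$ rounds. Throughout we keep the invariant that, for the current prefix $\mathsf r$, the conditional expectation satisfies $\E_{h}[\sum_i f_i(h)\mid \text{prefix } \mathsf r]\le x$. Initially $\mathsf r$ is empty and this holds by assumption~2. At the end $\mathsf r$ has length $s$, so it determines a single $h^*$ with $\sum_i f_i(h^*)\le x$.

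\textbf{One phase.} Let $b=\lfloor \tfrac12\log \LS\rfloor$, so there are $2^b\le\sqrt{\LS}$ candidate extensions $\mathsf r\circ\sigma$ with $\sigma\in\{0,1\}^b$ (fewer in the last phase). We use $\FC$ machines (total space $O(\FC\LS)$), and machine $i$ is responsible for $f_i$. Every machine knows $\mathsf r$, since it has length at most $s$, which we may assume fits in $O(\LS)$ words.
\begin{enumerate}
\item Machine $i$ computes $e_i(\sigma):=\E[f_i(h)\mid \text{prefix } \mathsf r\circ\sigma]$ for all $2^b$ choices of $\sigma$. By assumption~3 each evaluation is local, and this needs no communication. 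This is exactly the point of \Cref{obs:parallel_no_comm}. Computations can be done one $\sigma$ at a time, reusing space, so only the $2^b\le\sqrt\LS$ outputs are stored.
\item We aggregate $E(\sigma)=\sum_i e_i(\sigma)$ for every $\sigma$ simultaneously. This is a sum of vectors of length $\sqrt{\LS}$ along a tree of fan-in $\sqrt{\LS}$, so each node receives $\sqrt\LS\cdot\sqrt\LS=\LS$ words. The tree has depth $O(\log_{\sqrt\LS}\FC)=O(\log\FC/\log\LS)$. Alternatively, this step is the prefix-sum primitive of \Cref{lem:sorting_prefixsum_MPC} applied to $\FC\cdot 2^b$ keyed values.
\item The root chooses $\sigma^*=\arg\min_\sigma E(\sigma)$ and broadcasts $\sigma^*$ back down the same tree, which takes the same number of rounds by \Cref{lem:sorting_prefixsum_MPC}. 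Everyone then sets $\mathsf r\leftarrow \mathsf r\circ\sigma^*$.
\end{enumerate}

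\textbf{Correctness.} The conditional expectation given $\mathsf r$ is the average of $E(\sigma)$ over the equally likely extensions $\sigma$. The minimum is at most the average, so the invariant is preserved. This assumes a uniform seed, which is how $\mathcal H$ is sampled.

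\textbf{Main obstacle.} The delicate step is the resource accounting that makes the $\log^2\LS$ denominator appear. We must fix $\Theta(\log\LS)$ bits per phase, not $O(1)$, while keeping both the per-machine evaluation table and the aggregation messages within $O(\LS)$. Choosing $2^b=\sqrt\LS$ reconciles these two constraints. We also have to be careful that local computation of each $e_i(\sigma)$ within $\LS$ space is exactly what assumption~3 grants. With $O(s/\log\LS)$ phases of $O(\log\FC/\log\LS)$ rounds each, the bound follows.
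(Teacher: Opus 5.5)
Your proposal is correct and follows essentially the paper's proof: you fix the seed in chunks of $\Theta(\log \LS)$ bits, let machine $i$ locally evaluate the conditional expectation of $f_i$ for every extension of the current prefix, aggregate the sums in $O(\log \FC/\log \LS)$ rounds, and keep the minimizing extension, so the conditional expectation never exceeds $x$. The only difference is cosmetic: you use $\lfloor\tfrac12\log\LS\rfloor$-bit chunks so that an explicit fan-in-$\sqrt{\LS}$ aggregation tree fits in memory, whereas the paper uses $\lfloor\log\LS\rfloor$-bit chunks and sums by sorting and prefix sums (\Cref{lem:sorting_prefixsum_MPC}); both give the same $O(s\log\FC/\log^2\LS)$ bound.
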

Here, we present the main ideas first, while the details are in the proof.
    This algorithm decomposes the random seeds into $t=O(s/\log \LS)$ chunks $R_1,\ldots, R_t$ of $\lfloor \log \LS\rfloor$ bits random seeds, and iteratively determines each chunk in $O(\log \FC/\log \LS)$ rounds.
    At the $j$-th iteration, we determine the $j$-th chunk $R_j$ by assuming that the prefix $R_1\cdots R_{j-1}$ is fixed in the previous iterations.
    That means that the family of hash functions $\mathcal H$ is restricted to $\mathcal H_{j}$ during the previous $(j-1)$ iterations, and the $j$-th iteration receives $\mathcal H_j$ and restricts it further to $\mathcal H_{j+1}$ by determining the chunk $R_j$. Note that there are at most $\LS$ number of different assignments $\textsf r_1,\ldots, \textsf r_\LS$ for $R_j$, since $R_j$ consists of $\lfloor \log L \rfloor$ bits.

    To do this, we need $\FC$ machines $M^{1},\ldots,M^\FC$ with $2\LS$ local space.
    The machine $M^i$ computes the expectations of $f_i(h)$ over the hash functions $h$ in $\mathcal H_j$ whose $R_j$ chunk is $\textsf r$ for all $\textsf r\in \{\textsf r_1,\ldots,\textsf r_\LS\}$.
    The machine $M^i$ can compute it locally without any communication by using $\LS$ space to compute each expectation value and $\LS$ space to store the computed value.
    Therefore, it is enough to have $2\LS$ local space.
    Furthermore, we can compute the expectation value of $\sum_{i=1}^\FC f_i$ for the prefixes $\textsf r_1,\ldots,\textsf r_\LS$ by summation of the calculated expectation values, which takes $O(\log \FC/\log\LS)$ rounds by \Cref{lem:sorting_prefixsum_MPC}.
    Therefore, determining one chunk takes $O(\log \FC/\log\LS)$ rounds, and thus, the lemma holds.

\newpage
\section{Hitting Sets in MPC}\label{sec:hitting sets}

In this section, we provide our deterministic hitting set algorithms. In particular, we prove \Cref{thm:MPC hitting set intro: linear,thm:MPC hitting set intro: sublinear}. 
In fact, we show more general versions, where the demands on $N$ and $|U|$ are largely dropped. Since these results are rather technical, and we mostly do not need them for our applications, we defer the statements to their respective technical sections. Here, we restate the simpler version \Cref{thm:MPC hitting set intro: linear,thm:MPC hitting set intro: sublinear} for convenience. 

We split our results into two sections: `linear MPC' and `sublinear MPC'. We note that these descriptions are with respect to the hitting set problem \emph{itself} and do not always coincide with the application. To be precise, let the hitting set instance consist of $N$ sets $S_1, \dots, S_N$, where $|S_i|\ge d$. Let the local memory be denoted by $L$. If $L =\Omega(d)$, then we call the instance `linear': a set $S_i$ fits on one machine. Note that we do not set $L=\Theta(d)$, but allow for (much) bigger $L$ as well. As we will demonstrate, such additional space can have additional running time advantages in the derandomization process. 

If $L<d$, we call the instance `sublinear'. In this case, we will always assume $L=O(d^\delta)$ for some constant $\delta \in (0,1)$ -- analogous to the sublinear MPC model for graph problems: here we have $O(n^\delta)$ for some constant $\delta \in (0,1)$, where $n=|V|$ denotes the number of vertices. Now suppose we are faced with a hitting set problem in the context of graphs. If this is linear MPC, i.e., $\Theta(n)$ local space, we will also use the `linear MPC' hitting set. If this in sublinear MPC, i.e., $\Theta(n^\delta)$ local space, it depends on the hitting set instance which hitting set we use: if $d< n^\delta$, we can use the `linear MPC' hitting set. If $d\ge n^\delta$, we use the `sublinear MPC' hitting set. 
When presenting the theorems below, we detail which version is used for each of the applications in \Cref{sec:applications_spanners,sec:applications_shortest_paths}.

Note that in each of these cases, we have that $|U|,N = O(n)$, and that the local space $L$ is at least $\Omega(d^\delta)$ when we apply a hitting set. This means that we always have $|U|,N =\poly L$, as required by  \Cref{thm:MPC hitting set intro: linear,thm:MPC hitting set intro: sublinear}. If we do not have this assumption, we obtain results with additional factors $\log_L |U|$ and $\log_L N$, see \Cref{sec:linear_HS,sec:sublinear_HS}.

\paragraph{Linear MPC Hitting Sets.}

\ThmLinearHittingSet*
\begin{proof}
    We prove part 1 in \Cref{thm:linear MPC hitting set} (\Cref{sec:linear MPC}), part 2 in \Cref{thm:optimal_linear MPC hitting set} (\Cref{sec:opt_lin_hitting_set}). 
 The results as stated here follow from these two lemmas by setting $|U|,N =\poly L$. 
\end{proof}

Additionally, we show in \Cref{sec:linear MPC large N,sec:opt_lin_hitting_set_lareg_N} that we can drop the $N\le |U|$ demand, i.e., we can also deal with the case that the number of sets is very large. 
\tijni{I added a separate lemma for the result with $\log N$}
For part 1, if $N=\poly L$, we obtain a hitting set of size $O\left(\frac{|U|\log_d N}{d^{0.99}}\right)$ in $O(\log_d N)$ rounds. See \Cref{thm:linear MPC hitting set large N} for the formal statement. 
For part 2, if $N=\poly L$, we obtain a hitting set of  size $O\left(\frac{|U|\log N}{d}\right)$ in $O(\log\log (d+N))$ rounds.  See \Cref{thm:optimal_linear MPC hitting set large N} for the formal statement.

\Cref{thm:MPC hitting set intro: linear} Part 1 is used in our spanner in linear MPC (\Cref{sec:unweighted_spanner_linearMPC}), our spanner in sublinear MPC \Cref{sec:weighted_spanner_sublinear}, and APSP in the Congested Clique (\Cref{sec:applications_constant_APSP_CC}). 
\Cref{thm:MPC hitting set intro: linear} Part 2 is used in APSP in the Congested Clique \Cref{sec:applications_constant_APSP_CC}. 
\Cref{thm:optimal_linear MPC hitting set large N} is used in APSP in the near-linear regime (\Cref{sec:applications_heterogeneous}).

\paragraph{Sublinear MPC Hitting Sets.}
Next, we consider the `sublinear' regime. In this case, we have relatively large $d$. In particular, $d$ is larger than the local space, which is here denoted as $O(d^\delta)$ for some constant $\delta\in (0,1)$.


\ThmSublinearHittingSet*
\begin{proof}
    We prove part 1 in \Cref{thm:sublinear_MPC_hitting set} (\Cref{sec:non-optimal hitting set sublinear MPC}) and part 2 in \Cref{thm:optimal_sublinear_MPC_hitting set} (\Cref{sec:optimal hitting set sublinear MPC}).  The results as stated here follow from these two lemmas by setting $|U|,N =\poly d$. 
\end{proof}
\tijni{I changed part 1 to $O(\frac{|U|\log d}{d^{0.99}})$, respectively to $O(\frac{|U|\log N}{d^{0.99}})$ for general $N$ in \Cref{thm:sublinear_MPC_hitting set}. }

\Cref{thm:MPC hitting set intro: sublinear} Part 1 is used in our spanner in sublinear MPC (\Cref{sec:weighted_spanner_sublinear}).
\Cref{thm:MPC hitting set intro: sublinear} Part 2 is used in APSP in the heterogeneous regime (\Cref{sec:applications_heterogeneous}).

\subsection{Preprocessing}
The given input for a hitting set problem is as follows: a universe $U$, a number $d\le |U|$ and collection of $\HS$ subsets $S_1, \dots, S_{\HS} \subseteq U$ with $|S_i|\ge d$ for each $i\in [\HS]$. This input might be larger than necessary in two ways: 1) Some sets $S_i$ might be much larger than $d$. Although this can only be helpful, our algorithm (and its randomized predecessor) do not exploit this. Hence, we can discard any such elements and prune each $S_i$ to have size exactly $|S_i|=d$.
2) The set $U$ might contain elements that are in none of the (pruned sets) $S_i$. Such elements never have to be part of the hitting set, and hence can be discarded. A pruned set $U$ should have size $O(Nd)$ at most.

The following preprocessing lemma formalizes this pruning procedure. 

\begin{lemma}
\label{lm:preprocessing}
     Let $L$ be a parameter. 
     There exists a deterministic MPC algorithm that, given a universe $U$, a number $d\le |U|$ and collection of $\HS$ subsets $S_1, \dots, S_{\HS} \subseteq U$ with $|S_i|\ge d$ for each $i\in [\HS]$ and total input size $M:=|U|+ \sum_{i=1}^\HS |S_i|$, outputs a  universe $\tilde U$ and a collection of $\HS$ subsets $\tilde S_1, \dots, \tilde S_{\HS} \subseteq U$, such that $|\tilde U|\le Nd$, and for each $i\in [\HS]$ $\tilde S_i\subseteq S_i$ and $|\tilde S_i|= d$. The algorithm takes $O(\log_L M)$ rounds, and uses $O(L)$ local space and $O(M)$ total space. 
    We can assume that the new instance is structured as follow: 
    \begin{itemize}
        \item $U$ is stored on $O(Nd/L)$ consecutive machines. 
        \item If $L\ge d$, each set $S_i$ is stored within one machine (with possibly multiple sets in the same machine). 
        \item If $L<d$, the sets $S_i$ are stored on multiple, consecutive machines (where each machine contains a subset of $S_i$ for exactly one index $i$).  
    \end{itemize}
\end{lemma}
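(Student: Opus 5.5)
The plan is to reduce everything to a constant number of invocations of the sorting, filtering and prefix-sum primitives of \Cref{lem:sorting_prefixsum_MPC}, each costing $O(\log_L M)$ rounds with $O(L)$ local space and $O(M)$ total space. We represent the input as a list of pairs $(i,u)$, one for each membership $u\in S_i$, together with one record per element of $U$; the total number of records is $M$.

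\textbf{Step 1 (pruning the sets).} Sort all pairs $(i,u)$ lexicographically by $(i,u)$. Afterwards the pairs of each set $S_i$ lie on consecutive machines. A segmented prefix sum over the indicator ``this pair belongs to $S_i$'' assigns each pair its rank within $S_i$; the segment boundaries are detected by comparing with the neighbouring machine, exactly as in the proof of \Cref{obs:parallelism}. We keep only the pairs of rank at most $d$, which defines $\tilde S_i\subseteq S_i$ with $|\tilde S_i|=d$ (possible since $|S_i|\ge d$), and discard the rest by filtering.

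\textbf{Step 2 (pruning the universe).} Re-sort the surviving pairs by the element $u$. A second segmented prefix sum marks, for each $u$, the first occurrence of $u$, and filtering the marked records yields $\tilde U=\bigcup_i\tilde S_i$. This set has size at most $\sum_i|\tilde S_i|=Nd$. A final prefix sum assigns compact ids, so that $\tilde U$ is stored densely on $O(Nd/L)$ consecutive machines; every element not in any $\tilde S_i$ is simply dropped.

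\textbf{Step 3 (layout of the sets).} Re-sort the pairs by set index $i$.
\begin{itemize}
\item If $L<d$, each set spans consecutive machines. We pad each set's block up to machine boundaries so that no machine holds pieces of two different sets. Concretely, a prefix sum computes for each set the starting offset $\lceil\cdot/L\rceil\cdot L$ and each pair is routed to its offset position. This at most doubles the space, because each set has size $d>L$.
\item If $L\ge d$, we again use prefix sums to compute offsets, placing sets so that none straddles a machine boundary: a set is moved to the next machine if it would cross one. This wastes at most half of each machine, so the space is still $O(M)$.
\end{itemize}

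The only delicate point is Step 3, namely the padding and offset computation for the machine alignment. Everything else is a direct use of the primitives. All steps take $O(\log_L M)$ rounds.
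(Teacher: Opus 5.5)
Your proposal is correct and follows essentially the same route as the paper. You sort memberships by set index and use prefix sums to keep $d$ elements per set, then sort by element to prune the universe, and finish with a final sort for the layout. The paper prunes $U$ by matching $(u,0)$ records of $U$ against $(u,1)$ membership records, whereas you deduplicate the pruned memberships directly; the two are equivalent. Your Step~3 spells out the machine alignment that the paper only asserts ("we sort again to structure the output as required"). It works because after pruning every set has size exactly $d$, so the offsets are in closed form, e.g.\ $\lfloor L/d\rfloor$ sets per machine when $L\ge d$.
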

\begin{proof}
    First, we prune the sets $S_i$: 
    Whenever a set $S_i$ is larger than $d$, we want to omit further elements. Hereto, we sort all elements using \Cref{lem:sorting_prefixsum_MPC} to each set $S_i$ on consecutive machines. We can count the elements in each set $S_i$ simultaneously again using \Cref{lem:sorting_prefixsum_MPC} and make this value known to all relevant machines. Then we delete the largest (in the ordering) $|S_i|-d$ elements for set $S_i$ locally. We can perform this for all sets $S_i$ simulatneously due to \Cref{obs:parallelism}.
    Afterwards, we sort again to structure the output as required.
    This takes $O(M)$ space and $O(\log_L M)$ rounds. 

    Next, we prune the set $U$:
    In another $O(\log_L M)$ rounds and with $O(M)$ space, we check which elements of $U$ appear in \emph{any} set $S_i$. This is done by sorting the tuples $\{(u,1) : u\in \bigcup_i S_i\}$ and $\{(u,0) : u\in U\}$, using \Cref{lem:sorting_prefixsum_MPC}. When an element $(u,0)$ does not neighbor any elements $(u,1)$, it is deleted. Then, we sort once more to store the remaining elements $(u,0)$ on $O(Nd/L)$ consecutive machines.  
\end{proof}

\tijni{new}
Furthermore, we can assume that $d$ is bounded away from $|U|$. A technicality needed to apply \Cref{lem:tail_bound_independent}. 
\begin{remark}\label{rem:large_d}
    Without loss of generality, we can assume that $d \le |U|/200$ when computing a hitting set of size $O\left(\tfrac{|U|}{d^{0.99}}\right)$. 
\end{remark}
\begin{proof}
    If $|U|/211< d \le |U|$. We can set $d'= |U|/211$. Clearly $|S_i| \ge d \ge d'$. This instance provides a hitting set parametrized by $d'$, which produces a hitting set of size  $O\left(\tfrac{|U|}{d'^{0.99}}\right)\le 211^{0.99}\cdot O\left(\tfrac{|U|}{d^{0.99}}\right)=O\left(\tfrac{|U|}{d^{0.99}}\right)$. 
\end{proof}

 \subsection{Linear MPC}\label{sec:linear_HS}
For the linear MPC hitting set problem, we provide four algorithms, outputting a hitting set of size $O(|U|/d^{0.99}+ N/d)$, $O(|U|\log_d \tfrac{N}{|U|}/d^{0.99})$, $O(|U|\log d/d+ N/d)$, and $O(|U|\log N/d)$ respectfully. 
We note that for the first two results, in \Cref{thm:MPC hitting set intro: linear} we assumed that $N\le |U|$, which means the latter term is absorbed in the first. For full generality, we drop this requirement in this section. In our applications, we either have $N\le |U|$ or we use the third result. 


\subsubsection{Fast Hitting Sets}\label{sec:linear MPC}
The goal of this section is to prove the following lemma. This lemma computes a hitting set that is of size $O(|U|/d^{0.99}+N/d)$. This is roughly a factor $d^{0.01}$ larger than our other hitting sets. However, this allows us to use hash functions with a very small seed. This translates in a low round complexity: for $|U|,N = O(\poly d)$ it only takes a constant number of rounds. 

 \begin{restatable}{lemma}{ThmLinearMPCHittingSet}\label{thm:linear MPC hitting set}
    There exists a deterministic MPC algorithm that, given a universe $U$, a number $d\le |U|$, and a collection of $\HS$ subsets $S_1, \dots, S_{\HS} \subseteq U$ with $|S_i|\ge d$ for each $i\in [\HS]$, computes a hitting set $D$ of size $ O(|U|/d^{0.99}+N/d)$. 
    Let $L\ge d$ denote the local space of each machine, and let $M:=L+|U|+ \sum_{i=1}^\HS |S_i|$. Then the algorithm takes $O(\log_L^2 N+\log_L M)$ rounds and $O(M)$ total space. 
\end{restatable}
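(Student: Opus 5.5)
We first apply \Cref{lm:preprocessing} in $O(\log_L M)$ rounds. Afterwards every set has $|S_i|=d$, each $S_i$ lies on a single machine (because $L\ge d$), and $|U|\le Nd$. By \Cref{rem:large_d} we may also assume $d\le |U|/200$.

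The randomized process we derandomize samples each $u\in U$ into $D_h$ with probability $p=d^{-0.99}$. A set $S_i$ that is not hit is repaired by adding one arbitrary element of $S_i$ (its smallest element, say). Let $D_h'$ denote these repair elements, so $D_h\cup D_h'$ is always a hitting set. For the randomness we take a $k$-wise independent family $\mathcal H=\{h:[|U|]\to[M']\}$ from \Cref{lem:encoding_independent}, with $k$ a large even constant (around $k=400$) and $M'=\lceil d^{0.99}\rceil$. We put $u\in D_h$ iff $h(u)=1$. The seed length is $s=O(k\log(|U|+d))=O(\log(Nd))$. We then use the objective
\[
F(h)=\sum_{u\in U}\mathbf 1[u\in D_h]+\sum_{i=1}^N \mathbf 1[S_i\cap D_h=\emptyset],
\]
which upper bounds $|D_h\cup D_h'|$.

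Next we bound the expectation. The first sum has expectation $|U|/d^{0.99}$. For the second, $X_i=|S_i\cap D_h|$ is a sum of $k$-wise independent indicators with mean $\mu=d\cdot d^{-0.99}=d^{0.01}$. We may assume $d^{0.01}\ge k$; if not, $d$ is a constant, and taking $D=U$ gives size $|U|=O(|U|/d^{0.99})$. Applying \Cref{lem:tail_bound_independent} with $\eps=1$,
\[
\P[X_i=0]\le 8\,(2k/d^{0.01})^{k/2}=O(d^{-1})
\]
once $k\ge 200$ is a constant. So $\E[F]\le x:=O(|U|/d^{0.99}+N/d)$.

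To fix a good seed we use \Cref{lem:MPC_cond_exp} with $|U|+N$ objective functions grouped per machine. We need condition 3: given a prefix $\mathsf r$, compute $\E[f_i\mid\mathsf r]$ locally. For the element terms this is easy, since $\P[h(u)=1\mid \mathsf r]$ depends on a single evaluation. The set terms are the difficulty. Computing $\P[S_i\cap D_h=\emptyset\mid\mathsf r]$ exactly would require enumerating all remaining seeds, which is too costly in time and space. The standard fix is to replace the indicator by a pessimistic estimator that is computable under $k$-wise independence: the $k$-th central moment $\E[(X_i-\mu)^k]/\mu^k$, which upper bounds $\mathbf 1[X_i=0]$ and whose expectation is exactly the bound used in \Cref{lem:tail_bound_independent}. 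Expanding $(X_i-\mu)^k$ gives a sum over $k$-tuples of elements of $S_i$. With $k$ constant and the polynomial-hash structure, the conditional expectation of each monomial can be computed locally; this is the step I expect to be the most delicate, in particular keeping the computation within $O(L)$ space per machine. I would try to handle it by grouping the sets on each machine and evaluating the moment through the hash polynomial coefficients.

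Finally, the running time: $s=O(\log(Nd))$ and the number of objectives is $|U|+N\le N(d+1)$, so \Cref{lem:MPC_cond_exp} takes $O(s\log(Nd)/\log^2 L)=O(\log_L^2 (Nd))=O(\log_L^2 N+1)$ rounds, using $d\le L$. It uses $O(M)$ total space because one machine's worth of objectives is handled per machine, which gives $O(M)$ rather than $O(\FC L)$. Computing $D_h\cup D_h'$ from the chosen seed $h^*$ takes $O(1)$ further rounds locally, plus $O(\log_L M)$ rounds to collect the output.
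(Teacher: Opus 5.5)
Apart from one step, your outline matches the paper's proof: preprocessing with \Cref{lm:preprocessing}, a constant-wise independent family $h\colon U\to[\lceil d^{0.99}\rceil]$, the sample-and-repair sets, the $O(1/d)$ miss probability from \Cref{lem:tail_bound_independent}, and \Cref{lem:MPC_cond_exp} with objectives grouped per machine. Your handling of the requirement $\mu\ge k$ is actually more careful than the paper's: if $d^{0.01}<k$ then $d=O(1)$ and $D=U$ suffices. The genuine gap is condition~3 of \Cref{lem:MPC_cond_exp}, which you leave open.

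Your reason for not computing $\Pr[S_i\cap D_h=\emptyset\mid \mathsf r]$ directly is mistaken. The seed has only $s=O(\log|U|)$ bits, so only $\mathrm{poly}(|U|)$ seeds extend a given prefix $\mathsf r$. A machine can enumerate these seeds one at a time. For each seed it evaluates $h$ on the $O(L)$ elements it stores and checks every stored set $S_j$ for an element with $h(u)=1$; each $S_j$ fits on the machine because $|S_j|=d\le L$. It adds the result to a running counter and finally divides by the number of extensions. Only one seed is held at a time, so this needs $O(L)$ space. The local time is polynomial, and the MPC model does not charge for local computation. This is exactly how the paper proves \Cref{claim:internal_computation}. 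It gives the exact conditional expectation of the true indicator, so no pessimistic estimator is needed.

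Your detour through $\E[(X_i-\mu)^k\mid\mathsf r]/\mu^k$ is a legitimate pessimistic estimator in principle, but as written it is not a proof. You would need a method for computing, under a prefix that fixes only some bits of the hash coefficients, the conditional probability that up to $k$ given points all map to $1$, and you give none. Replace that paragraph with the enumeration argument and the proof is complete. Your bounds then match the statement: $O(\log_L^2 N+1)$ rounds for the derandomization, plus $O(\log_L M)$ for preprocessing, and $O(M)$ total space once you use the $O(Nd/L)$ per-machine objectives.
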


The main idea is to derandomize random sampling. The simplest such version is given in the following algorithm.
\begin{algorithm}[h!]
\caption{Constructing an (incomplete) hitting set $D$ with random sampling}\label{alg:HS_simple_sampling}
\begin{algorithmic}[1]
\State $D \gets \emptyset$
\ForAll{$u \in U$}
    \State Add $u$ to $D$ with probability $p = 1/d^{0.99}$
\EndFor
\Return $D$
\end{algorithmic}
\end{algorithm}

However, this does not necessarily give a hitting set: Consider some set $S_i$ of size $d$. The probability that none of the elements of $S_i$ are sampled is 
\begin{align*}
    (1-p)^{|S_i|}= (1-d^{-0.99})^d \le e^{-d^{0.01}}.
\end{align*}
This means that for small $d$ ($d\le \Theta(\log N)$), we do not obtain a result with high probability. There are two possible solutions: (i) increase the sampling probability, or (ii) add additional elements to $D$. 
Solution (i) will be explored in \Cref{sec:opt_lin_hitting_set_lareg_N}. Here, we consider Solution (ii). The random algorithm becomes a two phase process, see \Cref{alg:HS_sampling}: first we add elements to $D$ by random sampling, then we add an element to $D$ for each set $S_i$ that is not hit by the first part. 

\begin{algorithm}
\caption{Constructing a hitting set $D$ with \textbf{independent} random sampling}\label{alg:HS_sampling}
\begin{algorithmic}[1]
\State $D \gets \emptyset$
\ForAll{$u \in U$}
    \State Add $u$ to $D$ with probability $p = 1/d^{0.99}$
\EndFor
\For{$i = 1$ to $N$ in parallel}
    \If{$S_i \cap D = \emptyset$}
        \State Add an arbitrary element of $S_i$ to $D$
    \EndIf
\EndFor
\Return $D$
\end{algorithmic}
\end{algorithm}

Now the set $D$ is guaranteed to be a hitting set. However, its size becomes less obvious. In \Cref{alg:HS_simple_sampling}, the expected size of $D=O(|U|/d^{0.99})$ by linearity of expectation. In \Cref{alg:HS_sampling}, we have a contribution of $O(|U|/d^{0.99})$ from the first phase \emph{and} a contribution from the second phase. The contribution of the second phase is $O(N/e^{d^{0.01}})$. 
Of course, our goal is to provide a deterministic algorithm. Our approach has two ingredients: 1) we reduce the randomness by using $O(1)$-wise independent hash functions, and 2) we deterministically choose the (short) `random' seed of this hash function using the method of conditional expectation. 
The fact that we only have $O(1)$-wise independence rather than full independence will change our probability analysis. We show that we can still bound the contribution of the second phase by $O(N/d)$.

\paragraph*{Ingredient 1: Reducing randomness via $O(1)$-wise independent hash functions.}
We let $\mathcal H$ be a family of $200$-wise independent hash functions $h: U\to [\lceil d^{0.99}\rceil]$.
Each hash function $h$ corresponds to two sets of elements $D_h,D'_h\subseteq U$. $D_h$ corresponds to the elements sampled directly in \Cref{alg:HS_sampling}. $D_h'$ covers all remaining sets $S_i$, it corresponds to  the elements added in the second phase of \Cref{alg:HS_sampling}. In the case of the hitting set being a dominating set in the graph, you can think of this as a node selecting itself, because non of its neighbors are in the dominating set. In this general setting, we need to select an arbitrary element of $S_i$, for each $i$ that has $S_i\cap D_h=\emptyset$. Formally, we define $D_h$ and $D_h'$ as:
\begin{align*}
    D_h &:= \{u \in U \mid h(u)=1\}, \text{ and} \\
    D'_h &:= \{ S_i[1] : S_i\cap D_h=\emptyset, i\in [\HS]\},
\end{align*}
where $S_i[1]$ denotes the first (lexicographically ordered) element of $S_i$.

Note that now $D:=D_h\cup D'_h$ is a hitting set of $S_1, \dots, S_{\HS}$. We denote $h\sim \H$ for taking a hash function $h\in \mathcal H$ according to the corresponding contribution. For each of the hash functions in this paper, $h\sim \H$ means we take $h\in \H$ uniformly at random. \tijn{new (reviewer C)}
By the following claim, the expectation of such a sampled hash function $h\sim \H$ corresponds to a hitting set of the desired size.

\begin{claim}\label{claim:size of S and U}
    Let $\mathcal H$ be a family of $200$-wise independent hash functions $h: U\to [\lceil d^{0.99}\rceil]$.
    For $h\sim \H$, the expected size of $D_h$ and $D'_h$ are $O(|U|/d^{0.99})$ and $O(\HS/d)$, respectively.
\end{claim}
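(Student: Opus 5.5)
The plan is to treat the two terms separately, using linearity of expectation for both, and the tail bound of \Cref{lem:tail_bound_independent} for the second.

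\textbf{Bounding $|D_h|$.} For each $u\in U$, $h(u)$ is uniform on $[\lceil d^{0.99}\rceil]$, since $200$-wise independence in particular gives $1$-wise uniformity. Hence $\P[u\in D_h]=1/\lceil d^{0.99}\rceil\le d^{-0.99}$. Summing over $u\in U$ gives $\E[|D_h|]\le |U|/d^{0.99}$.

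\textbf{Bounding $|D'_h|$.} We have $|D'_h|\le \sum_{i=1}^{\HS}\mathbf 1[S_i\cap D_h=\emptyset]$. It therefore suffices to show $\P[S_i\cap D_h=\emptyset]=O(1/d)$ for every $i$. By \Cref{lm:preprocessing} we may restrict to a subset of $S_i$ of size exactly $d$; fewer elements only increases the miss probability, so this is conservative.

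Fix $i$ and let $X_u=\mathbf 1[h(u)=1]$ for $u\in S_i$. These are $200$-wise independent $\{0,1\}$ variables. Their sum $X$ has mean $\mu=d/\lceil d^{0.99}\rceil\ge d^{0.01}/2$ for $d\ge 1$. The event $S_i\cap D_h=\emptyset$ is $X=0$, which implies $|X-\mu|\ge \mu$.

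If $\mu\ge 200$, apply \Cref{lem:tail_bound_independent} with $k=200$ and $\eps=1$. This gives
\[
\P[X=0]\le 8\left(\tfrac{400}{\mu}\right)^{100}\le 8\cdot 800^{100}\, d^{-1}.
\]
Here $\mu^{100}\ge (d^{0.01}/2)^{100}=d/2^{100}$. The bound is $O(1/d)$, with the constant absorbed into the $O$.

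\textbf{Main obstacle.} The delicate part is the regime $\mu<k=200$, where the lemma's hypothesis $\mu\ge k$ fails. This happens only when $d^{0.01}<400$, i.e.\ $d$ is bounded by an absolute constant $d_0$ (huge, but constant). In that regime I would simply use the trivial bound $\P[X=0]\le 1\le d_0/d=O(1/d)$.

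With this case split, $\E[|D'_h|]\le \sum_i O(1/d)=O(\HS/d)$. This completes the claim.

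\Cref{rem:large_d} is not needed here, since the miss probability is bounded directly. If one insists on treating $d^{0.99}$ and $d$ comparably, the remark guarantees $d\le |U|/200$, so the hash range and parameters are well defined.
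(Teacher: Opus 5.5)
Your proof is correct and follows the same route as the paper. Both use linearity of expectation for $|D_h|$, and for each $S_i$ the $200$-wise tail bound of \Cref{lem:tail_bound_independent} with deviation $\mu$, giving a miss probability of $O(\mu^{-100})=O(1/d)$ that is then summed over the $N$ sets. Your explicit case split for $\mu<200$ is in fact the right way to meet that lemma's hypothesis $\mu\ge k$: there $d$ is below an absolute constant and the trivial bound suffices. The paper instead appeals to \Cref{rem:large_d}, but its computation there plugs in $|U|/d^{0.99}$ (the mean of $|D_h|$) rather than the per-set mean $|S_i|/\lceil d^{0.99}\rceil\approx d^{0.01}$, so it does not actually establish $\mu\ge 200$.
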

\begin{proof}
    We first show that the expected size of $D_h$ equals $|U|/d^{0.99}$.
    For an element $u\in U$, we let $X_u$ be the random variable such that $X_u=1$ if $h(u)=1$, otherwise $X_u= 0$. Then the expected size of $D_h$ over $\H$ is same as the expectation of $\sum_{u\in U}X_u$. 
    By the linearity of expectations, $\E_{h\sim \H}[\sum_{u\in U} X_u]$ equals $|U|/d^{0.99}$, which implies the first claim.
    In the following, we bound the expected size of $D'_h$.

    We fix an index $i\in [\HS]$, and we bound the probability that it contributed to $D'_h$, i.e., $S_i \cap D_h=\emptyset$.
    Since the $X_u$'s are $200$-wise independent random variables, \Cref{lem:tail_bound_independent} implies the following bound:
    \begin{align}
        \P_{h\sim\H}[S_i \cap D_h=\emptyset]=\P_{h\sim\H}\left[ \sum_{u\in S_i} X_u=0 \right]&\leq \P_{h\sim\H}\left[ \left|\sum_{u\in S_i} X_u-\mu\right|\geq \mu \right] \nonumber \\
        &\leq 8 \left(\frac{2\cdot 200}{\mu}\right)^{200/2} 
        \leq O\left( \frac{1}{\mu^{100}}\right)
        \le O\left (\frac{1}{d} \right ),\label{eq: prob_undominated}
    \end{align}
    where $\mu\coloneqq \E\left[\sum_{u\in S_i}X_u\right]\geq d^{0.01}$ by the linearity of expectations. \tijn{next sentence is new}To apply \Cref{lem:tail_bound_independent}, we also need to satisfy $\mu \le k$. Since $\mu= |U|/d^{0.99}$ and we can assume that $d\le |U|/211$ by \Cref{rem:large_d}, we see that $\mu \ge |U| / (|U|/211)^{0.99}= 211^{0.99} >200$.

    For each $i\in [\HS]$, we let $Y_i$ be the indicator random variable for the event that $S_i \cap D_h=\emptyset$.
    Then $\E_{h\sim \H}[Y_i]\leq O(1/d)$ by Inequality~\ref{eq: prob_undominated}.
    Therefore, the expected size of $D'_h$ is at most $O(N/d)$ by the linearity of expectations, which completes the proof. 
\end{proof}

\paragraph{Ingredient 2: Derandomization via the method of conditional expectation.}
To deterministically obtain a good $O(1)$-wise independent hash function, we use the method of conditional expectation, \Cref{lem:MPC_cond_exp}.

We know that the $200$-wise independent hash functions from $h\colon [|U|] \to [\lceil d^{0.99}\rceil]$ need $200(\log |U|+\log \lceil d^{0.99}\rceil)=O(\log|U|)$ random bits by \Cref{lem:encoding_independent}. Next, we define objective functions that correspond to the expectations of the previous paragraph. 

We are describing the sizes of $D_h$ and $D_h'$ using objective functions. For each of these, we define its own set of objective functions, $f_i$ and $g_i$, respectively. \tijn{Made the definitions of $f_i$, $g_i$ more precise (reviewer C). will not do it in the other proofs that looks similar because the statement is ugly} In total, we have $k= |U|/\LS+O(Nd/\LS)$ functions. 

First, consider the elements of $U$ stored on $|U|/\LS$ machines. Let $f_1, \dots f_{|U|/\LS}$ denote the number of sampled elements on each such machine, i.e., if $U_i$ denotes the elements stored on machine $i$, then $f_i(h) =|\{ u: h(u)=1\}|$. 
Hence, we can write the size of the sampled elements as $D_h = \sum_{i=1}^{|U|/\LS}f_i(h)$. 

Second, the $\HS$ sets $S_1,\dots S_\HS$ are stored on $O(Nd/\LS)$ machines $M_1, \dots, M_{O(Nd/\LS)}$ so that no two machines store elements from the same set $S_i$. On each such machine $M_i$ let $g_{i}$ denote how many sets $S_j$ stored in $M_i$ have $S_j\cap D_h=\emptyset$, i.e., if $S_{\tfrac{L}{d}(i-1)}, \dots S_{\tfrac{L}{d}i-1}$ are stored on machine $i$, then $g_i(h) = \left|\left\{ j \in \left[\tfrac{L}{d}(i-1),\tfrac{L}{d}i\right) : S_j \cap D_h =\emptyset\right\}\right|
$. 
Hence, we can write the size of the elements from the second phase as $D'_h = \sum_{i=1}^{O(Nd/\LS)}g_i(h)$. 

\begin{claim}\label{claim:internal_computation}
    The values $\E_{h\sim\mathcal H}[f_i(h)\mid \textnormal{the prefix of } h\in \mathcal H\textnormal{ is } \textsf{r } ]$  and  $\E_{h\sim\mathcal H}[g_i(h)\mid \textnormal{the prefix of } h\in \mathcal H\textnormal{ is } \textsf{r } ]$ can be computed in a single machine with $\LS$ local space, for any $i$ where $f_i$ or $g_i$ is defined and any $\mathsf r\in\{0,1\}^*$. 
\end{claim}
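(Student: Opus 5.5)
The plan is to exploit the fact that a $200$-wise independent hash family from \Cref{lem:encoding_independent} is fully described by a seed of $O(\log |U|)$ bits. Because the seed is this short, a single machine can enumerate and evaluate every completion of a given prefix $\mathsf r$ \emph{without any communication}. The key observation is that each $f_i$ and $g_i$ depends only on data already stored on its own machine. Namely, $f_i$ depends on the block $U_i$ of elements on machine $i$, and $g_i$ depends on the sets $S_j$ on machine $M_i$, which is exactly the layout guaranteed by \Cref{lm:preprocessing} when $L\ge d$. The only thing to verify is that the conditional expectation can be computed within $O(\LS)$ local space.

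Concretely, I would proceed as follows. Fix a prefix $\mathsf r$. Since $h$ is drawn uniformly from $\mathcal H$, conditioning on the prefix makes $h$ uniform over the seeds extending $\mathsf r$. For $f_i$, linearity of expectation gives
\[
\E[f_i(h)\mid \mathsf r]=\sum_{u\in U_i}\P[h(u)=1\mid \mathsf r].
\]
For each $u$, this probability is computed by looping over the completions of $\mathsf r$. Each candidate $h$ is reconstructed from its seed, $h(u)$ is evaluated with the explicit polynomial-based construction, and a running counter is maintained. For $g_i$, we likewise write
\[
\E[g_i(h)\mid\mathsf r]=\sum_{j}\P[S_j\cap D_h=\emptyset\mid\mathsf r],
\]
summing over the sets $S_j$ stored on $M_i$. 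For each completion, we evaluate $h$ on all elements of each locally stored $S_j$ and count the completions under which no element maps to $1$. Both computations need only $O(\LS)$ space: the stored input occupies $O(\LS)$, while the seed, the running counters, and a single evaluation of $h$ need $O(1)$ words, since a seed of $O(\log|U|)$ bits fits in $O(1)$ words. The time spent is irrelevant, because local computation is unrestricted.

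The main obstacle is the space bookkeeping for the $g_i$ computation. We must not materialize $D_h$, which is a subset of all of $U$, nor store any table indexed by seeds. The fix is to evaluate $h$ on the fly only at the elements of the locally stored sets, reusing $O(1)$ scratch words per evaluation. We also need the hash evaluation itself, i.e.\ polynomial arithmetic over a field of size $\poly(|U|)$, to use $O(1)$ words. This holds because the word size is $O(\log M)$ and $M\ge|U|$.
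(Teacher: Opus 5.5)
Your proposal is correct and takes the same route as the paper. For a fixed prefix $\mathsf r$, the machine enumerates every seed extending $\mathsf r$, evaluates $h$ only on its locally stored elements (respectively the sets $S_j$ on that machine), counts, and divides by the number of completions. Only $O(\LS)$ space is needed because one hash function is handled at a time. Your extra bookkeeping (the seed and the hash evaluation fitting in $O(1)$ words, and the linearity-of-expectation decomposition) just spells out what the paper's short argument leaves implicit.
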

\begin{proof}
    The function $f_i(h)$ denotes how many elements of $U$ on machine $M_i$ are sampled. For a fixed $h\colon U \to [\lceil 1/p\rceil ]$, the machine can check internally which elements are sampled (encoded by $h(u)=1)$), by counting this for every possible $h\in \H$ with prefix $r$, and dividing out by the total number of such $h$, we obtain $\E_{h\sim\mathcal H}[f_i(h)\mid \textnormal{the prefix of } h\in \mathcal H\textnormal{ is } \textsf{r } ]$. Note that since we only consider one function $h$ at the time, and we can consider $O(L)$ elements of $U$, we require only $L$ local space. 

   The function $g_i(h)$ denotes how many sets $S_j$ stored on machine $M_i$ satisfy $S_j \cap D_h=\emptyset$. Again, for fixed $h$ we can compute this value. For each set $S_j$, we see if $h(u)=1$ for at least one $u\in S_j$. If not, it contributed to $g_i(h)$. As before, $M_i$ sums the values $g_i(h)$ for every possible $h\in \H$ with prefix $r$ and divides out by the total number of such $h$ to obtain $\E_{h\sim\mathcal H}[g_i(h)\mid \textnormal{the prefix of } h\in \mathcal H\textnormal{ is } \textsf{r } ]$. Note that since we only consider one function $h$ at the time, and we can consider $O(L)$ elements of $U$, we require only $L$ local space.
\end{proof}

\paragraph{Putting everything together.}
Finally, we use the notation and results obtained above to prove \Cref{thm:linear MPC hitting set}.

\begin{proof}[Proof of \Cref{thm:linear MPC hitting set}]
    First, we use \Cref{lm:preprocessing} to reduce the size of the data we work with to $O(Nd)$ in $O(\log_L M)$ rounds. This also ensures that $|U|\le Nd$ and $U$ is stored on $O(Nd/L)$ consecutive machines and each set $S_i$ is stored within one machine (with $\lfloor L/d\rfloor $ such sets in the same machine).    

    We set $s= O(\log |U|)$ and $k= |U|/\LS+O(Nd/\LS)$. Then we see that:
    \begin{enumerate}
        \item The hash functions of $\H$ are encoded by $s$-bit random seeds by \Cref{lem:encoding_independent};
        \item $\E_{h\sim\H}\left[ \sum_{i=1}^{|U|/L} f_i(h)+\sum_{i=1}^{O(Nd/L)} g_i(h) \right] \le O(|U|/d^{0.99}+N/d)$ by \Cref{claim:size of S and U};
        \item The expectation of each objective function $f_i$ and $g_i$ can be computed in a single machine with $\LS$ local space for any prefix $\mathsf r\in\{0,1\}^*$ by \Cref{claim:internal_computation}.
    \end{enumerate}

    This means we satisfy the demands of \Cref{lem:MPC_cond_exp}. It then follows that we can compute a hitting set $D$ of size $O(|U|/d^{0.99}+N/d)$ in $O(s\log k/\log^2 L)=O(\log |U| \log (|U|/\LS+Nd/\LS)/ \log^2 L)=O(\log_L^2 N)$ rounds, using that $|U|=O(Nd)$ and $d\le L$. 
\end{proof}

\subsubsection{Fast Hitting Sets for large \texorpdfstring{$N$}{N}}\label{sec:linear MPC large N}
\tijni{this subsection and result is new}

In the previous section, we noted that for small $d$ ($d\le \Theta(\log N)$), sampling with probability $1/d^{0.99}$ does not give with high probability results. We overcame this by adding additional elements. Alternatively, we can increase the sampling probability to $p=400\cdot(1+ \log_d \tfrac{N}{|U|})/d^{0.99}$, which gives the following results.

\begin{restatable}{lemma}{ThmLinearMPCHittingSetLargeN}\label{thm:linear MPC hitting set large N}
    There exists a deterministic MPC algorithm that, given a universe $U$, a number $d\le |U|$, and a collection of $\HS$ subsets $S_1, \dots, S_{\HS} \subseteq U$ with $|S_i|\ge d$ for each $i\in [\HS]$, computes a hitting set $D$ of size $ O(|U|\log_d \tfrac{N}{|U|}/d^{0.99})$. 
    Let $L\ge d$ denote the local space of each machine, and let $M:=L+|U|+ \sum_{i=1}^\HS |S_i|$. Then the algorithm takes $O(\log_d \tfrac{N}{|U|}\cdot \log_L^2 N+\log_L M)$ rounds and $O(M)$ total space. 
\end{restatable}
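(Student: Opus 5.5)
The plan is to run the two-phase scheme behind \Cref{thm:linear MPC hitting set} repeatedly. Each repetition should shrink the number of \emph{unhit} sets by a factor of roughly $d$ while adding only $O(|U|/d^{0.99})$ elements to $D$. After $t=O(1+\log_d \tfrac{N}{|U|})$ repetitions at most $|U|/d$ sets remain unhit, and we finish by adding one arbitrary element per remaining set, exactly as in the second phase of \Cref{alg:HS_sampling}. Sampling with the boosted probability $p=400(1+\log_d\tfrac{N}{|U|})/d^{0.99}$ is equivalent in total size, but doing it in $t$ independently derandomized rounds is what lets us use only $O(\log |U|)$-bit seeds. First we apply \Cref{lm:preprocessing} and \Cref{rem:large_d}, so each $S_i$ has size exactly $d$ and lies on one machine, $|U|\le Nd$, and $d\le |U|/200$.

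\textbf{One phase.} Let $N_j$ be the number of sets still unhit at the start of phase $j$, so $N_1=N$. Take a $200$-wise independent family $h\colon U\to[\lceil d^{0.99}\rceil]$ and set $D_h=\{u:h(u)=1\}$. Using $f_i$ and $g_i$ as in \Cref{thm:linear MPC hitting set}, restricted to the currently unhit sets, define the normalized potential
\[
\Phi(h)=\frac{d^{0.99}}{|U|}\sum_i f_i(h)+\frac{d}{N_j}\sum_i g_i(h).
\]
By \Cref{claim:size of S and U}, $\E[|D_h|]=|U|/d^{0.99}$, and each unhit set is missed with probability $O(1/d)$ via \Cref{lem:tail_bound_independent}. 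Hence $\E[\Phi]=O(1)$. Rescaling each objective by a constant keeps \Cref{claim:internal_computation} valid, so \Cref{lem:MPC_cond_exp} applies. It returns $h^*$ with $\Phi(h^*)=O(1)$ in $O(\log_L^2 N)$ rounds. Consequently this phase adds at most $O(|U|/d^{0.99})$ elements and leaves $N_{j+1}\le O(N_j/d)$ sets unhit. We then add $D_{h^*}$ to $D$ and delete the hit sets by sorting with \Cref{lem:sorting_prefixsum_MPC}, which costs $O(\log_L M)$ rounds.

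\textbf{Counting phases.} The surviving sets are untouched and still have size $d$, so the same analysis applies in every phase. After $t$ phases we have $N_{t+1}\le N\cdot (c/d)^t$. It suffices that this is at most $|U|/d$, and $t=\lceil\log_{d/c}(Nd/|U|)\rceil=O(1+\log_d\tfrac{N}{|U|})$ achieves that once $d$ exceeds a large constant. Small $d$ is trivial, since we can then take $D=U$. The final cleanup adds at most $|U|/d$ elements. The total size is therefore $O(t\,|U|/d^{0.99})$, which matches the claimed bound, reading $\log_d\tfrac{N}{|U|}$ as at least $1$. The round count is $t\cdot O(\log_L^2N+\log_L M)$ plus preprocessing. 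The per-phase sorting cost needs care, since the statement charges $\log_L M$ only once. I would bound it by noting that $\log_L M=O(\log_L N)$ after preprocessing when $|U|,N$ are polynomial in $L$, so it is absorbed into the $\log_L^2 N$ term.

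\textbf{Main obstacle.} The delicate step is the normalized potential. \Cref{lem:MPC_cond_exp} guarantees only that the \emph{sum} of the objectives is at most its expectation, so the two terms must be weighted so that both guarantees follow simultaneously from $\Phi(h^*)=O(1)$. A second point to check is that \Cref{lem:tail_bound_independent} still applies in later phases: its hypotheses are $\mu\ge 200$ for each surviving set and $d\le|U|/200$, and I expect these to continue to hold because the sets and the universe are not modified.
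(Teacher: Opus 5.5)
Your proof is correct, but it takes a different route from the paper.

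\textbf{The paper's route.} It derandomizes once. It samples with the boosted probability $p=400(1+\log_d\tfrac{N}{|U|})/d^{0.99}$ from a $200(1+\log_d\tfrac{N}{|U|})$-wise independent family. Applying \Cref{lem:tail_bound_independent} with this larger independence parameter pushes the probability that a fixed $S_i$ is missed down to $O(|U|/(Nd))$. The patch set $D'_h$ then has expected size $O(|U|/d)$. A single call to \Cref{lem:MPC_cond_exp}, with the plain objective $\sum_i f_i+\sum_i g_i$ and an $O(\log_d\tfrac{N}{|U|}\cdot\log|U|)$-bit seed, finishes the proof.

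\textbf{Your route.} You keep the $O(1)$-wise independent machinery of \Cref{thm:linear MPC hitting set} unchanged and amplify by sequential repetition. You run $t=O(1+\log_d\tfrac{N}{|U|})$ separate conditional-expectation searches, each on an $O(\log|U|)$-bit seed. Your normalized potential is exactly what lets one output $h^*$ certify both $|D_{h^*}|=O(|U|/d^{0.99})$ and $N_{j+1}=O(N_j/d)$. The total number of seed bits fixed is the same, and so is the round count.

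\textbf{What each buys.} Your route reuses \Cref{claim:size of S and U} verbatim and never needs the tail bound with growing independence. The paper's route needs no weights and no inter-phase bookkeeping (broadcasting seeds, marking hit sets, recounting $N_j$). It also fixes $\lfloor\log L\rfloor$ bits per chunk across one long seed. So it never needs more rounds than yours (up to constants), and it avoids your inherent $\Omega(t)$ sequential depth when $L$ is very large.

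\textbf{Three small remarks on your write-up.}

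First, restricting to $|U|,N$ polynomial in $L$ is unnecessary. After \Cref{lm:preprocessing} the instance has size $O(L+Nd)$ with $d\le L$, so per-phase bookkeeping costs $O(1+\log_L N)$ rounds in general. No sorting is needed either: once the seed of $h^*$ is broadcast, each machine marks its hit sets locally, and only $N_{j+1}$ has to be aggregated.

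Second, consider the one regime where $t$ phases of $\Omega(1)$ rounds each could exceed the stated bound: $\log_L N\le 1/2$ with $t$ large. This regime actually forces $Nd\le L$. Indeed, if $Nd>L$ then $\log_L d>1-\log_L N$, hence $t=O(1+\log_d N)=O(1)$. When $Nd\le L$ you can simply solve the whole instance on one machine.

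Third, the hypothesis of \Cref{lem:tail_bound_independent} that matters is $\mu=|S_i|/\lceil d^{0.99}\rceil\ge 200$. This holds once $d$ exceeds a large constant, which your small-$d$ fallback ($D=U$) guarantees. It does not come from $d\le|U|/200$.
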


The proof is analogous to \Cref{thm:linear MPC hitting set}, with only the probability increased. We follow the same notation, and proving the adaptations.

\paragraph*{Ingredient 1: Reducing randomness via $O\left(\log_d \tfrac{N}{|U|}\right)$-wise independent hash functions.}
First, we need a new version of \Cref{claim:size of S and U}, giving the expected size of the dominating set using the increased sampling probability. 

\begin{claim}\label{claim:size of S and U large N}
    Let $\mathcal H$ be a family of $200\left(1+\log_d \tfrac{N}{|U|}\right)$-wise independent hash functions $h: U\to [\lceil d^{0.99}/
    (400\cdot \log_d \tfrac{N}{|U|})\rceil]$.
    For $h\sim \H$, the expected size of $D_h$ and $D'_h$ are $O(|U|\log_d \tfrac{N}{|U|}/d^{0.99})$ and $O(|U|/d)$, respectively.
\end{claim}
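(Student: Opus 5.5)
The proof follows \Cref{claim:size of S and U}, with the sampling probability raised. Write $\ell:=1+\log_d \tfrac{N}{|U|}$, let $k:=200\ell$ (rounded up to an even integer), and let $X_u$ be the indicator that $h(u)=1$. Each $X_u$ is then a Bernoulli variable with $p:=\Pr[X_u=1]=1/\lceil d^{0.99}/(400\ell)\rceil=\Theta(\ell/d^{0.99})$.

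\emph{Bounding $D_h$.} By linearity of expectation, $\E[|D_h|]=p|U|=O(|U|\ell/d^{0.99})$. When $N\ge |U|$ this is $O(|U|\log_d\tfrac{N}{|U|}/d^{0.99})$ up to the additive $1$ in $\ell$, which only costs a constant factor when $\log_d\tfrac{N}{|U|}\ge 1$. If $\log_d\tfrac{N}{|U|}<1$, we are back in the regime of \Cref{claim:size of S and U}. The first statement follows.

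\emph{Bounding $D'_h$.} Fix $i\in[N]$. Since the $X_u$ are $k$-wise independent, we apply \Cref{lem:tail_bound_independent} to $\sum_{u\in S_i}X_u$ with $\eps=1$, exactly as in \eqref{eq: prob_undominated}. The mean is $\mu=p|S_i|\ge 400\ell\, d^{0.01}$. In particular $\mu\ge k$, because $400\ell\ge 200\ell$; the remark about $d\le |U|/200$ may still be needed to keep $\lceil\cdot\rceil$ from collapsing the range to $1$. The lemma gives
\[
\Pr[S_i\cap D_h=\emptyset]\le 8\Bigl(\frac{2k}{\mu}\Bigr)^{k/2}\le 8\Bigl(\frac{400\ell}{400\ell\, d^{0.01}}\Bigr)^{100\ell}=8\,d^{-\ell}.
\]
Since $d^{-\ell}=d^{-1}\cdot\tfrac{|U|}{N}$, we get $\Pr[S_i\cap D_h=\emptyset]\le 8|U|/(dN)$. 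Summing over the $N$ sets by linearity of expectation yields $\E[|D'_h|]=O(|U|/d)$, which is the second statement.

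\emph{Main obstacle.} The delicate step is the constant bookkeeping in the tail bound. The ratio $2k/\mu$ must be at most $d^{-0.01}$, and the exponent $k/2$ must be at least $100\ell$, so that together they produce exactly $d^{-\ell}$. This forces the specific choices of $200\ell$-wise independence and a range of $d^{0.99}/(400\ell)$. Two technicalities also need care: $k$ must be an even integer at least $4$, and the ceiling in the range must not reduce the effective probability below $400\ell/d^{0.99}$ by more than a constant factor. Both are handled by rounding, using the assumption from \Cref{rem:large_d}.
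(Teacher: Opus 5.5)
Your proof is correct and follows the paper's argument. You use linearity of expectation for $|D_h|$, and then apply \Cref{lem:tail_bound_independent} with $\varepsilon=1$, $k=200\ell$ and $\mu\ge 400\ell d^{0.01}$. This gives $\Pr[S_i\cap D_h=\emptyset]\le 8d^{-\ell}=8|U|/(Nd)$, and summing over the $N$ sets gives $\E[|D'_h|]=O(|U|/d)$. Your check of $\mu\ge k$ via $|S_i|\ge d$ is cleaner than the paper's, which argues via $|U|$.

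One correction to your closing remark. A constant-factor loss in $p$ from the ceiling is \emph{not} harmless here. It gets raised to the power $k/2=100\ell$ and yields $c^{100\ell}d^{-\ell}$ instead of $O(d^{-\ell})$, so you need $p\ge 400\ell/d^{0.99}$ exactly. The clean fix is to hash into $[\max\{1,\lfloor d^{0.99}/(400\ell)\rfloor\}]$:
\begin{itemize}
\item if $d^{0.99}\ge 400\ell$, this gives $400\ell/d^{0.99}\le p\le 800\ell/d^{0.99}$, so $\mu\ge 400\ell d^{0.01}$ holds literally;
\item otherwise $p=1$, $D'_h=\emptyset$, and $|D_h|=|U|<400\ell|U|/d^{0.99}$ trivially.
\end{itemize}
In particular, a range collapsing to $\{1\}$ is harmless rather than something to avoid, and \Cref{rem:large_d} plays no role. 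The paper's own proof simply asserts $\mu\ge 400\ell d^{0.01}$ and glosses over the same point.
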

\begin{proof}
    We first show that the expected size of $D_h$ equals $400\cdot |U|\cdot (1+\log_d \tfrac{N}{|U|})/d^{0.99}$.
    For an element $u\in U$, we let $X_u$ be the random variable such that $X_u=1$ if $h(u)=1$, otherwise $X_u= 0$. Then the expected size of $D_h$ over $\H$ is same as the expectation of $\sum_{u\in U}X_u$. 
    By the linearity of expectations, $\E_{h\sim \H}[\sum_{u\in U} X_u]$ equals $400\cdot |U|\cdot \left(1+\log_d \tfrac{N}{|U|}\right)/d^{0.99}$, which implies the first claim.
    In the following, we bound the expected size of $D'_h$.

    We fix an index $i\in [\HS]$, and we bound the probability that it contributed to $D'_h$, i.e., $S_i \cap D_h=\emptyset$.
    Since the $X_u$'s are $200(1+\log_d \tfrac{N}{|U|})$-wise independent random variables, \Cref{lem:tail_bound_independent} implies the following bound:
    \begin{align}
        \P_{h\sim\H}[S_i \cap D_h=\emptyset]=\P_{h\sim\H}\left[ \sum_{u\in S_i} X_u=0 \right]&\leq \P_{h\sim\H}\left[ \left|\sum_{u\in S_i} X_u-\mu\right|\geq \mu \right] \nonumber \\
        &\leq 8 \left(\frac{2\cdot 200\cdot (1+\log_d \tfrac{N}{|U|})}{\mu}\right)^{200/2\cdot \left(1+\log_d \tfrac{N}{|U|}\right)} \\
        &\leq 8 \left(\frac{2\cdot 200\cdot (1+\log_d\tfrac{N}{|U|})}{400\cdot (1+\log_d\tfrac{N}{|U|})\cdot  d^{0.01}}\right)^{200/2\cdot \left(1+\log_d \tfrac{N}{|U|}\right)} \\
        &\le O\left (\frac{|U|}{Nd} \right ),\label{eq: prob_undominated large N}
    \end{align}
    where $\mu\coloneqq \E\left[\sum_{u\in S_i}X_u\right]\geq 400\cdot \left(1+\log_d \tfrac{N}{|U|}\right)\cdot d^{0.01}$ by the linearity of expectations. To apply \Cref{lem:tail_bound_independent}, we also need to satisfy $\mu \le k$, which holds since $\mu= 400\cdot |U|\cdot \left(1+\log_d \tfrac{N}{|U|}\right)/d^{0.99}\ge 200 \cdot \left(1+\log_d \tfrac{N}{|U|}\right)$.

    For each $i\in [\HS]$, we let $Y_i$ be the indicator random variable for the event that $S_i \cap D_h=\emptyset$.
    Then $\E_{h\sim \H}[Y_i]\leq O(1/N)$ by Inequality~\ref{eq: prob_undominated large N}.
    Therefore, the expected size of $D'_h$ is at most $O(|U|/d)$ by the linearity of expectations, which completes the proof. 
\end{proof}

\paragraph{Ingredient 2: Derandomization via the method of conditional expectation.}
We use the method of conditional expectation as in \cref{sec:linear MPC}. In particular, \Cref{claim:internal_computation} holds since we are in the same setting. 

We know that the $200\left(1+\log_d \tfrac{N}{|U|}\right)$-wise independent hash functions from $h\colon [|U|] \to [\lceil d^{0.99}/(400\cdot\left(1+\log_d \tfrac{N}{|U|}\right))\rceil]$ need $200\cdot \left(1+\log_d \tfrac{N}{|U|}\right)(\log |U|+\log \lceil d^{0.99}\rceil)=O(\log_d \tfrac{N}{|U|}\cdot \log|U|)$ random bits by \Cref{lem:encoding_independent}.

\paragraph{Putting everything together.}
Now, we use the notation and results obtained above to prove \Cref{thm:linear MPC hitting set large N}.

\begin{proof}[Proof of \Cref{thm:linear MPC hitting set large N}]
    First, we use \Cref{lm:preprocessing} to reduce the size of the data we work with to $O(Nd)$ in $O(\log_L M)$ rounds. This also ensures that $|U|\le Nd$ and $U$ is stored on $O(Nd/L)$ consecutive machines and each set $S_i$ is stored within one machine (with $\lfloor L/d\rfloor $ such sets in the same machine).    

    We set $s= O(\log_d \tfrac{N}{|U|} \cdot \log |U|)$ and $k= |U|/\LS+O(Nd/\LS)$. Then we see that:
    \begin{enumerate}
        \item The hash functions of $\H$ are encoded by $s$-bit random seeds by \Cref{lem:encoding_independent};
        \item $\E_{h\sim\H}\left[ \sum_{i=1}^{|U|/L} f_i(h)+\sum_{i=1}^{O(Nd/L)} g_i(h) \right] \le O(|U|\log_d \tfrac{N}{|U|}/d^{0.99})$ by \Cref{claim:size of S and U large N};
        \item The expectation of each objective function $f_i$ and $g_i$ can be computed in a single machine with $\LS$ local space for any prefix $\mathsf r\in\{0,1\}^*$ by \Cref{claim:internal_computation}.
    \end{enumerate}

    This means we satisfy the demands of \Cref{lem:MPC_cond_exp}. It then follows that we can compute a hitting set $D$ of size $O(|U|\log_d \tfrac{N}{|U|}/d^{0.99})$ in $O(s\log k/\log^2 L)=O(\log_d \tfrac{N}{|U|} \cdot \log |U| \log (|U|/\LS+Nd/\LS)/ \log^2 L)=O(\log_d \tfrac{N}{|U|}\cdot \log_L^2 N)$ rounds, using that $|U|=O(Nd)$ and $d\le L$. 
\end{proof}

\subsubsection{Hitting Sets of Size \texorpdfstring{$O\left(\tfrac{|U|\log d}{d}+N/d\right)$}{}}
\label{sec:opt_lin_hitting_set}
In the following two sections, we obtain a smaller-sized hitting set by relaxing the round complexity. First, we obtain a version that is efficient for ``small'' $N$, i.e., $N=O(|U|)$. If $N$ is larger, we obtain a similar result in \Cref{sec:opt_lin_hitting_set_lareg_N} at the cost of a factor $\log N$ instead of a factor $\log d$ in the size.

The result of this section is summarized in the following lemma. 

\begin{restatable}{lemma}{ThmOptimalLinearMPCHittingSet}\label{thm:optimal_linear MPC hitting set}
    There exists a deterministic MPC algorithm that, given a universe $U$, a number $d\le |U|$, and a collection of $\HS$ subsets $S_1, \dots, S_{\HS} \subseteq U$ with $|S_i|\ge d$ for each $i\in [\HS]$, computes a hitting set $D$ of size $ O\left(\tfrac{|U|\log d}{d}+\HS/d\right)$. 
    Let $L\ge d$ denote the local space of each machine, and let $M:=L+|U|+ \sum_{i=1}^\HS |S_i|$. Then the algorithm takes $O(\log \log d \cdot \log_L \log N \cdot \log_L N+\log_L M)$ rounds and $O(M)$ total space. 
\end{restatable}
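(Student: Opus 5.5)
We follow the two-phase template of \Cref{thm:linear MPC hitting set}, but now sample with the optimal probability $p=\ln d/d$. First we apply \Cref{lm:preprocessing}, so that $|\tilde S_i|=d$, $|U|\le Nd$, and every set fits on one machine; this costs $O(\log_L M)$ rounds. In the fully random version we take $D_h=\{u: u \text{ sampled}\}$ and $D'_h=\{S_i[1] : S_i\cap D_h=\emptyset\}$. Then $\E|D_h|=|U|\ln d/d$, and each set is missed with probability $(1-p)^d\le 1/d$. So $\E|D_h\cup D'_h| = O(|U|\log d/d+N/d)$, which is the claimed size.

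With $p=\ln d/d$ the mean $\mu=\ln d$ is too small for the $k$-wise tail bound (\Cref{lem:tail_bound_independent}) to give failure probability $1/d$. We therefore replace independence by the Gopalan--Yehudayoff PRG (\Cref{lem:hash_functions_for_optimal}). We view sampling as $h: U\to[M]$ with $M=\lceil d/\ln d\rceil$ and let $u$ be sampled iff $h(u)=1$. The event ``$S_i$ is missed'' is $\forall u\in S_i : h(u)\in \mathbf I_u$, where $\mathbf I_u=[M]\setminus\{1\}$ for $u\in S_i$ and $\mathbf I_u=[M]$ otherwise. This is exactly a hitting event in the sense of the lemma. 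We take error $\eps=1/d$, so each set is missed with probability at most $2/d$. The expectation of $|D_h|$ is exact by linearity, provided the generator's marginals are $1/M$-uniform; if they are not exact, we instead handle each $u$ as a one-coordinate event with error $\eps$, which also suffices. The seed length is then $s=O((\log\log|U|+\log d)\log\log d)$. Using $|U|\le Nd$, this is $O((\log\log N+\log d)\log\log d)$.

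For the derandomization we reuse the objective functions $f_i$ (sampled elements per machine) and $g_i$ (missed sets per machine) from \Cref{sec:linear MPC}. The analogue of \Cref{claim:internal_computation} holds because evaluating $h(u)$ for a fixed seed is local polynomial-time computation. Enumerating all seed completions of a $\lfloor\log L\rfloor$-bit chunk fits within the $O(L)$ space accounting of \Cref{lem:MPC_cond_exp}. That lemma gives $O(s\log k/\log^2 L)$ rounds with $k=O(Nd/L)$ objectives, which is $O(s\cdot\log_L N/\log L)$.

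The main obstacle is matching the stated bound $O(\log\log d\cdot\log_L\log N\cdot\log_L N)$, rather than the naive $O(\log\log d\cdot\log_L N\cdot(\log\log N+\log d)/\log L)$. Since $d\le L$, the $\log d/\log L$ term is $O(1)$, and $\log\log N/\log L\le \log_L\log N$ up to constants. For this reason we do not fix bits one at a time; we fix them in chunks of $\Theta(\log L)$ bits, so the round count is $\lceil s/\log L\rceil$ times the aggregation cost $O(\log_L N)$. I would verify carefully that $\lceil s/\log L\rceil = O(\log\log d\cdot \max(1,\log_L\log N))$. A secondary obstacle is that the PRG is stated for $M$ a power of two or with a general range; if needed we round $M$, which changes $p$ by only a constant factor.
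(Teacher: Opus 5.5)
Your proposal follows the paper's proof essentially step for step: preprocessing via \Cref{lm:preprocessing}, sampling through the Gopalan--Yehudayoff generator of \Cref{lem:hash_functions_for_optimal} with $h(u)=1$ encoding membership and inverse-polynomial error, the same per-machine objectives $f_i,g_i$, and \Cref{lem:MPC_cond_exp} with $s=O((\log\log|U|+\log d)\log\log d)$. Your $\max(1,\log_L\log N)$ is simply the precise form of the paper's round bound. The one point to tighten is your fallback of rounding $M$ to a power of two: $p$ sits in the exponent of $(1-p)^d$, so a constant-factor loss in $p$ would weaken the miss probability to $d^{-\Omega(1)}$ rather than $O(1/d)$. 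That is why the paper samples with $p=2\log d/d$ and uses $1/\lceil 1/p\rceil\ge p/2$. However, \Cref{lem:hash_functions_for_optimal} allows an arbitrary range, so your $M=\lceil d/\ln d\rceil$ loses only a $1+o(1)$ factor and already gives $O(1/d)$.
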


The algorithm is analogous to the algorithm in the proof of Theorem~\ref{thm:linear MPC hitting set}.
Instead of derandomizing random sampling with $p=1/d^{0.99}$, we derandomize sampling with probability $p=2\cdot \log d/d$.
Furthermore, we use \Cref{lem:hash_functions_for_optimal} instead of $O(1)$-wise independence, which gives a family of hash functions that \emph{approximates} random sampling. This has been done before by, e.g., \cite[Appendix B]{bezdrighinEG+2022}.

\paragraph{Reducing randomness via pseudorandom generators.}
We consider the random process where each element~$u\in U$ adds itself to a hitting set $D$ with probability $p=2\log d/d$.
We show how to obtain this random process with a family~$\H$ of hash functions $h: U\to [\lceil1/p\rceil]$ from \Cref{lem:hash_functions_for_optimal} with $\eps=d^{-c}$ for some sufficiently large constant $c$. 
Analogous to \Cref{thm:linear MPC hitting set}, each hash function $h\in\H$ corresponds to two element sets 
\begin{align*}
    D_h &:= \{u\in U\mid h(u)=1\}, \text{ and}\\
     D'_h &:= \{ S_i[1] : S_i\cap D_h=\emptyset, i\in [\HS]\},
\end{align*}
where $S_i[1]$ denotes the first (lexicographically ordered) element of $S_i$.

\begin{claim}\label{claim:opt_hitting_set}
     For $h\sim \H$, the expected size of $D_h$ and $D'_h$ are $O(|U|\log d/d)$ and $O(\HS/d)$, respectively.
\end{claim}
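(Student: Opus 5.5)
The plan is to compare $\H$ with the truly random family $\mathcal U$ of all functions $U\to[\lceil 1/p\rceil]$, using \Cref{lem:hash_functions_for_optimal} with $\eps=d^{-c}$, and to handle $D_h$ and $D'_h$ separately.

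\textbf{Bounding $\E[|D_h|]$.} For each $u\in U$, apply \Cref{lem:hash_functions_for_optimal} with the set $\textbf{I}_u=\{1\}$ and $\textbf{I}_w=[\lceil 1/p\rceil]$ for every $w\neq u$. The event $\forall w,\ h(w)\in\textbf{I}_w$ is then exactly $h(u)=1$. This gives
\[
\P_{h\sim\H}[h(u)=1]\le \frac{1}{\lceil 1/p\rceil}+\eps\le p+d^{-c}.
\]
By linearity of expectation, $\E[|D_h|]\le |U|(p+d^{-c})=O(|U|\log d/d)$ for any $c\ge 1$.

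\textbf{Bounding $\E[|D'_h|]$.} Fix $i\in[\HS]$. Set $\textbf{I}_u=[\lceil1/p\rceil]\setminus\{1\}$ for $u\in S_i$ and $\textbf{I}_w=[\lceil1/p\rceil]$ for $w\notin S_i$. The event $\forall u,\ h(u)\in\textbf{I}_u$ is exactly $S_i\cap D_h=\emptyset$. Under $\mathcal U$ the coordinates are fully independent, so with $q:=1/\lceil1/p\rceil$,
\[
\P_{g\sim\mathcal U}[S_i\cap D_g=\emptyset]=(1-q)^{|S_i|}\le e^{-q d}.
\]
Since $q\ge p/(1+p)\ge p/2$ for $p\le 1$, this is at most $e^{-\log d}=1/d$ when $\log$ is natural. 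If $\log$ is base $2$, the bound is even smaller, and the constant $2$ in $p$ is there precisely to absorb this rounding of $1/p$. By \Cref{lem:hash_functions_for_optimal},
\[
\P_{h\sim\H}[S_i\cap D_h=\emptyset]\le 1/d+d^{-c}=O(1/d).
\]
Summing over the $\HS$ indices gives $\E[|D'_h|]\le \HS\cdot O(1/d)=O(\HS/d)$. The small cases in which $p>1$ or $d$ is constant are trivial, because then $D_h\subseteq U$ already has size $O(|U|\log d/d)$.

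\textbf{Main obstacle.} The only delicate points are formal ones: checking that both events really have the product form required by \Cref{lem:hash_functions_for_optimal}, with every coordinate outside the relevant set given the full range, and checking that $\lceil1/p\rceil$ rounding does not break the exponent. The domain of the hash functions should be $[|U|]$, and the lemma's $N$ there is $|U|$, not the number of sets. I would also record that the seed length is $O((\log\log|U|+\log d)\log\log d)$, since the later round bound relies on it. Because the error $\eps$ enters additively per event, no union bound over sets is needed, so the choice $\eps=d^{-c}$ with $c=1$ already suffices.
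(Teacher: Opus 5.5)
Your proposal is correct and follows the paper's proof essentially step for step. Both arguments apply \Cref{lem:hash_functions_for_optimal} to the product events $\{h(u)=1\}$ and $\{h(u)\neq 1\ \forall u\in S_i\}$, with the full range on all other coordinates. Both bound the truly random probabilities by $p$ and by $(1-1/\lceil 1/p\rceil)^{d}\le (1-p/2)^d\le 1/d$, then add $\eps=d^{-c}$ and sum by linearity of expectation. Your side remarks, that $c=1$ already suffices and that the lemma's $N$ is $|U|$ here, are accurate refinements of that same argument.
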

\begin{proof}
    In the following, we bound the probability of $u\in D_h$ for an element $u\in U$ and the probability that $S_i\cap D'_h=\emptyset$ for an index $i\in [\HS]$ and thus $S_i[1]\in D'_h$.

    The hash functions in $\H$ map each element to integers in $[\lceil1/p\rceil]$ according to \Cref{lem:hash_functions_for_optimal}.
    Furthermore, an element $u\in U$ is in $D_h$ if and only if $h(u)=1$.
    Therefore, for an element $u\in U$, the probability of $u\in D_h$ can be bound by \Cref{lem:hash_functions_for_optimal} as follows:
    \begin{align*}
        \P_{h\sim \H}[u\in D_h] &= \P_{h\sim \H}[h(u)\in\{1\} \textnormal{ while } h(v)\in[\lceil1/p\rceil]\textnormal{ for all } v\in U\setminus\{u\}]\nonumber \\
        &\le \P_{g\sim \mathcal U}[g(u)\in\{1\} \textnormal{ while } g(v)\in[\lceil1/p\rceil]\textnormal{ for all } v\in U\setminus\{u\}] + \eps \nonumber \\
        &= \P_{g\sim \mathcal U}[g(u)=1]
        \le p+\eps=O\left(\frac{\log d}{d}\right),\label{eq:optimal_linear_prob_D_h}
    \end{align*}
    where $\mathcal U$ denotes all mapping of $U\to [\lceil1/p\rceil]$.
    So $\E_{h\sim \H}[|D_h|]\le |U|(p+\eps)= O(\tfrac{|U|\log d}{d})$.
    
    In the following, we bound the probability that $i\in [N]$ contributed to $D'_h$, i.e., that $S_i\cap D'_h=\emptyset$.
    First of all, we consider all mappings $\mathcal U$ and bound the probability $S_i\cap S'_g=\emptyset$ for $g\sim \mathcal U$.
    That means every element $u\in S_i$ satisfies $g(u)\neq 1$.
    Therefore, the probability is bounded as follows:
    \begin{align*}
       \P_{g\sim \mathcal U}[ S_i\cap S'_g=\emptyset]&=
        \P_{g\sim \mathcal U}[g(u)\in [2,\lceil 1/p\rceil]\textnormal{ for } u\in S_i \textnormal{ while } g(u')\in  [\lceil 1/p\rceil] \textnormal{ for } u'\in U\setminus S_i] \\
        &=  \P_{g\sim \mathcal U}[g(u)\in [2,\lceil 1/p\rceil]\textnormal{ for } u\in S_i] \\
        &\le \left(1-\tfrac{1}{\lceil 1/p\rceil}\right)^{|S_i|}\leq (1-p/2)^d.
    \end{align*}
    \Cref{lem:hash_functions_for_optimal} gives that two probabilities $\P_{h\sim\H}[ S_i\cap D'_h=\emptyset]$ and $\P_{g\sim \mathcal U}[ S_i\cap S'_g=\emptyset]$ are at most $\eps=d^{-c}$ apart. Which means that with high probability, using the hash functions suffices.
    \begin{align*}
        \P_{h\sim \H}[ S_i\cap S'_g=\emptyset] \le (1-p/2)^{d} +\eps \le (1-\tfrac{\log d}{d})^{d}+d^{-c}\le \tfrac{1}{d}+d^{-c}.
    \end{align*}
    So the expected size of $D'_h$ is $\E_{h\sim\H}[|D'_h|]=O(\HS/d)$.
    
    In conclusion, we have that $\E_{h\sim\H}[|D_h\cup D'_h|]=O\left(\tfrac{|U|\log d}{d}+\HS/d\right)$.
\end{proof}

\paragraph{Derandomization via the method of conditional expectation.}

Next, we show that by \Cref{lem:MPC_cond_exp}, we can find a proper hash function deterministically. Hereto, we define the objective functions as before. 

First, consider the elements of $U$ stored on $|U|/\LS$ machines. Let $f_1, \dots f_{|U|/\LS}$ denote the number of sampled elements on each such machine. 
Hence, we can write the size of the sampled elements as $D_h = \sum_{i=1}^{|U|/\LS}f_i(h)$. 

Second, the $\HS$ sets $S_1,\dots S_\HS$ are stored on $O(Nd/\LS)$ machines $M_1, \dots, M_{O(Nd/\LS)}$ so that no two machines store elements from the same set $S_i$. On each such machine $M_i$ let $g_{i}$ denote how many sets $S_j$ stored in $M_i$ have $S_j\cap D_h=\emptyset$. Hence, we can write the size of the elements from the second phase as $D'_h = \sum_{i=1}^{O(Nd/\LS)}g_i(h)$. 

\begin{claim}\label{claim:internal_computation_copy2}
    The values $\E_{h\sim\mathcal H}[f_i(h)\mid \textnormal{the prefix of } h\in \mathcal H\textnormal{ is } \textsf{r } ]$  and  $\E_{h\sim\mathcal H}[g_i(h)\mid \textnormal{the prefix of } h\in \mathcal H\textnormal{ is } \textsf{r } ]$ can be computed in a single machine with $\LS$ local space, for any $i$ where $f_i$ or $g_i$ is defined and any $\mathsf r\in\{0,1\}^*$. 
\end{claim}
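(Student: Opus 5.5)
The proof follows the same pattern as \Cref{claim:internal_computation}. The only new point is that the hash family $\H$ now comes from \Cref{lem:hash_functions_for_optimal}, not from $O(1)$-wise independence. We use three facts. First, a seed of length $s=O((\log\log|U|+\log(d^{c}/p))\log\log(d^{c}/p))$ determines $h$. Second, by \Cref{lem:hash_functions_for_optimal} the value $h(u)$ can be computed in polynomial time from the seed and $u$. Third, the relevant data lies on a single machine: by \Cref{lm:preprocessing}, machine $M_i$ stores either $O(L)$ elements of $U$ (for $f_i$) or $\lfloor L/d\rfloor$ complete sets $S_j$ (for $g_i$), since $L\ge d$.

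To compute the conditional expectation for a prefix $\mathsf r$, the machine enumerates, one at a time, every completion of $\mathsf r$ to a full seed. For each completed seed it evaluates the induced $h$ on its locally stored elements. For $f_i$ it counts the stored $u$ with $h(u)=1$. For $g_i$ it checks, for each stored set $S_j$, whether some $u\in S_j$ has $h(u)=1$, and counts the sets that fail. It keeps a running sum of these counts and a counter of how many seeds it has processed, and at the end divides the sum by the counter. Since $h\sim\H$ is uniform over seeds, the result is exactly $\E_{h\sim\H}[f_i(h)\mid \text{prefix } \mathsf r]$, respectively $\E_{h\sim\H}[g_i(h)\mid \text{prefix } \mathsf r]$. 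Local computation is unrestricted in MPC, so the number of seeds enumerated does not matter.

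The main obstacle is the space bound: evaluating a pseudorandom generator could in principle require more than $O(L)$ working memory. The first approach is to argue that storing one seed takes $s=\poly\log$ bits, which is $O(1)$ words when $|U|,d=\poly L$ and in any case $o(L)$ words. Then we cite that the generator of \cite{GopalanY20} is built from iterated limited-independence hash families (a recursive construction of depth $O(\log\log(M/\eps))$) and can be evaluated at a single point in space polynomial in the seed length, hence $o(L)$.

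Given this, at any moment the machine holds only its own input of $O(L)$ words, the current seed, the workspace for one evaluation of $h$, one bit per stored set (indicating whether the set is hit), and two counters. This totals $O(L)$ local space, which proves the claim.
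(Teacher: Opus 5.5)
Your proposal is correct and matches the paper's argument. The paper simply reuses the proof of \Cref{claim:internal_computation}: for every seed with prefix $\mathsf r$, the machine evaluates $h$ on its locally stored elements or sets, adds up $f_i(h)$ or $g_i(h)$, and divides by the number of such seeds, using $O(L)$ space because only one $h$ is held at a time. Your extra discussion of the space needed to store a seed and to evaluate the generator of \Cref{lem:hash_functions_for_optimal} at a single point covers a detail the paper leaves implicit. One small correction: when $|U|,d=\poly L$, a seed occupies $O(\log\log L)$ words rather than $O(1)$, which still gives the $o(L)$ bound you need.
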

\begin{proof}
    As $f_i$ and $g_i$ are defined as in \Cref{claim:internal_computation}, the proof of \Cref{claim:internal_computation} holds again. 
\end{proof}

\paragraph{Putting everything together.}
Finally, we use the notation and results obtained above to prove \Cref{thm:optimal_linear MPC hitting set}.

\begin{proof}[Proof of \Cref{thm:optimal_linear MPC hitting set}]
     Again, we start by applying \Cref{lm:preprocessing} to reduce the size of the data we work with to $O(Nd)$ in $O(\log_L M)$ rounds. This also ensures that $U$ is stored on $O(Nd/L)$ consecutive machines and each set $S_i$ is stored within one machine (with $\lfloor L/d\rfloor $ such sets in the same machine).

     We set $s= O((\log\log |U|+\log d) \log \log d)$ and $k= |U|/\LS+O(Nd/\LS)$. Then we see that:
    \begin{enumerate}
        \item The hash functions of $\H$ are encoded by $s$-bit random seeds by \Cref{lem:hash_functions_for_optimal};
        \item $\E_{h\sim\H}[\sum_{i=1}^{|U|/L} f_i(h)+\sum_{i=1}^{O(Nd/L)} g_i(h)]=O\left(\tfrac{|U|\log d}{d}+\HS/d\right)$ by \Cref{claim:opt_hitting_set};
        \item The expectation of each objective function $f_i$ and $g_i$ can be computed in a single machine with $\LS$ local space for any prefix $\mathsf r\in\{0,1\}^*$ by \Cref{claim:internal_computation_copy2}.
    \end{enumerate}

    This means we satisfy the demands of \Cref{lem:MPC_cond_exp}, hence we can compute a hitting set $D$ of size $O\left(\tfrac{|U|\log d}{d}+\HS/d\right)$ in $O(s\log k/\log^2L)=O((\log\log |U|+\log d) \log \log d\cdot \log N/\log^2 L)= O(\log \log d \cdot \log_L \log N \cdot \log_L N)$ rounds, where we use that $|U|=O(Nd)$ and $d\le L$.
\end{proof}

\subsubsection{Hitting Sets of Size \texorpdfstring{$O\left(\tfrac{|U|\log N}{d}\right)$}{}}\label{sec:opt_lin_hitting_set_lareg_N}
In this section, we obtain a similar hitting set as in \Cref{sec:opt_lin_hitting_set}, with the main difference that we do not have the additive factor $+N/d$. This comes at the cost at a slightly higher sampling probability, that replaces $\log d$ by $\log N$. The result is as follows. 

\begin{restatable}{lemma}{ThmOptimalLinearMPCHittingSetLargeN}\label{thm:optimal_linear MPC hitting set large N}
    There exists a deterministic MPC algorithm that, given a universe $U$, a number $d\le |U|$, and a collection of $\HS$ subsets $S_1, \dots, S_{\HS} \subseteq U$ with $|S_i|\ge d$ for each $i\in [\HS]$, computes a hitting set $D$ of size $ O\left(\tfrac{|U|\log N}{d}\right)$. 
    Let $L\ge d$ denote the local space of each machine, and let $M:=L+|U|+ \sum_{i=1}^\HS |S_i|$. Then the algorithm takes $O(\log \log (d+N) \cdot (\log_L N)^2+\log_L M)$ rounds and $O(M)$ total space. 
\end{restatable}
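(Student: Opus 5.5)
The plan is to follow the template of \Cref{thm:optimal_linear MPC hitting set} almost verbatim. The only changes are a larger sampling probability and a tighter pseudorandomness error, so that the second-phase contribution becomes $O(1)$ instead of $O(N/d)$.

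We first apply \Cref{lm:preprocessing}. This ensures $|U|\le Nd$, with $U$ on $O(Nd/L)$ consecutive machines and each $S_i$ inside a single machine, at a cost of $O(\log_L M)$ rounds. We then consider the process where each $u\in U$ joins $D$ with probability $p=2\log N/d$; if $p\ge 1$ we simply output $U$, which is within the bound. This process is simulated by the family $\H$ of hash functions $h\colon U\to[\lceil 1/p\rceil]$ from \Cref{lem:hash_functions_for_optimal}, taken with error $\eps=(Nd)^{-c}$ for a large constant $c$. As before we define $D_h=\{u: h(u)=1\}$ and $D'_h=\{S_i[1]: S_i\cap D_h=\emptyset\}$, so $D_h\cup D'_h$ is always a hitting set.

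The analysis mirrors \Cref{claim:opt_hitting_set}. We have $\P_{h\sim\H}[u\in D_h]\le p+\eps$, hence $\E[|D_h|]\le |U|(p+\eps)=O(|U|\log N/d)$. Here $|U|\eps\le Nd\cdot (Nd)^{-c}$ is negligible. For a fixed $i$, the truly random bound gives
\[
\P_{g\sim\mathcal U}[S_i\cap D_g=\emptyset]\le (1-p/2)^d\le e^{-\log N}\le 1/N .
\]
Transferring this via \Cref{lem:hash_functions_for_optimal} adds at most $\eps$. Therefore $\E[|D'_h|]\le N(1/N+\eps)=O(1)$, which is absorbed into $O(|U|\log N/d)$ because $|U|\ge d$ and $N\ge 2$.

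The objective functions $f_i$ (sampled elements per $U$-machine) and $g_i$ (unhit sets per set-machine) are exactly those of \Cref{claim:internal_computation}, so their conditional expectations are computable locally. We then invoke \Cref{lem:MPC_cond_exp} with $k=|U|/L+O(Nd/L)$ objectives. The seed length is
\[
s=O\bigl((\log\log|U|+\log(1/(p\eps)))\log\log(1/(p\eps))\bigr).
\]
Since $1/\eps=\poly(Nd)$ and $d\le L\le$ the relevant sizes, this is $s=O(\log(N+d)\cdot\log\log(N+d))$.

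The round count is then $O(s\log k/\log^2L)$. Using $\log k=O(\log N+\log d)$ and $\log d\le \log L$, we must check that $(\log N+\log d)^2/\log^2 L=O((\log_L N)^2+1)$. The extra $+1$ terms are either dominated or contribute $O(\log\log(d+N))$, which matches the stated $O(\log\log(d+N)\cdot(\log_L N)^2+\log_L M)$.

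The main obstacle is the choice of $\eps$. It has to be polynomially small in $N$, not just in $d$, so that both $|U|\eps$ and $N\eps$ stay bounded. This is exactly what converts the $\log\log d$ factor of \Cref{thm:optimal_linear MPC hitting set} into $\log\log(d+N)$ and inflates the seed to $\Theta(\log N)$ bits. I would first verify that the seed-length expression of \Cref{lem:hash_functions_for_optimal} with $M=\lceil1/p\rceil$ and this $\eps$ gives $s=O(\log(N+d)\log\log(N+d))$. I would then double-check the arithmetic that $s\log k/\log^2 L$ collapses to the claimed bound in the regime $N\ge L$ as well as $N<L$.
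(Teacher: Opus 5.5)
Your proposal is essentially the paper's proof. It uses the same preprocessing, the same Gopalan--Yehudayoff family with $p=2\log N/d$ and an error $\varepsilon$ inverse-polynomial in $N+d$ (your $(Nd)^{-c}$ and the paper's $(d+N)^{-c}$ are interchangeable), the same objectives $f_i,g_i$, and \Cref{lem:MPC_cond_exp} with $s=O(\log(d+N)\log\log(d+N))$ and $O(N)$ objectives.

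Your concern about the final arithmetic is justified, and the paper's proof has the same gloss, since it silently uses $\log(d+N)/\log L=O(\log_L N)$. Because each of the $\lceil s/\log L\rceil$ seed chunks costs at least one round, both arguments actually give $O\bigl(\log\log(d+N)\,(1+\log_L N)^2+\log_L M\bigr)$ rounds. This equals the stated bound only when $N\ge L^{\Omega(1)}$; for $N=L^{o(1)}$ the leftover $O(\log\log(d+N))$ term is not absorbed, so your claim that it matches needs that caveat. This is harmless for the applications, where $\log_L N=O(1)$ and only an $O(\log\log n)$ bound is used.
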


The algorithm is analogous to the algorithm in the proof of \Cref{thm:optimal_linear MPC hitting set}.
Instead of derandomizing random sampling with $p=2\cdot \log d/d$, we derandomize sampling with probability $p=2\cdot \log N/d$.

\paragraph{Reducing randomness via pseudorandom generators.}
Now, we consider the random process where each element~$u\in U$ adds itself to a hitting set $D$ with probability $p=2\log N/d$.
We show how to obtain this random process with a family~$\H$ of hash functions $h: U\to [\lceil1/p\rceil]$ from \Cref{lem:hash_functions_for_optimal} with $\eps=(d+N)^{-c}$ for some sufficiently large constant $c$. 
Analogous to \Cref{thm:optimal_linear MPC hitting set}, each hash function $h\in\H$ corresponds to two element sets $D_h:=\{u\in U\mid h(u)=1\}$ and $D'_h := \{ S_i[1] : S_i\cap D_h=\emptyset, i\in [\HS]\}$, where $S_i[1]$ denotes the first (lexicographically ordered) element of $S_i$. However, our goal is now to make $D'_h$ much smaller. 

\begin{claim}\label{claim:opt_hitting_set_large_N}
     For $h\sim \H$, the expected size of $D_h$ and $D'_h$ are $O(|U|\log N/d)$ and $O(1)$, respectively.
\end{claim}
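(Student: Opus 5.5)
The argument mirrors \Cref{claim:opt_hitting_set}, with $d$ replaced by $N$ in the error parameter and the sampling probability. It has two parts: bounding $\E[|D_h|]$ via single-element marginals, and bounding $\E[|D'_h|]$ via the probability that a fixed set $S_i$ is missed. In both parts I first compute the quantity under a truly uniform map $g\sim\mathcal U$ from $U$ to $[\lceil 1/p\rceil]$, and then transfer it to $h\sim\H$ using the $\eps$-closeness of \Cref{lem:hash_functions_for_optimal} with $\eps=(d+N)^{-c}$.

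\textbf{Size of $D_h$.} Fix $u\in U$. Choose $\textbf I_u=\{1\}$ and $\textbf I_v=[\lceil 1/p\rceil]$ for every $v\neq u$. \Cref{lem:hash_functions_for_optimal} then gives
\[
\P_{h\sim\H}[h(u)=1]\le \P_{g\sim\mathcal U}[g(u)=1]+\eps\le p+\eps .
\]
Since $d\le |U|$, we have $\eps\le (d+N)^{-c}\le 1/d\le p$, so the right-hand side is $O(\log N/d)$. Linearity of expectation gives $\E[|D_h|]=O(|U|\log N/d)$.

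\textbf{Size of $D'_h$.} Fix $i\in[N]$. Choose $\textbf I_u=[2,\lceil 1/p\rceil]$ for $u\in S_i$ and the full range for all other elements. Under $\mathcal U$ the coordinates are independent, so
\[
\P_{g\sim\mathcal U}[S_i\cap D_g=\emptyset]\le \Bigl(1-\tfrac{1}{\lceil 1/p\rceil}\Bigr)^{d}\le (1-p/2)^d\le e^{-pd/2}=e^{-\log N}\le 1/N .
\]
The inequality $1/\lceil 1/p\rceil\ge p/2$ holds as long as $p\le 1$. If $p>1$, i.e.\ $2\log N>d$, the statement is trivial: the bound $O(|U|\log N/d)$ is at least $|U|$, so we can take $D=U$. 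In that degenerate case the hashing still makes sense with range $[1]$, so every element is sampled and $D'_h=\emptyset$. Transferring to $\H$ gives $\P_{h\sim\H}[S_i\cap D_h=\emptyset]\le 1/N+(d+N)^{-c}\le 2/N$. Summing over the $N$ sets, linearity gives $\E[|D'_h|]\le 2=O(1)$.

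\textbf{Main obstacle.} The only delicate point is the choice of logarithm base. The miss probability $e^{-pd/2}$ equals $N^{-1}$ only if $\log$ is the natural logarithm. If $\log$ means $\log_2$, the same calculation gives $N^{-1/\ln 2}$, which is even smaller, so either convention works. We also need $c\ge 1$ so that the additive error $\eps$ does not dominate the $1/N$ term; taking $c$ a sufficiently large constant, as the paper does, ensures this.
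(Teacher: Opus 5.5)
Your proposal is correct and follows the paper's proof essentially step for step. You bound each marginal $\Pr[h(u)=1]\le p+\varepsilon$ and each miss probability $\Pr[S_i\cap D_h=\emptyset]\le (1-p/2)^d+\varepsilon\le 1/N+(d+N)^{-c}$ by transferring from a truly uniform map via the Gopalan--Yehudayoff generator, and then sum by linearity of expectation. Your remarks on the degenerate case $p>1$ and on the base of the logarithm are harmless additions the paper omits; the one nit is that $\varepsilon\le p$ follows from $c\ge 1$ and $N\ge 2$, not from $d\le |U|$.
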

\begin{proof}
    In the following, we bound the probability of $u\in D_h$ for an element $u\in U$ and the probability that $S_i\cap D'_h=\emptyset$ for an index $i\in [\HS]$ and thus $S_i[1]\in D'_h$.

    The hash functions in $\H$ map each element to integers in $[\lceil1/p\rceil]$ according to \Cref{lem:hash_functions_for_optimal}.
    Furthermore, an element $u\in U$ is in $D_h$ if and only if $h(u)=1$.
    Therefore, for an element $u\in U$, the probability of $u\in D_h$ can be bound by \Cref{lem:hash_functions_for_optimal} as follows:
    \begin{align*}
        \P_{h\sim \H}[u\in D_h] &= \P_{h\sim \H}[h(u)\in\{1\} \textnormal{ while } h(v)\in[\lceil1/p\rceil]\textnormal{ for all } v\in U\setminus\{u\}]\nonumber \\
        &\le \P_{g\sim \mathcal U}[g(u)\in\{1\} \textnormal{ while } g(v)\in[\lceil1/p\rceil]\textnormal{ for all } v\in U\setminus\{u\}] + \eps \nonumber \\
        &= \P_{g\sim \mathcal U}[g(u)=1]
        \le p+\eps=O\left(\frac{\log N}{d}\right),
    \end{align*}
    where $\mathcal U$ denotes all mapping of $U\to [\lceil1/p\rceil]$.
    So $\E_{h\sim \H}[|D_h|]\le |U|(p+\eps)= O(\tfrac{|U|\log N}{d})$.
    
    In the following, we bound the probability that $i\in [N]$ contributed to $D'_h$, i.e., that $S_i\cap D'_h=\emptyset$.
    First of all, we consider all mappings $\mathcal U$ and bound the probability $S_i\cap S'_g=\emptyset$ for $g\sim \mathcal U$.
    That means every element $u\in S_i$ satisfies $g(u)\neq 1$.
    Therefore, the probability is bounded as follows:
    \begin{align*}
       \P_{g\sim \mathcal U}[ S_i\cap S'_g=\emptyset]&=
        \P_{g\sim \mathcal U}[g(u)\in [2,\lceil 1/p\rceil]\textnormal{ for } u\in S_i \textnormal{ while } g(u')\in  [\lceil 1/p\rceil] \textnormal{ for } u'\in U\setminus S_i] \\
        &=  \P_{g\sim \mathcal U}[g(u)\in [2,\lceil 1/p\rceil]\textnormal{ for } u\in S_i] \\
        &\le \left(1-\tfrac{1}{\lceil 1/p\rceil}\right)^{|S_i|}\leq (1-p/2)^d.
    \end{align*}
    \Cref{lem:hash_functions_for_optimal} gives that two probabilities $\P_{h\sim\H}[ S_i\cap D'_h=\emptyset]$ and $\P_{g\sim \mathcal U}[ S_i\cap S'_g=\emptyset]$ are at most $\eps=(d+N)^{-c}$ apart. Which means that with high probability, using the hash functions suffices.
    \begin{align*}
        \P_{h\sim \H}[ S_i\cap S'_g=\emptyset] \le (1-p/2)^{d} +\eps \le (1-\tfrac{\log N}{d})^{d}+d^{-c}\le \tfrac{1}{N}+(d+N)^{-c}.
    \end{align*}
    So the expected size of $D'_h$ is $\E_{h\sim\H}[|D'_h|]=O(\HS/\HS)=O(1)$.
\end{proof}

\paragraph{Derandomization via the method of conditional expectation.}
Next, we show that by \Cref{lem:MPC_cond_exp}, we can find a proper hash function deterministically. Hereto, we define the objective functions as before. 

First, consider the elements of $U$ stored on $|U|/\LS$ machines. Let $f_1, \dots f_{|U|/\LS}$ denote the number of sampled elements on each such machine. 
Hence, we can write the size of the sampled elements as $D_h = \sum_{i=1}^{|U|/\LS}f_i(h)$. 

Second, the $\HS$ sets $S_1,\dots S_\HS$ are stored on $O(Nd/\LS)$ machines $M_1, \dots, M_{O(Nd/\LS)}$ so that no two machines store elements from the same set $S_i$. On each such machine $M_i$ let $g_{i}$ denote how many sets $S_j$ stored in $M_i$ have $S_j\cap D_h=\emptyset$. Hence, we can write the size of the elements from the second phase as $D'_h = \sum_{i=1}^{O(Nd/\LS)}g_i(h)$. 

\begin{claim}\label{claim:internal_computation_copy}
    The values $\E_{h\sim\mathcal H}[f_i(h)\mid \textnormal{the prefix of } h\in \mathcal H\textnormal{ is } \textsf{r } ]$  and  $\E_{h\sim\mathcal H}[g_i(h)\mid \textnormal{the prefix of } h\in \mathcal H\textnormal{ is } \textsf{r } ]$ can be computed in a single machine with $\LS$ local space, for any $i$ where $f_i$ or $g_i$ is defined and any $\mathsf r\in\{0,1\}^*$. 
\end{claim}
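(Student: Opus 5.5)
Since $f_i$ and $g_i$ are defined exactly as in \Cref{claim:internal_computation}, and only the sampling probability $p=2\log N/d$ and the seed length of $\H$ from \Cref{lem:hash_functions_for_optimal} have changed, we will reuse that argument and only check that it does not depend on those parameters. We rely on the layout from \Cref{lm:preprocessing}: the elements of $U$ handled by $f_i$ lie on a single machine (at most $\LS$ of them), and since $\LS\ge d$, each set $S_j$ handled by $g_i$ lies entirely on machine $M_i$.

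Fix $i$ and a prefix $\mathsf r$. The machine enumerates, one at a time, every seed in $\{0,1\}^s$ that extends $\mathsf r$. For each seed it decodes the hash function $h$; by \Cref{lem:hash_functions_for_optimal}, $h(u)$ can be evaluated in polynomial time from the seed and $u$, and the seed has $s=O((\log\log|U|+\log(1/(p\eps)))\log\log(1/(p\eps)))$ bits, which fits in $O(1)$ words. For $f_i$, it evaluates $h(u)$ for every $u$ stored locally and counts how many have $h(u)=1$. For $g_i$, it checks for every locally stored $S_j$ whether some $u\in S_j$ has $h(u)=1$, and counts the sets with no such element. It adds this count to a running sum and then discards $h$. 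At the end it divides the sum by $2^{s-|\mathsf r|}$, which is the number of seeds extending $\mathsf r$. Because $h$ is drawn uniformly over seeds, this quotient is exactly the conditional expectation.

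The only thing to verify is space. The machine holds its $O(\LS)$ input elements, one seed, one counter, and the workspace for evaluating $h$ on a single input. The generator of \cite{GopalanY20} is built from small-space components (limited-independence hash functions composed iteratively), so evaluating $h(u)$ should need only $\poly(s)=o(\LS)$ space. Enumerating $2^{s}$ seeds costs only time, and local computation is free in the model. Hence $O(\LS)$ local space suffices, as required by \Cref{lem:MPC_cond_exp}.

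The main obstacle is this space bound for evaluating the pseudorandom generator. If \cite{GopalanY20} does not directly guarantee that $h(u)$ can be computed in small space, we would fall back to the observation that the generator can be evaluated with space polynomial in its seed length, which is polylogarithmic and therefore far below $\LS\ge d$ in the regime where these hitting sets are used.
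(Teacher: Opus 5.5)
Your proposal is correct and is essentially the paper's argument. The paper just invokes the proof of \Cref{claim:internal_computation}: enumerate the seeds extending $\mathsf r$ one at a time, evaluate $f_i$ or $g_i$ locally, and average, exactly as you do. Your extra check that evaluating the generator of \cite{GopalanY20} fits in local space is something the paper leaves implicit. One small slip there: the seed has $O((\log\log|U|+\log(d+N))\log\log(d+N))$ bits, so it occupies $O(\log\log(d+N))$ words rather than $O(1)$, which is still negligible.
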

\begin{proof}
    As $f_i$ and $g_i$ are defined as in \Cref{claim:internal_computation}, the proof of \Cref{claim:internal_computation} holds again. 
\end{proof}

\paragraph{Putting everything together.}
Finally, we use the notation and results obtained above to prove \Cref{thm:optimal_linear MPC hitting set large N}.

\begin{proof}[Proof of \Cref{thm:optimal_linear MPC hitting set large N}]
    Again, we start by applying \Cref{lm:preprocessing} to reduce the size of the data we work with to $O(Nd)$ in $O(\log_L M)$ rounds. This also ensures that $U$ is stored on $O(Nd/L)$ consecutive machines and each set $S_i$ is stored within one machine (with $\lfloor L/d\rfloor $ such sets in the same machine). 
     
     We set $s= O((\log\log |U|+\log (d+N)) \log \log (d+N))$ and $k= |U|/\LS+O(Nd/\LS)$. Then we see that:
    \begin{enumerate}
        \item The hash functions of $\H$ are encoded by $s$-bit random seeds by \Cref{lem:hash_functions_for_optimal};
        \item $\E_{h\sim\H}[\sum_{i=1}^{|U|/L} f_i(h)+\sum_{i=1}^{O(Nd/L)} g_i(h)]=O\left(\tfrac{|U|\log N}{d}\right)$ by \Cref{claim:opt_hitting_set_large_N};
        \item The expectation of each objective function $f_i$ and $g_i$ can be computed in a single machine with $\LS$ local space for any prefix $\mathsf r\in\{0,1\}^*$ by \Cref{claim:internal_computation_copy}.
    \end{enumerate}

    This means we satisfy the demands of \Cref{lem:MPC_cond_exp}, hence we can compute a hitting set $D$ of size $O\left(\tfrac{|U|\log N}{d}\right)$ in $O(s\log k/\log^2L)=O((\log\log |U|+\log (d+N)) \log \log (d+N)\cdot \log N/\log^2 L)= O(\log \log (d+N) \cdot \log_L  N \cdot \log_L N)$ rounds, where we use that $|U|=O(Nd)$ and $d\le L$.
\end{proof}


\newpage
\subsection{Sublinear MPC}\label{sec:sublinear_HS}
In this section, we modify the algorithms of \Cref{sec:linear_HS} for the sublinear MPC model with $O(d^{\delta})$ local space for some constant $\delta \in (0,1)$.
Here, we must address several technical subtleties, since we cannot fit a set $S_i$ fits in a single machine: $d^\delta < d$ for $\delta <1$. Note that this also means, that (after sorting) a machine never contains elements from multiple sets $S_i$. 

We first design a sparsification tool in \Cref{sec:sublinear_HS_sparsification}, that reduces the universe to a smaller set $U'$ that 1) still contains a hitting set 2) $S_i \cap U'$ fits on a single machine. Then, we apply this to get hitting sets of size $O(|U|\log N/d^{0.99})$ and $O(|U|\log N/d)$ in \Cref{sec:non-optimal hitting set sublinear MPC,sec:optimal hitting set sublinear MPC} respectfully. 

\tijni{next sentence is new}
Throughout, we assume that $\log_d N \le d^{\delta/4}$, or equivalently $N< d^{d^{\delta/4}}$. When $N=\poly d$, this is automatically satisfied.

\subsubsection{Sparsification Tool}\label{sec:sublinear_HS_sparsification}
The goal of this subsection is to give a routine that does some initial, crude sampling, such that the remaining sets fit in in one machine. 

\begin{lemma}\label{lem:sparsified_instance_HS}
     Let $\delta >0$ be a constant.
         There exists a deterministic MPC algorithm that, given a universe $U$, a number $d\le |U|$ and collection of $\HS<  d^{d^{\delta/4}}$\tijn{changed bound on $N$.} subsets $S_1, \dots, S_{\HS} \subseteq U$ with $|S_i|\ge d$ for each $i\in [\HS]$, computes a sparsified instance $U'\subseteq U$ such that $|U'|=O(|U|/d^{1-\delta})$,  $|S_i \cap U'|=O(d^{\delta})$ and $|S_i\cap U'| \ge d^{\delta}$.
         Let $M:=|U|+ \sum_{i=1}^\HS |S_i|$.
         The algorithm takes $O(\log_d|U|\log_d^2\HS+\log_d M)$ rounds with $O(d^\delta)$ local space and $O(M)$ total space. 
\end{lemma}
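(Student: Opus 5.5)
We follow the sparsification idea sketched in the overview. We first apply \Cref{lm:preprocessing} with $L=d^\delta$, so that every $S_i$ has exactly $d$ elements and lies on $d/d^{\delta}$ consecutive machines, each holding a chunk of $S_i$ of size $\Theta(d^\delta)$. We also get $|U|\le Nd$. The target overall sampling probability is $p=d^{\delta}/d$. Rather than sampling with $p$ directly, we write $p=p_1p_2\cdots p_t$ with each $p_j\ge d^{-\delta/2}$ (for instance $p_j=d^{-\delta/2}$, which gives $t=O(1/\delta)=O(1)$ steps), and we run $t$ sequential derandomized sampling rounds. In round $j$ we start from the current universe $U_{j-1}$ (with $U_0=U$) and the current sets $S_i\cap U_{j-1}$, and we keep each element with probability $p_j$.

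In each round we shift the viewpoint from sets to machines. We re-sort so that every chunk (one machine's portion of one set) has size between $d^{\delta}$ and $2d^{\delta}$. Every machine then expects about $d^{\delta}p_j\ge d^{\delta/2}$ survivors. We sample with a $k$-wise independent hash family from \Cref{lem:encoding_independent}, taking $k=\Theta(\log_d N)$ even. By assumption $k\le d^{\delta/4}\le$ the chunk mean, so \Cref{lem:tail_bound_independent} applies with $\eps=1/2$. It shows that the chunk count deviates from its mean by more than a factor $2$ with probability at most $(O(k)/d^{\delta/2})^{k/2}\le 1/(Nd)^{2}$. The seed length is $O(k\log|U|)=O(\log_d N\cdot\log|U|)$.

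Per round we use an adjusted potential consisting of two parts. The first is $f_\ell(h)$, the number of sampled elements of $U_{j-1}$ on machine $\ell$. Its expectation is $p_j|U_{j-1}|$. The second is $g_\ell(h)=|U|\cdot\mathbb 1[\text{chunk }\ell\text{ has count outside }[\tfrac12,2]\cdot\text{mean}]$. Its total expectation is at most $(Nd/d^\delta)\cdot|U|/(Nd)^2=o(1)$. Both parts are computable locally for any fixed $h$, and hence also for any seed prefix, as in \Cref{claim:internal_computation}. Then \Cref{lem:MPC_cond_exp} returns a seed with potential at most $p_j|U_{j-1}|+o(1)$. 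Because a single bad chunk would already contribute $|U|$ to the potential, no chunk is bad. So every chunk shrinks by a factor $\Theta(p_j)$ and $|U_j|\le p_j|U_{j-1}|+1$.

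After $t=O(1)$ rounds the constant factors compound only to $2^{O(1)}$. Hence $|S_i\cap U'|=\Theta(d\cdot p)=\Theta(d^\delta)$ and $|U'|=O(|U|d^{\delta-1})$. To obtain the exact lower bound $\ge d^\delta$, we aim the target probability a constant factor $2^{t}$ higher. The round count is the following. The seed length is $s=O(\log_d N\log|U|)$ and the number of objectives is $F=O(M)$, so each round costs $O(s\log F/\log^2 d^\delta)=O(\log_d N\cdot\log_d|U|\cdot\log_d M)$, which we absorb as stated after using $M\le \poly$ of $|U|,N$ together with the preprocessing. We also add $O(\log_d M)$ for sorting.

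The main obstacle is the chunk bookkeeping across rounds. After a round, a chunk shrinks to size about $d^{\delta/2}$, and the machines must be re-sorted and consecutive chunks merged so that each is again of size $\Theta(d^\delta)$ and the tail bound still applies. We handle this with \Cref{lem:sorting_prefixsum_MPC}. We also need to make sure that the ceilings coming from non-divisible chunk sizes still keep the constant losses bounded.
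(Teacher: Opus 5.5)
Your proposal is correct and follows essentially the same route as the paper. It uses a constant number ($\approx 2/\delta-2$) of derandomized sampling rounds at rate $\Theta(d^{-\delta/2})$, boosted by a constant factor per round to reach the exact $d^{\delta}$ lower bound, with $\Theta(\log_d N)$-wise independent hashing, a per-machine tail bound from \Cref{lem:tail_bound_independent}, and a potential that charges $|U|$ for any machine leaving a constant-factor window, so \Cref{lem:MPC_cond_exp} forces every machine to be good. Your re-chunking into pieces of size in $[d^{\delta},2d^{\delta}]$ (merging each set's leftover) is slightly more careful than the paper's ``$d^{\delta}$ elements per machine, except at most one'', since the tail bound does not apply to a short leftover machine.
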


The idea is to sample with probability $d^\delta/d=d^{\delta-1}$ to obtain $U'$. In expectation, this immediately gives the bounds as stated. However, we cannot derandomize such an aggressive sampling step, as the objective function cannot be evaluated within one machine. 
Hence we design a recursive sampling scheme, using $\ell$ iterations, for some constant $\ell$ to be determined. The goal is that within each iteration, we can construct objective functions that cannot be evaluated locally. 
We sample with a larger probability $p$, such that $p^\ell= d^{\delta-1}$. If we take $p=1/d^{\delta/2}$, then we need $\ell \cdot \delta/2 = 1-\delta$, or equivalently $\ell=2/\delta -2$. For simplicity, we assume that $\ell=2/\delta -2$ is a positive integer. If not, we set $\ell=\max\{1,\lceil 2/\delta -2\rceil\}$ and set $p=d^{(\delta-1)/\ell}$, which is almost $1/d^{\delta/2}$ and only affects the constants in the proof below. \kyungjin{Rev A: Major 2 comment wanted to specify the $\ell=\max \{1,...\}$}

Now we compute $\ell$ subsets  $U=U^{(0)}\supseteq U^{(1)}\supseteq \ldots\supset U^{(\ell)}=U'$ recursively by sampling each element in $U^{(j)}$ for $U^{(j+1)}$ with probability $p=2/d^{\delta/2}$. Note that this is a factor $2$ bigger than stated previously, which is needed to reach the $d^\delta$ lower bound, \emph{without} additional constants. 
Then, during the recursive iterations, we aim to obtain sets $U^{(j)}\subseteq U$ for $i\in[1,\ell]$ such that
\begin{enumerate}
    \item $|U^{(j)}|\leq {|U|\cdot p^ j}$,
    \item $|S_i\cap U^{(j)}|\leq d\left(\frac{3p}{2}\right)^j$ for each $i\in [\HS]$.\label{enum:2}
    \item $|S_i\cap U^{(j)}|\geq d\left(\frac{p}{2}\right)^j$ for each $i\in [\HS]$.\label{enum:3}
\end{enumerate}
We note here that the upper bound of \ref{enum:2} becomes in the end $d\left(\frac{3p}{2}\right)^j= 3^\ell \cdot d^\delta$. Since $\ell$ is a constant, this becomes $O(d^\delta)$. The lower bound of \ref{enum:3} becomes $d\left(\frac{p}{2}\right)^\ell=d^{\delta}.$

In the following, we describe how to derandomize the sampling $U^{(j+1)}$ from $U^{(j)}$, satisfying these three properties.
We assume that $U^{(j)}$ is given.
Moreover, we suppose that for an index $i\in [\HS]$, the elements of the set $S_i\cap U^{(j)}$ are stored in consecutive machines, with $d^{\delta}$ elements per machine, except at most one since we can sort in $O(1/\delta)$ rounds with $O(d^{\delta})$ local space by \Cref{lem:sorting_prefixsum_MPC}.

\paragraph{Reducing randomness via limited dependent hash functions.}
We let $\mathcal H$ be a family of $(\frac{12}{\delta}\log_d \HS)$-wise independent hash functions $h: U^{(j)}\to [\lceil 1/p\rceil]$, where $p=\tfrac{200}{\delta}/d^{\delta/2}$.
Each hash function $h$ corresponds to a set of elements $D_h\coloneq \{u\in U^{(j)}\mid h(u)=1\}$.
By the following claim, we can bound the expected size of $D_h$.
For a machine $M$, we let $V(M)$ be the set of elements stored in $M$.

\begin{claim}\label{claim:size of S1}
    Let $N< d^{d^{\delta/4}}$.
    Let $\mathcal H$ be a family of $(\frac{24}{\delta}\log_d \HS)$-wise independent hash functions $h: U^{(j)}\to [\lceil 1/p\rceil]$, where $p=\tfrac{200}{\delta}/d^{\delta/2}$.
    For $h\sim \H$, the expected size of $D_h$ is $|U^{(j)}|\cdot p$.
    Moreover, for each machine $M$ storing $d^\delta$ vertices of some set $S_i$, we have that
    $\P_{h\sim\H}[\left| |V(M)\cap D_h|-\mu\right| \ge \mu/2]\le O\left(\frac 1 {d\HS^{2}}\right)$, for $\mu:=|V(M)|\cdot p\ge\tfrac{200}{\delta}\cdot d^{\delta/2}$.
\end{claim}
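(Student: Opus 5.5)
The proof has two parts: linearity of expectation for the size of $D_h$, and the $k$-wise independent tail bound of \Cref{lem:tail_bound_independent} for each machine. For every $u\in U^{(j)}$ we define the indicator $X_u$ of the event $h(u)=1$. Since $\H$ is in particular $1$-wise independent with uniform marginals on $[\lceil 1/p\rceil]$, we get $\P[X_u=1]=1/\lceil 1/p\rceil$, and we identify this value with $p$, as the paper does throughout. Linearity of expectation then gives $\E_{h\sim\H}[|D_h|]=\sum_{u\in U^{(j)}}\E[X_u]=|U^{(j)}|\cdot p$, which is the first statement.

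For the concentration statement, we fix a machine $M$ that stores $d^\delta$ elements of some $S_i\cap U^{(j)}$. Its count is $X:=|V(M)\cap D_h|=\sum_{u\in V(M)}X_u$, and $\mu=\E[X]=|V(M)|\cdot p = d^\delta\cdot\tfrac{200}{\delta}d^{-\delta/2}=\tfrac{200}{\delta}d^{\delta/2}$. We apply \Cref{lem:tail_bound_independent} with $\eps=1/2$ and $k$ the independence parameter $\tfrac{24}{\delta}\log_d\HS$, rounded up to an even integer that is at least $4$. This rounding only costs a constant factor in the seed length. The lemma requires $\mu\ge k$. This is where the standing assumption $\log_d\HS\le d^{\delta/4}$, i.e.\ $\HS<d^{d^{\delta/4}}$, is used: it gives $k\le \tfrac{24}{\delta}d^{\delta/4}+O(1)\le \tfrac{200}{\delta}d^{\delta/2}=\mu$ once $d$ is larger than a constant. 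The lemma then yields
\[
\P\bigl[|X-\mu|\ge \mu/2\bigr]\le 8\left(\frac{8k}{\mu}\right)^{k/2}.
\]

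The main step is to show that the base $8k/\mu$ is at most about $d^{-\delta/4}$ and that the exponent $k/2$ is large enough. For the base we use $k\le \tfrac{24}{\delta}d^{\delta/4}$ and $\mu=\tfrac{200}{\delta}d^{\delta/2}$, which give $8k/\mu\le \tfrac{192}{200}d^{-\delta/4}\le d^{-\delta/4}$. Raising this to the power $k/2\ge\tfrac{12}{\delta}\log_d\HS$ gives a bound of $d^{-3\log_d\HS}=\HS^{-3}$. To get the extra factor $1/d$, note that when $\HS\ge d$ we have $\HS^{-3}\le 1/(d\HS^2)$ directly. When $\HS<d$, we take $k$ to be at least $\tfrac{24}{\delta}\max\{1,\log_d\HS\}$, which still costs only a constant factor. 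Then the exponent is at least $12/\delta$, so the bound is at most $d^{-3}\le 1/(d\HS^2)$.

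Altogether this gives $O(1/(d\HS^2))$. The only delicate point I expect is the bookkeeping of the independence parameter (the $12/\delta$ versus $24/\delta$ discrepancy in the setup, evenness, and the $\max\{1,\cdot\}$). Each of these affects only constants in the seed length and in $k$, and none of them invalidates the condition $\mu\ge k$.
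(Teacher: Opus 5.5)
Your proposal is correct and follows the paper's proof essentially step for step. Like the paper, you use linearity of expectation for the size of $D_h$, then apply \Cref{lem:tail_bound_independent} with $\varepsilon=1/2$, check $\mu\ge k$ via $\log_d N\le d^{\delta/4}$, bound the base $8k/\mu$ by $d^{-\delta/4}$, and raise it to the power $\tfrac{12}{\delta}\log_d N$ to get $O(N^{-3})$.

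Your extra bookkeeping goes slightly beyond the paper: you round $k$ to an even integer at least $4$, and you use $\max\{1,\log_d N\}$ so that the case $N<d$ still gives $d^{-3}\le 1/(dN^2)$. The paper's final step $8N^{-3}=O(1/(dN^2))$ tacitly assumes $N=\Omega(d)$, and your version makes this explicit.
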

\begin{proof}
    It is clear that the expected size of $D_h$ is at most $|U^{(j)}|\cdot p$ by  linearity of expectation.
    In the following, we prove the latter claim.

    For a machine $M$ storing at least $d^{\delta}$ vertices of some set $S_i$, we let $\mu:= |V(M)|\cdot p$ denote the expected sampled size. We bound the probability that $|V(M)\cap D_h|$ deviates by more than $\mu/2$.
    Note that $|V(M)\cap D_h|$ is the same as the sum of the indicator random variable $X_u$ for the event $u\in D_h$ for $u\in V(M)$.
    Here, we let $X_u$ be $(\frac{24}{\delta}\log_d \HS )$-wise independent random variables, that is $X_u=1$ if $u\in D_h$, otherwise $X_u=0$. Next, we want to use \Cref{lem:tail_bound_independent}, which requires that $\mu \ge \frac{24}{\delta}\log_d N$. We see that this is the case by filling in $\mu\ge \tfrac{200}{\delta} d^{\delta/2} \ge   \frac{24}{\delta}\log_d N$.
    So \Cref{lem:tail_bound_independent} implies the following bound:
    \begin{align*}
        & \P_{h\sim\H}[\left| |V(M)\cap D_h|-\mu\right| \ge \mu/2] 
        = \P_{h\sim\H}\left[ \left|\sum_{u\in V(M)} X_u-\mu\right|\geq \mu/2 \right] \\
        \leq &8 \left(\frac{2\cdot \frac{24}{\delta}\log_d \HS}{\tfrac{1}{2^2}\cdot \mu}\right)^{\frac{12}{\delta}\log_d \HS} 
        \le 8 \left(\frac{\log_d \HS}{d^{\delta/2}}\right)^{\frac{12}{\delta}\log_d \HS}\\
        &\le 8 \left(\frac{1}{d^{\delta/4}}\right)^{\frac{12}{\delta}\log_d \HS}
        \leq O\left( \frac 1 {d\HS^{2}}\right).\qedhere
    \end{align*}
\end{proof}

\paragraph{Derandomization via the method of conditional expectation.}
Our goal is to compute a hash function $h^*$ in $\H$ which corresponds to $U^{(j+1)}\coloneq D_{h^*}$ so that 
$|D_{h^*}|$ is at most $O(|U^{(j)}|\cdot p)$ and $|V(M)\cap D_{h^*}|\geq |V(M)|\cdot p$ for every machine $M$ if $|V(M)|\geq d^\alpha$. To obtain this, we use the method of conditional expectation, \Cref{lem:MPC_cond_exp}. Hereto, we define the corresponding objective functions.  

First, consider the elements of $U^{(j)}$ stored on $|U^{(j)}|/\LS$ machines. Let $f_1, \dots f_{|U^{(j)}|/\LS}$ denote the number of sampled elements on each such machine. 
Hence, we can write the size of the sampled elements as $D_h = \sum_{i=1}^{|U^{(j)}|/\LS}f_i(h)$.

Next, we need to ensure that we do not under- or over-sample on any machine. Here the objective functions are going to be different than before. 
We consider $O(N\cdot d^{1-\delta})$ machines $M_1,\ldots, M_{O(N\cdot d^{1-\delta}})$ storing the sets $S_1\cap U^{(j)},\ldots S_\HS\cap U^{(j)}$.
On each such machine $M_i$, let 
\begin{align*}
    g_i(h) := \begin{cases}
        |U| & \text{if } \left| |V(M)\cap D_h|-|V(M)|\cdot p\right| \ge \tfrac{1}{2}|V(M)|\cdot p,\\
        0 & \text{otherwise.}
    \end{cases}
\end{align*}
This means that whenever a machine does not stay close to the objective value, we artificially blow up the final size. If we find a hash function where the sum objective functions is $<|U|$, we know that none of the machines violate $||V(M)\cap D_h|-|V(M)|\cdot p\| \ge \tfrac{1}{2}|V(M)|\cdot p$. Note that $g_i(h)$ can be evaluated within a single machine, and hence its expectation over all $h\sim \H$ given some prefix can be computed internally.

\begin{claim}\label{claim:internal_computation_copy3}
    The values $\E_{h\sim\mathcal H}[f_i(h)\mid \textnormal{the prefix of } h\in \mathcal H\textnormal{ is } \textsf{r } ]$  and  $\E_{h\sim\mathcal H}[g_i(h)\mid \textnormal{the prefix of } h\in \mathcal H\textnormal{ is } \textsf{r } ]$ can be computed in a single machine with $\LS$ local space, for any $i$ where $f_i$ or $g_i$ is defined and any $\mathsf r\in\{0,1\}^*$. 
\end{claim}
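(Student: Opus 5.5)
The plan is to follow the proof of \Cref{claim:internal_computation} almost verbatim and check that every computation needed for one conditional expectation stays inside a single machine. The only new points are that we now work in the sublinear regime with local space $\LS=O(d^\delta)$, and that the $g_i$ are threshold functions rather than counts. The key structural fact comes from the sorting step before the derandomization. Each machine $M_i$ holding part of some $S_j\cap U^{(j)}$ contains elements of exactly one set, and at most $d^\delta$ of them. Likewise, each machine used for $f_i$ holds at most $\LS$ elements of $U^{(j)}$. So $f_i(h)$ and $g_i(h)$ depend only on the values $h(u)$ for $u\in V(M_i)$, and these elements are all locally available.

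I would carry out the argument in three steps.

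\textbf{Step 1 (one fixed seed).} Fix a full seed $\sigma$ of length $s$, which gives $h_\sigma\in\H$. By \Cref{lem:encoding_independent}, evaluating $h_\sigma(u)$ needs only the $O(\frac{24}{\delta}\log_d \HS\cdot\log|U|)$ seed bits plus $O(1)$ working words. This is $O(\log_d N\log |U|)$ words, and it fits into $O(d^\delta)$ space because $\log_d N\le d^{\delta/4}$ and $|U|=\poly d$. The machine then evaluates $h_\sigma$ on each stored $u$ in turn and keeps a running counter. For $f_i$, the counter counts the $u$ with $h_\sigma(u)=1$. For $g_i$, it counts $|V(M_i)\cap D_{h_\sigma}|$, compares $\bigl||V(M_i)\cap D_{h_\sigma}|-|V(M_i)|p\bigr|$ against $\tfrac12|V(M_i)|p$, and outputs $|U|$ or $0$ accordingly.

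\textbf{Step 2 (conditioning on a prefix).} Given a prefix $\mathsf r$, the machine enumerates all completions $\sigma$ of $\mathsf r$ one at a time. For each completion it reuses the same $O(\LS)$ workspace from Step 1 and adds the result to a single accumulator. At the end it divides by the number of completions $2^{s-|\mathsf r|}$. Since $h$ is uniform over seeds, the result is exactly the conditional expectation. Local computation time is unbounded in MPC, so the exponential enumeration is not an issue; only the space matters, and that is one seed, one counter and one accumulator.

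\textbf{Step 3 (the main obstacle).} The main thing to verify is the space accounting in Step 1, namely that the seed length fits in $O(d^\delta)$ words. If the claim is meant for all prefixes simultaneously, as \Cref{lem:MPC_cond_exp} uses it for a chunk of $\LS$ prefixes, then the $\LS$ resulting values also need storage. Following the convention in \Cref{lem:MPC_cond_exp}, that is an extra $O(\LS)$ space per machine, giving $2\LS$ in total.
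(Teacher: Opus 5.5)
Your proposal is correct and takes essentially the same route as the paper. For each full seed extending $\mathsf r$, the machine evaluates $f_i(h)$ or the threshold function $g_i(h)$ from the $O(L)$ elements it stores, accumulates these values one seed at a time, and divides by the number of completions, so only $O(L)$ space is used. Your explicit check that the seed fits in local space (via $\log_d N\le d^{\delta/4}$) is a detail the paper leaves implicit; note only that $O(\log_d N\cdot\log|U|)$ counts bits rather than words, which makes the bound easier.
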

\begin{proof}
    As $f_i$ is defined as in \Cref{claim:internal_computation}, the proof of \Cref{claim:internal_computation} holds again. 

    The function $g_i(h)$ depends on the size of the sampled elements within one machine. Hence, for each $h\in \H$, we can compute $g_i(h)$ internally. The machine sums the values $g_i(h)$ for every possible $h\in \H$ with prefix $r$ and divides out by the total number of such $h$ to obtain $\E_{h\sim\mathcal H}[g_i(h)\mid \textnormal{the prefix of } h\in \mathcal H\textnormal{ is } \textsf{r } ]$. Note that since we only consider one function $h$ at the time, and we can consider $O(L)$ elements of $U$, we require only $L$ local space.
\end{proof}

\paragraph{Putting things together in the iterative procedure.}
Using the results from above, we obtain the following lemma.

\begin{claim}\label{lem:sublinear_firstphase}
    For $p=2/d^{\delta/2}$ for some constant $\delta\in(0,1)$, then we can deterministically compute a set $U^{(j+1)}\subseteq U^{(j)}$ in $O(\log_d |U|\log_d^2 \HS)$ rounds so that:
    \begin{itemize}
        \item $|U^{(j+1)}|\leq |U^{(j)}|\cdot p$, 
        \item $|S_i \cap U^{(j+1)}|\leq |S_i\cap U^{(j)}|\cdot \tfrac{3p}{2}$ for $i\in [\HS]$, and
        \item $|S_i \cap U^{(j+1)}|\geq |S_i\cap U^{(j)}|\cdot \tfrac{p}{2}$ for $i\in [\HS]$
    \end{itemize}
    in the sublinear MPC model with $O(d^{\delta})$ local space.
\end{claim}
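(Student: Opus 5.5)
The plan is to apply \Cref{lem:MPC_cond_exp} once per iteration, with the family $\H$ of limited-independence hash functions from \Cref{claim:size of S1} and the objective functions $f_i$ and $g_i$ defined above. First, using \Cref{lem:sorting_prefixsum_MPC} in $O(1/\delta)=O(1)$ rounds, we arrange each $S_i\cap U^{(j)}$ on consecutive machines with $d^{\delta}$ elements per machine, except possibly one. Since the last machine may hold fewer elements, I would merge it into its predecessor, so that every machine holds between $d^\delta$ and $2d^\delta$ elements of a single set. By \Cref{claim:size of S1}, each such machine $M$ deviates from $\mu_M=|V(M)|\cdot p$ by at least $\mu_M/2$ with probability $O(1/(d\HS^2))$. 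There are at most $O(\HS d^{1-\delta})$ machines, so
\[
\E\Bigl[\sum_i g_i\Bigr]\le |U|\cdot O(\HS d^{1-\delta})\cdot O\bigl(1/(d\HS^2)\bigr)=o(|U|)
\]
after adjusting constants. Here I would use $|U|\le \HS d$ after preprocessing (\Cref{lm:preprocessing}); the slack can be widened by choosing the independence parameter slightly larger if needed. Also $\E[\sum_i f_i]=|U^{(j)}|p$. Hence there exists a seed with $\sum f_i+\sum g_i\le |U^{(j)}|p+o(|U|)$.

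\Cref{lem:MPC_cond_exp} returns an $h^*$ attaining at most this expectation. The main technical point is to turn ``total objective below the threshold'' into the claimed guarantees. I would rescale $g_i$ to a value strictly larger than the total budget, for instance $2|U|+1$, so that a single violating machine already makes the sum exceed the expectation bound. Then every $g_i(h^*)=0$. Summing the per-machine bounds $|V(M)\cap D_{h^*}|\in[\mu_M/2,3\mu_M/2]$ over the machines of $S_i$ gives $|S_i\cap U^{(j+1)}|\in[\tfrac p2,\tfrac{3p}{2}]\cdot|S_i\cap U^{(j)}|$. I expect this step to be the main obstacle. The first bullet asks for $|U^{(j+1)}|\le|U^{(j)}|p$ exactly, whereas conditional expectation only yields $\sum f_i\le \E[\sum f_i]+\E[\sum g_i]$. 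I would handle it by bounding $\E[\sum g_i]$ by a constant below $1$, so integrality gives $\sum f_i\le |U^{(j)}|p$ up to rounding. If that fails, I would relax the first bullet to $O(|U^{(j)}|p)$, which only affects constants over the $\ell=O(1)$ iterations.

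Finally, the round count. The seed length from \Cref{lem:encoding_independent} is $s=O(\tfrac1\delta\log_d\HS\cdot\log|U|)$. The number of objective functions is $k=O(|U|/d^\delta+\HS d^{1-\delta})$, so $\log k=O(\log \HS+\log|U|)$. The local space is $L=d^\delta$, so $\log L=\Theta(\log d)$. \Cref{lem:MPC_cond_exp} then gives $O(s\log k/\log^2 L)=O(\log_d\HS\cdot\log_d|U|\cdot\log_d(\HS|U|))$ rounds, which matches $O(\log_d|U|\log_d^2\HS)$ since $|U|\le \HS d$. The conditional expectations are computed locally by \Cref{claim:internal_computation_copy3}.
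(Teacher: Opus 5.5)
Your proposal is correct and takes essentially the same route as the paper: you apply \Cref{lem:MPC_cond_exp} once, using the limited-independence family of \Cref{claim:size of S1}, per-machine counts $f_i$, and a large penalty $g_i$ on any machine whose sample deviates from $\mu_M$ by $\mu_M/2$, with the same seed-length and round-count computation. Two of your additions tidy up points the paper glosses over: merging the short last machine of each $S_i$ so the tail bound covers every machine, and choosing the penalty explicitly larger than the total budget. Note that the paper itself only derives $|U^{(j+1)}|=O(|U^{(j)}|p)$, which is exactly your fallback and is all that is used later.
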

\begin{proof}
     We have the following:
    \begin{enumerate}
        \item The hash functions of $\H$ are encoded by $O(\log |U| \log_d(|U|\cdot \HS))$-bit random seeds by \Cref{lem:encoding_independent};
        \item $\E_{h\sim\H}\left[\sum_{i=1}^{|U|/L} f_i(h)+\sum_{i=1}^{Nd^{1-\delta}} g_i(h)           
        \right]\leq O(|U^{(j)}|\cdot p)$ by \Cref{claim:size of S1};
        \item The expectation of each objective function $f_i$ and $g_i$ can be computed in a single machine with $\LS$ local space for any prefix $\mathsf r\in\{0,1\}^*$ by \Cref{claim:internal_computation_copy3}.
    \end{enumerate}
    
    This satisfies the demands of \Cref{lem:MPC_cond_exp}, hence we can compute a proper hash function $h^*$ such that $\sum_i f_i(h^*)$ is $O(|U^{(j)}|\cdot p)$ in $O(\log |U| \log(|U|\cdot \HS)\log (|U|+\HS)/\log^3 d)=O(\log_d|U|\log_d^2 \HS)$ rounds.
    If there is a machine $M_i$ in $M_1,\ldots, M_{N\cdot d^{1-\delta}}$ with $|V(M_i)\cap U^{(j+1)}|< |V(M_i)|\cdot p$, then the $f_i(h^*)=|U|$ which contradicts to $\sum_i f_i(h^*)$ is $O(|U^{(j)}|\cdot p)$.
    Therefore, the sampled $h^*$ corresponds to a set $U^{(j+1)}$ as required.   
\end{proof}

\paragraph{Putting everything together.}
Now we can use \Cref{lem:sublinear_firstphase} repeatedly to obtain \Cref{lem:sparsified_instance_HS}. 

\begin{proof}[Proof of \Cref{lem:sparsified_instance_HS}]
    We start by applying \Cref{lm:preprocessing} to reduce the size of the data we work with to $O(Nd)$ in $O(\log_d M)$ rounds. This also ensures that $U$ is stored on $O(Nd^{1-\delta})$ consecutive machines and each set $S_i$ is stored within consecutive $\lceil d^{1-\delta}\rceil$ machines. 
    
    By recursively applying \Cref{lem:sublinear_firstphase}, we can get a set of elements $U^{(\ell)}$ of size $O(|U|\cdot p^{\ell})$ in $O(\log_d|U|\log_d^2 \HS)$ rounds.
    Furthermore, $|S_i\cap U^{(\ell)}|=O(d^\delta)$, and $|S_i\cap U^{(\ell)}|\geq d\cdot p^{\ell}=d^{\delta}$ for all $i\in [\HS]$, so we return $U'= U^{(\ell)}$.
\end{proof}

\subsubsection{Fast Hitting Sets}\label{sec:non-optimal hitting set sublinear MPC}
Using our sparsified instance from \Cref{lem:sparsified_instance_HS}, we now obtain the following result. We note that the demand that $\HS< d^{d^{\delta/4}}$ is in particular satisfied for $N= O(\poly d)$. 

\begin{restatable}{lemma}{ThmSublinearMPCHittingSet}\label{thm:sublinear_MPC_hitting set}
        Let $\delta >0$ be a constant.
         There exists a deterministic MPC algorithm that, given a universe $U$, a number $d\le |U|$ and collection of $\HS< d^{d^{\delta/4}}$ subsets $S_1, \dots, S_{\HS} \subseteq U$ with $|S_i|\ge d$ for each $i\in [\HS]$, computes a hitting set $D$ of size $ O(|U|\log_d N/d^{0.99})$.
         Let $M:=|U|+ \sum_{i=1}^\HS |S_i|$.
         Then the algorithm takes $O(\log_d|U|\log_d^2\HS+\log_d M)$ rounds with $O(d^\delta)$ local space and $O(M)$ total space. 
\end{restatable}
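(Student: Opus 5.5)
The plan is to reduce to the linear regime via the sparsification tool and then run a linear-MPC hitting set routine on the sparsified instance, with $d^\delta$ playing the role of $d$. First I apply \Cref{lem:sparsified_instance_HS} to obtain $U'\subseteq U$ with $|U'|=O(|U|/d^{1-\delta})$ and $d^{\delta}\le |S_i\cap U'|=O(d^{\delta})$ for every $i\in[\HS]$. This costs $O(\log_d|U|\log_d^2\HS+\log_d M)$ rounds with $O(d^\delta)$ local space and $O(M)$ total space. I then form the new instance with universe $U'$, sets $S_i'=S_i\cap U'$, and threshold $d':=d^{\delta}$. Now each $S_i'$ has size $\Theta(d')$. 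After re-sorting with \Cref{lem:sorting_prefixsum_MPC}, every $S_i'$ fits on a single machine with $O(d^\delta)$ local space, so we are in the `linear' hitting set regime with $L=\Theta(d')$. Any hitting set of $\{S_i'\}$ is also a hitting set of $\{S_i\}$, since $S_i'\subseteq S_i$.

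On this instance I invoke \Cref{thm:linear MPC hitting set large N}, which avoids the additive $N/d$ term. It yields a hitting set of size
\[
O\!\left(\frac{|U'|\,(1+\log_{d'}\tfrac{N}{|U'|})}{d'^{0.99}}\right)
= O\!\left(\frac{|U|}{d^{1-\delta}}\cdot\frac{\log_d N}{\delta\, d^{0.99\delta}}\right)
= O\!\left(\frac{|U|\log_d N}{d^{1-0.01\delta}}\right),
\]
which is at most $O(|U|\log_d N/d^{0.99})$. Here $\log_{d'}x=\log_d x/\delta$ with $\delta$ a constant, and $1-0.01\delta\ge 0.99$. If the $N/|U'|$ term is small, the $1+$ keeps the bound valid. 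The local space is $L=\Theta(d^\delta)\ge d'$, as that lemma requires. Its round complexity is $O(\log_{d'}\tfrac{N}{|U'|}\cdot\log_L^2N+\log_L M)=O(\log_dN\cdot\log_d^2N+\log_dM)$. The total space is $O(|U'|+\sum_i|S_i'|+L)=O(M)$.

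The main obstacle is the round bound. The large-$N$ lemma contributes an extra factor $\log_d N$, giving $\log_d^3N$ instead of the claimed $\log_d|U|\log_d^2N$. To fix this, I would first try to use the fact that the preprocessing in \Cref{lm:preprocessing} gives $|U|\le Nd$. If additionally $|U|\ge N$, then $\log_d\tfrac{N}{|U'|}\le \log_d(Nd^{1-\delta}/|U|)=O(1+\log_d N-\log_d|U|)$, which is absorbed. Otherwise $N>|U|$, and I would case-split.

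If $N\le|U|$, I would use \Cref{thm:linear MPC hitting set} directly on the sparsified instance. The additive term it produces is $N/d'$, where $N/d^\delta\le |U|/d^{\delta}$ is too large. Instead I would bound that term via the wise-independence parameter chosen as in \Cref{claim:size of S and U large N}. The per-set failure probability is then $O(|U'|/(Nd'))$, so the failure term contributes only $O(|U'|/d')$. The cost is a seed of length $O(\log_d N\cdot\log|U|)$, and this seed length is exactly what gives rounds $O(\log_d|U|\log_d^2N)$ through \Cref{lem:MPC_cond_exp} with $\FC=O(Nd)$ objective functions.

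So the cleanest route is to rerun the argument of \Cref{thm:linear MPC hitting set large N} with independence $O(\log_d N)$ and count the seed as $O(\log_dN\log|U'|)$ bits. This gives $O(\log_dN\log|U|\log N/\log^2 d^\delta)=O(\log_d|U|\log_d^2N)$ rounds, matching the statement. Verifying the tail-bound precondition $\mu\ge k$ after sparsification is the remaining check. Here $\mu\ge d'^{0.01}\log_{d'}N$, compared against $k=O(\log_dN)$, which holds, using \Cref{rem:large_d} if needed.
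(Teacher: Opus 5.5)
Your final route is essentially the paper's proof: sparsify with \Cref{lem:sparsified_instance_HS}, place each $S_i\cap U'$ on its own machine, and rerun the $\Theta(\log_d N)$-wise independent sampling argument of \Cref{thm:linear MPC hitting set large N} on $U'$, counting the seed as $O(\log_d N\cdot\log|U'|)$ bits. The paper samples at rate $\Theta(\log_d N/(\delta d^{0.99\delta}))$, so its fix-up phase contributes $O(1)$ rather than $O(|U'|/d')$; that difference is immaterial.

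The round ``obstacle'' is only an artifact of that lemma's simplified statement. Its proof gives $O(s\log(Nd)/\log^2 L)$ rounds with $s=O\bigl((1+\log_{d'}\tfrac{N}{|U'|})\log|U'|\bigr)$, which is already $O(\log_d|U|\log_d^2N)$. Your abandoned detour is also wrong as written, though you do not rely on it. The bound $|U'|=O(|U|/d^{1-\delta})$ is an upper bound on $|U'|$, so it cannot upper-bound $\log_d(N/|U'|)$. And the case split ``$N>|U|$ \dots\ if $N\le|U|$'' contradicts itself.
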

\begin{proof}
    We start by applying \Cref{lm:preprocessing} to reduce the size of the data we work with to $O(Nd)$ in $O(\log_d M)$ rounds. This also ensures that $U$ is stored on $O(Nd^{1-\delta})$ consecutive machines and each set $S_i$ is stored within consecutive $\lceil d^{1-\delta}\rceil$ machines. 
    
    Next, we call \Cref{lem:sparsified_instance_HS} to obtain a sparsified instance $U'\subseteq U$ such that $|U'|=O(|U|/d^{1-\delta})$,  $|S_i \cap U'| =O(d^\delta)$ and $|S_i \cap U'| \ge  d^\delta$ for all $i\in [N]$. We call \Cref{lem:sorting_prefixsum_MPC} to store each set $S_i\cap U'$ on its own machine.

    In the following, we describe how to compute a vertex set $D\subseteq U' \subseteq U$ of size $O(|U'|\log_d N/d^\delta)=O(|U|\log_d N/d^{0.99})$ so that $S_i\cap D\neq \emptyset$. As in the other sections, we derandomize random sampling using the method of conditional expectation, \Cref{lem:MPC_cond_exp}. 
    We consider the family $\H$ of $(400\tfrac{1}{\delta}\log_d \HS)$-wise independent hash functions $h: [|U'|]\to \left[\left\lceil \tfrac{d^{0.99\delta}}{\tfrac{400}{\delta}\log_d N}\right\rceil\right]$. 
    Then we can bound the expectation of $D_h$ and $D_h'$ over $h\sim\H$ as follows.   
    
    \begin{claim}\label{claim:sublinear_subsampled}
        Let $\H$ be a family of $(\tfrac{200}{\delta}\log_d \HS)$-wise independent hash functions $h: [|U'|]\to \left[\left\lceil \tfrac{d^{0.99\delta}}{\tfrac{400}{\delta}\log_d N}\right\rceil\right]$.
        For $h\sim \H$, the expectation of $D_h$ and $D_h'$ are $O(|U'|\log_d N/d^{0.99\delta})$ and $O(1)$, respectively.
    \end{claim}
    \begin{proof}
         It is clear that the expected size of the sampled vertex set $D_h$ is at most $O(|U'|\log_d N/d^{0.99\delta})$ by the linearity of expectations.
         We let $X_u$'s be $(\tfrac{200}{\delta}\log_d \HS)$-wise independent random variables, that is $X_u=1$ if $u\in D_h$, otherwise $X_u=0$ for $u\in V$.
        Then the following probability holds for any $i\in [\HS]$ by
        \Cref{lem:tail_bound_independent}:
        \begin{align*}
            \P_{h\sim\H}[ V(M_i)\cap D_h=\emptyset]=\P_{h\sim\H}\left[ \sum_{u\in V(M_i)} X_u=0 \right]
            \leq \P_{h\sim\H}\left[ \left|\sum_{u\in V(M_i)} X_u-\mu\right|\geq \mu \right] \nonumber \\
            \leq 8 \left(\frac{2\cdot \tfrac{200}{\delta}\log_d \HS}{\mu}\right)^{\tfrac{100}{\delta}\log_d \HS} 
            \leq 8 \left(\frac{\tfrac{400}{\delta}\log_d N}{\tfrac{400}{\delta}\log_d N\cdot  d^{0.01\delta}}\right)^{\tfrac{100}{\delta}\log_d \HS} 
            \leq O\left (\frac 1 {\HS}\right ),
        \end{align*}
        where in the second to last inequality we use $\mu\coloneqq \E[\sum_{u\in V(M_i)}X_u]\geq\tfrac{400}{\delta}\log_d N\cdot  d^{0.01\delta}$.
        Therefore, $\E_{h\sim\H}[\sum_{i=1}^{\HS}f_{|U|/\LS+i}(h)]$ is at most $O(1)$.
    \end{proof}

    To apply \Cref{lem:MPC_cond_exp}, we define the objective functions as follows. 
     First, consider the elements of $U'$ stored on $|U'|/\LS$ machines. Let $f_1, \dots f_{|U'|/\LS}$ denote the number of sampled elements on each such machine. 
    Hence, we can write the size of the sampled elements as $D_h = \sum_{i=1}^{|U'|/\LS}f_i(h)$. 
    
    Second, the $\HS$ sets $S_1\cap U',\dots S_\HS\cap U'$ are stored on $N$ machines $M_1, \dots, M_{N}$. On each such machine $M_i$ let $g_{i}$ denote whether $S_i\cap U'\cap D_h=\emptyset$. Hence, we can write the size of the elements from the second phase as $D'_h = \sum_{i=1}^{N}g_i(h)$. 

    \begin{claim}\label{claim:internal_computation_copy4}
        The values $\E_{h\sim\mathcal H}[f_i(h)\mid \textnormal{the prefix of } h\in \mathcal H\textnormal{ is } \textsf{r } ]$  and  $\E_{h\sim\mathcal H}[g_i(h)\mid \textnormal{the prefix of } h\in \mathcal H\textnormal{ is } \textsf{r } ]$ can be computed in a single machine with $\LS$ local space, for any $i$ where $f_i$ or $g_i$ is defined and any $\mathsf r\in\{0,1\}^*$. 
    \end{claim}
    \begin{proof}
        The function $f_i(h)$ denotes how many elements of $U'$ on machine $M_i$ are sampled. For a fixed $h\colon U' \to [\lceil 1/p\rceil ]$, the machine can check internally which elements are sampled (encoded by $h(u)=1)$), by counting this for every possible $h\in \H$ with prefix $r$, and dividing out by the total number of such $h$, we obtain $\E_{h\sim\mathcal H}[f_i(h)\mid \textnormal{the prefix of } h\in \mathcal H\textnormal{ is } \textsf{r } ]$. Note that since we only consider one function $h$ at the time, and we can consider $O(L)$ elements of $U'$, we require only $L$ local space. 
    
       The function $g_i(h)$ denotes whether $S_i\cap U' \cap D_h=\emptyset$. Again, for fixed $h$ we can compute this. For $i \in [N]$, we see if $h(u)=1$ for at least one $u\in S_i\cap U'$. If not, then $g_i(h)=1$. As before, $M_i$ sums the values $g_i(h)$ for every possible $h\in \H$ with prefix $r$ and divides out by the total number of such $h$ to obtain $\E_{h\sim\mathcal H}[g_i(h)\mid \textnormal{the prefix of } h\in \mathcal H\textnormal{ is } \textsf{r } ]$. Note that since we only consider one function $h$ at the time, and we can consider $O(L)$ elements of $U'$, we require only $L$ local space.
    \end{proof}

     We now have the following:
        \begin{enumerate}
            \item The hash functions of $\H$ are encoded by $O(\log_d \HS (\log\HS+\log |U|))$-bit random seeds by \Cref{lem:encoding_independent};
            \item $\E_{h\sim\H}\left[ \sum_{i=1}^{|U|/L} f_i(h)+\sum_{i=1}^{N} g_i(h) \right]\leq O(|U'|\log_d N/d^{0.99\delta})$ by \Cref{claim:sublinear_subsampled};
            \item The expectation of each objective function $f_i$ and $g_i$ can be computed in a single machine with $\LS$ local space for any prefix $\mathsf r\in\{0,1\}^*$ by \Cref{claim:internal_computation_copy4}.
        \end{enumerate}
    
    Then we can compute a proper hash function $h^*$ in $\H$ that corresponds to $\sum_{i=1}^{|U|/\LS+\HS} f_i(h^*)\leq O(|U'|\log_d N/d^{0.99\delta})$ in $O(\log_d |U|\log_d^2 \HS )$ rounds by using \Cref{lem:MPC_cond_exp}.
    Note that the obtained $h^*$ gives a hitting set $D$ satisfying that $S_i\cap D\neq \emptyset$ if $|V(M_i)\cap U'|\geq d^\delta$.
    By the property of $U'$, the vertex set $D$ is a hitting set of the original instance.
    Furthermore, the size of $D$ is bounded as follows due to $|U'|\leq O(|U|/d^{1-\delta})$:
    \[
    |D|\leq O\left(\frac{|U'|\log_d N}{d^{0.99\delta}}\right)\leq O\left(\frac{|U|\log_d N/d^{1-\delta }}{d^{0.99 \delta}}\right)
    =O\left(\frac {|U|\log_d N}{d^{1-0.01\delta }}\right)=O\left(\frac {|U|\log_d N}{d^{0.99}}\right).\]
    This completes the proof of \Cref{thm:sublinear_MPC_hitting set}.
\end{proof}

\subsubsection{Hitting Sets of Size \texorpdfstring{$O\left(\tfrac{|U|\log N}{d}\right)$}{}}
\label{sec:optimal hitting set sublinear MPC}
In this section, we extend \Cref{thm:optimal_linear MPC hitting set} for sublinear MPC models. 
That is, we compute an $O(|U|\log N/d)$ sized hitting set. Essentially, we use the algorithm from the previous section, but with another hash function (\Cref{lem:hash_functions_for_optimal}).
The following lemma summarizes the result.
\begin{restatable}{lemma}{ThmOptimalSublinearMPCHittingSet}\label{thm:optimal_sublinear_MPC_hitting set}
    Let $\delta >0$ be a constant.
    There exists a deterministic MPC algorithm that, given a universe $U$, a number $d\le |U|$ and collection of $N<  d^{d^{\delta/4}}$\tijn{changed bound on $N$} subsets $S_1, \dots, S_{\HS} \subseteq U$ with $|S_i|\ge d$ for each $i\in [\HS]$, computes a hitting set $D$ of size $ O\left(\tfrac{|U|\log N}{d}\right)$.
    Let $M:=L+|U|+ \sum_{i=1}^\HS |S_i|$.
    Then the algorithm takes in $O(\log \log (d+N)\cdot \log_d \log |U|\log_d^2\HS+\log_d M)$ rounds with $O(d^\delta)$ local space and $O(M)$ total space. 
\end{restatable}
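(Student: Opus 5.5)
We follow the template of \Cref{thm:sublinear_MPC_hitting set}: first sparsify, then solve the sparsified instance with the linear-regime machinery. We start with \Cref{lm:preprocessing}, in $O(\log_d M)$ rounds, so that $|U|\le Nd$ and each $S_i$ occupies $\lceil d^{1-\delta}\rceil$ consecutive machines. We then call \Cref{lem:sparsified_instance_HS}, which needs $N<d^{d^{\delta/4}}$. This yields $U'\subseteq U$ with $|U'|=O(|U|/d^{1-\delta})$ and $d^\delta\le |S_i\cap U'|=O(d^\delta)$ for every $i$, in $O(\log_d|U|\log_d^2 N+\log_d M)$ rounds. After one more sort via \Cref{lem:sorting_prefixsum_MPC}, each $S_i\cap U'$ sits on its own machine. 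The new instance $(U',\{S_i\cap U'\})$ has set-size parameter $d':=d^\delta$ and local space $\Theta(d^\delta)\ge d'$, so it is a ``linear'' instance.

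On this instance we repeat the argument of \Cref{thm:optimal_linear MPC hitting set large N}. Each $u\in U'$ is sampled via a hash function $h\colon U'\to[\lceil 1/p\rceil]$ with $p=2\log N/d^\delta$, drawn from the Gopalan--Yehudayoff family of \Cref{lem:hash_functions_for_optimal} with error $\eps=(d+N)^{-c}$. The objective functions are the same as before:
\begin{itemize}
\item $f_i$ counts the sampled elements of $U'$ on the $i$-th machine storing $U'$;
\item $g_i$ indicates whether $S_i\cap U'\cap D_h=\emptyset$.
\end{itemize}
Each of these is evaluated on a single machine, exactly as in \Cref{claim:internal_computation}.

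The analogue of \Cref{claim:opt_hitting_set_large_N} then gives $\E[|D_h|]\le |U'|(p+\eps)=O(|U'|\log N/d^\delta)$. It also gives $\P[S_i\cap U'\cap D_h=\emptyset]\le (1-p/2)^{d^\delta}+\eps\le 1/N+(d+N)^{-c}$, so $\E[|D_h'|]=O(1)$. By \Cref{lem:MPC_cond_exp} we fix a seed $h^*$ with $|D_{h^*}\cup D'_{h^*}|=O(|U'|\log N/d^\delta)$. This set hits every $S_i\cap U'$, hence every $S_i$, and its size is $O\bigl(\tfrac{|U|}{d^{1-\delta}}\cdot\tfrac{\log N}{d^\delta}\bigr)=O(|U|\log N/d)$, as claimed.

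The main thing to check is the round count of the conditional-expectation step. The seed length from \Cref{lem:hash_functions_for_optimal} is $s=O((\log\log|U'|+\log(d+N))\log\log(d+N))$. The number of objective functions is $k=O(|U'|/d^\delta+N)$, and $L=\Theta(d^\delta)$, so $\log L=\Theta(\log d)$. \Cref{lem:MPC_cond_exp} then gives $O(s\log k/\log^2 d)$ rounds. Using $\log(d+N)/\log d=O(\log_d N)$ and $\log k/\log d=O(\log_d(|U|+N))$, this is within $O(\log\log(d+N)\cdot\log_d\log|U|\cdot\log_d^2 N)$ after absorbing constants, possibly with some slack for the $\log_d|U|$ factor since $|U|\le Nd$. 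Adding the cost of sparsification gives the stated bound.

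A second point to verify is memory. Enumerating all seeds with a given prefix, one chunk of $\lfloor\log L\rfloor$ bits at a time, has to fit in $O(d^\delta)$ space. This holds because each machine processes one $h$ at a time, as in \Cref{claim:internal_computation}. Finally, sampling $h$ from the Gopalan--Yehudayoff family is polynomial-time local computation, which the model allows.
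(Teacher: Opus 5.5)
Your construction and size analysis follow the paper's proof step for step, and they are correct: preprocessing, sparsification via \Cref{lem:sparsified_instance_HS}, then the Gopalan--Yehudayoff conditional-expectation argument with $p=2\log N/d^\delta$ and the same objective functions $f_i,g_i$ on the sparsified instance. Your choice $\eps=(d+N)^{-c}$, with a seed length depending on $\log(d+N)$, is actually the right one. The paper writes $\eps=d^{-\delta c}$ and an $O(\log d^\delta\log\log d^\delta)$-bit seed, but then defers to \Cref{claim:opt_hitting_set_large_N}, which needs $\eps\lesssim 1/N$ to get $\E[|D'_h|]=O(1)$.

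The step that does not hold as written is the final absorption of the round count. The factor $\log_d\log|U|$ in the statement is smaller than $1$ whenever $|U|<2^d$; for example, it is $\Theta(\log\log d/\log d)$ when $|U|=\poly d$. So it cannot absorb constants. Also, $\log(d+N)/\log d$ is $O(1+\log_d N)$, not $O(\log_d N)$.

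What your argument actually gives is two terms:
\begin{itemize}
\item $O(\log_d|U|\log_d^2N+\log_d M)$ for the sparsification;
\item $O(\log\log(d+N)\,(1+\log_d N)^2)$ for the conditional-expectation step.
\end{itemize}
Using $|U|\ge d$, the total is $O(\log\log(d+N)\cdot\log_d|U|\cdot(1+\log_d N)^2+\log_d M)$. For $|U|,N=\poly d$ this is $O(\log\log d)$, consistent with Part 2 of \Cref{thm:MPC hitting set intro: sublinear}. The literal stated bound would instead be $O(1)$ in that regime, which this method cannot reach: the seed alone spans $\Omega(\log\log d)$ chunks of $\lfloor\log L\rfloor$ bits.

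The paper's proof is no tighter here. It only asserts $O(\log\log(d+N)\log_d N)$ for the derandomization step and never adds the sparsification cost. So the mismatch originates in the statement, which presumably intends $\log_d|U|$ rather than $\log_d\log|U|$. You should state and prove the bound your argument yields rather than claim the absorption.
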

\begin{proof}
    Again, we start by applying \Cref{lm:preprocessing} to reduce the size of the data we work with to $O(Nd)$ in $O(\log_d M)$ rounds. This also ensures that $U$ is stored on $O(Nd^{1-\delta})$ consecutive machines and each set $S_i$ is stored within consecutive $\lceil d^{1-\delta}\rceil$ machines.

     We modify the algorithm of \Cref{thm:optimal_linear MPC hitting set}  by the same process as in \Cref{sec:non-optimal hitting set sublinear MPC} to return an $O(|U|\log d/d)$ sized hitting set in the sublinear MPC model.



      Again, we first call \Cref{lem:sparsified_instance_HS} to obtain $U'\subseteq U$ such that:
     \begin{itemize}
         \item $|U'|\leq O(|U|/d^{1-\delta})$, 
         \item $|S_i\cap U'|= O( d^{\delta})$ for $i\in [\HS]$, and 
         \item $|S_i\cap U'|\geq d^{\delta}$ for $i\in [\HS]$.
     \end{itemize}
     We compute a hitting set $D$ with respect to $U'$ of size $O(|U'|\log N/d^\delta)=O(|U|\log N/d)$. 
    The idea is to sample with probability $p=2\cdot \tfrac{\log N}{d^\delta}$, and derandomize this procedure. 
     
     We show how to obtain this random process with a family~$\H$ of hash functions $h: U'\to [\lceil1/p\rceil]$ from \Cref{lem:hash_functions_for_optimal} with $\eps=d^{-\delta\cdot c}$ for some sufficiently large constant $c$. \Cref{lem:hash_functions_for_optimal} gives that the random seed has $O(\log d^\delta \cdot \log \log d^\delta)$ bits. 
    Analogous to \Cref{thm:linear MPC hitting set}, each hash function $h\in\H$ corresponds to two element sets $D_h:=\{u\in U'\mid h(u)=1\}$ and $D'_h := \{ S_i[1] : S_i\cap D_h=\emptyset, i\in [\HS]\}$, where $S_i[1]$ denotes the first (lexicographically ordered) element of $S_i$.
    
    \begin{claim}\label{claim:opt_subl_phase1}
         For $h\sim \H$, the expected size of $D_h$ and $D'_h$ are $O(|U'|\log N/d^\delta)$ and $O(1)$, respectively.
    \end{claim}
     The proof of this claim is exactly the same as the proof of \Cref{claim:opt_hitting_set_large_N}, with $D_1$ and $d^\delta$ instead of $U$ and $d$. 
    
    To apply \Cref{lem:MPC_cond_exp}, we define the objective functions as follows. 
     First, consider the elements of $U'$ stored on $|U'|/\LS$ machines. Let $f_1, \dots f_{|U'|/\LS}$ denote the number of sampled elements on each such machine. 
    Hence, we can write the size of the sampled elements as $D_h = \sum_{i=1}^{|U'|/\LS}f_i(h)$. 
    
    Second, the $\HS$ sets $S_1\cap U',\dots S_\HS\cap U'$ are stored on $N$ machines $M_1, \dots, M_{N}$. On each such machine $M_i$ let $g_{i}$ denote whether $S_i\cap U'\cap D_h=\emptyset$. Hence, we can write the size of the elements from the second phase as $D'_h = \sum_{i=1}^{N}g_i(h)$. 

    \begin{claim}\label{claim:internal_computation_copy5}
        The values $\E_{h\sim\mathcal H}[f_i(h)\mid \textnormal{the prefix of } h\in \mathcal H\textnormal{ is } \textsf{r } ]$  and  $\E_{h\sim\mathcal H}[g_i(h)\mid \textnormal{the prefix of } h\in \mathcal H\textnormal{ is } \textsf{r } ]$ can be computed in a single machine with $\LS$ local space, for any $i$ where $f_i$ or $g_i$ is defined and any $\mathsf r\in\{0,1\}^*$. 
    \end{claim}
    Since $f_i$ and $g_i$ are defined the same, the proof of the claim is exactly the same as in \Cref{claim:internal_computation_copy4}
    
    Now, we have the following:
        \begin{enumerate}
            \item The hash functions of $\H$ are encoded by $s$-bit random seeds by \Cref{lem:hash_functions_for_optimal};
            \item $\E_{h\sim\H}\left[ \sum_{i=1}^{|U|/L} f_i(h)+\sum_{i=1}^{N} g_i(h) \right]\leq  O(|U'|\log N/d^\delta)$ by \Cref{claim:opt_subl_phase1};
            \item The expectation of each objective function $f_i$ and $g_i$ can be computed in a single machine with $\LS$ local space for any prefix $\mathsf r\in\{0,1\}^*$ by \Cref{claim:internal_computation_copy5}.
        \end{enumerate}
    
    Then we can compute a proper hash function $h^*$ in $\H$ that corresponds to $\sum_{i=1}^{|U|/\LS+\HS} f_i(h^*)\leq O(|U'|\log N/d^\delta)$ in $O(s\log k/\log^2 d)=O(\log \log (d+N) \log_d N)$ rounds by using \Cref{lem:MPC_cond_exp}.
    Note that the obtained $h^*$ gives a hitting set $D$ satisfying that $S_i\cap D\neq \emptyset$ if $|V(M_i)\cap U'|\geq d^\delta$.
    By the property of $U'$, the vertex set $D$ is a hitting set of $G$.

    Furthermore, the size of $D$ is bounded as follows due to $|U'|\leq O(|U|/d^{1-\delta})$:
    \[
    |D|\leq O\left(\frac{|U'|\log N}{d^\delta}\right)\leq O\left(\frac{|U|/d^{1-\delta}\log N}{d^\delta}\right)
    =O\left(\frac {|U|\log N}{d}\right).\qedhere\]
\end{proof}

\newpage

\section{Application I: Spanner Algorithms}\label{sec:applications_spanners}
In this section, we derive deterministic algorithms for \emph{spanners} of undirected graphs in the Congested Clique and the MPC model by applying the hitting set algorithms illustrated in~\Cref{sec:hitting sets}.

In \Cref{sec:unweighted_spanner_linearMPC}, we first construct an $O(k)$-spanner with $O(n^{1+1/k})$ edges for an unweighted graph with $n$ vertices in linear MPC model and the Congested Clique.

Then, in \Cref{sec:weighted_spanner_linearMPC}, we give a reduction from weighted spanners to unweighted spanners, obtaining \Cref{thm:linear_MPC_spanner_weighted}.
As described in \Cref{sec:our_results_linearMPC}, these spanners can subsequently be used to obtain $O(\log{n})$-approximate APSP results (see \Cref{sec:apsp_corollary}).


Finally, in~\Cref{sec:weighted_spanner_sublinear}, we give our results for weighted spanners in sublinear MPC.

\subsection{Unweighted Spanners in Linear MPC and Congested Clique}\label{sec:unweighted_spanner_linearMPC}

In this section, we let $G$ be an unweighted, undirected graph with $n$ vertices and maximum degree $\Delta$.
Dory, Fischer, Khoury, and Leitersdorf~\cite{podc2021spanner} gave a randomized $O(1)$-round algorithm constructing an $O(k)$-spanner of $G$ in the MPC model with {$\tilde O(n)$ local space}.
Briefly, their main approach was to build and merge several spanners hierarchically with respect to the degrees of the vertices.
They compute small sized $d_i$-dominating sets $D_i$ of $G$ with $d_i=2^{i-1}$ and $D_{i}\subseteq D_{i-1}$ for all $i\in[\lceil\log \Delta\rceil]$.
Here, a $d_i$ dominating set refers to a vertex set $D\subseteq V$ so that $(N(v)\cup\{v\})\cap D\neq \emptyset $ for every vertex $v\in V$ with $\deg(v)\geq d_i$.
Then they compute and union the spanners with respect to the \emph{clustering graphs} defined by the dominating sets $D_1,\ldots, D_{\lceil \log\Delta\rceil}$. 
Briefly, they decompose $G$ into $\lceil \log\Delta\rceil$ subgraphs $G_1,\ldots, G_{\lceil \log\Delta\rceil}$ so that: a vertex and an edge of $G$ is in at least one of them, and a vertex in $G_i$ is in $D_i$ or adjacent to a vertex in $D_i$. 
Then they define clustering graphs of vertices $D_i$'s, and a vertex of $G_i$ not in $D_i$ joins one of the adjacent clusters. 
Finally, they compute the spanners for the clustering graphs, and return their union and additional $O(n)$ edges inside the clusters as a spanner of the original graph $G$.

In this section, we derandomize their algorithm in \emph{the linear MPC model}. In particular, we improve the local space to $O(n)$ instead of $\tilde{O}(n)$.
Note that ~\cite{podc2021spanner} used random processes for two steps. First, to compute the dominating sets $D_i$. Second, to compute the spanners of the clustering graphs. We obtain a deterministic algorithm by \Cref{thm:MPC hitting set intro: linear} for the dominating set, and deriving deterministic sparsifying algorithm for the clustering, illustrated by \Cref{lem:sparsification_tool}.
Although \Cref{lem:sparsification_tool} gives a pretty larger spanner than our desired size, by applying it on the clustering graphs, we can reach the $O(n^{1+1/k})$ sized $O(k)$ spanner as \cite{podc2021spanner}.

\paragraph*{Clustering graphs.}
In the following, we construct and apply \emph{clustering graphs}.
For an unweighted graph $G$, a graph $C=(V_C,E_C)$ is a clustering graph if the following holds:
\begin{itemize}
    \item Each vertex $v$ belongs to exactly one cluster in $V_C$.
    \item Two clusters $c$ and $c'$ in $V_C$ are adjacent in the clustering graph $C$ if and only if there is an edge in $G$ between two vertices that belong to $c$ and $c'$, respectively. 
\end{itemize}

We say each cluster of $C$ has a \emph{bounded radius} at most $r$ when each cluster has a \emph{center vertex} $c\in V$ and for any other vertex $v\in V$ in the cluster, $d_G(c,v)\leq r$.
In this section, we only consider clustering graphs whose clusters have bounded radius.
For convenience, we let the vertex set $V_C$ of the clustering graph $C$ correspond to the center vertices of the clusters, where each vertex represents its respective cluster.
We suppose that each edge $\{c,c'\}$  in $C$ stores an actual edge in $G$ between two vertices that belong to $c$ and $c'$, respectively.
Furthermore, we suppose that each vertex $v$ in $G$ knows the cluster in $C$ to which $v$ belongs. 

The clustering graph compresses the original graph while preserving all the essential properties required for constructing spanners, and thus has been widely used as a powerful technical tool for building spanners~\cite{podc2021spanner,biswas2021massively}.
Especially, it is a well-known observation that any $\ell$-spanner of $C$ can be translated to an $O(\ell)$-spanner of $G$ when each cluster of $C$ has a radius one.
Note that if each cluster in $C$ has radius one, then there is a star subgraph in $G$ whose center is the center of the cluster and the leaves are the vertices belonging to the cluster.
Then when we have an $\ell$-spanner $H_C$ of $C$, we can obtain an $O(\ell)$-spanner $H$ of the original graph $G$ by replacing each vertex of $H_C$ (cluster of $C$) by the star graph.
Additionally, we can replace the edge between two clusters $c$ and $c'$ in $H_C$ by an actual edge in $G$ between two vertices belonging to the clusters of $c$ and $c'$, respectively.
Then the obtained graph is an $O(\ell)$-spanner.


\paragraph*{Sparsification of graphs.}
A deterministic algorithm to compute a spanner for the clustering graphs, was constructed by Leitersdorf~\cite{Leitersdorf22} in the Congested Clique model.
If we use it on the original input graph, the obtained spanner size is pretty larger than our desire.
However, after we compress the graph as a clustering graph, we can achieve the desired size.
We simulate this deterministic algorithm in the MPC model using the following lemma by setting $F$ as a clustering graph of $G$ with $M^{1-2/s}N^{2/s}=O(|V(G)|)$ and $M^{1/s}N^{1-1/s}=o(|V(G)|)$ for some parameter $s\geq 3$.
A detailed proof is in \Cref{sec:sparsification_lemma}.
\begin{restatable}{lemma}{LemSparsificationTool}\label{lem:sparsification_tool}
    Let $F=(V_F,E_F)$ be a graph with $N$ vertices and $M$ edges but no isolated vertex.
    For any positive $s\geq 3$, we can deterministically construct a $(2k-1)$-spanner for $F$ with {$O(M^{1/s}N^{1-1/s+1/k})$} edges in $O(1)$-rounds in the MPC model with $O(M^{1-2/s}\cdot N^{2/s})$ local space and $O(M)$ total space. 
\end{restatable}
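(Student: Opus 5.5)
The plan is to simulate the deterministic Congested Clique sparsification of Leitersdorf~\cite{Leitersdorf22} in MPC. The idea is to partition the vertex set of $F$ so that each induced piece fits on a single machine. On each machine we then run a sequential greedy $(2k-1)$-spanner~\cite{althofer1993sparse}. Concretely, I would fix a parameter $t:=(M/N)^{1/s}$ and deterministically split $V_F$ into $t$ groups $P_1,\ldots,P_t$ of roughly equal \emph{total degree}. This is done by sorting the vertices and using prefix sums of degrees (\Cref{lem:sorting_prefixsum_MPC}), which takes $O(1)$ rounds since $M=\poly(\text{local space})$. For every pair $(a,b)$ with $a\le b$ let $F_{a,b}$ be the subgraph on $P_a\cup P_b$ containing the edges between or inside these groups. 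Every edge of $F$ lies in exactly one $F_{a,b}$, and there are $O(t^2)$ such subgraphs.

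Next I bound the sizes. Because the groups are degree-balanced, the total degree of $P_a\cup P_b$ is $O(M/t+\Delta_{\max})$. The additive $\Delta_{\max}$ term is the technical nuisance, and I handle it by first splitting every vertex of degree above $M/t$ into virtual copies of degree at most $M/t$. This is standard and only changes the vertex count by $O(N)$ as long as $t\le M/N$. After that, each $F_{a,b}$ has $O(M/t)=O(M^{1-1/s}N^{1/s})$ edges. This is more than the allowed local space $M^{1-2/s}N^{2/s}$ by a factor $(M/N)^{1/s}$. So I would instead use a two-level partition: split the vertices into $t$ groups and the edges into $t$ further classes via a deterministic edge-ordering, giving pieces of size $O(M/t^2)=O(M^{1-2/s}N^{2/s})$. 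I route each piece (edges plus relevant vertex ids) to its own machine by sorting. The total space stays $O(M)$, since every edge lies in exactly one piece and the vertex-id overhead is $O(t^2\cdot N/t)\le O(M)$ for $s\ge 3$.

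Each machine then locally computes a greedy $(2k-1)$-spanner of its piece. The union is a $(2k-1)$-spanner of $F$, because the stretch condition for an edge is witnessed inside its own piece, and merging virtual copies back only shortens paths. For the size, the greedy spanner of a piece on $n'$ vertices has $O(n'^{1+1/k})$ edges. A piece touching groups $P_a,P_b$ has $n'=O(N/t)$ vertices, and there are $O(t^2)$ pieces (times the edge-class multiplicity). This gives
\[
O\big(t^2\cdot (N/t)^{1+1/k}\big)=O\big(t\,N^{1+1/k}\big)\le O\big(M^{1/s}N^{1-1/s+1/k}\big)
\]
with $t=(M/N)^{1/s}$, which matches the claim.

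I expect the main obstacle to be balancing the parameters so that three conditions hold simultaneously. Each piece must fit in $O(M^{1-2/s}N^{2/s})$ space, the number of pieces times $(N/t)^{1+1/k}$ must stay $O(M^{1/s}N^{1-1/s+1/k})$, and high-degree vertices must not concentrate a piece. If the naive pairwise partition does not directly meet the space bound, my first attempt is Leitersdorf's trick of pairing vertex groups with \emph{edge groups} (partitioning edges by ordered endpoint group, then subdividing each pair class into chunks of size $M^{1-2/s}N^{2/s}$). I would then verify that chunking inside a pair class does not increase the number of distinct vertices per chunk beyond $O(N/t)$, which is what the size computation above requires.
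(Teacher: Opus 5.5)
\textbf{Gap: the per-piece vertex bound.} Your overall plan matches the paper's: build $O(t^2)$ pieces, each with $O(M/t^2)$ edges on $O(N/t)$ vertices, run the greedy spanner on each piece locally, and take the union. The size computation, however, relies on the claim that every piece has $n'=O(N/t)$ vertices, and your construction does not give this. You form the groups $P_1,\dots,P_t$ by balancing total \emph{degree} only, and degree-balanced groups can be very unbalanced in vertex count. For example, let $F$ be a dense core plus a matching on $\Theta(N)$ vertices. All matching vertices together have total degree $O(N)\le O(M/t)$, so they can land in a single group, which then has $\Theta(N)\gg N/t$ vertices. Chunking a pair class cannot repair this, because a chunk's vertex set is bounded only by $|P_a|+|P_b|$. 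So the place you flagged for verification is not where the problem lies. The parenthetical ``times the edge-class multiplicity'' also hides trouble. Splitting each pair class into $t$ edge classes gives $t^3$ pieces and loses a factor $t$ in the size. Splitting a group's incident edges in an arbitrary order leaves the number of distinct other endpoints per piece uncontrolled, up to $M/t^2\gg N/t$. (Separately, the virtual-copy step is unnecessary: $M\ge N/2$ gives $M/t\ge N/2$, so every degree is below $2M/t$ and the $\Delta_{\max}$ term is already $O(M/t)$.)

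\textbf{How to fix it.} In your final ``chunk each pair class'' variant, simply take vertex-balanced groups of at most $N/t$ vertices. The chunking already controls the edge counts: there are at most $\binom{t+1}{2}+M/(M/t^2)=O(t^2)$ chunks, each on at most $2N/t$ vertices, and your computation $O(t^2(N/t)^{1+1/k})$ then goes through. Alternatively, keep the degree-balanced groups and argue by convexity. The chunks touching $P_a$ number $O(t+\mathrm{vol}(P_a)\,t^2/M)=O(t)$, and each has $O(|P_a|^{1+1/k}+|P_b|^{1+1/k})$ spanner edges. Since $\sum_a |P_a|^{1+1/k}\le N^{1+1/k}$, the total is $O(tN^{1+1/k})$. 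For comparison, the paper cuts its first-level groups by vertex count $N/p$ and degree sum $M/p+N$ simultaneously. For each group $S_i$ it then partitions the sorted neighbourhood $T_i$ by both vertex count and edge count. Once fixed, your global pair-class chunking needs only one sort by (group, group) plus prefix sums, so it is slightly simpler to implement and gives the same bounds.
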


\begin{proof}[Sketch of the proof]
    The algorithm is analogous to that of Leitersdorf~\cite{Leitersdorf22}.
    However, we generalize the prior work by introducing a parameter $s$, where the original lemma is formulated for the special case when $s=3$.
    Additionally, we give its implementation in the MPC model not only in the Congested Clique.
    Briefly, we let $p=\lceil(2M/N)^{1/s}\rceil$, and we use at most $O(p^2)$ machines with $O(M/p^2)$ local size.
    Note that this is at least linear local space, since $O(N)\leq O(M/p^2)= O(M^{1-2/s}N^{2/s})$, using that $M\geq \Omega(N)$ since there are no isolated verices.
    
    Our algorithm consists of two phases. 
    In the first phase, we distribute the given edges evenly across $O(p^2)$ machines so that: Each machine stores at most $M/p^2+2N$ edges while the induced subgraph of $F$ by the edges has at most $O(N/p)$ vertices.
    Therefore, we suppose that each of the $O(p^2)$ machines stores an induced subgraph.
    In the next phase, each of the $O(p^2)$ machines simultaneously runs the centralized algorithm~\cite{althofer1993sparse} to get a $(2k-1)$-spanner with $O((N/p)^{1+1/k})$ edges of its induced subgraph.
    Since the induced subgraphs form a decomposition of the whole graph $F$, the union of $(2k-1)$-spanners forms a $(2k-1)$-spanner of the graph $F$, of total size $O(M^{1/s}N^{1-1/s+1/k})$.
\end{proof}

In the following, we give the detailed algorithm computing a $O(k)$-spanner with $O(n^{1+1/k})$ edges by applying the dominating set algorithm of \Cref{thm:MPC hitting set intro: linear} and \Cref{lem:sparsification_tool}.

\subsubsection{Implementation in the MPC model}
The algorithm is the same as the previous one of~\cite{podc2021spanner}, except that we use the deterministic hitting set algorithm of \Cref{thm:MPC hitting set intro: linear} instead of the random sampling process.
Furthermore, we use the deterministic sparsification ~\cref{lem:sparsification_tool} instead of the randomized version.
Here, we give the implementation of the algorithm in the MPC model with $O(n)$ local space and $O(m)$ total space.
Initially, all the edge information is given arbitrarily distributed across the machines. 
For each edge $\{u,v\}$ stored in the machines, we replace it into two pairs $(u,v)$ and $(v,u)$.
By using the sorting of~\Cref{lem:sorting_prefixsum_MPC} to the $2m$ pairs corresponding to the edge set, we can gather the edges incident to a vertex into a single machine. 
Then we can compute the degree of every vertex, and thus, we partition the vertices according to their degree: $V= V_1 \cup V_2 \cup \cdots \cup V_{\lceil\log \Delta\rceil}$ such that $V_i := \{ v\in V \mid d_i \le \deg(v) < 2d_i\}$. It takes $O(1)$ rounds.
In the following, each vertex $v$ is stored as a pair $(v,x)$ with an index $x\in[\lceil \log \Delta\rceil]$ of which $v\in V_x$. Furthermore, we replace each pair $(v,u)$ corresponding to an edge $\{u,v\}$ as $(x,v,u)$ with $v\in V_x$.

\subparagraph*{Hitting sets $D_1\supseteq \ldots\supseteq D_{\lceil\log\Delta\rceil}$.}
We first show how to compute $d_i$-dominating sets $D_i'$ of $G$ with $d_i=2^{i-1}$ for all $i\in[\lceil\log \Delta\rceil]$. 
Here, $D_1',\ldots,D_{\lceil\log\Delta\rceil}'$ do not satisfy the monotonicity condition yet, however, we would construct the hitting set $D_i$'s from them.
We first aim to obtain sets $D_i'\subseteq V$ such that $(N(v)\cup\{v\})\cap D_i' \neq \emptyset$ for every vertex $v$ with $\deg(v)\ge d_i$. 
In other words, our goal is to solve the \emph{hitting set problem} defined on the sets $\{ N(v)\cup\{v\} \mid v\in V_i\}$, whose size is at least $d_i$ but at most $2d_i$, and the universe $U_i=V_i\cup N(V_i)$. Here, $N(V_i)$ is the set of vertices adjacent to $V_i$ in $G$. 
First, we consider the case that the instance is small, more precisely: $|V_i|d_i = O(n)$. In this case, the description of the problem fits in one machine. 
In this case,  we gather all the edge information in a single machine, using the sorting process~\Cref{lem:sorting_prefixsum_MPC} and compute a hitting set deterministically internally using $O(|V_i|d_i)$ total space with a greedy algorithm~\cite{LOVASZ1975383}. Note that since this machine is not full, it can still be shared with other instances. 

Next, we consider the larger case, $|V_i|d_i = \Omega(n)$. In this case, we use the algorithm of \Cref{thm:linear MPC hitting set}, \kyungjin{correct reference}
with $U=U_i$, $N=|V_i|$, $d=d_i$, and $L=|U_i|$. This outputs a hitting set $D_i'$ of size $O(|U_i|/d_i^{0.99}+|V_i|/d_i)\leq O(n/d_i^{0.99})$
\begin{itemize}
    \item in $O(1)$ rounds
    \item with $O(\sum_{v\in V_i}|N(v)|+|U_i|)=O(|V_i|d_i)$ total space and $O(|U_i|)\leq O(n)$ local space.
\end{itemize}
Note that the equality holds since the vertices in $V_i$ have a degree at most $2d_i$.

Although we need $[\lceil\log \Delta\rceil]$-many dominating sets, by \Cref{obs:parallelism}, we can construct them simultaneously in $O(1)$ rounds in the MPC model with $O(n)$ local space and $O(\sum_i |V_i|d_i)$ total space. 
Note that the total space $O(\sum_i |V_i|d_i)$ is at most $O(m)$ as following:
\begin{align*}
    2m=\sum_{v\in V} \deg(v)= \sum_{i\in[\lceil \log \Delta\rceil]}\sum_{v\in V_i}\deg(v) \geq \sum_{i\in[\lceil\log \Delta\rceil]}|V_i|d_i.
\end{align*}
We suppose that the algorithm returns all the information of $D_1',\ldots, D_{\lceil \log \Delta\rceil}'$ within a single machine.
Even if the information is distributed arbitrarily, we can redistribute it in a constant time using the sorting process~\Cref{lem:sorting_prefixsum_MPC}.
Furthermore, since each $D_j'$ has a size at most $O(n/d_j^{0.99})$, The total size $\sum_j |D_j'|=O(n)$ can be bounded by the the geometric series:
\begin{equation*}
    \sum_{j\in[\lceil \log\Delta\rceil]} |D_j'| \le \sum_{j\in[\lceil \log\Delta\rceil]} O(n/d_j^{0.99}) \le \sum_{j\in[\lceil \log\Delta\rceil]} O(n/2^{0.99(j-1)}) = O(n/2^{0.99\lceil \log\Delta\rceil})=O(n).
\end{equation*}

Now we have $d_i$-dominating set $D_i'$ for each $i$. 
We suppose that we have a tuple $((v,x),j)$ for each vertex $v\in V_x$ such that $j$ is the maximum index with $v\in D_j'$. Recall that we assumed at the start that each vertex is stored as a pair $(v,x)$ with $v\in V_x$.
If $v$ is not in any dominating set, then we set the index $j$ to zero.
For each $i\in[\lceil \log \Delta\rceil]$, we define $D_i$ as the prefix union $\bigcup_{j\geq i} D_j'$, so that $D_1,\ldots,D_{\lceil\log \Delta\rceil}$ are the dominating sets with monotonicity.
Then the pair $((v,x),j)$ immediately gives us the information so that $v\in D_i$ for every $i\leq j$.
Additionally, we replace the tuples $(x,v,u)$ corresponding to the edges as the tuple $(j,x,v,u)$ so that $j$ is the maximum index with $v\in D_j'$ and $v\in V_x$.
Note that we bound the size of $D_i$ using a geometric series:
\begin{equation*}
    |D_i| \le \sum_{j\ge i} |D_j'| \le \sum_{j\ge i} O(n/d_j^{0.99}) \le \sum_{j\ge i} O(n/2^{0.99(j-1)}) = O(n/2^{0.99i})=O(n/d_i^{0.99}).
\end{equation*}

\subparagraph*{Clustering graphs $C_1,\ldots, C_{\lceil\log \Delta\rceil}$.}
With respect to the constructed dominating set $D_i$ for $i\in[\lceil\log\Delta\rceil]$, 
we let {$G_i=(\bigcup_{j\geq i}V_j,E_i)$} be the subgraph of $G$ of which $V_i$ is the set of vertices of degree at least $d_i$ but less than  $2d_i$ and $E_i$ is the set of edges in $E$ whose both end vertices are in $\bigcup_{j\geq i}V_j$ but at least one is in $V_i$.
Then our goal is to construct the clustering graphs $C_i$ of $G_i$
of which the cluster centers are $D_i$ and the radii of all clusters are at most one.
Recall that we build $D_i=\bigcup_{j\geq i}D_j'$, where $D_j'$ is a dominating set of $V_j$, and thus, $D_i$ is a dominating set of $\bigcup_{j\geq i}V_j$.
Furthermore, each edges $\{u,v\}\in E_i$ with $v\in V_i$ are stored as two tuples $(j,i,v,u)$ so that $j$ is the maximum index with $v\in D_j'$. Therefore, we can redistribute the tuples of the edges so that the edges of $E_i$ are in consecutive machines.
{Furthermore, if $|E_i|\le O(n)$, we can gather all the information of edges in a single machine using the sorting process~\Cref{lem:sorting_prefixsum_MPC} and compute the simple clustering graph $C_i$ internally using $O(|E_i|)$ total space. Note that this machine might not be full, and the remaining space still be used for other computations simultaneously. If $|E_i|=\Omega(n)$, we use the algorithm of~\cite{podc2021spanner}, which computes such a clustering graph $C_i$ in $O(1)$ rounds with $O(|E_i|)$ total space and $O(n)$ local space. {Furthermore, we can remove the parallel edges in the clustering graph $C_i$ within constant rounds by \Cref{lem:sorting_prefixsum_MPC}. Briefly, we can gather the parallel edges within a single machine and remove all except one by sorting the edges labeled with the lexicographical ordering of the clusters at each endpoint of the edge.}}\kyungjin{Rev A-3} 
In conclusion, by~\Cref{obs:parallelism}, we can construct all simple clustering graphs $C_1,\ldots, C_{\lceil\log \Delta\rceil}$ simultaneously in $O(1)$ rounds with $O(n)$ local space and $O(m)$ total space.
In termination, we have clustering graphs $C_i$'s each $C_i$ has $O(n/d_i^{0.99})$ vertices and at most $O(|E_i|)$ edges.

Note that in the termination of the algorithm in~\cite{podc2021spanner}, a set of vertices in $G_i$ adjacent to $c$ is stored in a single machine for each vertex $c\in D_i$ of the clustering graph $C_i$.
Furthermore, the tuples $(i,(c,c'),(j,x,v,u))$ and $(i,(c',c),(k,y,v,u))$ are distributed across the machines for each $\{c,c'\}$ edge in the obtained $C_i$, where there is an edge $\{u,v\}\in E_i$ of which $u$ and $v$ are in $N(c)\cup\{c\}$ and $N(c')\cup\{c'\}$, respectively. Recall that we assumed that each edge $\{u,v\}$ is stored as two pairs $(j,x,u,v)$ and $(k,y,v,u)$ with the indices $x,y,j,k\in[\lceil \log\Delta\rceil]$ where $u\in V_x\cap D_{j'}$ and $v\in V_y\cap D_{k'}$ with $j'\leq j$ and $k'\leq k$.
The tuples can simulate the clustering graph $C_i$ of $G_i$.

\subparagraph*{Spanners of clustering graphs.} 
We have the clustering graphs $C_1,\ldots, C_{\lceil \log \Delta\rceil}$.
Our goal is to compute the $(2k-1)$-spanners $H_1,\ldots,H_{\lceil\log \Delta\rceil}$ of the clustering graphs and merge them.
If the complexity of the clustering graph $C_i$ is at most $O(n)$, that means $O(|E_i|)=O(n)$, then we can compute a $(2k-1)$-spanner $H_i$ for $C_i$ with $O(n^{1+1/k}/d_i^{0.99})$ edges in $O(1)$ rounds by collecting $E_i$ within a single machine with $O(n)$ local space and apply the centralized algorithm of~\cite{althofer1993sparse}.
Note that the central algorithm gives $O(N^{1+1/k})$ sized $(2k-1)$-spanner of a graph with $N$ vertices, and our clustering graph $C_i$ has $N=O(n/d_i^{0.99})$ vertices.

If $C_i$ has at least $\Omega(n)$ edges, we can use \Cref{lem:sparsification_tool} to compute the spanner $H_i$.
Recall that the lemma gets an input graph with $N$ non-isolated vertices and $M$ edges along with a constant $s\geq 3$, then it computes a $(2k-1)$-spanner of the input graph with $O(M^{1/s}N^{1-1/s+1/k})$ edges in the MPC model with $O(M^{1-2/s}N^{2/s}+N)$ local space and $O(M)$ total space.
Hereto, we set the parameter $s$ such that it balances the spanner size and the local space of the MPC model is small at the same time.
This occurs exactly when $s=3.98$ since $C_i$ has $O(n/d_i^{0.99})$ vertices and $O(|E_i|)=O(nd_i)$ edges.
Precisely, if the complexity of the clustering graph $C_i$ is at most $O(n)$, that means $O(|E_i|)=O(n)$, then we can compute a $(2k-1)$-spanner $H_i$ for $C_i$ with $O(n^{1+1/k}/d_i^{0.48})$ edges in $O(1)$ rounds by collecting $E_i$ within a single machine with $O(n)$ local space and apply the centralized algorithm of~\cite{althofer1993sparse}.
Otherwise, when the complexity of $C_i$ is at least $\Omega(n)$, then
\Cref{lem:sparsification_tool} returns a spanner $H_i$ for each $C_i$ with $O(n^{1+1/k}/d_i^{0.48})$ edges while satisfying two things: 
\begin{itemize}
    \item The total space is small: $O(|E_i|)$
    \item The local space is small: $O\left((nd_i)^{1-2/3.98}\cdot \frac{n^{2/3.98}}{d_i^{0.99\cdot 2/3.98}}+n\right)=O(n)$.
\end{itemize}

By \Cref{obs:parallelism}, we can compute all $H_i$'s simultaneously in the MPC model with $O(n)$ local space and $O(\sum_i|E_i|)=O(m)$ total space. 

\subparagraph*{$O(k)$-spanner of $G$.}Recall that each edge information of a clustering graph $C_i$ between $c$ and $c'$ in $D_i$ is given as a tuple $(i,(c,c'),(j,x,u,v))$, where $\{u,v\}$ is an actual edge in $E_i$ of which $u\in N(c)\cup\{c\}$ and $v\in N(c')\cup\{c'\}$.
Therefore, each edge of $H_i$'s also gives such tuples and vertex sets.
Then we can replace all the tuples of $H_i$'s as an edge $\{u,v\}$ in $E_i$.
Furthermore, for each vertex $u\in \bigcup_{x\geq i} V_x$, we add the edge $\{u,c\}\in E$ with $c\in D_i$ that is the last edge in the lexicographical ordering of the edge tuples $(k,y,c,u)$, and we call it an \emph{intra edge}.
Choosing the lexicographically last edge $\{u,c\}$ for $\overline H_i$ ensures that it is in $\overline H_j$ for every $j\leq i$.
Precisely, even if $u$ is a vertex in $\overline H_j$, there is a unique intra edge in $\overline H_j$ incident to $u$ that is $\{u,c\}$ due to the monotonicity $D_1\supseteq \ldots\supseteq D_{\lceil \log \Delta\rceil}$.
Recall that we assumed that each edge $\{u,v\}\in E$ is stored as two pairs $(j,x,u,v)$ and $(k,y,v,u)$ with the indices $x,y,j,k\in[\lceil \log\Delta\rceil]$ where $u\in V_x\cap D_{j'}$ and $v\in V_y\cap D_{k'}$ with $j'\leq j$ and $k'\leq k$.
Therefore, we can compute the last edge in the lexicographical order in $O(1)$ rounds by the sorting process~\Cref{lem:sorting_prefixsum_MPC}.

The translating process from $H_i$'s to $\overline H_i$'s increases the stretch of the spanners at most {constant times}~\cite{podc2021spanner}.\footnote{Precisely, in Section 3.1 of~\cite{podc2021spanner}, the reduction is proven from a spanner of a clustering graph to a spanner of the original graph.}
Therefore, it obtains $O(k)$-spanners $\overline H_1,\ldots,\overline H_{\lceil\log \Delta\rceil}$ of $G_1,\ldots,G_{\lceil\log \Delta\rceil}$, respectively.
Finally, we return the union of $\overline H_i$'s as a $O(k)$-spanner of $G$.

Even if the edges of $\overline H_i$'s are arbitrarily distributed across the machines, there are $O(n)$ edges incident to a vertex. Therefore, by sorting the edges, we can store all the edges incident to a single vertex in one machine in $O(\log_n m)=O(1)$ rounds~\Cref{lem:sorting_prefixsum_MPC}.
In conclusion, our algorithm is implemented in $O(1)$ rounds using $O(n)$ local space and $O(m)$ total space.

\paragraph*{Analysis.}
Note that each $E(\overline H_i)\setminus E(\overline H_{i+1})$ has at most $O(n^{1+1/k}/d_i^{0.48}+|V_i|)$ edges, and thus, we can obtain a subgraph $H$ of $G$ by union of the all edge sets of $\overline H_i$'s. 
Then the obtained $H$ has $O(n^{1+1/k})$ edges as follows:
\begin{align*}
    \sum_{i=1}^{\lceil\log \Delta\rceil} O(n^{1+1/k}/d_i^{0.48}+|V_i|)&=O(n^{1+1/k}\cdot \sum_{i=1}^{\lceil\log \Delta\rceil} \frac {1}{2^{0.48(i-1)}}+n)\\
    &\leq O(n^{1+1/k}+n)=O(n^{1+1/k}).
\end{align*}

We show that the obtained $H$ is an $O(k)$-spanner of $G$.
Precisely, for an edge $\{u,v\}\in E_i$ in $G$ but not in $H$, we show that the distance between $u$ and $v$ in $H$ is at most $O(k)$.
Here, we let $c(u)$ and $c(v)$ be the vertices in $D_i$ adjacent to $u$ and $v$, respectively.
Note that there is an edge between $c(u)$ and $c(v)$ in the clustering graph $C_i$, and thus, the distance between $c(u)$ and $c(v)$ is at most $O(k)$ in $H_i$.
Since we replace each cluster center of $H_i$ as a star graph, the distance between $u$ and $v$ in $\overline H_i$ is at most a constant times the distance between $c(u)$ and $c(v)$ in $H_i$, that is $O(k)$ as shown~\cite{podc2021spanner}. 
Therefore, in $\overline H_i$, the distance between $u$ and $v$ is at most $O(k)$, and thus, it holds in $H$ since it is a supergraph of $\overline H_i$.
In conclusion, our algorithm satisfies the following theorem.

\begin{restatable}{theorem}{ThmLinearMPCSpannerUnweighted}\label{thm:linear_MPC_spanner_unweighted}
    There exists a deterministic algorithm that, given a positive integer $k$ and an unweighted undirected graph $G$ of $n$ vertices, computes an $O(k)$-spanner with $O(n^{1+1/k})$ edges in $O(1)$ rounds in the linear MPC model with {$O(m)$ total space}.
\end{restatable}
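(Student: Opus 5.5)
The theorem summarizes the construction just described. I would prove it by verifying four things in turn: the degree-level decomposition, the dominating sets, the per-level spanners, and the final union.

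\textbf{Step 1: degree levels and dominating sets.} Partition $V$ into $V_1,\dots,V_{\lceil\log\Delta\rceil}$ by degree, sorting the $2m$ edge tuples with \Cref{lem:sorting_prefixsum_MPC}. Since $L=\Theta(n)$ and $m\le n^2$, this costs $O(\log_n m)=O(1)$ rounds. For each level $i$ with $|V_i|d_i=\Omega(n)$, invoke Part 1 of \Cref{thm:MPC hitting set intro: linear}. The instance has universe $U_i$, $N=|V_i|\le|U_i|\le n=\poly L$, and $d_i\le L$, so it yields a dominating set $D_i'$ of size $O(n/d_i^{0.99})$ in $O(1)$ rounds. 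Small levels are solved greedily on one machine. All levels run simultaneously by \Cref{obs:parallelism}, with total space $\sum_i|V_i|d_i\le 2m$. Taking suffix unions $D_i=\bigcup_{j\ge i}D_j'$ gives the monotone chain, and a geometric sum bounds $|D_i|=O(n/d_i^{0.99})$.

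\textbf{Step 2: clustering graphs and their spanners.} Build the radius-one clustering graphs $C_i$ of $G_i$ as in \cite{podc2021spanner}, deduplicating parallel edges by sorting. Each $C_i$ has $N_i=O(n/d_i^{0.99})$ vertices and $M_i=O(|E_i|)=O(nd_i)$ edges.
\begin{itemize}
\item If $M_i=O(n)$, compute the spanner centrally on one machine.
\item Otherwise, apply \Cref{lem:sparsification_tool} with $s=3.98$.
\end{itemize}
The main obstacle is this exponent bookkeeping, which must satisfy two conditions at once. First, the local space $M_i^{1-2/s}N_i^{2/s}$ must be $O(n)$. The $n$-exponents add to $1$, and the $d_i$-exponent is $1-2/s-0.99\cdot 2/s=1-3.98/s=0$. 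Second, the spanner size $M_i^{1/s}N_i^{1-1/s+1/k}$ must carry a factor $d_i^{-c}$ with $c>0$. Here the $d_i$-exponent is $1/s-0.99(1-1/s)$, up to $n^{1/k}$ slack, which is about $-0.48$. So $|E(H_i)|=O(n^{1+1/k}/d_i^{0.48})$. Again all levels run in parallel by \Cref{obs:parallelism}, within $O(n)$ local space and $O(m)$ total space.

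\textbf{Step 3: lifting and analysis.} Lift each $H_i$ to $\overline H_i$ by replacing each cluster edge with its stored $G$-edge and adding one intra-cluster edge per vertex. Choosing the lexicographically last such edge, together with monotonicity of the $D_i$, means only $O(n)$ intra edges are added overall. The total size is
\[
\sum_i O\bigl(n^{1+1/k}d_i^{-0.48}\bigr)+O(n)=O(n^{1+1/k}),
\]
since the first sum is geometric.

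For stretch, take an edge $\{u,v\}\in E_i$. Its clusters $c(u),c(v)$ are adjacent in $C_i$, hence at distance at most $2k-1$ in $H_i$. Each cluster has radius one, so $u$ and $v$ are at distance $O(k)$ in $\overline H_i\subseteq H$. Every edge of $G$ lies in some $E_i$, so $H$ is an $O(k)$-spanner.
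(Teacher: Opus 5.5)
Your proposal is correct and follows the paper's own proof essentially step for step. The shared steps are: the degree-level partition, with Part 1 of \Cref{thm:MPC hitting set intro: linear} for large levels and greedy on one machine for small ones; suffix unions giving the monotone chain $D_i$; radius-one clustering graphs handled either centrally or via \Cref{lem:sparsification_tool} with $s=3.98$ (your exponent check, $1-3.98/s=0$ for local space and roughly $-0.49\le -0.48$ for size, is exactly the bookkeeping the paper uses); and lexicographically-last intra edges, so the lifted spanners share intra edges and the union has $O(n^{1+1/k})$ edges and stretch $O(k)$.
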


\subsubsection{Sparsification Lemma}\label{sec:sparsification_lemma} 

\LemSparsificationTool*
\begin{proof}
Let $p=(2M/N)^{1/s}$ through this proof. Since $F$ has no isolated vertex, $p$ are at least one.
Furthermore, for any fixed constant $c$, $M/p^c$  is in $\Omega(N)$ due to $M/p^c=\Omega(M^{1-c/s}\cdot N^{c/s})\geq \Omega(N^{1-c/s}\cdot N^{c/s})=\Omega(N)$.
In this proof, we derive the MPC algorithm using $O(p^2)$ machines with $L=c(M/p^2+N)$ local space for some constant $c\geq 3$. 
Note that $N\le M$, so $N^{1-2/s}\le M^{1-2/s}$ and hence $N \le M^{1-2/s}N^{2/s}=M/p^2$, so can simplify the local space to $L= O(M/p^2)$.
Then \Cref{obs:parallelism} guarantees that it can work in any MPC model with $O(M)$ total space and $O(M/p^2)=O(M^{1-2/s}N^{2/s})$ local space, which completes the proof.

Our algorithm consists of two phases, like the algorithm of Leitersdorf~\cite{Leitersdorf22}.
In the first phase, we distribute the given edges so that they are evenly organized across $O(p^2)$ machines so that: Each machine stores at most $M/p^2+2N$ edges while the induced subgraph of $F$ by the edges has at most $O(N/p)$ vertices.
Therefore, we suppose that each of the $O(p^2)$ machines stores an induced subgraph.
In the next phase, each of the $O(p^2)$ machines simultaneously runs the centralized algorithm~\cite{althofer1993sparse} to get a $(2k-1)$-spanner with $O((N/p)^{1+1/k})$ edges of its induced subgraph.
Since the induced subgraphs form a decomposition of the whole graph $F$, the union of $(2k-1)$-spanners forms a $(2k-1)$-spanner of the graph $F$.
Furthermore, the obtained spanner has the desired size $O(M^{1/s}N^{1-1/s+1/k})$ as follows:
\begin{align*}
    O(p^2)\cdot O((N/p)^{1+1/k})&= O(p^{1-1/k}\cdot N^{1+1/k})\leq O(p\cdot N^{1+1/k})\\
    &\\
    &=O((Np^s)^{1/s}\cdot N^{1-1/s+1/k})\\
    &\leq O(M^{1/s}N^{1-1/s+1/k}).
\end{align*}
Note that the first inequality and the last equation holds by the definition of $p=(2M/N)^{1/s}$, which is at least one.
In conclusion, we can return all the edge information of the $(2k-1)$-spanner of $F$ at termination.
In the following, we implement our first phase since it is clear that the second phase can be done in one round.

\subparagraph*{Implementation of the edge distributing phase.}
Our goal is to distribute the given edges into $O(p^2)$ machines so that each machine stores at most $M/p^2+2N$ edges, while their induced subgraph has at most $O(N/p)$ vertices.
Initially, the edges $E_F$ of $F$ are arbitrarily distributed across the machines.

As a preprocessing step, we reorganize the edge information so that all the edges incident to a same vertex are stored in a single machine by the sorting algorithm of \Cref{lem:sorting_prefixsum_MPC} while all the edge information is stored in the first half consecutive machines.
For this, we assume an arbitrary ordering $\prec$ in the vertices $V_F$.
Each machine replaces the edge information $\{u,v\}\in E_F$ into two pairs $(u,v)$ and $(v,u)$.
Then we sort the $2M$ pairs in $O(\log_L M)=O(1)$ rounds along the lexicographical ordering by \Cref{lem:sorting_prefixsum_MPC} and $L\geq c(M/p^2+N)$ while $M\leq N^2$.
Note that it guarantees that the edges incident to one vertex are stored in consecutive machines.
After that, if one machine stores two pairs whose first elements are different, 
then it sends all the pairs $(u_{\min},\cdot)$ (and all the pairs $(u_{\max},\cdot)$ to its previous machine (and next machine), where $u_{\min}$ and the $u_{\max}$ are the smallest and largest first vertex of the pairs in the machine.  
After that, for each vertex $v$ of $V_F$, all the edges incident to the vertex are stored in a single machine, denoted by $\mathcal{M}_v$.
Note that it is possible that $\mathcal M_v$ and $\mathcal M_u$ are the same for $u\neq v\in V_F$. 
Therefore, we can compute and store the degrees for every vertex in one round at $\mathcal M_v$ in $O(1)$ rounds.
In the following, we describe how to decompose the edges into $O(p^2)$ machines.
Note that after the preprocessing, only the first half machines have some information but the others are empty.

We first decompose the vertex set into $O(p)$ subsets, each of which has at most $N/p$ vertices and is incident to at most $M/p+N$ edges.
Our goal is to inform to the machine $M_v$ which part in the decomposition belongs to $v$.
We claim to do this process in $O(1)$ rounds.
For this, we first fix the last empty machine $\mathcal M$ and broadcast its id to all machines $\mathcal M_v$ for $v\in V_F$ that are consecutive machines using ~\Cref{lem:sorting_prefixsum_MPC}.
Then each machine $\mathcal M_v$ sends the tuples $(v,\deg_F(v), \textsf{id}_v)$ to $\mathcal M$ for every $v\in V_F$, where $\textsf{id}_v$ is the id of the machine $\mathcal M_v$.
In the next round, $\mathcal M$ sorts the tuples with respect to $\prec$ ordering of the vertices.
Then it cuts the sorted sequence into $O(p)$ chunks $S_1,\ldots,S_{O(p)}$ so that each $S_i$ consists of at most $N/p$ vertices and the sum of $\deg_F(v)$ for all vertices $v$ in the chunk $S_i$ is at most $M/p+N$.
The existence is proven since adding one vertex into $S_i$ increases the size of $S_i$ by one while increasing $\sum_{v\in S_i}\deg_F(v)$ by at most $N$.
Furthermore, it gives a simple algorithm simulated in one round by $\mathcal M$ without any communication.
In the next round, $\mathcal M$ allocates the $O(p)$ consecutive empty machines $\mathcal N_i^1,\ldots\mathcal N_i^{O(p)}$ to each $S_i$ chunk.
Then $\mathcal M$ sends the information of two vertices $f(i)$, $t(i)$, and ids of $\mathcal M_{t(i)}$ and $\mathcal N_{i}^1$ to the machine $\mathcal M_{f(i)}$ for each $S_i$, where $f(i)$ and $t(i)$ are the first and the last, respectively, vertices in the $S_i$ chunk. 
Note that a vertex $v$ with $f(i)\preceq v\preceq t(i)$ is in $S_i$, and thus the machines $\mathcal M_v$ with $v$ in the same chunk $S_i$ are consecutive.
Therefore, each $\mathcal M_f(i)$ can broadcast the received messages to the machines $\mathcal M_v$ that belong to the chunk $S_i$ in $O(\log_L N/p)=O(1)$ rounds by \Cref{lem:sorting_prefixsum_MPC} and $L\geq c(M/p^2+N)$.
We can do the broadcasting simultaneously for every $i\in[O(p)]$ in $O(1)$ rounds by \Cref{obs:parallelism}.

In the following, we fix a chunk $S_i$, and we let $T_i$ be the set of vertices adjacent to a vertex belongs to $S_i$.
Although we do not know the exact vertices in $T_i$, we do not need the information here.
Our goal is to partition the vertex set $T_i$ into $O(p)$ subsets $T_i^1,\ldots,T_i^{O(p)}$ so that each $T_i^j$ has at most $N/p$ vertices while the number of edges between $S_i$ and $T_i^j$ is at most $M/p^2+2N$ edges.
Recall that the number of incident edges to $S_i$ is at most $M/p+N$, and thus, the process is analogous to the previous decomposition if we have the number of edges between a vertex $u$ and $S_i$ for every $u\in V$.
Furthermore, computing all the numbers can be done analogously to the preprocessing phase in $O(1)$ rounds as follows.
Recall that an edge incident to some vertex in $S_i$ is stored as a pair $(v,u)$ whose first element $v$ is a vertex of $S_i$.
Furthermore, each machine knows whether the first element $v$ of a pair $(v,u)$ in it belongs to $S_i$ or not. 
In the case that $v$ belongs to $S_i$, the machine also has the ids for the allocated $O(p)$ machines $\mathcal N_i^1,\ldots, \mathcal N_i^{O(p)}$.\footnote{Although the machine has the id of $N_i^1$ only, it can contact to any other machines in $\mathcal N_i^1,\ldots, \mathcal N_i^{O(p)}$ since they are consecutive machines.}
Therefore, analogously to the preprocessing phase, 
we can sort all of the pairs corresponding to incident edges to $S_i$ into $\mathcal N_{i}^1,\ldots,\mathcal N_{i}^{O(p)}$ along $\prec$ ordering of the second elements in $O(1)$ rounds.
Furthermore, we can also count the number of pairs $(v,u)$ whose first element $v$ is in the chunk $S_i$ for every vertex $u\in V$ simultaneously as we did in the preprocessing.
In conclusion, we can compute the decomposition of the vertex set into $O(p)$ subsets $T_i^1,\ldots,T_i^{O(p)}$ so that each $T_i^j$ has at most $N/p$ vertices while the number of edges between $S_i$ and $T_i^j$ is at most $M/p^2+2N$ edges.
Furthermore, we can inform to all $\mathcal N_i^1,\ldots, \mathcal N_i^{O(p)}$ so that 
for a pair $(v,u)$, the second element $u$ belongs to which part of $T_i^1,\ldots,T_i^{O(p)}$ as we did in the previous decomposition phase.

Although we described the implementation of decomposing the vertex set into $T_i^1,\ldots,T_i^{O(p)}$ by fixing one chunk $S_i$,
we can simultaneously do the process for every chunks $S_i$'s in a constant round by \Cref{obs:parallelism}.
Furthermore, since the machines $\mathcal N_i^1,\ldots, \mathcal N_i^{O(p)}$ are consecutive and each machine $\mathcal N_i^j$ knows for a pair $(v,u)$ in $\mathcal N_i^j$, the index $k\in [O(k)]$ with $u\in T_i^k$, the machines can send their pairs $(v,u)$ to the machine $\mathcal N_i^k$ with $v\in S_i$ and $u\in T_i^k$ in one round.
This is because the number of such pairs is at most $M/p^2+2N\leq L$.
Then at termination, each machine $\mathcal N_i^j$ has the information of all $M/p^2+2N$ edges between $S_i$ and $T_i^j$ for each $i\in [O(p)]$ and $j\in [O(p)]$.
Therefore, $\mathcal N_i^j$ has all the information of the edges and vertices of the induced subgraph $H_i^j$ of $F$ induced by the edges.
Since the vertices of $H_i^j$ is subset of $S_i\cup T_{i}^j$ of size at most $2N/p$, it has at most $2N/p$ vertices.
This completes the $O(1)$ rounds implementation in the MPC model for the edge distribution.
\end{proof}

\subsection{Weighted Spanners in Linear MPC and Congested Clique}\label{sec:weighted_spanner_linearMPC}
In this section, we extend our unweighted spanner algorithm to weighted graphs. Thus, we let $G$ be an edge weighted graph with $n$ vertices. Furthermore, we assume that the weights of edges are non-negative and the maximum weight is a polynomial in $n$.

There is a standard reduction from weighted graphs to unweighted graphs, by bucketing the edges according to weight into $O(\log n)$ graphs: $E_i := \{e\in E : 2^ i\le w(e) < 2^{i+1}\}$ for $i\in [0,O(\log n)]$. We then compute an $O(k)$-spanner $H_i$ on each of these as an unweighted graph. We observe that $H_i$ is a $2\cdot O(k)=O(k)$ spanner of the weighted graph $G_i$. Hence we obtain an $O(k)$-spanner of $G$ of size $O(n^{1+1/k}\log n)$ in constant rounds. Note that we need to run $O(\log n)$ algorithms in parallel now, leading to an $O(\log n)$ overhead in the number of machines \emph{and} the size. See also \cite{podc2021spanner}, where they detail this using their randomized algorithm.

However, there exist more involved reductions that do not incur a $\log n$-overhead. In particular, the following lemma shows that the deterministic Congested Clique reduction of Chechik and Zhang~\cite[Section 5]{chechik2022constant} also works in the MPC model. This reduction is a distributed implementation of a version in the sequential setting, see, e.g.,~\cite{alstrup2022constructing,elkin2018efficient,le2022near}.

\begin{lemma}\label{lm:weighted_spanner_reduction}
    Let $\eps \in (0,1/6]$ be a constant. Given a deterministic constant-round linear MPC algorithm $\A$ using $O(m+n)$ total space that computes an $\alpha$-spanner of size $O(n^{1+1/k})$ in an \emph{unweighted} graph, there exists a deterministic constant round linear MPC algorithm using $O\left(\frac{\log 1/\varepsilon}{\varepsilon}(m+n)\right)$ total space that computes an $(1+\eps)\alpha$-spanner of size $O\left(\tfrac{\log 1/\eps}{\eps}n^{1+1/k}\right)$ in a \emph{weighted} graph.
\end{lemma}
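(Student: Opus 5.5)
We follow the Chechik--Zhang reduction~\cite[Section 5]{chechik2022constant} and check that each step respects the linear MPC limits. First round all weights up to powers of $(1+\eps)$; this costs a factor $(1+\eps)$ in stretch. Then, with $c=\Theta(\log(1/\eps)/\eps)$, split the weight classes $E_j=\{e: w(e)\in[(1+\eps)^j,(1+\eps)^{j+1})\}$ into $c$ residue groups $\mathcal G_r=\{j : j\equiv r \pmod c\}$. Inside one group, consecutive classes differ in weight by a factor $(1+\eps)^c\ge 1/\eps^{2}$, and this factor is chosen large enough to dominate $\alpha$, which is at most $O(n)$ after trivial truncation. Each group $\mathcal G_r$ is handled separately, and the final spanner is the union over the $c$ groups. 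This accounts for the factor $O(\log(1/\eps)/\eps)$ in both the size and the total space.

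Fix a group and process its classes in increasing weight order $j_1<j_2<\dots$. When we reach class $j_t$, contract every connected component of the lighter edges $E_{j_1}\cup\dots\cup E_{j_{t-1}}$ of the group into a supervertex, and let $G_t$ be the resulting \emph{unweighted} multigraph on edges $E_{j_t}$, with parallel edges and self-loops removed. Any path inside a component has total weight at most $n\cdot(1+\eps)^{j_{t-1}+1}$. This is $\le \eps (1+\eps)^{j_t}$ once $c$ is large, so to keep $\eps$ independent of $n$ we use the standard refinement that only boundedly many lighter levels matter: components are built from spanner edges of lighter levels whose diameter is controlled. Running $\mathcal A$ on every $G_t$ and mapping edges back gives stretch $(1+O(\eps))\alpha$, and rescaling $\eps$ gives $(1+\eps)\alpha$. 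For the size, a supervertex is \emph{active} at level $t$ if it has an $E_{j_t}$-edge. The $n^{1+1/k}$ terms are charged using the fact that the active counts $n_t$ satisfy $\sum_t n_t = O(n)$: each contraction merges active components, and a standard forest-charging argument applies. Since $\sum_t n_t^{1+1/k}\le n^{1/k}\sum_t n_t$, the bound is $O(n^{1+1/k})$ per group.

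For the MPC implementation, all levels $t$ must run in parallel. The component labels of every prefix $E_{j_1}\cup\dots\cup E_{j_{t-1}}$ must therefore be computed simultaneously in $O(1)$ rounds, with total space linear in $m$. The plan is to compute one minimum spanning forest of the group, with weights equal to the class index, using Nowicki's deterministic $O(1)$-round linear MPC MST~\cite{nowicki2021deterministic}. Components of a prefix are exactly the components of the MSF restricted to edges of index $<j_t$. A Kruskal-tree (cartesian-tree) style labeling of the forest then gives each vertex its representative at every level in which it is active. This is the main obstacle: the labels must be stored only for active (vertex, level) pairs, so the space stays $O(m+n)$. The natural charging is one label per incident edge. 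Given these labels, each $G_t$ is built by relabeling endpoints, sorting to remove duplicates (\Cref{lem:sorting_prefixsum_MPC}), and running all copies of $\mathcal A$ simultaneously via \Cref{obs:parallelism}, whose total space is $\sum_t O(|E_{j_t}|+n_t)=O(m+n)$ per group. Mapping spanner edges back to original edges is a single sort, so the whole reduction takes $O(1)$ rounds.
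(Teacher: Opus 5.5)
Your grouping into $\Theta(\log(1/\varepsilon)/\varepsilon)$ residue groups and your size charging are fine. The stretch argument, however, has a genuine gap, located at the step you defer to a ``standard refinement''.

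\textbf{The gap.} Contracting \emph{entire} connected components of a group's lighter edges produces clusters of unbounded weighted diameter. Take a path of $n-1$ edges from one class $j_{t-1}$ (weight about $W$), and one edge $e$ of the next class $j_t$ of the same group joining the two ends of the path.
\begin{itemize}
\item At level $t$ the whole path is one supervertex, so $e$ becomes a self-loop and is discarded.
\item Every spanner of the lighter level must keep all path edges, since they are bridges.
\item Hence $e$ has stretch about $n/(1+\varepsilon)^c$, which is not $(1+\varepsilon)\alpha$ when $c$ is independent of $n$.
\end{itemize}
Restricting to ``boundedly many lighter levels'' or to ``spanner edges of lighter levels'' does not help, because the example uses a single lighter level made only of spanner edges. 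Forcing $(1+\varepsilon)^c\ge n$ would need $\Omega(\log n/\varepsilon)$ groups, which is exactly the $\log n$ overhead the lemma is meant to avoid. Relatedly, the separation between classes never needs to dominate $\alpha$, and with $c$ depending only on $\varepsilon$ it cannot. What is needed is that the clusters seen by level-$t$ edges have diameter $O(\varepsilon)$ times the level-$t$ weights.

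\textbf{The missing idea.} The paper, following Chechik--Zhang, contracts only bounded-diameter pieces of a minimum spanning tree.
\begin{itemize}
\item Within group $\sigma$, the classes $E^\sigma_i$ have weights in $[L_i/(1+\varepsilon),L_i)$ with $L_{i+1}=L_i/\varepsilon$.
\item It computes one MST $T$ of $G^\sigma$ and builds $T=T_0,T_1,\dots$. To get $T_{i+1}$, each non-singleton tree of the forest $F_i$ (tree edges of class at most $i$) is split into at most half as many vertex-disjoint stars, and each star is contracted.
\item Star contraction gives cluster diameters $D_{i+1}\le 3D_i+2L_i$. Since $1/\varepsilon\ge 6$, the factor-$3$ growth per level is dominated by the factor-$1/\varepsilon$ growth of the weights, so $D_i=O(\varepsilon L_i)$, i.e.\ $O(\varepsilon)$ times the weight of every edge of $E^\sigma_i$.
\item The halving still yields $\sum_i |V_i|=O(n)$ active supervertices, which gives the size and total-space bounds.
\item $T$ itself is included in the output, so the intra-cluster paths exist in the spanner.
\end{itemize}

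\textbf{The MPC implementation.} What you call the main obstacle disappears in linear MPC. $T$ has $O(n)$ edges, so the paper broadcasts it to every machine in $O(1)$ rounds, and each machine computes the whole contraction hierarchy locally. No distributed Kruskal-tree labeling is needed.
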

\begin{proof}
    We divide the edge set into $\mu_\eps := \ceil{\log_{1+\eps} 1/\eps}$ sets $E^1, \dots, E^{\mu_\eps}$. Each such set is defined as the union $E^\sigma:=\bigcup_{i\ge 1} E^\sigma_i$, where
    \begin{equation*}
        E^\sigma_i := \{e\in E : \tfrac{L_i}{1+\eps} \le w(e) < L_i\}\text{, where }L_i:= L_0/\eps^i,\ L_0:= (1+\eps)^\sigma. 
    \end{equation*}

    We show how to compute a spanner $H^\sigma$ of $G^\sigma =(V,E^\sigma)$ of size $O(n^{1+1/k})$, for each $\sigma\in [\mu_\eps]$. Then we output $H=\bigcup_\sigma H^\sigma$, which has size $O(\mu_\eps n^{1+1/k})= O\left(\tfrac{\log 1/\eps}{\eps}n^{1+1/k}\right)$. 

    So now we consider $G^\sigma$. Before we describe the algorithm, consider the following claim. 

    \begin{claim}[Lemma 5.1 from \cite{chechik2022constant}]\label{claim:stars_in_tree}
        For any tree $T = (U,F)$ that contains more than a single vertex, it can be divided into a set of at most $|U |/2$ vertex-disjoint star subgraphs, and the computation is done in linear time $O(|U|)$ under the classical centralized model.
    \end{claim}

    Compute a minimum spanning tree $T\subseteq G^\sigma$ deterministically in constant rounds~\cite[Corollary~2.4]{nowicki2021deterministic}. Note that if $G^\sigma$ is not connected, we can make it connected by adding some virtual edges of large weight. Since $T$ is a tree, it is of size $O(n)$ and we can make it known to every machine in $O(1)$ rounds as follows.
    In total, we only need $O(n)$ memory to describe the tree. In $1$ round, every machine can send the edges it knows after computing $T$ to a designated leader. This leader than distributes all tree edges over the machines. There are $m/n$ machines, each with memory $O(n)$. The leader sends each machine $n/(m/n)$ tree-edges in $1$ round. Now each machine sends its $n/(m/n)$ edges to all $m/n$ other machines in $1$ round, hence every machine receives and sends $O(n)$ edges in total -- as required.  
    

    Now each machine internally creates the following sequence of trees $T= T_0, T_1, T_2, \dots, T_{O(\log n)}$. Each tree $T_{i+1}$ is some contraction of the previous tree $T_{i}$, defined by the following inductive procedure. Consider the forest $F_i$ spanned by the edges $E(T_i) \setminus \bigcup_{j>i}E_j^\sigma$. For each non-singleton tree $T'\in F_i$, apply \Cref{claim:stars_in_tree}, which divides $T'$ into at most $|V(T')|/2$ stars. Then for each such star, we contract it into a single vertex. The tree resulting from these contractions is the next tree $T_{i+1}$. Note that this can be stored in $O(n)$ space, since each tree has size  $|V(T_{i+1})|\le |V(T_i)|/2$. So $\sum_i |V(T_i)|= O(n)$. 

    For each $i\ge 0$, let $V_i \subseteq V(T_i)$ denote all the vertices from $F_i$ which are not single vertices. Define $G_i^\sigma:= (V_i,E_i^\sigma)$, where edges contracted within the same vertex are removed, and parallel edges are removed. Now, we want to run algorithm~$\A$ on this graph to compute an unweighted $\alpha$-spanner $H_i^\sigma$ using $O(|E_i^\sigma|+|V_i|)$ total space. We first argue the total space and then detail how to formulate the input and execute $\A$. 

    Next, we show that any edge $e\in E$ appears in at most two sets $E_i^\sigma$ (ranging over $i$ \emph{and} $\sigma$). Suppose $e\in E_i^\sigma$, consider any other set $E_j^\tau$. We show that there is only one possible combination of $j$ and $\tau$, such that $e\in E_j^\tau$. To investigate this, let $j$ and $\tau$ be arbitrary, and suppose $e\in E_j^\tau$. If $i=j$, $\{E_i^\sigma\}_\sigma$ clearly forms a partition.  
    So, without loss of generality, we assume $j<i$. For the two regions to overlap, we must have that 
    \begin{align*}
        \frac{(1+\eps)^{\tau}}{\eps^j} \ge \frac{(1+\eps)^{\sigma-1}}{\eps^i}
    \end{align*}
    

    This implies that $\eps^{i-j} \ge (1+\eps)^{\sigma-\tau -1}$ and hence
    $i-j \le (\sigma-\tau -1)\log_\eps (1+\eps)$ (since $\eps<1)$.
   This implies that (note that $\log_\eps (1+\eps)$ is a negative number)
    \begin{align*}
        \sigma-\tau-1 < (i-j)/\log_\eps (1+\eps)= (i-j)\log_{1+\eps}\eps ;
    \end{align*}
    or equivalently $\tau-\sigma +1 >  (i-j)\log_{1+\eps}(1/\eps)$.
    Simultaneously, we have $\tau-\sigma +1 \le \mu_\eps +1-1 = \lceil \log_{1+\eps}(1/\eps)\rceil $. We immediately see that for $i\ge j+2$ we get a contradiction. For $i=j+1$, we need $\tau =\mu_\eps$ and $\sigma=1$. This means that only edges from $E_i^{1}$ could also appear in $E_{i-1}^{\mu_\eps}$, and all other edges appear exactly once. 

    
    Hence, we have that $\sum_i O(|E_i^\sigma|+|V_i|)=O(m+n)$.
    Therefore, we can compute all $H_i^\sigma$'s for each $i$ and $\sigma\in[\mu_\varepsilon]$ in parallel on the linear MPC model with $O(\mu_\varepsilon \cdot (m+n))$ total space. 
    We output $H^\sigma:= \bigcup_i H_i^\sigma$. And finally output $H= T \cup \bigcup_\sigma H^\sigma$. 

    Next, we elaborate on how we deploy algorithm $\A$ on $G_i^\sigma$ to compute $H_i^\sigma$. First, we can resort all edges using \Cref{lem:sorting_prefixsum_MPC} to delete parallel edges within $G_i^\sigma$. Next, we use \Cref{obs:parallelism} to run all calls of $\A$ for all $i,\sigma$ in parallel in a constant number of rounds. 


    To show that $H$ indeed has stretch $(1+\eps)\alpha$ with $O\left(\tfrac{\log 1/\eps}{\eps}n^{1+1/k}\right)$ edges, we refer to \cite[Lemma~5.4 and Corollary~5.1]{chechik2022constant}. 
\end{proof}

Combining \Cref{lm:weighted_spanner_reduction} with \Cref{thm:linear_MPC_spanner_unweighted}, we obtain the following result. 

\CorLinearMPCSpannerWeighted*

\subsection{Weighted Spanners in Sublinear MPC}\label{sec:weighted_spanner_sublinear}
In this section, we let $G$ be an edge weighted graph with $n$ vertices. 
Our goal is to derandomize the spanner construction algorithm of Biswas, Dory, Ghaffari, Mitrović, and Nazari~\cite{biswas2021massively} in the sublinear MPC model.
The following theorem summarizes this section. 
\ThmSublinearMPCSpannerSimplified*

Notably, the authors said that their randomized algorithm returns a spanner with high probability by the algorithm in the near-linear total space $\tilde O(m)$, instead of $O(m+n)$ total space
~\cite[Theorem 1.1]{biswas2021massively}.
    This is because they used expected size analysis for the obtained spanner by their randomized algorithm, and then they turn the expected size guarantee into a high probability bound using a Chernoff-bounds argument.
    To do this, they allowed $O(\log n)$ overhead on the overall memory to repeat the randomized algorithm in parallel.
    Here, we design a deterministic algorithm, and thus, we do not need the overhead for the repetition. 
    Therefore, our algorithm works with $O(m+n)$ total space.

Briefly, \cite{biswas2021massively} gave a hierarchical algorithm to compute a clustering graph and merge adjacent clusters by computing a hitting set obtained by random sampling.
Here, we adapt our deterministic hitting set algorithm instead of the sampling algorithm to derandomize their algorithm.
Precisely, we use the hitting set algorithm that gives a suboptimal size in constant rounds as illustrated by the first parts of \Cref{thm:MPC hitting set intro: linear,thm:MPC hitting set intro: sublinear}.
Here, the suboptimal size of the hitting set computed at each of the hierarchy levels does not affect either the final computed spanner size or the total round complexity asymptotically.
This is because we rebalance the depth of the hierarchical structure of the algorithm, which has a trade-off between the total round complexity and the returned spanner size.
Briefly, we increase the hierarchical levels until we reach the desired spanner size, but do not affect the total round complexity.

\paragraph*{Randomized of algorithm~\cite{biswas2021massively}.}
The randomized algorithm initially starts with the clustering graph $G^{(0)}=G$, where each cluster corresponds to a single vertex of $G$.
Furthermore, it sets $\textsf{E}=E(G)$ and $E_S=\emptyset$.
During the process, the algorithm removes some edges from $\textsf E$ while inserting some edges into $E_S$, and finally returns a spanner consisting of the edges $E_S$.
The algorithm recursively repeats $\ell$ \emph{epochs}, where each epoch also recursively performs $t$ \emph{iterations}.
We define the parameters $\ell$ and $t$ later with respect to $k$ and $\varepsilon$ to compute a spanner with stretch $O(k^{1+\varepsilon})$. 
The $i$th epoch grows the cluster by merging clusters in $G^{(i-1)}$ to obtain another clustering graph $G^{(i)}$.
Recall that each vertex, that is a cluster, of a clustering graph $G^{(i)}$ corresponds to its center vertex and an edge $\{c,c'\}$ in $G^{(i)}$ corresponds to an edge in $G$ whose two end vertices belong to the clusters of $c$ and $c'$, respectively.
During the algorithm, we suppose that the edge $\{c,c'\}$ corresponds to the minimum weight edge between two clusters of $c$ and $c'$ in the current remaining edge set $\textsf E$.
Furthermore, the weight of an edge $\{c,c'\}$ is the weight of its corresponding edge in $\textsf E$. 

Each $i$th epoch has two phases. {The first phase inserts edges to $E_S$ by $t$ iterations of Baswana-Sen~\cite{baswana2007simple} algorithm on $G^{(i-1)}$ with probability $p(i)$, that is an algorithm constructing a $(2t-1)$-spanner of $G^{(i-1)}$ by selecting $O(tn_{i-1}/p(i))$ edges, with high probability, in $t$ iterations.
The second phase contracts the clusters formed by the selected edges during the first phase, and returns the obtained contracted $G^{(i)}$.} 

The Baswana-Sen algorithm for $i$th epoch runs on $G^{(i-1)}$ and decomposes it into clusters in iterations.
It initializes the set $C_0$ of clusters as the family of the singletons $\{v\}$  for every $v\in V(G^{(i-1)})$. 
Then, at the $j$th iteration of the $i$th epoch, it obtains the clusters for $C_j$ from the input $C_{j-1}$ by merging some clusters.
In the following, we refer to the nodes on $G^{(i-1)}$ as \emph{super-nodes},  while the \emph{clusters} refer to the clusters in $C_j$'s. 

Each $j$th iteration randomly samples each cluster from $C_{j-1}$ with probability $p(i)$ to construct a set $D_j\subseteq C_{j-1}$. Then for each super-node $v$ in $V(G^{(i-1)})$, let $c^*$ be the closest cluster among $D_j$ that is adjacent to $v$ in $G^{(i-1)}$ connected by the current edges in $\textsf E$.\footnote{We define a proper $p(i)$ later, that is related to the size of the output spanner and round complexity of the algorithm.}
Then 
we add the shortest edge $\{u,u'\}\in \textsf E$ to $E_S$ that is between two vertices $u$ and $u'$ belonging to the super-node $v$ and the cluster $c^*$, respectively, while deleting all the edges between $v$ and $c^*$ from $\textsf E$.
Furthermore, for every adjacent, but not selected, clusters $c\in C_{j-1}\setminus D_j$ to the super-node $v$ in $G^{(i-1)}$ by an edge in $\textsf E$ shorter than $w(u,u')$ to $v$, we analogously insert the shortest edge in $\textsf E$ between the super-node $v$ and the cluster $c$ to $E_S$, and remove all the edges between $v$ and $c$ from $\textsf E$.
If there is no adjacent cluster in $D_j$ to $v$, then we insert the shortest edges in $\textsf E$ between the super-node $v$ and the cluster $c$ to $E_S$ for every adjacent cluster $c\in C_{j-1}$ to $v$ in $G^{(i-1)}$, and remove all the edges incident to $v$ from $\textsf E$.
The $j$th clusters of $C_j$ are formed by taking each cluster in $D_j$, and then extending it using the edges newly inserted to $E_S$ during the $j$th iteration, to absorb the super-nodes connected to $D_j$ by $E_S$.
Note that during each iteration, we do not contract the super-nodes in $G^{(i-1)}$, we just extend the clusters $C_j$'s step by step.

After $t$ iterations, the $i$th epoch terminates and returns $G^{(i)}$ obtained by contracting each cluster in $C_t$ from $G^{(i-1)}$. 
Then the super-nodes in $G^{(i)}$ correspond to the clusters in $C_t$.
If $G^{(i)}$ has parallel edges due to the contracting phase, we only save the shortest edge and remove the others from $G^{(i)}$ and $\textsf E$.

After $\ell$ epochs, in the obtained $G^{(\ell)}$, we consider all remaining edges in $\textsf E$ for the \textbf{Final phase}.
For each super-nodes $v$ and $u$ in $G^{(\ell)}$, if there is a remaining edge in $\textsf E$ between two vertices belonging to the super-nodes $v$ and $u$, respectively, then we choose the shortest edge among them and insert it into $E_S$.

The quality of the output $E_S$ is guaranteed by the following lemma.
Precisely, we can obtain \Cref{lem:analysis_spanner_sublinear_random} and \Cref{cor:parameters_sublinear_spanner} by restating the previous lemmas and theorems of~\cite[Lemma 4.12, Theorem 5.11, and Lemma 5.14]{biswas2021massively}.
{The original statements in~\cite{biswas2021massively} described the special case when $\ell=\lceil \log_{t+1} k\rceil$ and $p(i)=q(i)=n^{-\frac{(t+1)^{i-1}}{k}}$ for $i\in[\ell]$ which are satisfying all inequalities of \Cref{cor:parameters_sublinear_spanner}.
The authors showed that such $p(i)$'s and $q(i)$'s satisfy the conditions of \Cref{lem:analysis_spanner_sublinear_random} with high probability for all $i\in [\ell]$, and thus, the final returned subgraph induced by $E_S$ is a spanner with $O(t\cdot n^{1+1/k})$ edges.
}
\begin{lemma}\label{lem:analysis_spanner_sublinear_random}
    By repeating $\ell$ epochs with $t$ iterations,
    the subgraph of $G$ induced by the finally obtained edges in $E_S$ is an $O((2t+1)^{\ell})$-spanner.
    Furthermore, if there are $0<p(i),q(i)\leq 1$ for all $i\in[\ell]$ so that
    during every $j$th iteration of $i$th epoch:
    \begin{itemize}
        \item The size of $D_j$ (and $C_j$) is at most $O(|C_{j-1}|\cdot q(i))$, and
        \item For each $v\in G^{(i-1)}$ adjacent to at least $1/p(i)$ clusters of $C_{j-1}$ in $G^{(i-1)}$, $D_j\cap S_v\neq \emptyset$, 
    \end{itemize}
    where $S_v$ is the set of $1/p(i)$ closest adjacent clusters of $v$ among $C_{j-1}$, then the following holds:
\begin{itemize}
    \item The $i$th epoch returns $G^{(i)}$ with  $n_i=O\left(n\cdot \left(\prod_{x=1}^{i}(q(x))^{t}\right)\right)$ super-nodes,
    \item The $i$th epoch inserts $O\left(t\cdot n_{i-1}/p(i)\right)$ edges to $E_S$, and
    \item The \textbf{Final phase} inserts $O(n_\ell^2)=O\left(\left(n\cdot \left(\prod_{x=1}^{\ell}(q(x))^{t}\right)\right)^2\right)$ edges to $E_S$.
\end{itemize}
\end{lemma}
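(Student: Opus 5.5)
We prove the three parts of \Cref{lem:analysis_spanner_sublinear_random} separately: the stretch bound holds for every run of the algorithm, and the size bounds use only the two deterministic hypotheses (the bound on $|D_j|$ and the hitting property for $S_v$). In particular, no randomness enters the argument once these two conditions are fixed. This is what lets us later substitute our deterministic hitting sets from \Cref{thm:MPC hitting set intro: linear,thm:MPC hitting set intro: sublinear} for the sampling of \cite{biswas2021massively}.

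\emph{Stretch.} We argue by induction on the epoch index $i$, keeping two invariants. First, every cluster of $G^{(i)}$ has radius at most $R_i$ in the graph $(V,E_S)$, measured in the weighted sense relative to the edges it absorbed. Second, every edge of $G$ removed from $\textsf E$ by the end of epoch $i$ has stretch at most $O((2t+1)^i)$ in $(V,E_S)$.

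Within an epoch we follow the standard Baswana--Sen analysis on the contracted graph $G^{(i-1)}$. Each iteration extends a cluster of $C_j$ by one super-node via one inserted edge. Hence after $t$ iterations a cluster in $C_t$ has super-node radius at most $t$, and its diameter in $G$ is at most $(2t+1)$ times the old diameter bound (which gives $R_i \le (2t+1)R_{i-1}+O(\cdot)$). When all edges between a super-node $v$ and a cluster $c$ are deleted, we have inserted an edge from $v$ to $c$ or to a cluster $c^*$ no heavier than them. So a deleted edge $\{u,u'\}$ is bridged by a path through at most two cluster diameters plus edges of weight at most $w(u,u')$. The final phase keeps one shortest edge per pair of super-nodes, which bridges any remaining edge in the same way. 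Unrolling the recurrence over the $\ell$ epochs gives stretch $O((2t+1)^\ell)$. Throughout we must check that ``shortest edge'' ordering is compatible across epochs; we handle this exactly as in \cite[Lemma 4.12]{biswas2021massively}.

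\emph{Size.} By the first hypothesis, $|C_j|\le |D_j|\le O(|C_{j-1}|\,q(i))$. Iterating $t$ times gives $n_i=|C_t|=O(n_{i-1}q(i)^t)$, and telescoping over the epochs gives the claimed bound on $n_i$.

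For the edges inserted in one iteration, fix a super-node $v$ and consider two cases. If $v$ has fewer than $1/p(i)$ adjacent clusters, it inserts at most $1/p(i)$ edges. Otherwise, by the hitting hypothesis, $D_j$ meets the set $S_v$ of its $1/p(i)$ closest clusters. Then $c^*$ lies among these closest clusters, and $v$ inserts edges only to clusters strictly closer than $c^*$, together with $c^*$ itself, i.e., at most $1/p(i)$ edges. Summing over the $n_{i-1}$ super-nodes and the $t$ iterations gives $O(t\,n_{i-1}/p(i))$ edges for the epoch. The final phase keeps at most one edge per pair of super-nodes of $G^{(\ell)}$, hence $O(n_\ell^2)$ edges.

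The main obstacle is the stretch invariant across contractions. Edges inside a super-node of $G^{(i-1)}$ must be represented by paths in $E_S$, and the weight comparison ``$c^*$ is closest'' must translate into a bound on actual $G$-edge weights. We would carry this out by reproducing the radius bookkeeping of \cite{biswas2021massively}, since that argument never uses randomness.
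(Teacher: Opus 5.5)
Your proposal matches the paper's proof sketch essentially step for step: $n_i$ comes from iterating $|C_j|=|D_j|=O(|C_{j-1}|\,q(i))$ over the $t$ iterations and telescoping across epochs, the per-epoch edge count comes from the hitting property forcing at most $1/p(i)$ insertions per super-node per iteration, the final phase keeps one edge per pair of super-nodes, and the stretch is deferred to the Baswana--Sen-style analysis of Biswas et al.\ with a factor $(2t+1)$ per epoch. The deferred invariant should be phrased as the paper quotes it: a deleted edge of weight $w$ is bridged by a path of at most $2t+1$ super-edges, each of weight at most $w$. Cluster ``radius'' must therefore be counted in hops along edges no heavier than $w$, not as an absolute weighted diameter; your ``at most two cluster diameters plus edges of weight at most $w$'' only gives a multiplicative stretch under that reading.
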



\begin{proof}[Sketch of the proof]
    Here, we consider the simple case with $t=1$.
    The $i$th epoch starts with $G^{(i-1)}$ with $n_{i-1}$ super-nodes and computes $D$ of size $O(n_{i-1}\cdot q(i))$.
    With respect to the clusters in $D$, 
    we insert at most $1/p(i)$ edges to $E_S$ per super-node of $G^{(i-1)}$, and we obtain $G^{(i)}$.
    Therefore, the number $n_i$ of the super-nodes in $G^{(i)}$ is at most $O(n_{i-1}\cdot q(i))$, inductively, it is at most $O(n\cdot q(i)q(i-1)\dots q(1))$.
    And the number of edges inserted by $i$th epoch is $O(n_{i-1}/p(i))$.
    After $\ell$ epochs, since $G^{(\ell)}$ has $n_\ell$ super-nodes, the \textbf{Final phase} inserts $O( n_\ell^2)$ edges. 
    
    For $t\geq 2$, each epoch $t$ times reduces the size of $D_j$'s recursively so that the $G^{(i)}$ has at most $O(n_{i-1}\cdot (q(i))^t)$ nodes by inserting $O(t\cdot n_{i-1}/p(i))$ edges to $E_S$.

    
    For the stretch, as proven in the previous paper~\cite[Lemma 5.8 and Theorem 5.11]{dory2024massively}, an edge with weight $w$ is removed during the $i$th epoch between two super-nodes $u$ and $v$ in $G^{(i-1)}$ only if there is a path between them in $G^{(i-1)}$ consisting of at most $(2t+1)$ edges of weight at most $w$.
    Analogously, for the \textbf{Final phase}, if an edge was removed, then the two end super-nodes in $G^{(\ell)}$ have a shorter edge which is contained in $E_S$.
    These claims guarantee that the recursive $\ell$ epochs give at most $O((2t+1)^\ell)$ stretch.
    \end{proof}

 {The lemma implies that when we increase the parameters $t$ and $\ell$, then the obtained spanner has a larger stretch while the number of edges inserted by \textbf{Final phase} is decreased.
 Furthermore, even if $q(i)$ is slightly increased, by rebalancing the $p(i)$'s and repeating $t$ iterations, we can still guarantee that a sufficiently small portion of clusters remain at the end of the epoch.
 Therefore, the obtained spanner is sufficiently small. 
 \Cref{cor:parameters_sublinear_spanner} formalizes it.}
 
\begin{corollary}\label{cor:parameters_sublinear_spanner}
    Let $k> 1$ be a constant. The number of edges in $E_S$ is at most $O(t\cdot n^{1+1/k})$ if:
     \begin{enumerate}
        \item $\sum_{i=1}^\ell\prod_{x=1}^{i-1}(q(x))^{t}/p(i)\leq n^{1/k}$,
        \item $\prod_{x=1}^{\ell}(q(x))^{t}\leq n^{-0.5+1/(2k)}$.
    \end{enumerate}
\end{corollary}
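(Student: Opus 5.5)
The corollary follows by plugging the two conditions into the three size bounds of \Cref{lem:analysis_spanner_sublinear_random}. The edge set $E_S$ gets contributions from only two sources. The first is the $\ell$ epochs, where epoch $i$ inserts $O(t\cdot n_{i-1}/p(i))$ edges. The second is the \textbf{Final phase}, which inserts $O(n_\ell^2)$ edges. I will bound each separately and then add them.

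For the epochs I use the bound $n_{i-1}=O\bigl(n\prod_{x=1}^{i-1}(q(x))^t\bigr)$ from \Cref{lem:analysis_spanner_sublinear_random}, with the empty product equal to $1$, so that $n_0=n$. Summing over the epochs gives
\begin{align*}
\sum_{i=1}^{\ell} O\Bigl(t\cdot n\prod_{x=1}^{i-1}(q(x))^t/p(i)\Bigr) = O\Bigl(t\cdot n\sum_{i=1}^{\ell}\prod_{x=1}^{i-1}(q(x))^t/p(i)\Bigr) \le O(t\cdot n^{1+1/k}),
\end{align*}
where the last step is condition~1. The one point that needs care is that the hidden constant must be uniform in $i$. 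When I write $n_{i-1}=O(\cdot)$, the constants from the hypotheses on $|D_j|$ could in principle compound across the $t\ell$ iterations. I would handle this by folding those constants into $q$: either assume the bounds hold with constant $1$, or replace $q(x)$ by $c\,q(x)$ in the statement of the conditions. This is exactly how the corollary will be applied later, with $q$ set to the hitting-set size ratio, so it costs nothing there.

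For the \textbf{Final phase}, condition~2 gives $n_\ell = O\bigl(n\prod_{x=1}^{\ell}(q(x))^t\bigr) \le O(n\cdot n^{-1/2+1/(2k)}) = O(n^{(1+1/k)/2})$. Squaring, $n_\ell^2=O(n^{1+1/k})$.

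Adding the two contributions gives $|E_S| = O(t\cdot n^{1+1/k}) + O(n^{1+1/k}) = O(t\cdot n^{1+1/k})$. The only nontrivial step is the constant-compounding issue in the epoch bound; everything else is direct substitution.
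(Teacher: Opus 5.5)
Your proof is correct and follows the paper's argument exactly. You sum the per-epoch bounds $O(t\,n_{i-1}/p(i))$ with $n_{i-1}=O\bigl(n\prod_{x=1}^{i-1}(q(x))^t\bigr)$ and invoke Condition~1, then use Condition~2 to get $n_\ell=O(n^{1/2+1/(2k)})$, so the Final phase contributes $O(n^{1+1/k})$. Your concern about hidden constants compounding over the $t\ell$ iterations is a subtlety the paper simply absorbs into the $O(\cdot)$ of the lemma's bound on $n_i$; if you take that lemma as stated, as the paper does, no extra fix is needed.
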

\begin{proof}
    By \Cref{lem:analysis_spanner_sublinear_random}, the number of edges inserted by $\ell$ epochs is bounded by:
    \begin{align*}
        \sum_{i=1}^\ell O\left(\frac {n_i\cdot t}{p(i)}\right)&\leq \sum_{i=1}^{\ell} O\left(n\cdot \left(\prod_{x=1}^{i-1}(q(x))^{t}\right)\cdot \frac{t}{p(i)}\right)\\
        &=O\left(tn \sum_{i=1}^\ell\prod_{x=1}^{i-1}\frac{(q(x))^{t}}{p(i)}\right)\leq O(t\cdot n^{1+1/k}).
    \end{align*}
    The last inequality holds by \textsf{Condition 1}.
    Furthermore, \textsf{Condition 2} implies that the \textbf{Final phase} inserts at most $O(n^{1+1/k})$ edges.
    Therefore, the total size of $E_S$ is at most $O(t\cdot n^{1+1/k})$.
\end{proof}
    

{
In the following, we set $p(i)=n^{-\frac{(0.7t+1)^{i-1}}{k}}$ and $q(i)=p(i)^{0.99}$, unlike the randomized one set $p(i)=n^{-\frac{(t+1)^{i-1}}{k}}$ and $q(i)=p(i)$,
so that we can achieve $D_j$'s satisfying the conditions in \Cref{lem:analysis_spanner_sublinear_random} by the deterministic hitting set algorithms of~\Cref{thm:MPC hitting set intro: linear,thm:MPC hitting set intro: sublinear}.
Note that the hitting set algorithm gives non-optimal sized $D_j$'s for each of the iterations unlike to the previous randomized algorithm that assumed the optimal size with $q(i)=p(i)$. 
However, those parameters still satisfy all the inequalities of \Cref{cor:parameters_sublinear_spanner} by slightly increasing $\ell:=\lceil\log_{0.5t+1} k\rceil$ compared to the original randomized algorithm that  set $\ell:=\lceil\log_{t+1} k\rceil$.
That means we increase the depth of the hierarchical structures by a constant factor.
}


\paragraph*{Derandomization.}
We derandomize the algorithm of~\cite{biswas2021massively}.
Note that during their algorithm, randomness only occurs for sampling $D_j$'s in each iteration of every epoch. We will replace this step with the hitting set algorithms of \Cref{thm:MPC hitting set intro: linear,thm:MPC hitting set intro: sublinear}.
Precisely, at the $j$th iteration (of the $i$th epoch),
we are given $C_{j-1}$ and claim to compute the hitting set $D_j$ by solving the hitting set problem defined by the $N:=|V'|$ sets $S_v:=$ the $1/p(i)$ closest clusters of $C_{j-1}$ in $G^{(i-1)}$ for each $v\in V'\subseteq  V(G^{(i-1)})$, and the universe $U:=C_{j-1}$. 
Here, $V'$ is the set of vertices of $G^{(i-1)}$ adjacent to at least $1/p(i)$ clusters of $C_{j-1}$ in $G^{(i-1)}$ and $S_v$'s are of size $d:=1/p(i)$.
\kyungjini{Then we obtain a hitting set of size $O(|C_{j-1}|\log |V'|/d^{0.99})$.}

\kyungjini{new analyses below}
When the size $1/p(i)$ of the sets $S_v$'s is at most $n^{\delta}$, then we are essentially in the `linear memory' case for the hitting set: every set fits in one machine. In this case, we use the algorithm of \Cref{thm:linear MPC hitting set large N} with $L:=n^{\delta}$ local space.\kyungjin{correct reference}
If $n^\delta<1/p(i)\leq n$, we are in the truly sublinear case. Hence, we apply the algorithm of \Cref{thm:sublinear_MPC_hitting set} with $L:=(1/p(i))^{\delta'}=n^{\delta}$ local space by setting some constant $\delta'=\delta/\log(1/p(i))$.\kyungjin{correct reference}
In both of the cases, the output hitting set $D_j$ has a size at most $O(|C_{j-1}| p(i)^{0.99}\log_{1/p(i)} |V'|)$ and satisfies $D_j\cap S_v\neq\emptyset$ for any $v\in V'$.
Furthermore, the round complexity is $O(\log_{1/p(i)}\frac{|V'|}{|C_{j-1}|})$ using the MPC model with $n^{\delta}$ local space and $O(m+n)$ total space. 
This is because $N=|V'|$ and $U=C_{j-1}$, and thus the following holds for a sufficiently large $n$ $$d^{d^{\delta'/4}}=d^{n^{\delta/4}}\geq n\geq N=|V'|.$$
Recall that $n$ and $m$ refer to the number of vertices and the edges, respectively, of the entire input graph $G$.

\paragraph*{Round complexity and the space complexities.}
In \cite{biswas2021massively}, each iteration is implemented by $O(1)$ rounds in the sublinear MPC algorithm of $O(n^{\delta})$ local space and $O(m+n)$ total space~\cite[Lemma 6.1 and Theorem 1.1]{biswas2021massively}, including the sampling process.
Notably, it also supports organizing the remaining edge set $\textsf E$ with respect to the clusters (and the super-nodes) belonging to its end vertices and the weight while the clusters (and the super-nodes) are updated.
\kyungjini{new anlaysis}
Therefore, we can formulate and deterministically solve the hitting set problem above, computing $S_v$'s and $V'$, at each iteration in $O(\log_{1/p(i)}\frac{|V'|}{|C_{j-1}|})$.

For parameters $k$ and $\varepsilon$, we can return $O(tk^{1+\varepsilon})$-spanner of $G$ deterministically by setting $\ell:= \lceil\log_{0.5t+1} k\rceil$, $t:=\lceil2\cdot (4^{1/\varepsilon}-1)\rceil$, and $p(i):=n^{-\frac{(0.7t+1)^{i-1}}{k}}$.
Here, we can derive~\Cref{thm:sublinear_MPC_spanner} from \Cref{lem:analysis_spanner_sublinear_random} and \Cref{cor:parameters_sublinear_spanner} by setting the defined $\ell$, $t$, $p(i)$, and $q(i)=p(i)^{0.99}\log_{1/p(i)}n$ for $i\in[\ell]$. 
Note that $q(i)\leq p(i)^{0.98}$ because the following holds for sufficiently large $n$ $$\log_{1/p(i)}n=1/\log_n(1/p(i))\leq k\leq n^{1/100k}\leq (1/p(1))^{0.01}\leq (1/p(i))^{0.01}.$$
Details are in the proof.

\begin{restatable}{theorem}{ThmSublinearMPCSpanner}\label{thm:sublinear_MPC_spanner}
    For a constant $\delta<1$, given a weighted graph $G$ on $n$ vertices and positive parameters $k\geq 1$ and $0<\varepsilon\leq 1$, we can deterministically construct a $O(tk^{1+\varepsilon})$-spanner of $G$ with $O(t\cdot n^{1+1/k})$ edges in $O\left(\frac{t^2\log k}{\log t}\right)$ rounds in the sublinear MPC model with $O(n^{\delta})$ local space and $O(m+n)$ total space, where $t:=\lceil2\cdot (4^{1/\varepsilon}-1)\rceil$.
\end{restatable}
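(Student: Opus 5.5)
The plan is to instantiate the derandomized version of the algorithm of \cite{biswas2021massively} with $t:=\lceil 2(4^{1/\eps}-1)\rceil$, $\ell:=\lceil\log_{0.5t+1}k\rceil$, $p(i):=n^{-(0.7t+1)^{i-1}/k}$ and $q(i):=p(i)^{0.99}\log_{1/p(i)}n\le p(i)^{0.98}$. Then three things need to be checked: the stretch, the size (via \Cref{cor:parameters_sublinear_spanner}), and the rounds and space.

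\textbf{Stretch.} \Cref{lem:analysis_spanner_sublinear_random} gives stretch $O((2t+1)^\ell)$. We have $(2t+1)^\ell\le (2t+1)\cdot (2t+1)^{\log_{0.5t+1}k}=(2t+1)\,k^{\log(2t+1)/\log(0.5t+1)}$. The choice of $t$ is made so that $\log(2t+1)/\log(0.5t+1)\le 1+\eps$. Indeed, $0.5t+1\ge 4^{1/\eps}$, so $\log(0.5t+1)\ge (2/\eps)\log 2$, while $\log(2t+1)\le \log 4+\log(0.5t+1)$. This gives ratio $\le 1+\eps$ and hence stretch $O(tk^{1+\eps})$.

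\textbf{Size.} By \Cref{thm:linear MPC hitting set large N} or \Cref{thm:sublinear_MPC_hitting set} (depending on whether $1/p(i)\le n^\delta$), each $D_j$ hits every $S_v$. Its size is $O(|C_{j-1}|p(i)^{0.99}\log_{1/p(i)}n)=O(|C_{j-1}|q(i))$, so the hypotheses of \Cref{lem:analysis_spanner_sublinear_random} hold. It remains to verify the two conditions of \Cref{cor:parameters_sublinear_spanner}.

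For Condition 1, write $a=0.7t+1$. Then
\[\prod_{x<i}q(x)^t/p(i)\le n^{(-0.98t\sum_{x<i}a^{x-1}+a^{i-1})/k}.\]
The exponent is $\big(a^{i-1}-0.98t\,(a^{i-1}-1)/(0.7t)\big)/k$. This is about $(a^{i-1}(1-1.4)+1.4)/k$, which is at most $1/k$ for $i=1$ and decreasing geometrically afterwards. Summing over $i$ therefore gives $O(n^{1/k})$, and the constant is absorbed into the size bound.

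For Condition 2, $\prod_{x\le\ell}q(x)^t\le n^{-0.98t(a^\ell-1)/(0.7tk)}$. Since $a\ge 0.5t+1$, we have $a^\ell\ge k$, so the exponent is $\le -1.4(k-1)/k\le -0.5+1/(2k)$ for $k\ge 2$; the case $k<2$ is trivial. Hence $|E_S|=O(t\,n^{1+1/k})$.

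\textbf{Rounds and space.} Each iteration costs $O(1)$ rounds for the bookkeeping of \cite{biswas2021massively}, plus the hitting set computation. For the hitting set, $|V'|\le n$ and $d=1/p(i)\ge n^{1/k}$, so $\log_d|V'|\le k$. This would be too slow, which is the main obstacle. I would handle it by bounding the round costs of the hitting set lemmas more carefully. The rounds of \Cref{thm:linear MPC hitting set large N} are $O(\log_d(N/|U|)\log_L^2N)$ with $L=n^\delta$, so $\log_L N=O(1)$ and only the factor $\log_d(N/|U|)$ matters. With $N\le n_{i-1}$ and $|U|=|C_{j-1}|$, the quantity $N/|U|$ is at most roughly $(1/q(i))^{j}$, so $\log_d(N/|U|)=O(t)$. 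An analogous bound holds in the sublinear case.

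Thus each iteration costs $O(t)$ rounds. There are $t\ell=O(t\log k/\log t)$ iterations, giving $O(t^2\log k/\log t)$ rounds in total. All subroutines use $O(n^\delta)$ local space and $O(m+n)$ total space, and they combine via \Cref{obs:parallelism}. If the $N/|U|$ bound fails, the fallback is to cap the number of sets by merging super-nodes to fit the assumption $N\le |U|\cdot\poly(d)$.
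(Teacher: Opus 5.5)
Your proposal is correct and follows essentially the same route as the paper. It uses the same parameters $\ell=\lceil\log_{0.5t+1}k\rceil$, $p(i)=n^{-(0.7t+1)^{i-1}/k}$ and $q(i)\le p(i)^{0.98}$; the same stretch bound via $\log(2t+1)\le\log 4+\log(0.5t+1)$; the same check of both conditions of \Cref{cor:parameters_sublinear_spanner}; and the same fix for the round bottleneck, namely that only the factor $\log_{1/p(i)}(|V'|/|C_{j-1}|)=O(t)$ from \Cref{thm:linear MPC hitting set large N} matters, while the case $1/p(i)\ge n^{\delta}$ costs $O(1)$ rounds.

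As in the paper, your bound on $|V'|/|C_{j-1}|$ tacitly requires $|C_{j-1}|$ not to fall far below $n_{i-1}q(i)^{j-1}$. Instead of your vague merging fallback, the simplest way to guarantee this is to pad each $D_j$ with arbitrary extra clusters up to size $\Theta(|C_{j-1}|q(i))$. Padding keeps $D_j$ a hitting set and preserves all the size bounds.
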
\kyungjin{round complexity is increased}
Note that \Cref{thm:sublinear_MPC_spannerSimplified} is a special case when $\varepsilon$ (and hence $t$) is a fixed constant.
Furthermore, we can choose the parameters $\varepsilon$ and $k$ such that we obtain a spanner:

\begin{enumerate}
    \item  with $k^{1+o(1)}$ stretch and $O(n^{1+1/k}\log k)$ size in $O(\tfrac{\log^3 k}{\log \log k})$ rounds;
    \item with $\log^{1+o(1)}n$ stretch and $O(n\log\log n)$ size in $O(\tfrac{\log^3\log n}{\log\log\log n})$ rounds.
    \kyungjini{Here, previousely, the round complexity were $\log^2 k/\log\log k$ and $\log^2\log n/\log\log\log n$. But it was increased.}
\end{enumerate}
\begin{proof}[Proof of \Cref{thm:sublinear_MPC_spanner}]
    Recall that by \Cref{thm:linear MPC hitting set large N} and \Cref{thm:sublinear_MPC_hitting set}, each $j$th iteration for $j\in[t]$ takes $O(\log_{1/p(i)}|V'|/|C_{j-1}|)$ rounds during $\ell$ epoches.
    When $1/p(i)\geq n^{\delta}$, it takes a constant round since $|V'|\leq n$.
    In the other case, we have $|V'|=|V(G^{(i-1)})|$ and $|C_{j-1}|=|V(G^{(i-1)})|(q(i))^{j-1}=|V(G^{(i-1)})|(p(i))^{0.98(j-1)}$.
    Therefore, the $j$-th iteration for $j\in[t]$ takes $O(0.98j)$.
    Totally, the round complexity is $O(t^2\ell)=O(t^2\log k/\log t)$ since we set $\ell:=\lceil \log_{0.5t+1} k\rceil$.

  Furthermore, \Cref{lem:analysis_spanner_sublinear_random} implies that the subgraph of $G$ induced by the finally obtained edges in $E_S$ is a spanner of $G$ with stretch:
    \begin{align*}
        O((2t+1)^{\lceil \log_{0.5t+1} k}\rceil)&\leq O((2t+1)k^{\log_{0.5t+1}(2t+1)})\leq O(tk^{\log_{0.5t+1}(2t+1)})\\
        &\leq O(tk^{1+\log_{0.5t+1} 4})\leq O(tk^{1+\log_{4^{1/\varepsilon}} 4})\\
        &= O(tk^{1+\varepsilon \log_{4} 4})
        =O(tk^{1+\varepsilon}).
    \end{align*}
    The second line holds due to $t:=\lceil 2(4^{1/\varepsilon}-1)\rceil$.
    
    Furthermore, during the $t$ iterations of $i$th epoch, 
    we can set $p(i)=n^{-\frac{(0.7t+1)^{i-1}}{k}}$ and $q(i)\leq p(i)^{0.98}$ as desired in \Cref{lem:analysis_spanner_sublinear_random}.
    Precisely, the $j$th iteration  of epoch $i$ computes $O(|C_{j-1}|q(i))$ sized $D_j$ who has at least one of $1/p(i)$ closest adjacent clusters of $v\in V(G^{(i-1)})$ among $C_{j-1}$ for every vertex $v$ if it is adjacent to at least $1/p(i)$ clusters of $C_{j-1}$ in $G^{(i-1)}$.
    In the following, we show that the inequalities of \Cref{cor:parameters_sublinear_spanner} are satisfied.

    \textsf{Condition 1} of \Cref{cor:parameters_sublinear_spanner} holds by the geometric series as follows:
    \begin{align*}
        \sum_{i=1}^\ell\prod_{x=1}^{i-1}(q(x))^{t}/p(i)&=\sum_{i=1}^\ell O\left(n^{\frac{(0.7 t+1)^{i-1}-\sum_{x=1}^{i-1}0.98t(0.7t+1)^{x-1}}{k}}\right)\\
        &\leq \sum_{i=1}^\ell O\left(n^{\frac{(0.7t+1)^{i-1}-1.4(0.7 t+1)^{i-1}+1.4}{k}}\right)\\
        &\leq \sum_{i=1}^\ell O\left(n^{\frac{-0.4(0.7t+1)^{i-1}+1.4}{k}}\right)\\
        &\leq O(n^{1/k}).
    \end{align*}
    Note that the second line holds due to $0.98/0.7= 1.4$.

    \textsf{Condition 2} holds due to $\ell=\lceil \log_{0.5t+1} k\rceil$ as follows:
    \begin{align*}
        \prod_{x=1}^{\ell}(q(x))^{t}\leq O\left(n^{-\frac{1.4(0.7t+1)^{\ell}-1.4}{k}}\right)&\leq O\left(n^{-\frac{1.4\cdot \frac{0.7t+1}{0.5t+1}(0.5t+1)^{\ell}-1.4}{k}}\right)\\
        &\leq O\left(n^{-\frac{1.58k}{k}+1.4/k}\right) \\
         &\leq O(n^{-0.5+0.5/k} )\\
    \end{align*}
    The first inequality holds since $\ell\geq 1$.
    Additionally, the second one holds since $t\geq 1$.
    The last inequality holds since $k$ is at least 1, and thus, $0.9/k$ is at most 0.9.
    Therefore, \Cref{cor:parameters_sublinear_spanner} implies that the size complexity of \Cref{thm:sublinear_MPC_spanner} holds.
\end{proof}

\newpage
\section{Application II: APSP Algorithms}\label{sec:applications_shortest_paths}\label{sec:apsp_corollary}
In this section, we give deterministic algorithms for the approximate shortest paths problems including weighted APSP.
In the previous section, we designed deterministic spanner algorithms. 
These spanner algorithms can be used for APSP in weighted graphs as follows. 

\paragraph*{$O(\log n)$-approximate APSP in linear MPC and Congested Clique.}
\cref{thm:linear_MPC_spanner_weighted} gives an $O(\log n)$-spanner of size $O(n)$. Since this fits in one machine, we can obtain the following APSP result in linear MPC and Congested Clique. In Congested Clique, the output is explicit: the full distance table can be computed within a node. In the linear MPC model, the output is implicit: any distance query can be asnwered without additional communication rounds.  
\begin{restatable}{corollary}{ThmAPSPlogn}\label{cor:APSPlogn}
    There exists a deterministic algorithms that, given a weighted graph $G=(V,E,w)$ with $w\colon E \to [\poly n]$ on $n$ vertices, computes $O(\log n)$-approximate APSP in $O(1)$ rounds in the linear MPC model with $O(m)$ total space or in Congested Clique.
\end{restatable}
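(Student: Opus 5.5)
The plan is to invoke \Cref{thm:linear_MPC_spanner_weighted} with $k=\lceil\log n\rceil$. This gives, deterministically and in $O(1)$ rounds in linear MPC with $O(m+n)$ total space, an $O(\log n)$-spanner $H$ of $G$. Its size is $O(n^{1+1/\lceil\log n\rceil})=O(n\cdot 2)=O(n)$ edges, since $n^{1/\log n}=2$. The same theorem states the construction can be simulated in the Congested Clique (via \Cref{lem:MPC_to_CC}), so the first step is identical in both models.

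Next I would collect $H$ on a single machine (or a single node in the Congested Clique). In linear MPC, $H$ has $O(n)$ edges with $O(\log n)$-bit weights (weights are in $[\poly n]$), so it fits into $O(n)$ words of local space. I would gather it onto one designated machine in $O(1)$ rounds by sorting (\Cref{lem:sorting_prefixsum_MPC}, which needs $O(\log_n m)=O(1)$ rounds). That machine then runs any sequential APSP algorithm, such as Dijkstra from each source, on $H$ locally. Since $d_G\le d_H\le O(\log n)\,d_G$, this yields $O(\log n)$-approximate distances.

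In the Congested Clique, I would broadcast $H$ to every node. Each node sends its incident spanner edges to a leader, and the leader redistributes them via Lenzen routing. This takes $O(1)$ rounds because the total volume is $O(n)$ words, so each node handles $O(n)$ messages. Each node $v$ then computes $d_H(v,\cdot)$ locally, which gives an explicit output.

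The main obstacle is the local computation in linear MPC: running full APSP needs $\Theta(n^2)$ space for the distance table, which exceeds $O(n)$ local memory. I would resolve this as the discussion preceding the corollary suggests. The output is \emph{implicit}: the machine stores only $H$, using $O(n)$ space, and answers any query $(u,v)$ by running a single-source shortest path computation in $O(n)$ working space, with no further communication. If an explicit distribution is required instead, I would copy $H$ to $O(m/n)$ machines, since the $O(m)$ total space allows about $m/n$ copies, and let each machine answer its share of queries.
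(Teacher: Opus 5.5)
Your proposal is correct and follows the paper's proof essentially step for step. You set $k=\lceil\log n\rceil$ in \Cref{thm:linear_MPC_spanner_weighted} to obtain an $O(\log n)$-spanner with $O(n)$ edges. In linear MPC you gather it on one machine and answer queries implicitly by a linear-space SSSP computation, and in the Congested Clique you make $H$ known to all nodes via Lenzen routing so that each node computes its distances locally. The one detail worth stating explicitly is that the leader should hand out $O(1)$ edges of $H$ to each node, which each node then sends to everyone, because the leader alone cannot send the $\Theta(n^2)$ words needed to deliver all of $H$ to every node in $O(1)$ rounds.
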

\begin{proof}
    First, we consider the linear MPC model.  In constant rounds, we compute an $O(\log n)$-spanner $H$ of size $O(n)$ using \Cref{thm:linear_MPC_spanner_weighted}. Next, using \Cref{lem:sorting_prefixsum_MPC} we can store all $O(n)$ edges of the spanner on one machine. This machine can now answer any distance query $d_G(u,v)$ without additional communication, by outputting $d_H(u,v)$, which is an $O(\log n)$-approximation. This can de done with any linear-space SSSP algorithm, e.g., Dijkstra's algorithm, since $H$ has size $O(n)$. 

    Next, we consider the Congested Clique.     
    In constant rounds, we compute an $O(\log n)$-spanner $H$ of size $O(n)$ using \Cref{thm:linear_MPC_spanner_weighted}. In $O(1)$ rounds, we can make $H$ known to every machine using Lenzen's routing~\cite{Lenzen13}, since each node only sends and receives $O(n)$ messages. 
    Then, each node can internally compute APSP on $H$, which gives an $O(\log n)$-approximation of APSP on $G$.      
\end{proof}

\paragraph*{$(\log^{1+o(1)} n)$-approximate APSP using a single near-linear machine.}
\Cref{thm:sublinear_MPC_spanner} provides a spanner with $\tilde O(n)$ edges and $\log^{1+o(1)}n$ stretch in the sublinear MPC model.
Therefore, if we allow a single machine with $\tilde O(n)$ local space, then we can immediately compute a $\log^{1+o(1)} n$-approximation for the distance between any two vertices.
The proof of \Cref{cor:heterogeneous_simpleAPSP} is analogous to \Cref{cor:APSPlogn}.
Furthermore, we can simulate the algorithm of \Cref{cor:heterogeneous_simpleAPSP} in the near-linear MPC model with $\tilde O(n)$ local space and $O(n+m)$ total space by~\Cref{obs:parallelism}.

\begin{corollary}\label{cor:heterogeneous_simpleAPSP}
    For a constant $\delta<1$,
    there exists a deterministic algorithm that, given a weighted undirected graph $G=(V,E,w)$ with $n$ vertices and $m$ edges, computes $\log^{1+o(1)} n$-approximate APSP in $O(\log^2\log n)$ rounds in the MPC model with a single machine with $\tilde O(n)$ local space and additional machines with $O(n^\delta)$ local space while the total space is $O(n+m)$.
\end{corollary}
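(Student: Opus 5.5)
We follow the proof of \Cref{cor:APSPlogn}, substituting the sublinear spanner of \Cref{thm:sublinear_MPC_spanner} for the linear one.

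\textbf{Choice of parameters.} We run the algorithm of \Cref{thm:sublinear_MPC_spanner} on $G$ using only the $O(n^\delta)$-space machines and $O(m+n)$ total space. We take $k:=\log n$ and let $\varepsilon$ be slowly vanishing, so that $t=\lceil 2(4^{1/\varepsilon}-1)\rceil$ grows slowly. Concretely, choose $\varepsilon$ so that $t=\Theta(\log\log n)$ (or any $t=\omega(1)$ with $\log t$ comparable). This is exactly item 2 after \Cref{thm:sublinear_MPC_spanner}:
\begin{itemize}
\item the stretch is $O(tk^{1+\varepsilon})=\log^{1+o(1)} n$, since both $t$ and $k^{\varepsilon}$ are $\log^{o(1)}n$;
\item the size is $O(t\,n^{1+1/\log n})=O(n\log\log n)=\tilde O(n)$.
\end{itemize}
The round count $O(t^2\log k/\log t)$ must be checked against the claimed $O(\log^2\log n)$. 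This is the step most likely to fail, because the stated bound is tighter than the $O(\log^3\log n/\log\log\log n)$ listed in item 2. There is slack in how slowly $\varepsilon\to 0$, so we first try to balance parameters. We want $t^2\log k/\log t=O(\log^2\log n)$ while keeping $t k^{\varepsilon}=\log^{o(1)}n$. Since $k^\varepsilon=2^{\varepsilon\log\log n}$ and $\varepsilon\approx 1/\log_4 t$, the choice $t=\Theta(\sqrt{\log\log n\cdot\log\log\log n})$ gives the following:
\begin{itemize}
\item the round count is $O(\log\log n\cdot\log\log\log n\cdot\log\log n/\log\log\log n)=O(\log^2\log n)$;
\item the stretch factor is $k^{\varepsilon}=2^{O(\log\log n/\log\log\log n)}=\log^{o(1)}n$, and $t=\log^{o(1)}n$ as well;
\item the size is $O(tn)=\tilde O(n)$.
\end{itemize}
We would verify that this choice still satisfies the constraints inside \Cref{thm:sublinear_MPC_spanner}. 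In particular, the estimate $\log_{1/p(i)}n\le (1/p(i))^{0.01}$ requires $k\le n^{1/(100k)}$, which does hold for $k=\log n$. If this route breaks, the fallback is to accept the slightly worse bound from item 2.

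\textbf{Collecting the spanner.} Once the spanner $H$ with $\tilde O(n)$ edges is computed, we sort its edges using \Cref{lem:sorting_prefixsum_MPC} so that they all land on the single machine with $\tilde O(n)$ local space. Sorting takes $O(\log_{n^\delta} m)=O(1/\delta)=O(1)$ rounds.

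\textbf{Answering queries.} That machine runs Dijkstra (or any linear-space SSSP) on $H$ for each query $(u,v)$ and returns $d_H(u,v)$. Since $H$ is a $\log^{1+o(1)} n$-spanner of $G$, this value is a $\log^{1+o(1)} n$-approximation of $d_G(u,v)$.

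\textbf{Space bounds.} The total space stays $O(n+m)$: the spanner algorithm uses $O(m+n)$, and the extra machine adds $\tilde O(n)$ but only stores $H$. Using \Cref{obs:parallelism}, we can also run everything inside the near-linear MPC model.
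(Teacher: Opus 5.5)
Your route is the paper's: apply \Cref{thm:sublinear_MPC_spanner} with $k=\Theta(\log n)$ to obtain a $\log^{1+o(1)}n$-spanner with $\tilde O(n)$ edges on the small machines, move it to the big machine, and answer queries locally as in \Cref{cor:APSPlogn}. The one substantive difference is in your favor. The paper only points implicitly to the parameter choice listed after the theorem ($t=\Theta(\log\log n)$). Under the current round bound $O(t^2\log k/\log t)$, that choice gives $O(\log^3\log n/\log\log\log n)$ rounds, not the claimed $O(\log^2\log n)$. Your choice $t=\Theta(\sqrt{\log\log n\cdot\log\log\log n})$, i.e.\ $\varepsilon=\Theta(1/\log\log\log n)$, does give $O(\log^2\log n)$ rounds. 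It keeps the stretch at $O(tk^{1+\varepsilon})=\log^{1+o(1)}n$ and the size at $O(tn)=\tilde O(n)$, and your arithmetic checks out.

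One claim in your proposal is false: $k\le n^{1/(100k)}$ does \emph{not} hold for $k=\log n$, since $n^{1/(100\log n)}=2^{1/100}$. Concretely, for $k=\log n$ you get $p(1)=1/2$. Then $q(1)=p(1)^{0.99}\log_{1/p(1)}n>1$, so the estimate $q(i)\le p(i)^{0.98}$ used inside the proof of \Cref{thm:sublinear_MPC_spanner} fails in the first epoch(s). If you simply cite the theorem as stated (it allows any $k\ge 1$), your argument is exactly as sound as the paper's, which relies on the same $k=\Theta(\log n)$ regime. But you should not claim to have verified that internal constraint.

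A cleaner choice that makes the chain valid is $k=\log n/(200\log\log n)$:
\begin{itemize}
\item $n^{1/(100k)}=\log^2 n\ge k$, so $\log_{1/p(i)}n\le k\le (1/p(1))^{0.01}\le (1/p(i))^{0.01}$ holds;
\item the stretch is still $O(tk^{1+\varepsilon})=\log^{1+o(1)}n$;
\item the size becomes $O(tn^{1+1/k})=O(tn\log^{200}n)=\tilde O(n)$, which still fits on the big machine;
\item the round count $O(t^2\log k/\log t)=O(\log^2\log n)$ is unchanged.
\end{itemize}
A minor point: for the total-space claim, note that $H\subseteq E$, so the big machine stores at most $m$ edges.
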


In the following, we give \emph{constant}-approximate distance algorithms.
In \Cref{sec:applications_constant_APSP_CC}, we give an approximation algorithm for weighted APSP in Congested Clique.
Then, in \Cref{sec:applications_heterogeneous}, we give algorithms for unweighted approximate SSSP, MSSP, and APSP in the near-linear MPC model.



\subsection{A Constant Approximate Weighted APSP in the Congested Clique}\label{sec:applications_constant_APSP_CC}
In the Congested Clique model, we can compute a $O(1)$-approximate APSP in $O(\log\log\log n)$ rounds.
\ThmAPSPinCC*
\begin{proof}
    Bui, Chandra, Chang, Dory, and Leitersdorf~\cite{BuiCCDL24} give an algorithm that computes a constant approximation of APSP in $O(\log \log \log n)$ rounds in Congested Clique. Their algorithm uses two randomized subroutines that can be used as a blackbox: 1) A hitting set; and 2) A spanner.
    For each of these, we detail what the requirements are, and how our algorithms satisfy these.
    
    \begin{enumerate}
        \item Hitting set.\\
        In \cite[Lemma 6.1]{BuiCCDL24}, we have set $\tilde N_k(u)\subseteq V$ for every $u\in V$. The sets $\tilde N_k(u)$ are known to the node $u$ and of size at least $k$. 
        We need to find a hitting set $V_S$ for these sets. We need to do this four times, using our linear MPC hitting sets (\Cref{thm:linear MPC hitting set} and \Cref{thm:optimal_linear MPC hitting set}) and \Cref{lem:MPC_to_CC}. \label{item:1}\kyungjin{correct reference here, and below}
        \begin{enumerate}[(i)]
            \item  When applying this in \cite[Lemma 2.1]{BuiCCDL24}, we set $k=n^{2a^{-1/4}}$ for some parameter $a\ge 1$, and need that the size of the hitting set is bounded by $|V_S| \le n^{1-a^{-1/2}} $. We can compute a hitting set of size $|V_S|=O(n/k^{0.99})$ using \Cref{thm:linear MPC hitting set} in constant rounds. We see that $n/k^{0.99} = n^{1-0.99\cdot 2a^{-1/4}} \le  n^{1-a^{-1/2}}$ whenever $0.99\cdot 2a^{-1/4}\ge a^{-1/2}$, or equivalently $a\ge (0.99\cdot 2)^{-4}$. The latter is satisfied since $a\ge 1$ \label{item:1i}
            \item  When applying this in \cite[Lemma 7.2]{BuiCCDL24}, we set $k=\sqrt n$ and need $|V_S|^{3/2}= O(n)$. Again, using \Cref{thm:linear MPC hitting set} in constant rounds, we obtain a hitting set of size $O(n/k^{0.99})$. Filling in that $k=\sqrt n$ gives $|V_S| = O(n^{0.505})$, and hence $|V_S|^{3/2}=O(n^{0.7575})$, which is less than $O(n)$ as required. \label{item:1ii}
            \item When applying this in \cite[Lemma 7.3]{BuiCCDL24}, we set $k=\sqrt n$ and need $|V_S|^{3/2} = O(n)$. We obtain this by the same settings as above. Note that in the original lemma, they get that $|V_S|^2$ small enough that they can collect it in one node and compute APSP exactly. Our hitting set is slightly bigger ($O(n^{0.7575})$ instead of $\tilde O(\sqrt n)$), but small enough that we can compute a constant-approximate spanner of size $O(n)$ with \Cref{thm:linear_MPC_spanner_weighted} and then compute APSP on the spanner. This extra constant affects the final constant, but has no impact on the running time. \label{item:1iii}
            \item When applying this in \cite[Theorem 7.1]{BuiCCDL24}, we set $k=\log^4 n$ and need $|V_S| = O(n\log k/k)$. We obtain this size with \Cref{thm:optimal_linear MPC hitting set} in $O(\log \log k)=O(\log \log \log n)$ rounds. We note that this only needs to be done once, so leads to an additive $O(\log \log \log n)$ in the running time. \label{item:1iv}
        \end{enumerate}    
        
        \item Constant-approximation spanner.\label{item:2} \\ 
        In \cite[Lemma 7.1]{BuiCCDL24}, they cite the randomized spanner from \cite[Theorem 1.2]{chechik2022constant}. They need this spanner to have constant approximation $O(k)$ and size $O(n^{1+1/k})$. Our deterministic \Cref{thm:linear_MPC_spanner_weighted} fulfills these criteria and takes constant rounds. 
    \end{enumerate}
    Since subroutines \ref{item:1} \cref{item:1i,item:1ii,item:1iii} and \ref{item:2}   take constant rounds and are called at most $O(\log\log \log n)$ times and \ref{item:1} \cref{item:1iv} takes $O(\log\log \log n)$ rounds and is called once, we obtain total round complexity $O(\log \log \log n)$.
\end{proof}

We note that current randomized spanners that take $O(1)$ rounds give a better constant approximation than our deterministic constant round spanner~\Cref{thm:linear_MPC_spanner_weighted}. This difference will result in a different constant in the approximation factor of the final result, but has no impact otherwise.

\subsection{A Constant Approximate Unweighted Shortest Paths in Near-Linear MPC}\label{sec:applications_heterogeneous}

In this section, we consider an unweighted graph $G$, and we give a deterministic algorithm for a constant approximation for shortest paths in the near-linear MPC model by derandomizing the algorithms of Dory and Matar~\cite{dory2024massively}.
In fact, the algorithms require few machines with near-linear local space while the additional machines can have sublinear local space.
This model is also called the \emph{heterogeneous MPC model}, introduced by~\cite{FischerHO25}. 
Although the space complexities of our algorithms are analyzed by assuming the heterogeneous MPC model, the algorithms can be simulated in the near-linear MPC model, of which every machine has $\tilde O(n)$ local space, with the asymptotically same total space by \Cref{obs:parallelism}.

Dory and Matar~\cite{dory2024massively} presented a $\poly(\log\log n)$-round randomized algorithm for the Single-Source Shortest Paths (SSSP) problem for unweighted and undirected graphs.
Precisely, they built a data structure in the sublinear MPC model in $\textsf{poly}(\log\log n)$ rounds with high probability, then they store the data structure in a single near-linear spaced machine so that it can answer any query of SSSP problem in $O(1)$ rounds. 
They generalized the SSSP data structure for the APSP problem by additional modification.
In the other direction, they also generalized the SSSP algorithm for the Multi-Source Shortest Paths (MSSP) problem by allowing multiple machines with near-linear local space.
In this section, we derandomize their MSSP and APSP algorithms, and ~\Cref{thm:MSSP_sublinearMPC,thm:APSP_sublinearMPC} summarize the performances.
Note that when we set $S$ as a singleton $\{s\}$ in~\Cref{thm:MSSP_sublinearMPC}, we can get the algorithm for the SSSP problem using a single near-linear spaced machine.
\begin{restatable}{theorem}{MSSPinSublinearMPC}\label{thm:MSSP_sublinearMPC}
Let $G=(V,E)$ be an unweighted graph on $n$ vertices and $m$ edges, let $\delta>0$ be a constant, and let $\varepsilon<1/2$, $\rho\in[1/\log\log n,1/2]$ be parameters.
For a fixed source set $S\subseteq V$ of size $O(n^{\rho})$,
there is a deterministic algorithm that computes all $(1+\varepsilon)$ approximate shortest paths for every pairs $(s,v)\in S\times V$
in $ O\left(\left(\frac{\log\log n(\log\log\log n-\log \varepsilon)}{\varepsilon\rho}\right)^{2+1/\rho}\right)$ rounds
in the MPC model using $|S|$ machines with $\tilde O(n)$ space and additional machines with
$O(n^{\delta})$ local space and $\tilde O((m+n^{1+\rho})n^{\rho})$ total space.
\end{restatable}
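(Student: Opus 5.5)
The plan is to take the randomized MSSP algorithm of Dory and Matar~\cite{dory2024massively} and leave it intact except for its randomized component, the Thorup--Zwick style hierarchical sampling $V=A_0\supseteq A_1\supseteq\cdots\supseteq A_\ell$. Sampling is the only place where randomness enters, so replacing it by a deterministic hitting set makes the whole algorithm deterministic. Concretely, in the $i$th sequential iteration, once the hop-bounded weighted edge set $H$ built so far is available, I will define for each vertex $v$ the set $S_v\subseteq A_{i-1}$. This is the set of $\Theta(d\log n)$ closest vertices of $A_{i-1}$ to $v$, measured by $h$-hop distances in $E(G)\cup H$, and it is defined only for those $v$ that have at least that many such vertices. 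Ties are broken by identifiers so that the sets are canonical. I then compute $A_i\subseteq A_{i-1}$ as a hitting set for $\{S_v\}$.

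\textbf{Key steps, in order.} First, I will show that all sets $S_v$ can be computed explicitly in $O(h)$ rounds with $O(n^\delta)$ local space. The method is $h$ rounds of truncated Bellman--Ford: each vertex keeps only its $O(d\log n)$ current best $A_{i-1}$-candidates, merges the lists of its neighbours, and truncates again. Merging over high-degree vertices is done by sorting and prefix sums (\Cref{lem:sorting_prefixsum_MPC}), so each round costs $O(1/\delta)$ rounds. Truncation is safe because the $k$ closest sources within $h$ hops of $v$ are obtained by extending the $k$ closest sources of some neighbour. The total space is $\tilde O((m+n^{1+\rho})\cdot d)$, and this stays within the claimed budget for the values of $d$ used by~\cite{dory2024massively}.

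Second, I will apply the hitting set algorithms with universe $U=A_{i-1}$, $N\le n$ sets, and set size $\Theta(d\log n)$. If $d\log n\le n^{\delta}$, I use \Cref{thm:optimal_linear MPC hitting set large N} with $L=n^\delta$. Otherwise I use \Cref{thm:optimal_sublinear_MPC_hitting set}, whose condition $N<d^{d^{\delta/4}}$ holds since $d\ge n^{\delta}$. In both cases $|A_i|=O(|A_{i-1}|\log n/(d\log n))=O(|A_{i-1}|/d)$, which matches the expected size in the randomized sampling. The cost is $O(\log\log n)$ rounds, because $\log_L N$ and $\log_d N$ are $O(1/\delta)$.

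Third, I will re-examine the analysis of~\cite{dory2024massively} and confirm that it uses only two facts about $A_i$: the size bound, and the deterministic hitting property, namely that every $v$ without a close $A_i$-vertex has only $O(d\log n)$ candidates from $A_{i-1}$ near it. Given these two facts, their size, stretch and space bounds carry over unchanged.

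\textbf{Rounds and main obstacle.} Each iteration adds $O(h+\log\log n)$ rounds, and this is dominated by the per-iteration cost of their algorithm. The main obstacle is that~\cite{dory2024massively} at times simulates the $\ell=O(\log\log n)$ sampling levels in parallel, whereas here $A_i$ depends on $H$ from iteration $i-1$, which forces the levels to be sequential. My first approach is to argue, as in their own bottleneck analysis, that the runtime is already governed by the sequential use of the subroutine over $\ell$ iterations. Under that argument the stated bound $O\bigl((\log\log n(\log\log\log n-\log\eps)/(\eps\rho))^{2+1/\rho}\bigr)$ absorbs the added $\ell\cdot O(h+\log\log n)$ term. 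This requires that $h$ for each level is within their per-level round budget, and that is the parameter check I expect to be the most delicate.
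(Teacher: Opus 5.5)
Your proposal is correct and follows essentially the same route as the paper. The paper also defines $S_v$ as the $\lfloor \log n/p_i\rfloor$ closest $A_{i-1}$-vertices under $h_i$-hop distance in $G\cup E_i$ and computes these sets by $h_i$ applications of truncated multi-source Bellman--Ford (\Cref{lem:restricted_BF}). It then applies the $O(\log\log n)$-round optimal hitting set, using the linear variant when the set size is at most $n^\delta$ and the sublinear variant otherwise. It handles rounds with the same ``sequential bottleneck'' argument, and it settles your delicate parameter check by noting that Dory and Matar's hop parameters satisfy $h_i=\Omega(\log^{1+\rho}\log n)$, so $O(h_i+\log\log n)=O(h_i)$.

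Your use of the large-$N$ linear variant (\Cref{thm:optimal_linear MPC hitting set large N}) is the right choice. The Section~6.2 text cites \Cref{thm:optimal_linear MPC hitting set} instead, but its additive $N/d$ term would not give $|A_i|=O(p_i|A_{i-1}|)$ when $|A_{i-1}|\ll n/\log n$.
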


\begin{restatable}{theorem}{APSPinSublinearMPC}\label{thm:APSP_sublinearMPC}
Let $G=(V,E)$ be an unweighted graph on $n$ vertices and $m$ edges, let $\delta>0$ be a constant, and let $\varepsilon<1/2$, $\rho\in[1/\log\log n,1/2]$ and $2\leq k\leq 1/\rho$ be parameters.
There is a deterministic $ O\left(\left(\frac{\log\log n(\log\log\log n-\log \varepsilon)}{\varepsilon\rho}\right)^{2+1/\rho}\right)$ rounds algorithm that computes a distance oracle of size $\tilde O(kn^{1+1/k})$ 
in the MPC model using a single machine with $\tilde O(n)$ space and additional machines with
$O(n^{\delta})$ local space and $\tilde O((m+n^{1+\rho})n^{1/k})$ total space.
Our oracle supports $O(1)$ round algorithm for $(1+\varepsilon)(2k-1)$-approximation distance query between any two vertices in $G$. 
\end{restatable}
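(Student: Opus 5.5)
The plan is to derandomize the APSP distance-oracle construction of Dory and Matar~\cite{dory2024massively}, treating their algorithm as a black box except for its single source of randomness: the Thorup--Zwick-style hierarchical sampling $V=A_0\supseteq A_1\supseteq\dots\supseteq A_\ell$. Their analysis uses randomness only through two facts. First, $|A_i|=O(|A_{i-1}|/d)$. Second, every vertex $v$ has a vertex of $A_i$ among its $O(d\log n)$ closest $A_{i-1}$-vertices with respect to $h$-hop distances in $E(G)\cup H$, where $H$ is the current weighted edge set. So I will fix every parameter ($\ell$, the hop bounds $h$, the thresholds $d$, the stretch accounting giving $(1+\varepsilon)(2k-1)$ and the oracle size $\tilde O(kn^{1+1/k})$) exactly as in \cite{dory2024massively}. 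I will run the $\ell$ levels sequentially and compute each $A_i$ deterministically at the start of iteration $i$.

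At level $i$, I first compute explicitly, for each $v$, the set $S_v\subseteq A_{i-1}$ of its $\Theta(d\log n)$ closest $A_{i-1}$-vertices under $h$-hop distances. This uses $h$ rounds of Bellman--Ford-style propagation in which each vertex keeps only its best $\Theta(d\log n)$ candidate sources. Truncation is safe: if $s$ is among the top candidates of $v$ via a path through $u$, then $s$ is also among the top candidates of $u$ at the previous hop count. This costs $O(h)$ rounds, and the total space is $\tilde O(n\cdot d)$ for the lists plus an $O(m+|H|)$ factor, which fits within $\tilde O((m+n^{1+\rho})n^{1/k})$ because $d\le n^{1/k}$ in their parameterization. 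Vertices with fewer than $d\log n$ reachable sources need no hitting, exactly as in the randomized analysis.

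Next I solve the hitting set instance with universe $U=A_{i-1}$, sets $\{S_v\}$, set size at least $d'=\Theta(d\log n)$, and $N\le n$. If $d'\le n^{\delta}$, I apply \Cref{thm:optimal_linear MPC hitting set large N} with $L=n^\delta$. Otherwise I apply \Cref{thm:optimal_sublinear_MPC_hitting set}, whose hypothesis $N<d'^{\,d'^{\delta'/4}}$ holds for large $n$. Either way I obtain $|A_i|=O(|A_{i-1}|\log n/d')=O(|A_{i-1}|/d)$, matching the randomized expectation, deterministically and in $O(\log\log n)$ rounds. The randomized guarantee there was only with high probability, so the rest of their correctness and size analysis applies verbatim with no failure probability. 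This also removes any need for repetitions.

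The round complexity per level is $O(h+\log\log n)$ on top of what \cite{dory2024massively} already spends. Their bottleneck is $\ell$ sequential iterations each costing $\Omega(h)$, so the total stays $O\big((\log\log n(\log\log\log n-\log\varepsilon)/(\varepsilon\rho))^{2+1/\rho}\big)$.

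The main obstacle I expect is the places where \cite{dory2024massively} simulates the $\ell=O(\log\log n)$ levels in parallel using pre-sampled sets. My construction needs $A_{i-1}$ and $H$ before computing $A_i$, so I must verify that serializing those levels adds at most an $O(\ell\cdot(h+\log\log n))$ factor, which is absorbed in the stated bound. A second point to check is that the explicit $S_v$ lists respect the $O(n^\delta)$ local space: long lists must be split across consecutive machines, as the sublinear hitting set lemma permits. Finally, the single near-linear machine only stores the final oracle, as in the original.
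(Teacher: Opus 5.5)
Your proposal is correct and follows essentially the same route as the paper. Both run the Dory--Matar levels sequentially, compute the sets $S_v$ of the $\lfloor \log n/p_i\rfloor$ closest $h_i$-hop sources by truncated multi-source Bellman--Ford in $O(h_i)$ rounds, replace the sampling of $A_i$ by a deterministic hitting set computed in $O(\log\log n)$ rounds, and absorb this overhead into iterations that already cost $\Omega(h_i)$ each. In the linear case you use the $O(|U|\log N/d)$ variant (\Cref{thm:optimal_linear MPC hitting set large N}), which is the right choice: the additive $N/d$ term of \Cref{thm:optimal_linear MPC hitting set}, which \Cref{sec:applications_heterogeneous} cites, could dominate when $|A_{i-1}|\ll |V_i|$.
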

\textbf{Optimizing the total space.}
{
Recall that the algorithms of \Cref{thm:MSSP_sublinearMPC} (and \Cref{thm:APSP_sublinearMPC}) give near-linear MPC algorithms with local space $\tilde O(n)$ and total space $\tilde O((m+n^{1+\rho})n^{\rho})$ (and $\tilde O((m+n^{1+\rho})n^{1/k})$) with asymptotically same round complexity by~\Cref{obs:parallelism}.
As shown in \cite{dory2024massively}, it is possible to improve the total space complexity at the cost of increasing the approximation factor by a constant factor by running the algorithms on a spanner. Specifically, they adapt the constant spanner (randomized) algorithm as a preprocessing step which reduces the number of edges, and then run the algorithms on the spanner.
Recall that we have deterministic algorithms to compute sparse spanners (\Cref{sec:applications_spanners}), and thus, we can also reduce the total space requirements to $O(m+n^{1+c})$ for some constant $c\leq 1$ to obtain an $O(1/c)$-approximate MSSP (or an $O(1/c^2)$-approximate APSP algorithm) in the near-linear MPC model within $\textsf{poly}(\log\log n)$ rounds.
This space complexity is $O(m)$ if $m=\Omega(n^{1+c})$.

First, we construct an $O(1/c)$-spanner $G'$ of size $O(n^{1+{c/3}})$ for the input graph $G$ in $O(1)$ rounds by~\Cref{thm:linear_MPC_spanner_weighted}, that is an $O(1)$ rounds deterministic algorithm to compute such a spanner in the linear MPC model.
Then we run ~\Cref{thm:MSSP_sublinearMPC_simplified} 
on $G'$ choosing $\rho=c/3$ and a fixed constant $\varepsilon$.
Then we get an $O(1)$-approximate MSSP algorithm on $G'$, that is $O(1/c)$-approximate on $G$, using $O(m+n)+\tilde O((n^{1+c/3}+n^{1+c/3})n^{c/3})=O(m+n^{1+c})$ total space in $\textsf{poly}(\log\log n)$ rounds in the near-linear MPC model.
Analogously, by running ~\Cref{thm:APSP_sublinearMPCSimplified} on $G'$ with $\rho=c/3, k=2/c$, and a fixed constant $\varepsilon$, we can get $O(1/c^2)$-approximate APSP data structure on $G$ within $\textsf{poly}(\log\log n)$ near-linear MPC rounds.

Note that even in the heterogeneous MPC model with the same number of near-linear local space machines in~\Cref{thm:MSSP_sublinearMPC} and~\Cref{thm:APSP_sublinearMPC}, we can reduce the total space by using the spanner algorithm of~\Cref{thm:sublinear_MPC_spanner}, which computes an $O(1/c^2)$-spanner $G''$ of size $O(n^{1+c/3})$ within $O(\log (1/c))$ sublinear MPC rounds.
Since $c$ is a constant, this process takes constant rounds.
Then by running the algorithms of~\Cref{thm:MSSP_sublinearMPC} (and~\Cref{thm:APSP_sublinearMPC}) on $G''$ with $\rho=c/3$ (and $k=2/c$ and a fixed constant $\varepsilon$), we can get $O(1/c^2)$-approximate MSSP algorithm (and $O(1/c^3)$-approximate APSP data structure) on $G$ within $\textsf{poly}(\log\log n)$ heterogeneous MPC rounds using $O(m+n^{1+c})$ total space.
}


\medskip

The algorithms of~\cite{dory2024massively} are based on a randomized framework inspired by Thorup-Zwick approach. They demonstrate that the framework can be used to effectively generate the data structures for the approximate distance problems.
The randomness occurs for the framework only.
Therefore, in this section, we focus on following their framework deterministically, and we omit the details on how to apply it.


\paragraph*{Framework of~\cite{dory2024massively}.}
For an unweighted graph $G=(V,E)$, we first set \emph{good} probabilities $\langle p_1,\ldots,p_{\bar \ell}\rangle$ for {${\bar \ell}=O(\log\log n)$
}.\footnote{Setting \emph{good} parameters ensures that the output gives a useful data structure for distance approximation. Computation of the parameters is described in~\cite{dory2024massively}, and we can use the same parameters.}
The framework begins by hierarchically sampling the vertex sets $V=A_0\supseteq A_1\supseteq A_{\bar \ell}$ with respect to the probabilities $\langle p_1,\ldots, p_{\bar \ell}\rangle$, where the vertices $A_i$ are sampled from $A_{i-1}$ with probability $p_i$.
After the hierarchical sampling of $A_0=V\supseteq A_1\supseteq\ldots\supseteq A_{\bar \ell}$, their algorithm deterministically runs ${\bar \ell}$ iterations, of which the $i$th iteration takes $O(h_i)$ rounds.
The $i$th iteration is given a weighted edge set $E_i$, a hop parameter $h_i$, and a distance threshold $x_i$ as input.
The goal of the iteration is to compute a non-negative weighted edge set $Q_i$ as follows.
In the following, we use $\textsf{dist}_{G\cup E_i}^{(h)}(u,v)$ to refer to the length of the shortest path between $u$ and $v$ in $G\cup E_i$ using at most $h$ edges.
For each vertex $v\in A_{i-1}$ (or $v\in V(G)$),\footnote{{Depending on the stage and the application in which this framework is invoked, we need to consider both cases $v\in A_{i-1}$ and $v\in V(G)$.}}
if $v$ has a vertex $s\in A_i$ with $\textsf{dist}_{G\cup E_i}^{(h_i)}(s,v)\leq x_i$, then 
it inserts an edge $\{s,v\}$ to $Q_i$ weighted by $\textsf{dist}_{G\cup E_i}^{(h_i)}(s,v)$.
Otherwise, it computes all vertices $u$ of $A_{i-1}$ so that $\textsf{dist}_{G\cup E_i}^{(h_i)}(v,u)\leq x_i/2$. Then it inserts edges $\{v,u\}$ to $Q_i$ weighted by $\textsf{dist}_{G\cup E_i}^{(h_i)}(v,u)$.
After ${\bar \ell}$ iterations, the algorithm returns the union of edge sets $\bigcup_i Q_i$.

During the algorithm, the following conditions hold for each $i\in [{\bar \ell}]$ with high probability~\cite[Lemmas 3.5, 3.10, and 3.11]{dory2024massively}:
\begin{enumerate}
    \item $|A_i|\leq O(|A_{i-1}|p_i)$ and
    \item For each $v\in V(G)$, there is either a sampled vertex $u\in A_i$ with $\textsf{dist}_{G\cup E_i}^{(h_i)}(v,u)\leq x_i$, or at most $O(\log n/p_i)$ non-sampled vertices $u'\in A_{i-1}\setminus A_i$ such that $\textsf{dist}_{G\cup E_i}^{(h_i)}(v,u')\leq x_i/2$.
\end{enumerate}
The performance of their algorithms is based on these properties.

In the following, we define and use the \emph{hop dominating set} as follows to connect our deterministic hitting set algorithm and the random framework. 
Note that when $A_i$ is a \emph{$h_i$-hop $\lfloor \log n/p_i\rfloor$-dominating set} of $A_{i-1}$ in $G\cup E_i$, then Condition (2) holds:  for a vertex $v\in V(G)$, if there are at least $\lfloor \log n/p_i\rfloor$ vertices of $A_{i-1}$ at distance at most $x_i/2$ from $v$ with respect to $\textsf{dist}_{G\cup E_i}^{(h_i)}(v,\cdot )$,  then at least one of them is selected to $A_i$ by the definition.


\begin{definition}

    We refer to a vertex set $A'\subseteq A$ as a \textbf{$h$-hop $d$-dominating set} of $A\subseteq V_H$ in an edge-weighted graph $H=(V_H,E_H)$ if either $|S_v|<d$ or $S_v\cap A'\neq \emptyset$ for every $v\in V_H$, where $S_v$ is the $d$ closest vertices in $A$ from $v$ with respect to the distance $\textsf{dist}_{H}^{(h)}(v,\cdot)$ (or all such vertices in $A$ if fewer than $d$ exist).

\end{definition}
In the following, we compute the hop dominating set by the hitting set problem to derandomize the framework.


\paragraph*{Overview of the derandomized algorithms.}
Instead of the \emph{hierarchical implicit sampling process}, 
our deterministic algorithm starts each $i$th iteration for $i\in [{\bar \ell}]$ by explicitly defining the hitting set problem for a $h_i$-hop $\lfloor \log n/p_i\rfloor$-dominating set $A_i$ of $A_{i-1}$ in $G\cup E_i$ of size $O(|A_{i-1}|p_i)$. Here, we assume that $A_{i-1}$ is given by the previous $(i-1)$th iteration.
Then we solve the hitting set problem ~\Cref{thm:optimal_linear MPC hitting set,thm:optimal_sublinear_MPC_hitting set}, \kyungjin{correct reference here and below}
and we run the remaining process for the $i$th iteration of \cite{dory2024massively} which is deterministic.
Our deterministic algorithm runs in $O(h_i+\log\log n)$ rounds per $i$th.

\medskip

Dory and Matar described how to set the good probabilities $p_i$'s used in the sampling process and hop parameters $h_i$'s, distance thresholds $x_i$'s, and the input edges in $E_i$'s used in each iteration so that the framework gives subroutines for SSSP, MSSP, and APSP problems~\cite[Theorem 4.6 and Theorem 5.2]{dory2024massively}.
Furthermore, for the applications, the hop parameters $h_i$'s  are set at least $\Omega(\log^{1+\rho}\log n)$ for some $\rho>0$. Therefore, the round complexity $O(h_i+\log\log n)=O(h_i)$ of our deterministic iteration does not affect the total round complexity.

\paragraph*{Formulating the hitting set problem for $A_i$.}
Here, we deterministically compute a $h_i$-hop $\lfloor \log n/p_i\rfloor$-dominating set $A_i$ of $A_{i-1}$ in $G\cup E_i$ by solving the hitting set problem on the universe $U:=A_{i-1}$ and the sets $S_v$'s of size $d:=\lfloor \log n/p_i\rfloor$ defined as follows. 
To define the sets, we let $V_i$ be the set of vertices $v\in V(G)$ that is reachable from at least $\lfloor \log n/p_i\rfloor$ vertices of $A_{i-1}$ using at most $h_i$ edges.
Then we define $N:=|V_i|$ sets $S_v$ for $v\in V_i$:
\begin{equation*}
    S_v:=\{\text{the } \lfloor \log n/p_i\rfloor \text{ closest vertices of } A_{i-1} \text{ from }v\text{ with respect to }\textsf{dist}_{G\cup E_i}^{(h_i)}(v,\cdot)\}.
\end{equation*}



Our goal is to compute the set $V_i\subseteq V(G)$ and $S_v$'s.
The computation is implemented by $O(h_i)$ rounds in the sublinear MPC model with $O(n^{\delta})$ local space and $\tilde O((|E|+|E_i|+n)/p_i)$ total space by  \Cref{lem:restricted_BF}.
Precisely, we recursively apply \Cref{lem:restricted_BF} $O(h_i)$ times by setting $H:=G\cup E_i$, the source set $S:=A_{i-1}$, and $d:=\lfloor \log n/p_i\rfloor$. 
It is analogous to the multi-sources restricted Bellman-Ford algorithm of~\cite[Lemma 3.7]{dory2024massively}. Detailed implementation is in~\Cref{ap:multisourceBF}.

\begin{restatable}{lemma}{LemMultisourceBF}\label{lem:restricted_BF}
    Let $H=(V_H,E_H)$ be an edge-weighted graph, let $S\subseteq V_H$, and let $h,d\ge 1$. 
    Suppose that for every vertex $v\in V_H$, we are given the $d$ closest sources in $S$ from $v$ with respect to the distance $\textsf{dist}_{H}^{(h)}(v,\cdot)$ (or all such sources if fewer than $d$ exist).

    Then we can compute, for every vertex $v$, the $d$ closest sources in $S$ with respect to the distance $\textsf{dist}_H^{(h+1)}(v,\cdot)$ (again truncated at $d$) in $O(1/\delta)$ rounds in the sublinear MPC model using $O(|V_H|^{\delta})$ local memory per machine and $\tilde O(|E_H|\,d)$ total memory.
\end{restatable}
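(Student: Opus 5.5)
The plan is to carry out one round of a truncated, multi-source Bellman--Ford relaxation using only sorting and prefix-sum primitives. The starting observation is a one-step recurrence: a path with at most $h+1$ edges from $v$ to a source $s$ is either a path with at most $h$ edges, or it is an edge $\{v,u\}$ followed by a path with at most $h$ edges from $u$ to $s$. Hence
\[
\textsf{dist}_H^{(h+1)}(v,s)=\min\Bigl\{\textsf{dist}_H^{(h)}(v,s),\ \min_{u:\{u,v\}\in E_H}\bigl(w(u,v)+\textsf{dist}_H^{(h)}(u,s)\bigr)\Bigr\}.
\]
I first verify that truncation is harmless. Suppose $s$ is among the $d$ closest sources of $v$ under $\textsf{dist}^{(h+1)}$, and its best path goes through the neighbor $u$. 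Then $s$ must be among the $d$ closest sources of $u$ under $\textsf{dist}^{(h)}$. If it were not, $u$ would have $d$ other sources $s'$ with $\textsf{dist}^{(h)}(u,s')\le \textsf{dist}^{(h)}(u,s)$. Each of these gives $v$ a distance $\textsf{dist}^{(h+1)}(v,s')\le \textsf{dist}^{(h+1)}(v,s)$, which pushes $s$ out of $v$'s top $d$, a contradiction. Ties are handled by fixing a consistent tie-breaking on source IDs. So each vertex only needs its own list and the lists of its neighbors.

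The MPC implementation has three steps.

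\textbf{Step 1.} For every edge $\{u,v\}$ and every entry $(s,\delta_s)$ in $u$'s current list, create the candidate $(v,s,w(u,v)+\delta_s)$, symmetrically in both directions. Also keep $v$'s own entries $(v,s,\delta_s)$. This yields at most $O((|E_H|+|V_H|)d)=\tilde O(|E_H|d)$ tuples. To generate them, I sort the edge tuples $(u,v,w)$ together with the list tuples $(u,s,\delta)$ by $u$, using \Cref{lem:sorting_prefixsum_MPC}. Since a high-degree vertex's edges and list may be spread over several machines, I duplicate $u$'s list of size $d$ to each machine holding edges of $u$. This is done by broadcasting over consecutive machines in $O(1/\delta)$ rounds. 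If $d>n^\delta$, the list itself does not fit on one machine. In that case I instead create the edge$\times$list products by allocating $\deg(u)\cdot d$ slots with prefix sums and filling them by index arithmetic; every slot can compute its edge index and its list index locally.

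\textbf{Step 2.} Sort all candidate tuples lexicographically by $(v,s,\text{distance})$. Within each $(v,s)$ group keep only the first tuple, which requires only comparing with the predecessor tuple, possibly on a neighboring machine.

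\textbf{Step 3.} Re-sort the survivors by $(v,\text{distance},s)$. Use prefix sums to compute each tuple's rank within its $v$-block, and keep the tuples with rank at most $d$.

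Each sort and each prefix sum takes $O(\log_{n^\delta}(|E_H|d))=O(1/\delta)$ rounds, since $|E_H|d\le\poly(n)$. The total memory is dominated by the candidate tuples, which gives $\tilde O(|E_H|d)$.

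The main obstacle is Step 1: distributing $u$'s list across all of $u$'s incident edges when both $\deg(u)$ and $d$ may exceed the local space. I would handle this with the slot-allocation and broadcast approach, choosing the ordering so that every product tuple is written exactly once and the per-machine load stays within $O(n^\delta)$. Correctness of the truncation, by contrast, is the short exchange argument given at the start.
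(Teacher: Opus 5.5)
Your proposal is correct and follows essentially the same route as the paper. You do one truncated Bellman--Ford relaxation by pairing each edge with the list entries of its endpoint, take the minimum per (vertex, source) pair, and prune to the $d$ closest. You justify it with the same exchange argument: a top-$d$ source of $v$ at $h+1$ hops must already be in $v$'s own list or in the list of the last-hop neighbor $u$. The differences are minor. You make the tie-breaking on source IDs explicit, and you sketch the edge-times-list duplication yourself via prefix-sum slot allocation and broadcast. The paper instead cites the duplication routine of \cite[Lemma 3.6]{dory2024massively}.
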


\paragraph*{Computing $A_i$ of size $O(p_i|A_{i-1}|)$.}
Our goal is to compute a hitting set $A_{i}$ of the universe set $U:=A_{i-1}$ over the $N:=|V_i|$ sets $S_v$'s of size at least $d:=\lfloor \log n/p_i\rfloor$ defined by $v\in V_i$.
We solve the hitting set problem in $O(\log\log n)$ rounds by~\Cref{thm:optimal_linear MPC hitting set,thm:optimal_sublinear_MPC_hitting set} within the sublinear MPC model with $L=O(n^{\delta})$ local space.
Precisely, if $d\leq L:=O(n^\delta)$ (or $d>L$), then \Cref{thm:optimal_linear MPC hitting set} (or \Cref{,thm:optimal_sublinear_MPC_hitting set}) deterministically computes a set $A_i$ so that:
\begin{itemize}
    \item  $|A_i|\leq O(\frac{p_i}{\log n}|A_{i-1}| \log n)=O(p_i|A_{i-1}|)$, and
    \item $A_i\cap S_v\neq \emptyset$ for every $v\in V_{i-1}$.
\end{itemize}
Note that the algorithm holds when $N=\textsf{poly}(d, L)$ (and $N<d^{L^{1/4}}=d^{n^{\delta/4}}$ when  $d>L$), that are hold for sufficiently large $n$ as follows $|V_i|=\textsf{poly}(\log n, n^{\delta})$ and $|V_i|<n^{\delta n^{\delta/4}}$ in this problem. 
Furthermore, the round complexity is at most $O(\log\log (d+N))=O(\log\log n)$.
In other words, we can deterministically compute $A_i\subseteq A_{i-1}$ satisfying the two invariants in $O(\log\log n)$ rounds.

\paragraph*{Computing $Q_i$ edge set with respect to $A_i\subseteq A_{i-1}$.}
During the $i$th iteration, we are given the vertex set $A_{i-1}$ and the weighted edge set $E_i$ along with the parameters $h_i$ and $x_i$.
We have computed $A_i$ with respect to the $A_{i-1}$.
Recall that the randomized algorithm computes the edges as follows for each $v\in A_{i-1}$ (or $v\in V(G)$):
\begin{enumerate}
    \item If a vertex $u\in A_i$ exists with $\textsf{dist}_{G\cup E_i}^{(h_i)}(v,u)\leq x_i$, then add an edge $\{u,v\}$ weighted by $\textsf{dist}_{G\cup E_i}^{(h_i)}(v,u)$ to $Q_i$.
    \item Otherwise, for every vertex $u\in A_{i-1}\setminus \{v\}$ with $\textsf{dist}_{G\cup E_i}^{(h_i)}(v,u)\leq x_i/2$, add an edge $\{u,v\}$ weighted by $\textsf{dist}_{G\cup E_i}^{(h_i)}(v,u)$ to $Q_i$.
\end{enumerate}
{The previous randomized algorithm also gives the deterministic algorithm to compute the edges, we also give the details for completeness.}

The first case edges can be computed in $O(h_i)$ rounds in the sublinear MPC model by using the single-source Bellman-Ford algorithm~\cite{BellareR94} by adding a dummy source $s$ connected to each vertex in $A_{i}$ by zero-weighted edges.
Then, for the second case edges, we use the computed sets $S_v$'s of the $\lfloor \log n/p_i\rfloor$ closest vertices of $A_{i-1}$ in $G\cup E_i$ from $v$.
Recall that in the previous phase, we computed the distance $\textsf{dist}_{G\cup E_i}^{(h_i)}(v,u)$ for pairs $v\in V$ and $u\in S_v$, and we selected the vertices $A_i$ so that $S_v\cap A_i\neq \emptyset$ for any $v\in V$.
Therefore, for the second case, it suffices to add the edges $\{u,v\}$ weighted by $\textsf{dist}_{G\cup E_i}^{(h_i)}(v,u)$ to $Q_i$ for each vertex $u\in S_v$ if $\textsf{dist}_{G\cup E_i}^{(h_i)}(v,u)\leq x_i/2$ and $\textsf{dist}_{G\cup E_i}^{(h_i)}(v,u')> x_i$ for every $u'\in S_v\cap A_{i}$.
In conclusion, we can compute the weighted edge set $Q_i$ in $O(h_i)$ rounds.

\subsubsection{Multi-source restricted Bellman-Ford: Proof of \texorpdfstring{\Cref{lem:restricted_BF}}{Lemma 6.6}}
\label{ap:multisourceBF}
In this section, we prove \Cref{lem:restricted_BF}, restated here for convenience. 

\LemMultisourceBF*
\begin{proof}
Recall that we need the subroutine for two purposes: 
\textsf{(1)} to explicitly formulate the set systems $S_v$'s of the hitting set problem for $A_i$ with respect to $A_{i-1}$, and 
\textsf{(2)} to compute the weighted edge set $Q_i$ with respect to $A_i$ and the sets $S_v$'s.
Although the previous randomized algorithm of~\cite{dory2024massively} does not explicitly formulate the set system, it also requires a similar subroutine to deterministically compute the weighted edges $Q_i$. 
Their subroutine~\cite[Lemmas 3.6-3.7]{dory2024massively} is designed to compute $\textsf{dist}_H^{(h+1)}(v,s)$ for every $v \in V_H$ and $s \in S$ if $\textsf{dist}_H^{(h+1)}(v,s) \leq x$ for a given distance threshold $x$, under the assumption that the number of such sources is at most $d$ for each fixed vertex $v \in V_H$.
By slightly modifying their deterministic algorithm, we obtain the subroutine of ~\Cref{lem:restricted_BF}. 
Briefly, both our algorithm and the previous one start by updating the distance estimates
$\textsf{dist}_H^{(h)}(u,s) + w(u,v)$ for each edge $\{u,v\} \in E_H$ and given $\textsf{dist}_H^{(h)}(u,s)$'s for $s\in S$ and $u\in V_H$. 
After this step, the previous algorithm first discards the distance estimates exceeding the threshold $x$, and then rearranges the remaining ones to maintain the desired values $\textsf{dist}_H^{(h+1)}(v,s)$'s. 
In contrast, our algorithm first rearranges all updated distance estimates to maintain the values $\textsf{dist}_H^{(h+1)}(v,s)$, and later prunes (at most) $d$ closest sources for each vertex $v \in V$ with respect to the updated distances.
In the following, we give the implementation in the sublinear MPC model with $\tilde O(|E_H|d)$ total space for completeness. 

    For $v\in V_H$, we let $d_v$ (and $d_v'$) be the number of reachable sources in $S$ from $v$ using at most $h$ (and $(h+1)$) edges in $H$ if the number of such sources is fewer than $d$, otherwise, $d_v=d$ (and $d_v'=d$). 
    Let $S_v$ be the set of $d_v$ closest sources in $S$ from $v$ in $H$ using at most $h$ edges.
    We claim to compute the $d_v'$ closest sources to $v$ among $S_v\cup (\bigcup_{\{u,v\}\in E_H}S_u)$ using at most $h+1$ edges for all vertices $v\in V_H$ in $O(1/\delta)$ rounds.
    We prove that it is sufficient to consider the sources $S_v\cup (\bigcup_{\{u,v\}\in E_H} S_u)$ instead of the whole source set $S$ in the analysis phase. 
    
    The algorithm has three phases: Redistributing the input distances $\textsf{dist}_H^{(h)}(v,s)$ between $s\in S_v$ and $v$, computing the distance $\textsf{dist}_H^{(h+1)}(v,s)$ for $s\in S_v\cup (\bigcup_{\{v,u\}\in E_H} S_u)$, and selecting $d_v'$ closest sources in $S_v\cup (\bigcup_{\{v,u\}\in E_H} S_u)$ with respect to the computed $\textsf{dist}_H^{(h+1)}$.

    \textbf{Redistributing the input distances.}
    In $O(1/\delta)$ rounds, we can redistribute the initially given $O(\sum_{v\in V_H}d_v)=O(|V_H|d)$ distances $\textsf{dist}_H^{(h)}(v,s)$'s with $s\in S_v$ by replacing the distance information as a tuple $(v, \textsf{dist}_H^{(h)}(v,s), s)$ and sorting them along the lexicographical ordering so that:
    The tuples are stored on consecutive machines, and especially for each vertex $v$, its $|S_v|$ tuples are stored on consecutive machines by applying~\Cref{lm:preprocessing}.
    Then, in constant rounds, we can replace each tuple $(v, \textsf{dist}_H^{(h)}(v,s_j), s_j)$ as $(v,j, \textsf{dist}_H^{(h)}(v,s_j), s_j)$, where $s_j$ is the $j$th closest source in $S_v$ to $v$ in $H$.
    {To achieve this, we apply the broadcasting algorithm from \Cref{lem:sorting_prefixsum_MPC}, which ensures that, for each fixed vertex $v$, the ID of the first machine containing tuples whose first element is $v$ becomes known to all relevant machines.
    Moreover, by \Cref{obs:parallelism}, this broadcasting can be performed simultaneously for all vertices $v\in V_H$. Consequently, each machine can locally determine the correct order of its tuples.
    }

    We start by duplicating each edge $\{v,u\}\in E_H$ weighted by $w(u,v)$ into $d$ tuples $(u,j,w(u,v),v)$ for $j\in [d]$ and $d$ tuples $(v,j', w(v,u),u)$ for $j'\in[d]$. 
    We can do the duplicating process within $O(1/\delta)$ rounds \cite[Lemma 3.6]{dory2024massively}.
    Then resort $O(|V_H|d)$ number of distance tuples $(v,j, \textsf{dist}_H^{(h)}(v,s_j), s_j)$'s and $O(|E_H|d)$ number of duplicated edge tuples $(v,j',w(v,u),u)$'s along the lexicographical ordering by \Cref{lem:sorting_prefixsum_MPC} so that two tuples $(v,j, w(u,v),u)$ and $(v, \textsf{dist}_H^{(h)}(v,s_j), s_j)$ are stored in the single machine, denoted by $\mathcal M_{(u,v,j)}$.
    Note that in~\cite[Section 3.2]{dory2024massively}, there is a detailed implementation so that $\mathcal M_{(u,v,j)}$ can detect the id of $\mathcal M_{(v,u,j)}$ by its local computation, but we omit this part here.

    \textbf{Computing the distance estimates using $h+1$ edges.}
    We let $\mathcal M(v)$ be the consecutive machine storing the tuples whose first element is the vertex $v$.
    For each $j\in [d]$ and $\{u,v\}\in E_H$, the machine $\mathcal M_{(u,v,j)}$ computes the distance $\textsf{dist}_H^{(h)}(v,s_j)+w(u,v)$ that is the distance estimate from $s_j$ to $u$ through $v$, by using at most $h+1$ edges.
    Then it sends the distance to $\textsf{dist}_H^{(h)}(v,s_j)+w(u,v)$ to the machine $\mathcal M_{(v,u,j)}\in \mathcal M(u)$ which is storing the $j$th tuple $(u,j,w(v,u),v)$.
    
    The machines in $\mathcal M(u)$ have $d_u$ distances $\textsf{dist}_H^{(h)}(u,s_i)$'s for $s_i\in S_u$ received as inputs and at most $O(d\cdot \deg_H(u))$ distance estimates $(\textsf{dist}_H^{(h)}(v,s_j)+w(u,v))$'s for $s_j\in \bigcup_{\{u,v\}\in E_H} S_v$ received by the machines $\mathcal M_{(u,v,j)}$'s.
    By~\Cref{lem:sorting_prefixsum_MPC}, 
    we sort the distances and distance estimates within the machines $\mathcal M(u)$ in $O(1/\delta)$ rounds with respect to the starting source $s$. 
    Then in $O(1/\delta)$ rounds, we can compute the minimum value over all distances $\textsf{dist}_H^{(h)}(u,s)$ and distance estimates $\textsf{dist}_H^{(h)}(v,s)+w(u,v)$ within the machines $\mathcal M(u)$ for each $s\in S_u\cup (\bigcup_{\{u,v\}\in E_H} S_v)$, and we define $\textsf{dist}_H^{(h+1)}(u,s)$ as the minimum value.
    
    \textbf{Selecting at most $d$ closest sources with respect to $\textsf{dist}_H^{(h+1)}$.}
    For each $u\in V_H$, we can prune the $d_u'$ closest sources $s$ among $S_u\cup (\bigcup_{\{u,v\}\in E_H} S_v)$ and the distances $\textsf{dist}_H^{(h+1)}(u,s)$ as we desired by applying~\Cref{lm:preprocessing}.   
    Recall that $d_u'$ is the number of reachable sources in $S$ from $u$ using at most $(h+1)$ edges in $H$ if the number of such sources is fewer than $d$, otherwise, $d_u'=d$. 

    \textbf{Analysis.}
    Note that the three phases used the constant number of calls for the subroutines of \Cref{lem:sorting_prefixsum_MPC}, \Cref{lm:preprocessing}, and the duplicating process~\cite[Lemma 3.6]{dory2024massively} that are implemented by $O(1/\delta)$ rounds in the sublinear MPC model with at most $\tilde O(|E_H|d)$ total space.
    Totally, our algorithm achieves the desired space complexities and the round complexity.
    In the following, we prove the correctness of the algorithm.
    Note that the correctness of the algorithm is guaranteed by the claim: 
    For every vertex $u\in V_H$, $S_u'$ is subset of $S_u\cup (\bigcup_{\{u,v\}\in E_H} S_v)$.
    Recall that $S_u'$ (and $S_v$'s) is the $d_u'$ closest sources in $S$ to $u$ (and $v$) with respect to $\textsf{dist}_{H}^{(h+1)}(u,\cdot)$ (and $\textsf{dist}_{H}^{(h)}(v,\cdot)$).
    
    Note that if $d_u'<d$, then there are at most $d_u'$ sources of $S$ reachable to $u$ using at most $h+1$ edges. By the definition, these sources are reachable to $u$ or its neighbor using at most $h$ edges, which implies that $S_u'\subseteq S_u\cup(\bigcup_{\{u,v\}\in E_H}S_v)$. Therefore, the claim holds.
    In the following, we consider the case that $d_u'=d$.
    For the contradiction, we assume that there is a source $s$ not in $S_u\cup(\bigcup_{\{u,v\}\in E_H}S_v)$ but it is in $S_u'$.
    We consider a shortest path $\pi$ from $s$ to $u$ using at most $h+1$ edges, we let $\{u,v\}$ be the last edge of $\pi$.
    Then the prefix subpath $\pi'$ of $\pi$ obtained by removing $\{u,v\}$ is a shortest path between $s$ and $v$ using at most $h$ edges.
    Since $s$ is not in $S_v$ while it is a source in $S$ reachable to $v$ using at most $h$ edges, there are $\pi_1',\ldots, \pi_{d}'$ paths from $d$ sources $s_1,\ldots, s_{d}$ in $S\setminus\{s\}$ to $v$ that use at most $h$ edges and shorter than the path $\pi'$.
    Then we can obtain the path $\pi_1,\ldots, \pi_{d}$ paths from $s_1,\ldots, s_{d}$ to $u$ by concatenating the paths $\pi_1,\ldots, \pi_{d}$ and $\{u,v\}$, respectively.
    The paths are shorter than $\pi$ and use at most $h+1$ edges in $H$.
    This implies that $s_1,\ldots, s_{d}$ are closer to $u$ than $s$ when we allow at most $h+1$ edges, which contradicts the assumption that $s$ is one of the $d$ closest sources from $u$ using at most $h+1$ edges. 
    This completes the proof of this lemma.
\end{proof}



\newpage
  \printbibliography[heading=bibintoc]

\newpage
\appendix
\section{Proof of the Conditional Expectation Lemma in MPC}
\LemCondExp*

\begin{proof}
Let $f(h)=\sum_{i=1}^{\FC} f_i(h)$.
We claim to compute a hash function $h^*$ in $\H$ with $f(h^*)\leq x$. The first condition,  $\E_{h\sim \mathcal H}[f(h)]\leq x$, guarantees that such a hash function exists, and the hash functions in $\mathcal H $ are encoded by $s$-bit random seeds.
The conditional expectation method determines the $s$-bit random seed of $h^*$.
This algorithm decomposes the random seeds into $t=O(s/\log \LS)$ chunks $R_1,\ldots, R_t$ of $\lfloor \log \LS\rfloor$ bits random seeds, and iteratively determines each chunk.
More precisely, at the $j$-th iteration, we assume that the chunks $R_1=\textsf{r}_1,\ldots, R_{j-1}=\textsf{r}_{j-1}$ are known and determine $\textsf{r}_j$ for the chunk $R_j$
so that $\E[f(h) \mid \textnormal{the prefix of } h \textnormal{ is }\textsf{r}_1\ldots\textsf{r}_{j}]$ is at most $\E[f(h) \mid \textnormal{the prefix of } h \textnormal{ is }\textsf{r}_1\ldots\textsf{r}_{j-1}]$.
After $t=O(s/\log \LS)$ iterations, we determine all chunks of the random seed of $h^*$ so that:
\[
f(h^*)=\E[f(h) \mid  \textnormal{the prefix of } h \textnormal{ is }\textsf{r}_1\ldots\textsf{r}_{t}]\leq \E[f(h)]\leq x.
\]

In the following, we describe how to implement each iteration in $O(\log \FC/\log\LS)$ rounds in the MPC model, which concludes that our algorithm runs in $O(s\log\FC/\log^2 \LS)$ rounds, and thus, the lemma holds.
We assume that the $j-1$ chunks $R_1,\ldots,R_{j}$ have been fixed, and we describe how to compute a proper $R_{j+1}$.
We consider all $O(\LS)$ possible assignments $\textsf r$ of $R_{j+1}$.
Individually, we assign a machine $M^i$ for each objective function $f_i$ with $i\in [\FC]$.
The algorithm has three steps.
Each step takes a constant round with $\LS$ local space and $O(\LS\FC)$ total space. 

First, for each index $i\in [\FC]$, the machine $M^i$ locally computes and stores all expectations $$\E[f_i(h) \mid  \textnormal{the prefix of } h \textnormal{ is }\textsf{r}_1\ldots\textsf{r}_{j}\textsf{r}] \textnormal{ for every possible assignments }\textsf r.$$
Hereto, we do the process for every $i\in [\FC]$ in the machines $M^1,\ldots, M^{\FC}$ simultaneously.

Next, we let $\E_\textsf r^i\coloneqq \E[f_i(h) \mid  \textnormal{the prefix of } h \textnormal{ is }\textsf{r}_1\ldots\textsf{r}_{j}\textsf{r}]$, and we compute the sum $\sum_{i=1}^{\FC} \E_\textsf r^i$ for each  assignment $\textsf r$. Note that by the linearity of the expectations, it matches the desired expectation value of $f$ as follows:
\begin{align*}
    \E\left[f(h) \mid  \textnormal{the prefix of } h \textnormal{ is }\textsf{r}_1\ldots\textsf{r}_{j}\textsf{r}\right] 
    &=\E\left[\sum_{i=1}^{\FC}f_i(h) \mid  \textnormal{the prefix of } h \textnormal{ is }\textsf{r}_1\ldots\textsf{r}_{j}\textsf{r}\right]\\
    &=\sum_{i=1}^{\FC}\E[f_i(h) \mid  \textnormal{the prefix of } h \textnormal{ is }\textsf{r}_1\ldots\textsf{r}_{j}\textsf{r}]=\sum_{i=1}^{\FC} \E_\textsf r^i.
\end{align*}
For each computed values $\E_\textsf r^i$ that is stored in $M^1,\ldots, M^{\FC}$, we replace it as a tuple $(i,\textsf r, \E_\textsf r^i)$ of the index $i\in [\FC]$, the assignment $\textsf r$, and the expectation value. 
Then we sort the tuples along the lexicographical ordering.
Then we compute $\sum_{i=1}^{\FC} E_\textsf{r}^i$ for all $\LS$ assignments $\textsf{r}$ simultaneously.
These sorting and summation take $O(\log \FC/\log\LS)$ rounds by~\Cref{lem:sorting_prefixsum_MPC}. 
Finally, we compute $R_{j+1}=\textsf r^*$ which minimizes the expectation value $\sum_{i=1}^{\FC} E_{\textsf{r}^*}^i$ over $\LS$ assignments. It takes a constant time by \Cref{lem:sorting_prefixsum_MPC}.

In conclusion, each iteration takes $O(\log \FC/\log\LS)$ rounds, and thus, we can compute the desired hash function $h^*$ in $O(s\log\FC/\log^2 L)$ rounds.
\end{proof}


\end{document}